\newtheorem{theorem}{Theorem}
\newtheorem{lemma}{Lemma}
\newtheorem{corollary}{Corollary}
\newtheorem{proposition}{Proposition}
\newtheorem{note}{Note}
\newtheorem{definition}{Definition}
\newtheorem{result}{Result}
\newtheorem{fact}{Fact}
\newtheorem{remark}{Remark}
\newcommand{\ket}[1]{| #1 \rangle}
\newcommand{\bra}[1]{\langle #1 |}
\newcommand{\cB}{\mathcal{B}}
\newcommand{\C}{\mathbb{C}}
\newcommand{\E}{\mathbb{E}}
\newcommand{\I}{\mathbb{I}}
\newcommand{\prob}{\mathbb{P}}
\newcommand{\U}{\mathbb{U}}
\newcommand{\cD}{\mathcal{D}}
\newcommand{\cG}{\mathcal{G}}
\newcommand{\cH}{\mathcal{H}}
\newcommand{\cI}{\mathcal{I}}
\newcommand{\cL}{\mathcal{L}}
\newcommand{\cT}{\mathcal{T}}
\newcommand{\hcT}{\hat{\mathcal{T}}}
\newcommand{\tcT}{\tilde{\mathcal{T}}}
\newcommand{\trho}{\tilde{\rho}}
\newcommand{\tomega}{\tilde{\omega}}
\newcommand{\vectorise}{\mathrm{vec}}
\newcommand{\Tr}{\mathrm{Tr}\,}
\newcommand{\Haar}{\mathrm{Haar}}
\newcommand{\Hmax}{H_\mathrm{max}}
\newcommand{\TPE}{\mathrm{TPE}}
\newcommand{\poly}{\mathrm{poly}}
\newcommand{\polylog}{\mathrm{polylog}}
\newcommand{\one}{\leavevmode\hbox{\small1\kern-3.8pt\normalsize1}}
\begin{document}
\title{High probability decoupling via approximate unitary designs
and efficient relative thermalisation}

\author{Aditya~Nema and~Pranab~Sen
\thanks{A. Nema is currently with the Institute of Quantum Information, RWTH Aachen University, Germany. The first draft of this work was submitted while pursuing PhD at Schol of Technology and Computer Science, Tata Institute of Fundamental Research , Mumbai and first two revisions were submitted during postdoc at the Department of Mathematical Informatics,
Nagoya University, Japan.
e-mail: aditya.nema.30@gmail.com.}
\thanks{P. Sen is currently with Centre for Quantum Technologies, National University of Singapore and School of Technology and was at Computer Science,
Tata Institute of Fundamental Research, Mumbai India during earlier submissions.
e-mail: pranab.sen.73@gmail.com.}
\thanks{
Manuscript received June 13, 2021.}%
}

\markboth{IEEE Transactions on Information Theory}%
{High probability decoupling via approximate unitary designs}

\maketitle

\begin{abstract}
We prove a new concentration result for non-catalytic
decoupling by showing that, for suitably large $t$, applying a 
unitary chosen uniformly at random from an approximate $t$-design on
a quantum system
followed by a fixed quantum operation almost decouples, with high
probability, the given
system from another reference system to which it may initially have been
correlated. Earlier works either did not obtain high decoupling 
probability, or used provably inefficient unitaries,
or required catalytic entanglement for decoupling. In contrast,
our approximate unitary designs always guarantee decoupling with 
exponentially high probability and, under certain conditions, 
lead to computationally efficient unitaries.
As a result we conclude that, under suitable conditions, efficiently
implementable approximate unitary designs
achieve relative thermalisation in quantum thermodynamics with 
exponentially high probability.  We also show the scrambling property of black hole, when the black hole evolution is according to pseudorandom approximate unitary $t$-design, as opposed to the Haar random evolution considered earlier by Hayden-Preskill.
\end{abstract}

\IEEEpeerreviewmaketitle


\section{Introduction}
\IEEEPARstart{A}{}
peculiar characteristic of quantum information theory is that many 
information transmission protocols, be it compression of quantum
messages or 
sending quantum information through unassisted quantum channels,
can be constructed by first removing 
correlations of a particular system from some other system around it. This
behooves us to prove general theorems
that take a bipartite quantum state shared
between a system $A$ (e.g. the ``particular system'' above) and a 
reference $R$
(e.g. the ``some other system'' above), apply a local operation on $A$,
and then, if suitable conditions are met, prove that the resulting
state is close to a product state between the output system $B$ and the
untouched reference $R$. 
This process of removing quantum correlations, i.e. obtaining a state
close to a product state, is 
referred to as decoupling.
Decoupling theorems play a vital role in proving 
achievability bounds for several quantum information theory protocols
as well as thermalization results in quantum thermodynamics.
In particular, the so-called Fully Quantum Slepian Wolf (FQSW) protocol
\cite{mother_protocol},
which has been hailed as the mother protocol of quantum information
theory, is constructed via a decoupling argument. In the FQSW problem,
the system $A$ is thought of as a bipartite system $A = A_1 \otimes A_2$
and the fixed superoperator is nothing but tracing out $A_2$. 
The FQSW protocol
is used as a building block for many other 
important protocols in quantum information theory in the asymptotic
iid setting e.g. noisy teleportation, noisy super dense coding, 
distributed compression,
entanglement unassisted and assisted quantum channel coding, one way
entanglement distillation, reverse Shannon theorem etc. Asymptotic
iid setting means that the given messages / channels are of the
tensor power form $(\cdot)^{\otimes n}$ for large $n$.
However the basic decoupling and FQSW results are actually one-shot
results where the given message / channel is to be used only once.
The one shot FQSW result
can be extended to obtain a one-shot relative thermalization 
result in quantum
thermodynamics \cite{RelativeThermalization}, where a system
$\Omega \leq S \otimes E$, with $S$ being the subsystem of physical
interest and $E$ being the so-called `environment' or `bath' subsystem, 
initially starts out in a correlated state together with a reference 
system $R$ but very soon evolves into something close to a so-called 
`relative thermal state' on $S$ tensored with the reduced 
state on the reference $R$.

In this paper, we build on the following important decoupling theorem 
proved by Dupuis in his doctoral thesis~\cite{decoupling}.
\begin{fact}
\label{fact:dupuisexpectation}
Consider a quantum state 
$\rho^{AR}$ shared between a system $A$ and a reference $R$. Let 
$\cT^{A \to B}$ be a completely positive trace preserving superoperator
(aka CPTP map aka quantum operation) with input system $A$ and output
system $B$.
Let $U$ be a unitary on the system $A$. Define the function
\[
f(U) := 
\lVert
(\cT^{A \to B} \otimes \I^R)(
(U^A \otimes I^R) \rho^{AR} (U^{A \dagger} \otimes I^R)
)
- \omega^B \otimes \rho^R
\rVert_1,
\]
where $\I^R$ is the identity superoperator on $R$ and 
$I^R$ is the identity operator on $R$. 
Let $A'$ be a new system having the same dimension as $A$.
Define the Choi-Jamio{\l}kowski state
$
\omega^{A'B} := 
(\cT^{A \to B} \otimes \I^{A'})(
\ket{\Phi}\bra{\Phi}^{AA'}
),
$
of $\cT^{A \to B}$ where
$\ket{\Phi}^{AA'} := |A|^{-1/2} \sum_a \ket{a}^A \otimes \ket{a}^{A'}$ is
the standard normalized EPR state on system $A A'$. 
Then,
$
\E_{U^A \sim \Haar}[f(U)] \leq
2^{
-\frac{1}{2} H_2(A|R)_\rho - \frac{1}{2}H_2(A'|B)_\omega
},
$ 
where the expectation is taken over the Haar measure on unitary
operators on $A$. 
The quantity $H_2(\cdot | \cdot)$ is the
conditional R\'{e}nyi $2$-entropy 
defined in Definition~\ref{def:Renyi2} below. 
We remark that 
$H_2(A|R)_\rho = -2\log \lVert \tilde{\rho}^{AR} \rVert_2$ and 
$H_2(A^\prime|B)_\omega = -2\log \lVert \tilde{\omega}^{A'B} \rVert_2$, 
where 
$\tilde{\rho}^{AR}$ and $\tilde{\omega}^{A'B} $ are certain 
positive semidefinite matrices defined in Definition~\ref{def:Renyi2}.
\end{fact}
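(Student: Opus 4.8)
\emph{Proof sketch.} The plan is the standard two-move argument for decoupling: reduce the trace norm to a weighted Hilbert--Schmidt norm, and then bound the second moment over the Haar measure using only the first two moments of a random unitary. For the first move, write $Y(U):=(\cT^{A\to B}\otimes\I^R)((U^A\otimes I^R)\rho^{AR}(U^{A\dagger}\otimes I^R))-\omega^B\otimes\rho^R$, so that $f(U)=\|Y(U)\|_1$. For any positive semidefinite $\sigma^B,\zeta^R$ of trace at most $1$ whose supports contain those of $\omega^B$ and $\rho^R$, inserting $(\sigma^B\otimes\zeta^R)^{1/4}(\sigma^B\otimes\zeta^R)^{-1/4}$ on either side of $Y(U)$, using cyclicity of the trace and the Cauchy--Schwarz inequality for the Hilbert--Schmidt inner product gives $\|Y(U)\|_1\le\|(\sigma^B\otimes\zeta^R)^{-1/4}\,Y(U)\,(\sigma^B\otimes\zeta^R)^{-1/4}\|_2$. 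Taking $\E_{U\sim\Haar}$ and applying Jensen's inequality to pull the expectation inside the square root, it then suffices to prove
\[
\E_{U\sim\Haar}\big[\,\| (\sigma^B\otimes\zeta^R)^{-1/4}\,Y(U)\,(\sigma^B\otimes\zeta^R)^{-1/4}\|_2^2\,\big]\;\le\;\|\tilde{\rho}^{AR}\|_2^2\,\|\tilde{\omega}^{A'B}\|_2^2
\]
for the choice of $\sigma^B,\zeta^R$ that turns the two factors on the right into the operators of Definition~\ref{def:Renyi2}; this is where the optimisations defining $H_2(A|R)_\rho$ and $H_2(A'|B)_\omega$ enter, one taking $\zeta^R$ the state achieving the former and $\sigma^B$ the state achieving the latter.

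For the second move, let $Z(U)$ denote the weighted version of $(\cT^{A\to B}\otimes\I^R)((U^A\otimes I^R)\rho^{AR}(U^{A\dagger}\otimes I^R))$ and $\bar Z:=\E_U[Z(U)]$. Since $\E_U[(U^A\otimes I^R)\rho^{AR}(U^{A\dagger}\otimes I^R)]=(I^A/|A|)\otimes\rho^R$ and $\cT^{A\to B}(I^A/|A|)=\omega^B$, the mean $\bar Z$ is exactly the weighted version of $\omega^B\otimes\rho^R$; hence the subtraction of $\omega^B\otimes\rho^R$ in $Y(U)$ is precisely what cancels the mean, and the variance identity $\E_U\|Z(U)-\bar Z\|_2^2=\E_U\|Z(U)\|_2^2-\|\bar Z\|_2^2$ reduces the problem to bounding $\E_U\|Z(U)\|_2^2$. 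Expanding $\|Z(U)\|_2^2=\Tr[Z(U)^\dagger Z(U)]$ produces an expression of degree $(2,2)$ in $(U,U^\dagger)$, so its Haar expectation is evaluated from the order-$2$ Weingarten formula: $\E_U[U^{\otimes 2}(\cdot)U^{\dagger\otimes 2}]$ is the linear combination of the identity and the swap operator on $A^{\otimes 2}$ with coefficients $1/(|A|^2-1)$ and $-1/(|A|(|A|^2-1))$. Carrying this through, and using (i) the swap trick $\Tr[(X\otimes X')\,\mathrm{SWAP}]=\Tr[XX']$ on the two $A$-registers and on the two $B$-registers, and (ii) the Choi--Jamio{\l}kowski identity $\cT^{A\to B}(M^A)=|A|\,\Tr_{A'}[(I^B\otimes(M^{A'})^{T})\,\omega^{A'B}]$ to convert the action of $\cT$ on the swap into a contraction against $\omega^{A'B}$, the expression collapses to a short combination of $\|\bar Z\|_2^2$ and $\|\tilde{\rho}^{AR}\|_2^2\,\|\tilde{\omega}^{A'B}\|_2^2$ in which the factors of $|A|^{\pm 1}$ and the normalisations $\Tr\sigma^B,\Tr\zeta^R$ cancel, leaving $\E_U\|Z(U)\|_2^2\le\|\bar Z\|_2^2+\|\tilde{\rho}^{AR}\|_2^2\,\|\tilde{\omega}^{A'B}\|_2^2$, i.e. $\E_U\|Z(U)-\bar Z\|_2^2\le\|\tilde{\rho}^{AR}\|_2^2\,\|\tilde{\omega}^{A'B}\|_2^2$.

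Combining the two moves, $\E_U[f(U)]=\E_U\|Y(U)\|_1\le\|\tilde{\rho}^{AR}\|_2\,\|\tilde{\omega}^{A'B}\|_2=2^{-\frac12 H_2(A|R)_\rho-\frac12 H_2(A'|B)_\omega}$, using the remark $H_2(A|R)_\rho=-2\log\|\tilde{\rho}^{AR}\|_2$ and $H_2(A'|B)_\omega=-2\log\|\tilde{\omega}^{A'B}\|_2$. I expect the second move to be the main obstacle: keeping the tensor-leg bookkeeping straight when the two Haar permutations interact simultaneously with the superoperator $\cT^{A\to B}$ and with the four weight operators, and in particular checking that every dimension factor $|A|^{\pm1}$ and every trace normalisation cancels so that the final constant is exactly $1$ and not a polynomial in $|A|$ --- this is exactly the role of the precise definitions of $\tilde{\rho}^{AR}$, $\tilde{\omega}^{A'B}$ and of the optimal auxiliary states $\sigma^B,\zeta^R$. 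A convenient way to organise the computation is to first Stinespring-dilate $\cT^{A\to B}$ to an isometry $A\to BE$, perform the second-moment estimate there (where the swap trick directly produces the squared Hilbert--Schmidt norm of a Choi state), and then trace out $E$ at the end.
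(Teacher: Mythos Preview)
The paper does not actually prove this statement: Fact~\ref{fact:dupuisexpectation} is quoted from Dupuis' thesis~\cite{decoupling} and used as a black box. The only place the paper revisits its content is Proposition~\ref{prop:decoupling_general_Haar}, which states (again citing~\cite{decoupling}) the exact second-moment identity
\[
\E_{U\sim\Haar}[(g(U))^2]=\alpha\,\|(\trho')^R\|_2^2+\beta\,\|(\trho')^{AR}\|_2^2-\|(\tomega')^B\|_2^2\,\|(\trho')^R\|_2^2
\]
and the strict upper bound by $\|(\tomega')^{A'B}\|_2^2\,\|(\trho')^{AR}\|_2^2$.

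Your sketch is the standard Dupuis argument and is correct in outline. The weighted Cauchy--Schwarz step is exactly Fact~\ref{fact:cs_tilde} in the paper; the ``variance $=$ second moment minus squared mean'' reduction and the $2$-design/Weingarten computation via the swap trick (Fact~\ref{fact:swap}) are precisely how the displayed $\alpha,\beta$ formula arises. Two small points worth flagging for when you fill in details: (i) the ``collapse'' you describe does not land directly on $\|\bar Z\|_2^2+\|\tilde\rho^{AR}\|_2^2\|\tilde\omega^{A'B}\|_2^2$ but on the four-term $\alpha,\beta$ expression above, and the final inequality uses the elementary bound $|A|^{-1}\le\|\rho^{AB}\|_2^2/\|\rho^B\|_2^2\le|A|$ (Fact~\ref{fact:Dupuis_3.5} here, Lemma~3.5 in~\cite{decoupling}) to control the cross terms; (ii) the Stinespring-dilation shortcut you propose at the end is fine, but the paper's route stays with the Choi state and the adjoint superoperator $\cT^\dagger$ acting on the swap operator $F^{B_1B_2}$ (cf.\ Lemma~\ref{lem:TtwoSwap}), which makes the dimension bookkeeping slightly more transparent than tracking an ancilla $E$.
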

Informally speaking, the above theorem
states that if some entropic 
conditions are met then, in {\em expectation}, the state 
$\sigma^{BR}$ obtained by first applying a Haar random unitary $U^A$ on 
the initial state $\rho^{AR}$ followed by a CPTP map $\cT^{A \to B}$ is 
close to the decoupled state $\omega^B \otimes \rho^R$. Here,
$\omega^B = \cT^{A \to B}(\frac{I^A}{|A|})$ is the state obtained
by applying $\cT$ to the completely mixed state on $A$.
In fact, $\omega^{A'B}$ defined in Fact~\ref{fact:dupuisexpectation} 
above is nothing
but the Choi-Jamio{\l}kowski state corresponding to the CPTP map 
$\cT^{A \to B}$.
Intuitively, a Haar random unitary $U^A$ `randomizes' or `scrambles' 
the state on
$A$ to give the completely mixed state which is then sent to $\omega^B$
by $\cT^{A \to B}$. So it is reasonable to believe that the local
state on $B$ should be $\omega^B$.
Notice that the local state on $R$ after applying $U^A$ and 
$\cT^{A \to B}$ is always $\rho^R$. The punch of the decoupling theorem 
is that the global state is close to the desired tensor product
state. 

The distance of the actual global state from the desired
tensor product state is upper bounded by two quantities. The
first quantity $H_2(A|R)_\rho$
is usually negative, which signifies that $A$ and $R$ are entangled
in the initial state $\rho$. To decouple $A$ from $R$ we start by
applying a Haar random unitary $U$ to the system $A$.
A single unitary cannot decouple $A$ from $R$, and that is why the
decoupling theorem above also has the CPTP map $\cT$. Now in an
intuitive sense, the EPR state $\Phi^{AA'}$ is the `most entangled
state'. So if a Haar random unitary $U$ on the system $A$ of
$\Phi^{AA'}$ followed by the CPTP 
map $\cT$ can decouple the output system $B$ from $R$, then it must
be able to decouple $B$ from $R$ when the input is any entangled
state $\rho^{AR}$, provided that the `amount of entanglement'
$H_2(A|R)_\rho$ between
$A$ and $R$ in $\rho$ is less than the `amount of entanglement' 
$H_2(A'|B)_\omega$ between
$A'$ and $B$ in $\omega$. This explains the expression 
$H_2(A'|B)_\omega + H_2(A|R)_\rho$
in the above upper bound. To counteract a negative
$H_2(A|R)_\rho$, the quantity $H_2(A'|B)_\omega$ had better be
positive which signifies that $A'$ is mostly decoupled from $B$
in the state $\omega$.

Dupuis showed in his doctoral thesis how the decoupling 
theorem above can be used to recover in a unified fashion several 
previously known results as well as obtain some totally new results
in quantum information theory.
Szehr et al.~\cite{Smooth_decoupling} extended the decoupling theorem
by showing that the expectation can be taken over approximate unitary
$2$-designs (defined formally in Definition~\ref{def:design} below) 
instead of over Haar random unitaries. 
The advantage of unitary $2$-designs
is that efficient constructions for them exist unlike the case with
Haar random unitaries.
Szehr et al. also upper bounded the expected trace distance
in terms of smooth entropic quantities which have better mathematical
properties compared to  the non-smooth ones. In particular, in the 
asymptotic iid limit, the smooth entropic quantities are suitably
bounded by $n$ times the corresponding Shannon entropies which is not
the case with the non-smooth quantities. Their result (adapted to our
notations) is stated below.
\begin{fact}
\label{fact:szehrexpectation}
Under the setting of Fact~\ref{fact:dupuisexpectation} above,
$$
\E_{U^A \sim \Haar}[f(U)] \leq
2^{
-\frac{1}{2} H^\epsilon_{2}(A|R)_\rho 
- \frac{1}{2}H^\epsilon_{2}(A'|B)_\omega
}+12\epsilon,
$$
where the expectation is taken over the Haar measure on unitary
operators on $A$. The same result holds if the expectation is taken
over the uniform choice of a unitary from an exact 2-design. 
If the expectation is taken over the uniform choice of a unitary from a 
$\delta$-approximate 2-design,
the upper bound
gets multiplied by a multiplicative factor dependent on the dimension
of $A$ and $\delta$.
The smooth conditional R\'{e}nyi $2$-entropy terms appearing in the 
bound are defined in \cite{decoupling}.
\end{fact}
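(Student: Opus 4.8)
\medskip
\noindent\emph{Proof plan.} The plan is to follow Dupuis's second-moment argument in two stages: first reduce the smooth-entropy statement to the non-smooth bound of Fact~\ref{fact:dupuisexpectation} via a smoothing and triangle-inequality argument, and then observe that the only feature of the Haar measure used in the non-smooth bound is its action on degree-$(2,2)$ monomials in the matrix entries of $U$ --- which an exact $2$-design reproduces exactly and an approximate $2$-design reproduces up to a controlled error.

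\emph{Smoothing reduction.} Pick a subnormalised $\hat\rho^{AR}$ within purified distance $\epsilon$ of $\rho^{AR}$ attaining $H^\epsilon_{2}(A|R)_\rho$, and a completely positive trace-non-increasing $\hcT^{A\to B}$ whose normalised Choi state $\hat\omega^{A'B}$ is within purified distance $\epsilon$ of $\omega^{A'B}$ and attains $H^\epsilon_{2}(A'|B)_\omega$; a convenient way to produce such a $\hcT$ is to apply Uhlmann's theorem to purifications $|\omega\rangle^{A'BC}$, $|\hat\omega\rangle^{A'BC}$ of the two Choi states and to read off the sub-isometric Stinespring dilation $\hat V^{A\to BC}$ of $|\hat\omega\rangle$, setting $\hcT=\Tr_C[\hat V(\cdot)\hat V^\dagger]$, so that $\|\cT^{A\to B}(\xi)-\hcT^{A\to B}(\xi)\|_1=O(\epsilon)$ for every state $\xi^A$. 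Let $\hat f(U)$ denote the function of Fact~\ref{fact:dupuisexpectation} with $\rho$ replaced by $\hat\rho$ and $\cT$ by $\hcT$ (so that $\omega^B$ is replaced by $\hat\omega^B=\hcT(I^A/|A|)$). A bounded number of triangle inequalities, together with contractivity of the trace norm under CPTP maps and under conjugation by $U$ and the inequality (trace distance)$\,\le\,$(purified distance), yields $f(U)\le\hat f(U)+12\epsilon$ for every unitary $U$. It therefore suffices to prove $\E_U[\hat f(U)]\le 2^{-\frac12 H_2(A|R)_{\hat\rho}-\frac12 H_2(A'|B)_{\hat\omega}}$ for the relevant ensemble of $U$.

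\emph{Second-moment core.} Put $M(U):=(\hcT^{A\to B}\otimes\I^R)\big((U^A\otimes I^R)\,\hat\rho^{AR}\,(U^{A\dagger}\otimes I^R)\big)-\hat\omega^B\otimes\hat\rho^R$. By Jensen's inequality, $\E_U[\|M(U)\|_1]\le\sqrt{\E_U[\|M(U)\|_1^2]}$; and by Dupuis's weighting trick --- Cauchy--Schwarz against a positive weight $\zeta^{BR}$ built from the marginals $\hat\omega^B$, $\hat\rho^R$ (concretely, from the operators entering the modified matrices of Definition~\ref{def:Renyi2}) and its fractional powers --- one obtains $\|M(U)\|_1^2\le g(U)$, where $g(U)$ is a weighted Hilbert--Schmidt norm squared of $M(U)$. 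The key structural point is that $M(U)$ is linear in $U$ and linear in $\bar U$, so that $g(U)$ is a fixed linear functional $\Gamma$ of $U^{\otimes2}\otimes\bar U^{\otimes2}$; equivalently, $\E_U[g(U)]$ depends on the ensemble only through the two-fold twirl $\E_U[U^{\otimes2}(\cdot)(U^{\dagger})^{\otimes2}]$. Evaluating this twirl against the Haar measure --- the Weingarten computation performed in Dupuis's proof of Fact~\ref{fact:dupuisexpectation} --- gives $\E_{\Haar}[g(U)]\le 2^{-H_2(A|R)_{\hat\rho}}\,2^{-H_2(A'|B)_{\hat\omega}}$, and taking square roots recovers the Haar bound.

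\emph{From Haar to $2$-designs, and the main obstacle.} An exact unitary $2$-design reproduces the Haar two-fold twirl exactly, so $\E_{U\sim\nu}[g(U)]=\E_{\Haar}[g(U)]$ when $\nu$ is uniform on such a design, and the bound above holds verbatim; combined with the smoothing reduction this proves the statement for exact $2$-designs. For a $\delta$-approximate $2$-design $\nu$ --- i.e.\ the two-fold twirl of $\nu$ is $\delta$-close to the Haar one in the distance of Definition~\ref{def:design} --- one gets $\big|\E_{\nu}[g(U)]-\E_{\Haar}[g(U)]\big|\le\delta\,\|\Gamma\|$, hence $\E_{\nu}[\hat f(U)]\le\sqrt{\E_{\Haar}[g(U)]+\delta\,\|\Gamma\|}$, which equals the Haar bound times $\sqrt{1+\delta\,\|\Gamma\|/\E_{\Haar}[g(U)]}$. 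The crux of the whole proof is to show that $\|\Gamma\|$, and therefore the ratio $\|\Gamma\|/\E_{\Haar}[g(U)]$, is bounded by a factor depending only on $|A|$ (and $\delta$), \emph{without acquiring the possibly enormous dimensions of $R$ or $B$}. This is exactly what forces the weight $\zeta^{BR}$ to be built from the marginals $\hat\omega^B$, $\hat\rho^R$: after conjugating $M(U)$ by $\zeta^{\mp1/4}$, the functional $\Gamma$ becomes a trace over $R$ and $B$ of positive operators assembled from $\hat\rho$, $\hat\omega$ and their inverse fractional powers, so that the twirl effectively sees only $A^{\otimes2}$ together with a $\poly(|A|)$-dimensional ancilla; granting this estimate, the multiplicative factor is at most $\sqrt{1+\poly(|A|)\,\delta}$, which is the asserted factor depending on $|A|$ and $\delta$. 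A secondary technical point is the channel-smoothing step above: keeping $\|\cT^{A\to B}(\xi)-\hcT^{A\to B}(\xi)\|_1$ at $O(\epsilon)$ rather than $O(|A|\,\epsilon)$ requires routing the replacement through the Stinespring dilation and a careful application of Uhlmann's theorem.
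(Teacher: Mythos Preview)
The paper does not prove Fact~\ref{fact:szehrexpectation}; it is quoted from Szehr et al.~\cite{Smooth_decoupling} without proof. Your outline is essentially the argument of that reference: smooth, then run the non-smooth second-moment computation, then observe that only the two-fold twirl enters. The second-moment core and the passage to exact and approximate $2$-designs are correct as you describe them.

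One step has a gap. In your channel-smoothing you take an optimiser $\hat\omega^{A'B}$ for $H_2^\epsilon(A'|B)_\omega$, purify it to $|\hat\omega\rangle^{A'BC}$, and ``read off the sub-isometric Stinespring dilation $\hat V^{A\to BC}$''. But an arbitrary $\hat\omega$ in the $\epsilon$-ball need not satisfy $\hat\omega^{A'}\le\pi^{A'}$, so the linear map $\hat V$ determined by $|\hat\omega\rangle=(\hat V\otimes I^{A'})|\Phi\rangle^{AA'}$ need not be a contraction, and $\hcT=\Tr_C[\hat V(\cdot)\hat V^\dagger]$ need not be trace non-increasing; the non-smooth bound then does not directly apply. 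The standard fix --- used in \cite{decoupling,Smooth_decoupling} and in this paper's own Section~\ref{sec:f(U)_to_g(U)} when proving Theorem~\ref{thm:main} --- is to smooth instead to an operator $\eta^{A'B}\le\omega^{A'B}$ and invoke Facts~\ref{fact:Dupuis_I.3}, \ref{fact:Dupuis_I.2}, \ref{fact:Dupuis_I.1}: Fact~\ref{fact:Dupuis_I.3} produces a POVM element $0\le P^Z\le I^Z$ on the Stinespring ancilla of $\cT$ so that the resulting CP trace-non-increasing map $\hcT$ has Choi state exactly $\eta$; Facts~\ref{fact:Dupuis_I.2} and \ref{fact:Dupuis_I.1} then yield $f(U)\le 2\hat f(U)$, and the remaining triangle inequalities for the state-side smoothing supply the additive $O(\epsilon)$ terms that combine to $12\epsilon$.
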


In a different vein Anshu and Jain~\cite{decoupling_convexsplit} showed,
extending earlier work by Ambainis and Smith~\cite{AmbainisSmith},
that it is possible to add a small ancilla $C$ in tensor product with
$A$, apply an efficient
unitary to $A \otimes C$ and then trace out $C$ so that $A$ is now
decoupled from $R$ even before applying the
CPTP map $\cT$. The difference between Ambainis and Smith's or Anshu
and Jain's works on one hand, and Dupuis', Szehr et
al.'s or our works on the other hand is that we want a single unitary 
on the system $A$ to achieve decoupling
and not the average of a number of unitaries on $A$ or, more generally, 
a unitary on a
larger system $A \otimes C$. A single
unitary cannot decouple $A$ from $R$. That is why the
decoupling theorem above also has the CPTP map $\cT$. 
The single-unitary-followed-by-CPTP-map form of the decoupling
theorem is required for quantum Shannon theory applications where there 
is no entanglement
assistance e.g. sending quantum information over an unassisted quantum
channel.

After obtaining the decoupling result in expectation above, it is
natural to ask whether such a theorem also holds with high probability 
over the choice of the random unitary $U^A$. Dupuis~\cite{decoupling}
answered this question in the affirmative for the Haar measure. That
result, adapted to our notation, is as follows:
\begin{fact}
\label{fact:dupuisconcentration}
Under the setting of Fact~\ref{fact:dupuisexpectation} above, we have
\[
\prob_{U^A \sim \Haar}[
f(U) >
2^{
-\frac{1}{2} H_2^\epsilon(A|R)_\rho - \frac{1}{2}H_2^\epsilon(A'|B)_\omega
} + 16 \epsilon + \delta
] \leq
2 \exp\left(
-\frac{|A| \delta^2}{2^{H_{\mathrm{min}}^\epsilon(A)_\rho + 4}}
\right),
\]
where the smooth min-entropy $H_{\mathrm{min}}^\epsilon(\cdot)$ is
defined in Definition~\ref{def:Hmin} below,
and the probability is taken over the Haar
measure on $\U^A$.
\end{fact}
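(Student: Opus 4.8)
\emph{Proof plan.} The strategy is to exhibit $f$ as a Lipschitz function on the unitary group $\U^A$ with respect to the Hilbert--Schmidt metric $\lVert U-V\rVert_2$, apply a L\'{e}vy-type concentration-of-measure inequality for $\U^A$, and then center the resulting deviation bound using the expectation estimate of Fact~\ref{fact:szehrexpectation}.

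First I would bound the Lipschitz constant of $f$. For unitaries $U,V$ on $A$ (acting as $U^A\otimes I^R$ etc.), the triangle inequality for $\lVert\cdot\rVert_1$, contractivity of the CPTP map $\cT^{A\to B}\otimes\I^R$ in trace norm, and the telescoping identity $U\rho^{AR}U^\dagger - V\rho^{AR}V^\dagger = (U-V)\rho^{AR}U^\dagger + V\rho^{AR}(U-V)^\dagger$ give
\[
|f(U)-f(V)|
\le \lVert (U-V)^A\otimes I^R\,\rho^{AR}\,U^{A\dagger}\otimes I^R\rVert_1
 + \lVert V^A\otimes I^R\,\rho^{AR}\,(U-V)^{A\dagger}\otimes I^R\rVert_1
= 2\,\lVert (U-V)^A\otimes I^R\,\rho^{AR}\rVert_1 ,
\]
the last step using unitary invariance of $\lVert\cdot\rVert_1$ and $\lVert X^\dagger\rVert_1=\lVert X\rVert_1$. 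Diagonalising $\rho^{AR}=\sum_j\lambda_j\ket{\phi_j}\bra{\phi_j}$, the right-hand side is at most $2\sum_j\lambda_j\lVert(U-V)^A\otimes I^R\ket{\phi_j}\rVert$, and Cauchy--Schwarz with $\sum_j\lambda_j=1$ bounds this by $2\sqrt{\Tr_A[(U-V)^\dagger(U-V)\,\rho^A]}\le 2\lVert\rho^A\rVert_\infty^{1/2}\,\lVert U-V\rVert_2$. Hence $f$ is $L$-Lipschitz on $\U^A$ with $L=2\cdot 2^{-H_{\mathrm{min}}(A)_\rho/2}$.

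Since the bare constant $L$ is governed only by the non-smooth min-entropy (and can be as large as $2$, which would give merely dimension-free concentration), the next step is to smooth: choose $\bar\rho^{AR}$ in the $\epsilon$-ball about $\rho^{AR}$ attaining $H_{\mathrm{min}}^\epsilon(A)_\rho$ and let $\bar f$ be the corresponding function. Because $\cT^{A\to B}\otimes\I^R$ is trace-norm contractive and both the input state and the target $\omega^B\otimes(\cdot)^R$ move by at most $\epsilon$ in trace distance, $\lVert f-\bar f\rVert_\infty = O(\epsilon)$, and by the estimate above $\bar f$ is $2\cdot 2^{-H_{\mathrm{min}}^\epsilon(A)_\rho/2}$-Lipschitz. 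Now invoke the standard L\'{e}vy lemma for $\U(d)$: a function on $\U(d)$ that is $L$-Lipschitz in $\lVert\cdot\rVert_2$ satisfies $\prob[\,g>\E g+t\,]\le 2\exp(-c\,d\,t^2/L^2)$ for an absolute constant $c$; with $d=|A|$ this yields, for $\bar f$ about its mean, a bound of the claimed form $2\exp(-|A|\delta^2/2^{H_{\mathrm{min}}^\epsilon(A)_\rho+4})$. Finally, combine with Fact~\ref{fact:szehrexpectation}, which gives $\E[f]\le 2^{-\frac{1}{2} H_2^\epsilon(A|R)_\rho-\frac{1}{2} H_2^\epsilon(A'|B)_\omega}+12\epsilon$; together with $\lVert f-\bar f\rVert_\infty=O(\epsilon)$ this controls $\E[\bar f]$, and translating the deviation statement for $\bar f$ back to $f$ absorbs the smoothing slack into the $16\epsilon$ appearing in the statement.

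The main obstacle I anticipate is the bookkeeping around smoothing: one must select a single nearby state $\bar\rho^{AR}$ that simultaneously produces a good Lipschitz constant (via $H_{\mathrm{min}}^\epsilon(A)_\rho$) and leaves Fact~\ref{fact:szehrexpectation} applicable (via $H_2^\epsilon(A|R)_\rho$), then carefully track how the several $O(\epsilon)$ contributions --- from smoothing $\rho$ inside $f$, from the $12\epsilon$ already present in Fact~\ref{fact:szehrexpectation}, and from the $\epsilon$-budgets of the smooth conditional entropies --- add up to exactly $16\epsilon$, while also pinning down the absolute constant in the L\'{e}vy inequality for $\U^A$ so that the exponent comes out precisely as stated.
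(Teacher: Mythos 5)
The paper does not prove this statement; it is imported verbatim (with citation) from Dupuis' thesis, so there is no proof of the paper's own to compare against. Your plan, however, is exactly the argument Dupuis uses: bound the Lipschitz constant of $f$ by contractivity plus a Cauchy--Schwarz estimate yielding $L\le 2\lVert\rho^A\rVert_\infty^{1/2}$, apply L\'{e}vy's lemma on $\U(|A|)$, smooth, and center via the expectation bound. Your Lipschitz computation is correct, and the resulting L\'{e}vy exponent $|A|\delta^2/(16\lVert\rho^A\rVert_\infty)$ is the right quantity.

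Two things worth flagging. First, a sign/convention issue: with the standard $H_{\min}(A)_\rho=-\log\lVert\rho^A\rVert_\infty$ your $L=2\cdot 2^{-H_{\min}/2}$ produces a denominator $2^{-H_{\min}^\epsilon(A)_\rho+4}$, not $2^{+H_{\min}^\epsilon(A)_\rho+4}$ as printed in the Fact; this discrepancy traces to Definition~\ref{def:Hmin}, which as written omits the $-\log$ and is presumably a typo, so your derivation is the sensible one. Second, the smoothing step you flag as the ``main obstacle'' is a genuine technical point, not mere bookkeeping: the optimizer for $H_{\min}^\epsilon(A)_\rho$ lives on $A$ alone, yet you need a joint state $\bar\rho^{AR}$ that is $O(\epsilon)$-close to $\rho^{AR}$ \emph{and} has $\lVert\bar\rho^A\rVert_\infty$ small. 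Lifting the marginal optimizer to a nearby joint state requires either a gentle-measurement/Uhlmann-type argument (which typically degrades $\epsilon$ to $O(\sqrt{\epsilon})$) or a definition of the smooth min-entropy that already ranges over joint-state perturbations; Dupuis handles this explicitly, and a complete proof would need to spell it out rather than absorb it into an unspecified $O(\epsilon)$. With that caveat, your route is the correct one.
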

The concentration of measure result for the decoupling theorem above
immediately implies an exponential concentration result for the 
FQSW problem, which further implies
that relative thermalization occurs for a system in contact with a
heat bath for all but an exponentially small fraction of unitary
evolutions of the system as long as the system is assumed to evolve
according to a Haar random unitary. However this is not a very 
satisfactory explanation from a physical and computational point of view 
as Haar random
unitaries are provably impossible to implement by quantum
circuits with size polylogarithmic in the dimension of the system. Also,
Haar random unitaries on a system $A$ require $\Omega(|A|^2 \log |A|)$ 
number of random bits for a precise description. This leads us to wonder
if relative thermalization can be achieved with high probability by
simpler unitary evolutions of the system $A$. 
Nakata et al.~\cite{Winter_decoupling} gave an affirmative answer
by showing that decoupling can be achieved by choosing 
products of random unitaries
diagonal in the Pauli $X$ and $Z$ bases, but even they are not 
efficiently implementable and in addition, require
$\Omega(|A| \log |A|)$ random bits for a precise description.
Moreover the fraction of such
unitaries which achieve decoupling is not strongly concentrated near
one.

\subsection{Our results}
In this paper we prove for the first time that, for suitable values of
$t$, {\em approximate unitary 
$t$-designs achieve decoupling with probability
exponentially close to one}.  An exact $t$-design of $n \times n$
unitaries can be
described using $O(t \log n)$ random bits \cite{Kuperberg} as opposed to 
$\Omega(n^2 \log n)$ random bits required to describe a Haar random
unitary to reasonable precision. Thus for many applications our 
result implies a substantial saving in the number of random bits 
compared to Dupuis' result. However, the concentration guaranteed by
our result is less than that guaranteed by Dupuis even though it is
exponential. 
Our concentration bound
for decoupling via unitary designs is expressed in terms of smooth
entropic quantities. An informal version is stated below:
\begin{result}
\label{result:main}
Consider a quantum state 
$\rho^{AR}$ shared between a system $A$ and a reference $R$. Let 
$\cT^{A \to B}$ be a completely positive trace preserving superoperator
with input system $A$ and output
system $B$.
Let $U$ be a unitary on the system $A$. Define the function
\[
f(U) := 
\lVert
(\cT^{A \to B} \otimes \I^R)(
(U^A \otimes I^R) \rho^{AR} (U^{A \dagger} \otimes I^R)
)
- \omega^B \otimes \rho^R
\rVert_1,
\]
where $\I^R$ is the identity superoperator on $R$ and 
$I^R$ is the identity operator on $R$. 
Let $A'$ be a new system having the same dimension as $A$.
Define the Choi-Jamio{\l}kowski state
$
\omega^{A'B} := 
(\cT^{A \to B} \otimes \I^{A'})(
\ket{\Phi}\bra{\Phi}^{AA'}
),
$
of $\cT^{A \to B}$ where
$\ket{\Phi}^{AA'} := |A|^{-1/2} \sum_a \ket{a}^A \otimes \ket{a}^{A'}$ is
the standard EPR state on system $A A'$. 
Let $0 < \epsilon, \delta < 1/3$. Let $\kappa > 0$.
Then,
\begin{equation}
\label{eq:main_decoupling}
\prob_{U^A \sim \mathrm{design}}[
f(U) >
2^{
-\frac{1}{2} H_2^\epsilon(A|R)_\rho 
-\frac{1}{2} \mathbb{H}_2^{\epsilon,\delta}(A'|B)_\omega + 1
} + 14 \sqrt{\epsilon} + 2 \kappa
] \leq 
7 \cdot 2^{-a \kappa^2}.
\end{equation}
where the matrix $U^A$ is chosen uniformly at random from an 
approximate $t$-design of unitaries,
$
a := 
|A| \cdot
2^{
-(1+\delta)\mathbb{H}_{\max}^\epsilon(B)_\omega +
H_2^{\epsilon}(A|R)_\rho - 9
},
$
$
t := 8 a \kappa^2.
$
The smooth entropies $\mathbb{H}_{\max}^\epsilon(\cdot)$,
$\mathbb{H}_2^{\epsilon,\delta}(\cdot|\cdot)$, 
$H_2^{\epsilon}(\cdot|\cdot)$ are defined in
Definitions \ref{def:Hmaxprime}, \ref{def:Renyi2prime},
\ref{def:Renyi2} respectively below.
Since the result holds for all $\kappa>0$, which is the parameter for 
deviation from the expectation value of $f(U)$, one can replace 
$\kappa$ in terms of the parameter $t$ of our approximate unitary 
$t$-design by $\sqrt{\frac{t}{8a}}$. This can be done once the analysis 
is done for a particular value of $\kappa>0$. Further, we chose 
parameter  $\beta:=\sqrt{\frac{1}{a}}$. We thus get probability of 
the decoupling Equation~\ref{eq:main_decoupling} exponential in `$t$', as:
\begin{equation}
\label{eq:decoupling_parameterized}
\prob_{U^A \sim \mathrm{design}}[
f(U) >
2^{
-\frac{1}{2} H_2^\epsilon(A|R)_\rho 
-\frac{1}{2} \mathbb{H}_2^{\epsilon,\delta}(A'|B)_\omega + 1
}  + 14 \sqrt{\epsilon} + \sqrt{t/2} \, \beta
] \leq 
7 \cdot 2^{-\frac{t}{8}}.
\end{equation}
\end{result}  

The three smooth one-shot
entropic terms used in Result~\ref{result:main} approach the standard
Shannon entropic terms in the asymptotic iid limit. 
\begin{remark} 
\label{rem:parameterization}
Note that Equations~\ref{eq:main_decoupling} and 
\ref{eq:decoupling_parameterized} are exactly the same. Moreover, 
Equation~\ref{eq:main_decoupling} represents a measure concentration 
result or the so-called tail probability (commonly used for large 
deviation analysis) for the decoupling function $f$. 
Equation~\ref{eq:main_decoupling} provides a more intuitive relation of 
the decoupling phenomenon with approximate unitary $t$-designs. 
This way of writing also indicates that we have exponentially high 
concentration in the decoupling theorem, exponential in $t$.
\end{remark}
We can thus
infer the following corollary
of our main result in the asymptotic iid setting. We state an informal
version of the corollary below.
\begin{result}
\label{result:corollary}
Consider the setting of Result~\ref{result:main} above.
Let $n$ be a large enough positive integer.
Consider the $n$-fold tensor powers 
$\omega^{(A')^n B^n} := (\omega^{A'B})^{\otimes n}$,
$\rho^{A^n R^n} := (\rho^{AR})^{\otimes n}$.
Let $\epsilon' := 8(n + |A||B|)^{|A||B|} \epsilon^{1/4}$.
Let $\kappa > 0$. Then,
\begin{align*}
\underset{U}{\prob}\bigg[
f(U) >
2^{
-\frac{n}{2} (H(A|R)_\rho - \delta(3 H(AR)_\rho + 7 H(R)_\rho))
-\frac{n}{2} (H(A'|B)_\omega - \delta(3 H(A'B)_\omega + 7 H(B)_\omega))
}\\ 
+ 28 (\epsilon')^{1/4} + 2 \kappa
\bigg] 
\leq 7 \cdot 2^{-a \kappa^2},~~~~~~~&
\\\hspace{0.9\textwidth}
\end{align*}
which can also be expressed as (following the explanation from 
Remark~\ref{rem:parameterization}):
\begin{align*}
\underset{U}{\prob}\bigg[
f(U) >
2^{
-\frac{n}{2} (H(A|R)_\rho - \delta(3 H(AR)_\rho + 7 H(R)_\rho))
-\frac{n}{2} (H(A'|B)_\omega - \delta(3 H(A'B)_\omega + 7 H(B)_\omega))
}\\ 
+ 28 (\epsilon')^{1/4} + \sqrt{t/2} \, \beta
\bigg] 
\leq 7 \cdot 2^{-\frac{t}{8}},~~~~~~~&
\\\hspace{0.9\textwidth}
\end{align*}
where the unitary $U^{A^n}$ is chosen uniformly at 
random from an approximate $t$-design,
\begin{eqnarray*}
a 
& := & 
|A|^n \cdot
2^{
n (H(A|R)_\rho - \delta(3 H(AR)_\rho + 7 H(R)_\rho))
- n H(B)_\omega (1+7\delta) - 9
}, \\
t 
&:= &
|A|^n \kappa^2
2^{
n (H(A|R)_\rho + 32 \sqrt{\epsilon'}) + \log (\epsilon')^{-1}
- n H(B)_\omega (1-5\delta) - 6}, \\
\beta
&:= &
\sqrt{\frac{1}{a}}.
\end{eqnarray*}
\end{result} 
The proof of our main result and the analysis of its iid limit 
requires us to define two novel one-shot entropic quantities that we call
{\em smooth modified conditional R\'{e}nyi 2-entropy} 
$\mathbb{H}_2^{\epsilon,\delta}(\cdot|\cdot)$ and
{\em smooth modified max-entropy} $\mathbb{H}_{\max}^\epsilon(\cdot)$.
Their definitions and techniques used in our proofs should be of
independent interest.

Our concentration result for decoupling immediately implies that
approximate unitary $|A_1|$-designs
decouple a quantum system in the Fully Quantum Slepian Wolf (FQSW)
theorem with probability $1 - \exp(-\Theta(|A_1|))$, 
where the system $A$ is expressed as a tensor product 
$A_1 \otimes A_2$ and the superoperator simply traces out $A_2$.
\begin{result}[Partial trace ( or FQSW) concentration under design] 
\label{result:fqsw}
Consider the setting of Result~\ref{result:main}. Consider the FQSW
decoupling function
\[
f(U) = f_{FQSW}(U^{A_1 A_2}) := 
\lVert 
\Tr_{A_2} [(U^{A_1 A_2} \otimes I^R) \circ \rho^{A_1 A_2 R})] - 
\pi^{A_1} \otimes \rho^R 
\rVert_1.
\]
Let $H_2^{\epsilon}(A|R)=-\log ||{(\trho')}^{AR}||_2^2$. Suppose 
we are promised that 
$
\lVert (\trho')^R \rVert_2^2 <
0.9 |A_1| |A_2| \lVert (\trho')^{AR} \rVert_2^2,
$
$|A_1| \geq 2$, $|A_2| > |A_1|$ and
$
|A_2| 2^{H_2^\epsilon(A_1 A_2|R)_\rho - 8} - 4 > 
2 \log |A_1| + 3 \log |A_2|.
$
Let $\kappa > 0$.
The following concentration inequality holds:
\begin{align}
\label{eq:FQSW}
\prob_{U \sim \mathrm{design}}[
f(U) > 
\sqrt{\frac{|A_1|}{|A_2|}} \cdot
2^{-\frac{1}{2} H_2^\epsilon(A|R)_\rho + 1 } 
+ 14 \sqrt{\epsilon} + 2 \kappa
] \leq 
7 \cdot 2^{-a \kappa^2},
\end{align}
where the unitary $U^A$ is chosen uniformly at random from an approximate 
$t$-design, 
$a := |A_2| 2^{H_2^\epsilon(A|R)_\rho - 9}$ and
$t := 8 a \kappa^2$.
Moreover, if $|A_1| \leq \polylog(|A_2|)$ and 
$
\kappa = \sqrt{\frac{|A_1|}{|A_2|}} \cdot
	2^{-\frac{1}{2} H_2^\epsilon(A|R)_\rho + O(1)}
$
which further implies that $t = O(|A_1|)$,
then efficient
constructions for such approximate $t$-designs exist and the 
concentration Equation~\ref{eq:FQSW} can also be expressed simply as:
\begin{align}
\prob_{U \sim \mathrm{design}}[
f(U) > 
\sqrt{\frac{|A_1|}{|A_2|}} \cdot
2^{-\frac{1}{2} H_2^\epsilon(A|R)_\rho + 1 }  
+ 14 \sqrt{\epsilon} + \beta \, \sqrt{|A_1|/2} 
] \leq 
7 \cdot 2^{-\frac{|A_1|}{2}},
\end{align}
where $\beta := \sqrt{1/a}$.
\end{result}
The statement just above Result~\ref{result:fqsw} can be obtained
by setting 
$
\kappa = \sqrt{\frac{|A_1|}{|A_2|}} \cdot
	2^{-\frac{1}{2} H_2^\epsilon(A|R)_\rho + O(1)}.
$
This immediately leads to
a saving in the number of random bits to $O(|A_1| \log (|A_1| |A_2|))$
for approximate $t$-design
from $\Omega(|A_1|^2 |A_2|^2 \log (|A_1| |A_2|))$ required by Haar random
unitaries. 
If $|A_1| = \polylog |A_2|$, then efficient algorithms exist
for implementing approximate unitary $|A_1|$-designs 
\cite{brandao2012local, sen:zigzag}. Thus, for small values of 
$|A_1|$ our result shows that FQSW decoupling can indeed be
achieved by efficiently implementable unitaries with probability
exponentially close to one. 
This result can be extended to show that
{\em for small systems $S$, relative thermalisation can be
achieved by efficiently implementable unitaries with probability
exponentially close to one for a wide range of parameters, the 
first result of this kind.}

We remark that the task of replacing Haar random unitary operator 
via randomly chosen unitary operator from an approximate unitary design 
is fairly non-trivial in the case of the decoupling theorem that we 
consider in this work. This is because of the following reasons:
\begin{enumerate}

\item 
The function $f(U)$ in Result~\ref{result:main} above is not 
a polynomial in the entries of the unitary operator $U$. So, even though
Dupuis proved a concentration result for $f(U)$ 
under Haar measure in Fact~\ref{fact:dupuisconcentration}, a similar 
statement for 
$U$ chosen uniformly from a unitary design is not straightforward. 
Hence we upper bound $f(U)$ by a function $g^2(U)$, defined in 
Section~\ref{subsec:prooftechnique} below using Fact~\ref{fact:cs_tilde}, 
which is a polynomial in 
entries of $U$. The 
methodology of replacing Haar measure with unitary design can be 
applied to $g^2(U)$. However we first have to prove a 
concentration result for $g(U)$ under the Haar measure,
which calls for the evaluation of a `good' Lipschitz constant 
of $g(U)$, which is another challenging task that we carry out here. 
We then have to prove a concentration result for 
$g^2(U)$ from the
concentration result for $g(U)$, which is also a non-trivial task.\\
To have an analogous `tight' Lipschitz constant in asymptotic iid limit 
so that we can recover Dupuis result, we need to define new smooth 
max entropy! Known definition of smoothed max entropy does not suffice 
for this purpose due to weighting operators in Cauchy-Schwarz.

\item 
In order to define $g(U)$ appropriately, we have to perturb the
CPTP map $\cT^{A \to B}$ to a CP map $\hcT^{A \to B}$ (see 
Equation~\ref{eq:perturbT} for precise definition)  in the diamond 
norm in order to obtain tail bounds involving smooth conditional entropies.
This ensures that, for any input state $\rho^{AR}$, the operator
$(\cT^{A \to B} \otimes \I^R)(\rho^{AR})$ is close to the operator
$(\hcT^{A \to B} \otimes \I^R)(\rho^{AR})$ in the trace distance.
The smooth conditional entropies defined 
in earlier works like \cite{decoupling} and \cite{Smooth_decoupling}
do not quite suffice for this purpose; they only manage to show that the
positive semidefinite matrices obtained by applying
CP maps $\cT^{A \to B}$, $\hcT^{A \to B}$ to a certain `averaged state'
are close. 
Additionally, in order to obtain a good Lipschitz constant for
$g(U)$, we have to cleverly design
the weighting operator arising from the weighted Cauchy-Schwarz 
inequality required to upper bound Schatten $1$-norm of an operator with 
its Schatten $2$-norm. We also want the smooth one-shot entropic 
quantities to approach their natural Shannon entropic analogues
in the asymptotic iid regime. It is challenging
to meet all three requirements simultaneously, and for this we need
to define a {\em novel one-shot smooth conditional modified 
R\'{e}nyi $2$-entropy}. 

\end{enumerate}
Addressing the above two issues forms the new 
technical advancement towards the decoupling literature. 

\subsection{Proof technique}
\label{subsec:prooftechnique}
We now give a high level description of the proof of our main result.
For a unitary $U$ on the system $A$, we define the value taken by the
decoupling function at $U$ as follows:
\[
f(U) := 
\lVert
(\cT^{A \to B} \otimes \I^R)(
(U^A \otimes I^R) \rho^{AR} (U^{A \dagger} \otimes I^R)
)
- \omega^B \otimes \rho^R
\rVert_1.
\]
We wish to prove a tail bound for $f(U)$ where $U$ is chosen uniformly
from a unitary design. For this, it is easier to first prove a
tail bound for a related function $g(U)$:
\[
g(U) := 
\lVert
((\tcT')^{A \to B} \otimes \I^R)(
(U^A \otimes I^R) (\trho')^{AR} (U^{A \dagger} \otimes I^R)
)
- (\tomega')^B \otimes (\trho')^R
\rVert_2,
\]
where $(\tcT')^{A \to B}$, $(\trho')^{AR}$,
$(\tomega')^{A'B}$ will be defined formally, later in 
Section~\ref{sec:main}. For the discussion in this section, consider 
$\tcT'$ to be a perturbed version of $\cT$ within a distance of 
$
O(\sqrt{\epsilon})$, $(\trho')^{AR}=
2^{-\frac{1}{2}\mathbb{H}_2^{(\epsilon, \delta)}(A|R)_\rho}
$ 
and $(\tomega')^{A'B}=2^{-\frac{1}{2}\mathbb{H}_2^\epsilon(A|B)_\omega}$. 
We will have, for all probability distributions on $U^A$,
\begin{eqnarray*}
\mathbb{P}_{U^A}[
f(U) >
2^{
-\frac{1}{2} H_2^\epsilon(A|R)_\rho 
-\frac{1}{2} \mathbb{H}_2^{\epsilon,\delta}(A'|B)_\omega + 1
} + 14\sqrt{\epsilon} + 2 \theta
]\\
\leq
\mathbb{P}_{U^A}[
g(U) >
\lVert 
(\trho')^{AR}
\rVert_2 \cdot
\lVert 
(\tomega')^{A'B}
\rVert_2 
+ \theta
].&
\end{eqnarray*}

We then bound 
$
\mathbb{P}_{U^A}[
g(U) >
\lVert 
(\trho')^{AR}
\rVert_2 \cdot
\lVert 
(\tomega')^{A'B}
\rVert_2 
+ \theta
]
$
where $U^A$ is chosen according to the Haar measure. For this we
need to upper bound the Lipschitz constant of $g(U)$, which we do
in Lemma~\ref{lem:lipschitz}. Then Levy's lemma 
(Fact~\ref{fact:Levy}) gives an exponential
concentration result for $g(U)$ under the Haar measure. Using techniques 
from 
\cite{low_2009}, \cite{jl}, we obtain upper bounds on the centralised
moments of $(g(U))^2$ under the Haar measure.
Observe now that $(g(U))^2$ is a balanced degree two
polynomial (for the precise meaning see 
Definition~\ref{def:monomial}) in the 
matrix entries of $U$. We then use Low's~\cite{low_2009} derandomisation
technique in order to obtain an exponential
concentration result for $(g(U))^2$ when the unitary $U^A$ is chosen
uniformly from $t$-designs with the value of $t$ stated above. This
then leads to a similar exponential concentration result for 
$f(U)$ when  $U^A$ is chosen uniformly from a $t$-design, completing
the proof of Theorem~\ref{thm:main}. \footnote{A part of this work was presented in the online workshop Beyond IID in Information Theory-8.}

\subsection{Organisation of the paper}
Section~\ref{sec:prelim}  describes some notations, definitions and 
basic facts required for the paper.
Section~\ref{sec:main} proves the main result on one-shot decoupling with 
exponentially high concentration using unitary $t$-designs. The bounds
obtained are described using smooth versions of variants of one-shot
R\'{e}nyi $2$-entropies and max entropies.
Section~\ref{sec:iid} considers the main decoupling 
result in the
iid limit and obtains bounds in terms of the more familiar Shannon
entropic quantities.
Section~\ref{sec:fqsw} shows how to apply the main result in order to 
obtain an exponential concentration for FQSW theorem for unitary
designs. Section~\ref{sec:Applications} discuss the implications of 
FQSW concentration to 
relative thermalisation in quantum thermodynamics and to the 
Hayden-Preskill model for the black hole information paradox.
Section~\ref{sec:conclusion} concludes the paper and discusses
directions for further research.

\section{Preliminaries}  
\label{sec:prelim}
\subsection{Notation}
All vector spaces considered in the paper are finite dimensional 
inner product
spaces, aka finite dimensional Hilbert spaces, over the complex field. 
We use $|V|$ to denote the dimension of a Hilbert space $V$. 
Letters $c_1, c_2, c'_1, c'_2, \ldots$ 
denote positive universal constants. Logarithms are all taken in base
two.
We tacitly assume that the ceiling is taken of any formula that provides 
dimension or value of $t$ in unitary $t$-design.
The symbols $\mathbb{E}$, $\mathbb{P}$ denote expectation and probability 
respectively. The abbreviation "iid" is used to mean identically 
and independently distributed, which just means taking the tensor 
power of the identical copies of the underlying state. 
The notation ":=" is used to denote the definitions 
of the underlying mathematical quantities. 

The notation
$\mathcal{L}(A_1,A_2)$ denotes the Hilbert space of 
all linear operators from Hilbert space $A_1$ to Hilbert space
$A_2$ with the inner product being the Hilbert-Schmidt inner product
$\langle M, N \rangle := \Tr [M^\dag N]$.
For the special case when $A_1 = A_2$ we use
the phrase operator on $A_1$ and the symbol 
$\mathcal{L}(A_1)$.  Further, when $A_1 = A_2 = \C^m$,
$\mathcal{M}_m$ denotes vector space of all $m \times m$ matrices.
The symbol $I^{A}$ denotes the 
identity operator on vector space $A$.
The matrix $\pi^A$ denotes the so-called completely mixed 
state on system $A$, i.e., $\pi^A := \frac{I^A}{|A|}$.
We use the notation $U \circ A$ 
as a short hand to denote the conjugation of the operator $U$ on the 
operator $A$, that is, $U \circ A := U A U^\dagger$.

The symbol $\rho$ usually denotes a quantum state aka density matrix 
which is nothing but a Hermitian positive semidefinite matrix with
unit trace, and
$\mathcal{D}(\mathbb{C}^d)$ denotes the set of all $d \times d$ density 
matrices. The symbol $\mathrm{Pos}(\mathbb{C}^d)$ denotes the set of all
$d \times d$ positive semidefinite matrices, and the symbol
$\U(d)$ denotes the set of all $d \times d$ unitary matrices
with complex entries.
For a positive semidefinite matrix $\sigma$, we use $\sigma^{-1}$ to
denote the operator which is the orthogonal direct sum of
the inverse of $\sigma$ on its support and the zero operator on the
orthogonal complement of the support. This definition of 
$\sigma^{-1}$ is also known as the {\em Moore-Penrose pseudoinverse}.
The symbol $\ket{v}$ denotes a vector $v$ of unit $\ell_2$-norm, 
and $\bra{v}$ denotes the corresponding linear functional.
A rank one density matrix is called a pure quantum state.
Often, in what is a loose notation, a pure quantum state 
$\ket{v}\bra{v}$ is denoted by just the vector $\ket{v}$ or, if we
want to emphasise the density matrix formalism, by the notation $v$. For 
two Hermitian matrices $A$, $B$ of the same dimension, we use 
$A \geq B$ as a shorthand for the statement that $A - B$ is positive
semidefinite. \\
We use the notation $H(\cdot)$ to denote the usual Shannon or von 
Neumann entropy of the underlying state and the notation 
$\mathbb{H}(\cdot)$ to denote the modified or the new defined entropic 
quantities in this work.

Let $M \in \mathcal{L}(A)$.
The symbol $\Tr M$ denotes the 
trace of operator $M$. Trace is a linear map from $\mathcal{L}(A)$
to $\C$.
Let $A$, $B$ be two vector spaces.
The partial trace $\Tr_B [\cdot]$ obtained by tracing out $B$ 
is defined to be the unique linear map from 
$\mathcal{L}(A \otimes B)$ to $\mathcal{L}(A)$ satisfying 
$\Tr_B [M \otimes N] = (\Tr N) M$ for all operators 
$M \in \mathcal{L}(A)$, $N \in \mathcal{L}(B)$.

A linear map $\cT: \mathcal{M}_m \to \mathcal{M}_d $ is called a 
superoperator. A superoperator $\cT$ is said to be {\em positive} if 
it maps 
positive semidefinite matrices to positive semidefinite matrices,
and {\em completely positive} if $\cT \otimes \I$ is a 
positive superoperator for all identity superoperators $\I$.
A superoperator $\cT$ is said to be {\em trace preserving} if 
$\Tr [\cT(M)] = \Tr M$ for all $M \in \mathcal{M}_m$.
Completely positive and trace preserving (abbreviated as CPTP) 
superoperators are called 
{\em quantum operations} or {\em quantum channels}.
In this paper we only consider completely positive and trace 
non-increasing superoperators. Note that both trace and partial trace
defined in the previous paragraph are quantum channels.

The adjoint of a superoperator is defined with respect to the 
Hilbert-Schmidt inner product on matrices. In other words, if
$\cT: \mathcal{M}_m \to \mathcal{M}_d $ is a superoperator, then 
its adjoint
$\cT^\dag: \mathcal{M}_d \to \mathcal{M}_m $ is a superoperator uniquely
defined by the property that
$
\langle \cT^\dag(A), B \rangle =
\langle A, \cT(B) \rangle 
$
for all $A \in \mathcal{M}_d$, $B \in \mathcal{M}_m$.

We will be using the Stinespring representation of a superoperator, which 
we state as the following fact:
\begin{fact}
\label{fact:stinespring}
Any superoperator $\cT^{A \to B}$ can be represented as:
$$
\cT^{A \to B}(M^A) = 
\Tr_Z \{ 
V_{\cT}^{AC \to BZ} 
(M^A \otimes (\ket{0}\bra{0})^C) 
(W_{\cT}^{AC \to BZ})^{\dagger} 
\} 
$$
where $V_{\cT}$, $W_{\cT}$ are operators that map vectors from  
$A \otimes C$ to vectors in $B \otimes Z$. Systems 
$C$ and $Z$ are considered as the input and output ancillary 
systems respectively, such that $|A| |C| = |B| |Z|$. Without loss of
generality, $|C| \leq |B|$ and $|Z| \leq |A|$.
Furthermore, in the following special cases $V_{\cT}$, $W_{\cT}$
have additional properties.
\begin{enumerate}

\item 
$\cT$ is completely positive if and only if $V_{\cT} = W_{\cT}$.

\item 
$\cT$ is trace preserving if and only if $V_{\cT}^{-1} = W_{\cT}^\dag$.
Thus, $\cT$ is completely positive and trace preserving if and only if 
$V_{\cT} = W_{\cT}$ and are unitary operators.

\item 
$\cT$ is completely positive and trace non-decreasing if and only if 
$V_{\cT} = W_{\cT}$ and $\lVert V_{\cT} \rVert_\infty \leq 1$. 

\end{enumerate}  
\end{fact}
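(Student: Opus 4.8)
The plan is to prove everything through the Choi--Jamio{\l}kowski correspondence combined with a rank decomposition, which simultaneously yields the generalised operator-sum form $\cT(M) = \sum_k E_k M F_k^\dagger$ and, after packaging the summation index into ancillas, the claimed Stinespring-type representation with the two operators $V_\cT$, $W_\cT$; the special-case characterisations then fall out of elementary trace and positivity manipulations. This is essentially the standard dilation argument, extended from completely positive maps to arbitrary superoperators by allowing $V_\cT \neq W_\cT$.

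\emph{The operator-sum form.} For an arbitrary superoperator $\cT^{A \to B}$, introduce a copy $A'$ of $A$, let $\ket{\Gamma}^{AA'} := \sum_a \ket{a}^A \otimes \ket{a}^{A'}$ be the unnormalised maximally entangled vector, and form the Choi operator $J_\cT := (\cT^{A \to B} \otimes \I^{A'})(\ket{\Gamma}\bra{\Gamma}^{AA'}) \in \cL(B \otimes A')$. Expanding and using the transpose trick $(M^A \otimes I^{A'})\ket{\Gamma} = (I^A \otimes (M^{A'})^{T})\ket{\Gamma}$, with $T$ the transpose in the basis $\{\ket a\}$, gives the inversion identity $\cT(M) = \Tr_{A'}[(I^B \otimes (M^{A'})^{T}) J_\cT]$ for all $M \in \cL(A)$. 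Write $J_\cT = \sum_{k=1}^{r}\ket{u_k}\bra{v_k}$ with $\ket{u_k}, \ket{v_k} \in B \otimes A'$, for instance by a singular value decomposition, so that $r = \mathrm{rank}(J_\cT) \leq |A||B|$. Under the vectorisation bijection $\mathrm{vec}\colon \cL(A,B) \to B \otimes A$, $X \mapsto (X \otimes I)\ket{\Gamma}$, let $E_k, F_k \in \cL(A,B)$ be the unique operators with $\ket{u_k} = \mathrm{vec}(E_k)$ and $\ket{v_k} = \mathrm{vec}(F_k)$. Substituting into the inversion identity and using $(I^B \otimes (M^{A'})^{T})\mathrm{vec}(E_k) = \mathrm{vec}(E_k M)$ together with $\Tr_{A'}[\mathrm{vec}(X)\mathrm{vec}(Y)^\dagger] = XY^\dagger$ yields $\cT(M) = \sum_{k=1}^{r} E_k M F_k^\dagger$; conversely every map of this form is linear, hence a superoperator. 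Now take an output ancilla $Z$ with orthonormal basis $\{\ket{k}^Z\}$, set $\tilde V := \sum_k E_k \otimes \ket{k}^Z$ and $\tilde W := \sum_k F_k \otimes \ket{k}^Z$ in $\cL(A, B \otimes Z)$, so that $\Tr_Z[\tilde V M \tilde W^\dagger] = \sum_{k,l}\langle l | k \rangle\, E_k M F_l^\dagger = \cT(M)$; finally adjoin an input ancilla $C$ fixed in $\ket{0}\bra{0}^C$ and extend $\tilde V, \tilde W$ from the subspace $A \otimes \ket{0}^C$ to all of $A \otimes C$, obtaining $V_\cT, W_\cT \in \cL(A \otimes C, B \otimes Z)$ with $\cT(M) = \Tr_Z[V_\cT(M \otimes \ket{0}\bra{0}^C)W_\cT^\dagger]$. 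The ancilla dimensions are chosen (by padding $Z$, and correspondingly $C$) so that $|A||C| = |B||Z|$ while respecting $|C| \leq |B|$ and $|Z| \leq |A|$; this is possible since $r \leq |A||B|$ and the extension of $\tilde V,\tilde W$ off the input subspace is free.

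\emph{The special cases.} For (1): if $\cT$ is CP then $J_\cT \geq 0$, so its spectral decomposition lets us take $\ket{v_k} = \ket{u_k}$, whence $F_k = E_k$ and $V_\cT = W_\cT$; conversely, if $V_\cT = W_\cT$ then $M \geq 0 \Rightarrow M \otimes \ket{0}\bra{0}^C \geq 0 \Rightarrow V_\cT(M \otimes \ket{0}\bra{0})V_\cT^\dagger \geq 0 \Rightarrow \cT(M) \geq 0$, and the same argument applied to $V_\cT \otimes I$ shows $\cT \otimes \I$ is positive, so $\cT$ is CP. For (2): cyclicity of the trace gives $\Tr[\cT(M)] = \Tr[M\,(\bra{0}^C W_\cT^\dagger V_\cT \ket{0}^C)]$, so trace preservation is equivalent to $\bra{0}^C W_\cT^\dagger V_\cT \ket{0}^C = I^A$, i.e. $\sum_k F_k^\dagger E_k = I^A$; when $V_\cT, W_\cT$ are square this extends to $W_\cT^\dagger V_\cT = I^{A \otimes C}$, i.e. $V_\cT^{-1} = W_\cT^\dagger$. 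Combining (1) and (2): CP $+$ TP forces $V_\cT = W_\cT$ with $\tilde V$ isometric, and since $|A \otimes C| = |B \otimes Z|$ this isometry extends to a unitary $V_\cT$; conversely a unitary $V_\cT = W_\cT$ gives $\bra{0}^C V_\cT^\dagger V_\cT \ket{0}^C = I^A$ automatically, hence CP $+$ TP. For (3): CP gives $V_\cT = W_\cT$, and trace non-increasingness on PSD inputs is equivalent to the operator inequality $\bra{0}^C V_\cT^\dagger V_\cT \ket{0}^C \leq I^A$, i.e. $\tilde V := V_\cT\ket{0}^C$ is a contraction $A \to B \otimes Z$; extending $\tilde V$ by zero on the orthocomplement of $A \otimes \ket{0}^C$ gives $\lVert V_\cT \rVert_\infty \leq 1$, while conversely $\lVert V_\cT\rVert_\infty \leq 1$ forces $\bra{0}^C V_\cT^\dagger V_\cT \ket{0}^C \leq \lVert V_\cT \rVert_\infty^2\, I^A \leq I^A$.

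\emph{Main obstacle.} None of the steps is deep, but the fiddly parts are, first, the ancilla bookkeeping that enforces $|A||C| = |B||Z|$ together with the stated size bounds while keeping the representation exact, and second, the two extension steps: extending the isometry $\tilde V$ to a unitary $V_\cT$ in cases (2)--(3), and extending the contraction $\tilde V$ while preserving $\lVert V_\cT \rVert_\infty \leq 1$ in case (3). The conceptual heart is just the two-family Choi-versus-Kraus dictionary together with the transpose trick; the dimension matching is the only place one must be careful.
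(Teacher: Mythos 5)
The paper itself gives no proof of this fact (it is quoted as a standard Stinespring-type representation), so your argument stands on its own; the Choi--Kraus--vectorisation route you take is the natural one, and the parts concerning the representation formula and the characterisations (1)--(3) are essentially sound (in item (3) you correctly read ``trace non-decreasing'' as the intended ``trace non-increasing''). The serious problem is the dimension bookkeeping, which you flag but then dispose of with ``this is possible since $r \leq |A||B|$''; that is a non sequitur. Your own construction forces $|Z| \geq r = \mathrm{rank}(J_\cT)$: indeed $\Tr_Z[V_{\cT}(M \otimes \ket{0}\bra{0}^C)W_{\cT}^\dagger] = \sum_{k=1}^{|Z|} E_k M F_k^\dagger$ with $E_k = (I^B \otimes \bra{k}^Z)V_{\cT}(I^A \otimes \ket{0}^C)$, so the Choi rank of any map represented this way is at most $|Z|$. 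Hence whenever $r > |A|$ the advertised bounds $|Z| \leq |A|$, $|C| \leq |B|$ are unattainable, and such maps exist even among CPTP maps: the completely depolarising channel $\cT(M) = \Tr[M]\,\pi^B$ has Choi operator $\pi^B \otimes I^{A'}$ of full rank $|A||B|$, so any representation of it needs $|Z| \geq |A||B| > |A|$ and $|C| = |B||Z|/|A| \geq |B|^2 > |B|$. What your construction actually yields is $|Z| \leq |A||B|$ and $|C| \leq |B|^2$ (pad $|Z|$ up to a multiple of $|A|$). So the ``without loss of generality'' clause cannot be proved as stated for arbitrary superoperators; it holds only under an extra hypothesis such as $\mathrm{rank}(J_\cT) \leq |A|$ --- which is satisfied by the maps the paper actually applies it to, e.g.\ $\cT = \Tr_{A_2}$ with $C$ trivial and $Z = A_2$, and which is what is silently invoked later when $|Z| \leq |A|$ is used to bound $\lVert W_{\tcT'} \rVert_2 \leq \sqrt{|B||A|}\,\lVert W_{\tcT'}\rVert_\infty$.

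A second, smaller gap is in case (2): trace preservation gives only the corner identity $(I^A \otimes \bra{0}^C)\, W_{\cT}^\dagger V_{\cT}\, (I^A \otimes \ket{0}^C) = I^A$, and this does not ``extend'' to $W_{\cT}^\dagger V_{\cT} = I^{A C}$ by squareness alone; you must choose the extensions of $\tilde V, \tilde W$ off the subspace $A \otimes \ket{0}^C$ so that $V_{\cT}$ is invertible and $W_{\cT} = (V_{\cT}^\dagger)^{-1}$ still restricts to $\tilde W$ on that subspace. This is possible --- since $\tilde V^\dagger \tilde W = I^A$, no nonzero vector in the range of $\tilde W$ is orthogonal to the range of $\tilde V$, so the remaining columns of $V_{\cT}$ can be placed inside the orthocomplement of the range of $\tilde W$ while completing a basis of $B \otimes Z$ --- but it needs to be argued, not asserted. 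With that argument supplied, and with the dimension clause weakened or conditioned as above, the rest of your proof goes through.
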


For $p \geq 1$, Schatten $p$-norm for any operator 
$M \in \mathcal{L}(A_1,A_2)$ is defined as 
$\lVert M \rVert_p \triangleq {[\Tr({(M^{\dagger}M)}^{p/2})]}^{1/p}$.
In other words, $\lVert M \rVert_p$ is nothing but the $\ell_p$-norm
of the tuple of singular values of $M$.
The Schatten $\infty$-norm is defined by taking the limit
$p \to \infty$, and turns out to be the largest singular value of $M$.
The Schatten $2$-norm, aka the Hilbert Schmidt norm, is nothing but
the $\ell_2$-norm of the tuple obtained by stretching out the entries of
the matrix into a vector. 
The Schatten $\infty$-norm is nothing but the 
operator norm 
$
\lVert M \rVert_\infty = \max_{\lVert v \rVert_2 = 1} \lVert M v \rVert_2.
$
The Schatten $1$-norm is also known as the trace norm.
We have the norm properties
$
|\Tr M| \leq \lVert M \rVert_1,
$
$
\lVert M \rVert_1 \leq \sqrt{\Tr I} \lVert M \rVert_2,
$
$
\lVert M \rVert_p \leq (\Tr I)^{1/p} \lVert M \rVert_\infty,
$
$\lVert M \otimes N \rVert_p = \lVert M \rVert_p \cdot \lVert N \rVert_p$,
$ \lVert M \rVert_p \leq \lVert M \rVert_q$ if $p \geq q$ and 
$
\lVert M N \rVert_p \leq 
\min\{
\lVert M \rVert_p \lVert N \rVert_\infty,
\lVert M \rVert_\infty \lVert N \rVert_p
\}.
$

The distance between two CP maps $\cT_1^{A \to B}$ and $\cT_2^{A \to B}$ 
can be measured in terms of the
{\em diamond norm} \cite{KitaevWatrous} defined as follows:
\[
\lVert \cT_1 - \cT_2 \rVert_\Diamond :=
\max_{\rho^{AA'}} \; 
\lVert 
(\cT_1 \otimes \I^{A'})(\rho^{AA'}) -
(\cT_2 \otimes \I^{A'})(\rho^{AA'})
\rVert_1,
\]
where $A'$ is a new Hilbert space of the same dimension as $A$ and
the maximisation is over all quantum states $\rho^{AA'}$.

\subsection{Matrix manipulation}
\begin{fact}
\label{fact:unitary_equi_purification}
For Hilbert spaces $\cH_X,\; \cH_Y$ suppose that vectors 
$\ket{\psi}, \ket{\phi} \in \cH_X \otimes \cH_Y$ satisfy
$
\Tr_Y (\ket{\psi}\bra{\psi})= \Tr_Y (\ket{\phi}\bra{\phi}).
$
Then there exists a unitary operator $U$ on $\cH_Y$ such that 
$\ket{\psi}=(I^X \otimes U^Y)\ket{\phi}$.
\end{fact}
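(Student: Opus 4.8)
The plan is to deduce the claim from the (essentially unique) Schmidt decomposition of a bipartite vector, using crucially that $\ket{\psi}$ and $\ket{\phi}$ induce the \emph{same} reduced state on $\cH_X$. Write $\rho^X := \Tr_Y(\ket{\psi}\bra{\psi}) = \Tr_Y(\ket{\phi}\bra{\phi})$ and fix one spectral decomposition $\rho^X = \sum_{i=1}^{r} p_i\,\ket{e_i}\bra{e_i}$ with all $p_i > 0$, with $\{\ket{e_i}\}_{i=1}^r$ orthonormal in $\cH_X$, and with $r = \mathrm{rank}\,\rho^X$. The first step is to observe that, relative to this \emph{one} fixed eigenbasis of the common reduced state, both vectors admit Schmidt-type expansions
\[
\ket{\psi} = \sum_{i=1}^r \sqrt{p_i}\,\ket{e_i}^X \otimes \ket{f_i}^Y,
\qquad
\ket{\phi} = \sum_{i=1}^r \sqrt{p_i}\,\ket{e_i}^X \otimes \ket{g_i}^Y,
\]
where $\{\ket{f_i}\}_{i=1}^r$ and $\{\ket{g_i}\}_{i=1}^r$ are each orthonormal families in $\cH_Y$. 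Indeed, expanding $\ket{\psi}$ in any orthonormal basis of $\cH_X$ extending $\{\ket{e_i}\}$ and demanding that the partial trace over $Y$ equal $\sum_i p_i\ket{e_i}\bra{e_i}$ forces the $X$-components outside $\mathrm{span}\{\ket{e_i}\}$ to vanish and the $Y$-side vectors attached to the $\ket{e_i}$ to be mutually orthogonal with squared norms $p_i$; the same reasoning applies verbatim to $\ket{\phi}$, and the coefficients and $X$-vectors are literally the same because the two reduced states coincide.

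The second step is to manufacture the unitary. Since $\ket{\psi},\ket{\phi}\in\cH_X\otimes\cH_Y$, the Schmidt rank obeys $r\le|\cH_Y|$, so $\{\ket{f_i}\}_{i=1}^r$ and $\{\ket{g_i}\}_{i=1}^r$ are orthonormal families of equal cardinality inside $\cH_Y$. Extend the assignment $\ket{g_i}^Y\mapsto\ket{f_i}^Y$ to a bijection between an orthonormal basis of $\cH_Y$ completing $\{\ket{g_i}\}$ and an orthonormal basis of $\cH_Y$ completing $\{\ket{f_i}\}$, and let $U$ be the resulting unitary on $\cH_Y$. Then
\[
(I^X\otimes U^Y)\ket{\phi}
= \sum_{i=1}^r \sqrt{p_i}\,\ket{e_i}^X\otimes U^Y\ket{g_i}^Y
= \sum_{i=1}^r \sqrt{p_i}\,\ket{e_i}^X\otimes\ket{f_i}^Y
= \ket{\psi},
\]
as required.

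Since this is a classical fact, I do not anticipate a real obstacle; the only point demanding attention is the possible degeneracy in the spectrum of $\rho^X$. When several $p_i$ coincide, the $Y$-side Schmidt vectors are not determined uniquely, but the argument never uses uniqueness — it only uses that, for \emph{some} fixed choice, the families $\{\ket{f_i}\}$ and $\{\ket{g_i}\}$ are orthonormal, which the partial-trace computation in the first step supplies regardless of degeneracy. One should also note that no dimension hypothesis on $\cH_Y$ is needed: the bound $r\le|\cH_Y|$ is automatic from $\ket{\psi},\ket{\phi}\in\cH_X\otimes\cH_Y$. An equivalent route is operator-theoretic: identify $\ket{\psi}$ and $\ket{\phi}$ with operators from $\cH_Y$ into $\cH_X$ whose polar decompositions share the positive factor $\sqrt{\rho^X}$, and take $U$ to intertwine their partial-isometry factors; this yields the same conclusion.
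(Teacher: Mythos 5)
The paper states Fact~\ref{fact:unitary_equi_purification} without proof, treating it as the standard "purifications are related by a local unitary" result, so there is nothing to compare your argument against. Your proof is correct and is the canonical one: reading off a Schmidt decomposition of each vector relative to a fixed eigenbasis of the common reduced state $\rho^X$ (the partial-trace computation you sketch does force the $Y$-side vectors attached to distinct $\ket{e_i}$ to be mutually orthogonal with norms $\sqrt{p_i}$, and kills the components outside the support of $\rho^X$), then extending the map $\ket{g_i}\mapsto\ket{f_i}$ between the two resulting orthonormal $r$-frames to a unitary on $\cH_Y$. You correctly note that the bound $r\le|\cH_Y|$ comes for free and that spectral degeneracy causes no problem since no uniqueness of the Schmidt data is invoked; the operator-theoretic polar-decomposition route you mention at the end is an equally standard equivalent. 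No gap.
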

Fix an orthonormal basis $\{\ket{a}^{A}\}_a$ of $A$ and 
$\{\ket{z}^{Z}\}_z$ 
of $Z$. Consider the tensor basis 
$
\{\ket{a}^{A} \otimes \ket{z}^{Z}\}_{a,z}
$
of the Hilbert space $A \otimes Z$. 
The isometric linear map 
$\vectorise^{A, Z}: \cL(Z, A) \to A \otimes Z$ is defined as 
the unique linear map satisfying 
$
\vectorise^{A, Z}(\ket{a}^{A} \bra{z}^{Z}) :=
\ket{a}^{A} \otimes \ket{z}^{Z}
$
\cite{watrous2004notes}.
The inverse linear map is denoted by 
$(\vectorise^{A, Z})^{-1}$. It is also an isometry.
We will be using the following property of the 
$\vectorise^{-1}$ map which we state as a fact here. A simpler version
of this fact was used in \cite{aubrun_szarek_werner_2010}.
\begin{fact}
\label{fact:vec_partial_trace}
For any two vectors $\ket{x}^{AZ}$, $\ket{y}^{AZ}$ on a 
bipartite Hilbert space $A \otimes Z$,
\[
\left(\Tr_Z(\ket{x}\bra{y}^{AZ})\right)^{A \times A} = 
\left(\vectorise^{-1}(\ket{x})\right)^{A \times Z}
\left[\left(\vectorise^{-1}(\ket{y})\right)^{A \times Z}\right]^\dagger
\]
where $\vectorise^{-1}:A \otimes Z \to A \times Z := \cL(Z,A)$.
\end{fact}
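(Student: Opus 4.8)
The plan is to prove the identity by reducing, via sesquilinearity, to the case of product basis vectors and verifying both sides directly. First I would fix the orthonormal bases $\{\ket{a}^{A}\}_a$ and $\{\ket{z}^{Z}\}_z$ used to define $\vectorise^{A,Z}$, and expand $\ket{x}^{AZ} = \sum_{a,z} x_{az}\,\ket{a}^{A}\otimes\ket{z}^{Z}$ and $\ket{y}^{AZ} = \sum_{a',z'} y_{a'z'}\,\ket{a'}^{A}\otimes\ket{z'}^{Z}$. Since $\vectorise^{A,Z}$ is linear, so is its inverse, whence $(\vectorise^{-1}(\ket{x}))^{A\times Z} = \sum_{a,z} x_{az}\,\ket{a}^{A}\bra{z}^{Z} =: X$ and likewise $(\vectorise^{-1}(\ket{y}))^{A\times Z} = \sum_{a',z'} y_{a'z'}\,\ket{a'}^{A}\bra{z'}^{Z} =: Y$; that is, $X$ and $Y$ are just the matrices in $\cL(Z,A)$ whose $(a,z)$ entries are the coordinates $x_{az}$ and $y_{az}$ of $\ket{x}$ and $\ket{y}$. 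Both sides of the claimed identity are linear in $\ket{x}$ and conjugate-linear in $\ket{y}$, so it suffices to check it when $\ket{x} = \ket{a}^{A}\otimes\ket{z}^{Z}$ and $\ket{y} = \ket{a'}^{A}\otimes\ket{z'}^{Z}$ are product basis vectors.

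For such vectors the left-hand side is $\Tr_Z(\ket{a}\ket{z}\bra{a'}\bra{z'}) = (\Tr[\ket{z}\bra{z'}])\,\ket{a}\bra{a'} = \delta_{z,z'}\,\ket{a}\bra{a'}$, using the defining property of the partial trace, while the right-hand side is $\ket{a}\bra{z}\,(\ket{a'}\bra{z'})^{\dagger} = \ket{a}\bra{z}\,\ket{z'}\bra{a'} = \langle z \,|\, z'\rangle\,\ket{a}\bra{a'} = \delta_{z,z'}\,\ket{a}\bra{a'}$, so the two agree. Extending by sesquilinearity, for general $\ket{x},\ket{y}$ both sides equal $\sum_{a,a'}\bigl(\sum_{z} x_{az}\,\overline{y_{a'z}}\bigr)\ket{a}\bra{a'}$, which is precisely $X Y^{\dagger}$, completing the proof.

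There is essentially no obstacle here; the statement is a routine property of the $\vectorise$ map, and the only point needing a little care is the bookkeeping of conjugation: the partial trace $\Tr_Z(\ket{x}\bra{y})$ is conjugate-linear in $\bra{y}$, which matches the conjugate-linearity introduced on the right-hand side by the adjoint $(\cdot)^{\dagger}$ applied to $\vectorise^{-1}(\ket{y})$. One should also observe that the identity is stated with respect to the fixed consistent choice of orthonormal bases on $A$ and $Z$ that is used on both sides, so the verification on basis product vectors indeed determines the general case.
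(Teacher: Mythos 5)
Your proof is correct and takes essentially the same route as the paper: both fix the defining bases, expand $\ket{x}$, $\ket{y}$ in coordinates, and compute both sides to the common expression $\sum_{a,a'}\bigl(\sum_z x_{az}\overline{y_{a'z}}\bigr)\ket{a}\bra{a'}$. The only stylistic difference is that you organize the calculation via a sesquilinearity reduction to product basis vectors, whereas the paper carries out the coordinate computation directly on general vectors; the underlying verification is identical.
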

\begin{proof}
Fix orthonormal bases $\{\ket{a}^A\}_a$, $\{\ket{z}^Z\}_z$ for $A$, $Z$.
We can write 
\[
\ket{x}^{AZ} = \sum_{az} x_{az} \ket{a}^A \ket{z}^Z,
~~~
\ket{y}^{AZ} = \sum_{az} y_{az} \ket{a}^A \ket{z}^Z.
\]
This gives
\[
\vectorise^{-1}(\ket{x}) = \sum_{az} x_{az} \ket{a}^A \bra{z}^Z,
~~~
\vectorise^{-1}(\ket{y}) = \sum_{az} y_{az} \ket{a}^A \bra{z}^Z,
\]
\[
\Rightarrow 
\left(\vectorise^{-1}(\ket{x})\right)
\left(\vectorise^{-1}(\ket{y})\right)^\dagger = 
\sum_{a a'} \sum_z x_{az} y_{a'z}^*  \ket{a}\bra{a'}^A .
\]
On the other hand
\[
\Tr_Z\left(\ket{x}\bra{y}^{AZ}\right) = 
\Tr_Z \bigg(
\sum_{a a'} \sum_{z z'} x_{az} y_{a'z'}^*
\ket{a}\bra{a'}^A  \otimes \ket{z}\bra{z'}^Z
\bigg) = 
\sum_{a a'} x_{az} y_{a'z}^* \ket{a}\bra{a'}^A .
\]
This completes the proof.
\end{proof}
We now state the so called polar decomposition of any linear operator.
\begin{fact}
\label{fact:polar}
Any operator $M$ can be expressed as $M= V Q$, known as the left 
polar decomposition, where $V$ is a unitary matrix and $Q$ is a 
positive semidefinite matrix. Also, $M$ can be expressed as $M= P U$, 
where $P$ is a positive semidefinite matrix and $U$ is a unitary 
matrix. This is known as the right polar decomposition.
\end{fact}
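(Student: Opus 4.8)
The plan is to obtain both decompositions from the spectral theorem applied to the positive semidefinite operator $M^\dagger M$, with the extension of a partial isometry to a full unitary as the one non-automatic step.

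First I would establish the left polar decomposition $M = VQ$. Set $Q := (M^\dagger M)^{1/2}$, the positive semidefinite square root of the positive semidefinite matrix $M^\dagger M$; this exists by the spectral theorem, writing $M^\dagger M = \sum_j \lambda_j \ket{e_j}\bra{e_j}$ with $\lambda_j \geq 0$ and setting $Q := \sum_j \sqrt{\lambda_j}\,\ket{e_j}\bra{e_j}$. The key identity is that for every vector $\ket{x}$,
\[
\lVert Q\ket{x} \rVert_2^2 = \bra{x} Q^2 \ket{x} = \bra{x} M^\dagger M \ket{x} = \lVert M\ket{x} \rVert_2^2,
\]
so $\ker Q = \ker M$ and the assignment $Q\ket{x} \mapsto M\ket{x}$ is a well-defined linear isometry $W_0$ from $\mathrm{range}(Q)$ onto $\mathrm{range}(M)$. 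Since $Q$ is Hermitian we have $\mathrm{range}(Q)^\perp = \ker Q = \ker M$, and by the rank--nullity theorem $\dim \ker M = \dim \mathrm{range}(M)^\perp$; hence I may choose an arbitrary unitary $W_1$ from $\ker Q$ onto $\mathrm{range}(M)^\perp$ and define $V := W_0 \oplus W_1$ with respect to the orthogonal decomposition of the underlying space as $\mathrm{range}(Q) \oplus \ker Q$. Then $V$ is unitary, and $VQ$ agrees with $M$ on $\mathrm{range}(Q)$ by the construction of $W_0$ and on $\ker Q = \ker M$ since both sides vanish there; therefore $M = VQ$.

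For the right polar decomposition $M = PU$ I would avoid repeating the argument and instead conjugate: with $M = VQ$ as above, write $M = (VQV^\dagger)\,V$ and put $P := VQV^\dagger$, $U := V$. Then $U$ is unitary, $P$ is positive semidefinite as a unitary conjugate of the positive semidefinite $Q$, and $PU = VQV^\dagger V = VQ = M$. (Equivalently, one can apply the left decomposition to $M^\dagger$ and take adjoints.)

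I expect no genuine obstacle in the finite-dimensional setting; the only step needing a moment's thought is extending the partial isometry $W_0$ to an honest unitary $V$, which works precisely because $\dim \ker M = \dim \mathrm{range}(M)^\perp$ by rank--nullity. Uniqueness of $Q$, or of $V$ when $M$ is invertible, is not claimed in the statement, so I would not dwell on it.
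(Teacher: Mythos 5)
The paper states this polar decomposition as a well-known fact and does not supply a proof, so there is no paper argument to compare against. Your proof is correct and is the standard spectral-theorem argument: $Q := (M^\dagger M)^{1/2}$ is positive semidefinite, the identity $\lVert Q\ket{x}\rVert_2 = \lVert M\ket{x}\rVert_2$ shows $\ker Q = \ker M$ and makes $Q\ket{x} \mapsto M\ket{x}$ a well-defined isometry from $\mathrm{range}(Q)$ onto $\mathrm{range}(M)$, and rank--nullity (using that $M$ is square, which is the only case the paper uses and the only case in which $V$ could be unitary) gives the matching dimensions needed to fill in an arbitrary unitary between $\ker Q$ and $\mathrm{range}(M)^\perp$. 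The conjugation trick $P := VQV^\dagger$, $U := V$ for the right decomposition is also sound, as is the alternative of applying the left decomposition to $M^\dagger$ and taking adjoints. One stylistic remark: once $W_0$ is defined by $W_0(Q\ket{y}) = M\ket{y}$ for all $\ket{y}$, the equality $VQ = M$ follows directly, since $(VQ)\ket{y} = W_0(Q\ket{y}) = M\ket{y}$ for every $\ket{y}$; the separate check on the two summands of $\mathrm{range}(Q) \oplus \ker Q$ is harmless but unnecessary.
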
 

Next, we state four useful facts from Dupuis' thesis \cite{decoupling}.
\begin{fact}[\mbox{\cite[Lemma~I.1]{decoupling}}]
\label{fact:Dupuis_I.1}
Let $\rho$, $\rho'$ and $\sigma$ be positive semidefinite operators 
on $\cH$ such that 
$\Tr[\rho'] \leq \Tr[\sigma]$ and $\rho' \geq \rho$. Then, 
$\lVert \rho' - \sigma \rVert_1 \leq 2 \lVert \rho - \sigma \rVert_1$.
\end{fact}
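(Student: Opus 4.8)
The plan is to use the triangle inequality to split $\lVert \rho' - \sigma \rVert_1$ into a ``$\rho'$-to-$\rho$'' piece and a ``$\rho$-to-$\sigma$'' piece, and then to show that the first piece is itself bounded by $\lVert \rho - \sigma \rVert_1$ using positive semidefiniteness together with the trace constraint. Concretely, by the triangle inequality for the Schatten $1$-norm,
\[
\lVert \rho' - \sigma \rVert_1 \leq \lVert \rho' - \rho \rVert_1 + \lVert \rho - \sigma \rVert_1,
\]
so it suffices to prove $\lVert \rho' - \rho \rVert_1 \leq \lVert \rho - \sigma \rVert_1$.

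First I would exploit the hypothesis $\rho' \geq \rho$: this means $\rho' - \rho$ is positive semidefinite, hence its trace norm equals its trace, $\lVert \rho' - \rho \rVert_1 = \Tr[\rho' - \rho] = \Tr[\rho'] - \Tr[\rho]$. Next I would bring in the other hypothesis $\Tr[\rho'] \leq \Tr[\sigma]$, which gives $\Tr[\rho'] - \Tr[\rho] \leq \Tr[\sigma] - \Tr[\rho] = \Tr[\sigma - \rho]$. Finally, using the elementary norm inequality $\lvert \Tr M \rvert \leq \lVert M \rVert_1$ recorded in the preliminaries (applied to $M = \sigma - \rho$), we get $\Tr[\sigma - \rho] \leq \lVert \sigma - \rho \rVert_1 = \lVert \rho - \sigma \rVert_1$. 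Chaining these three steps yields $\lVert \rho' - \rho \rVert_1 \leq \lVert \rho - \sigma \rVert_1$, and combining with the triangle inequality above gives $\lVert \rho' - \sigma \rVert_1 \leq 2 \lVert \rho - \sigma \rVert_1$.

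There is essentially no serious obstacle here; the only thing to be careful about is not to assume $\rho$, $\rho'$, $\sigma$ are normalised states — the argument uses only positive semidefiniteness and the inequality $\Tr[\rho'] \leq \Tr[\sigma]$, never $\Tr = 1$. The one place where positivity is genuinely needed is the identity $\lVert \rho' - \rho \rVert_1 = \Tr[\rho' - \rho]$, which relies on $\rho' - \rho \geq 0$, i.e. on the hypothesis $\rho' \geq \rho$; dropping it breaks the statement. I would present the proof in exactly this order: triangle inequality first, then the three-step bound on $\lVert \rho' - \rho \rVert_1$.
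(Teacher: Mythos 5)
Your proof is correct, and it is the standard argument for this lemma; the paper itself does not reproduce a proof but simply cites Lemma~I.1 of Dupuis's thesis, where the argument is this same chain: triangle inequality, then $\lVert \rho'-\rho \rVert_1 = \Tr[\rho'-\rho]$ via $\rho' \geq \rho$, then the trace hypothesis, then $\lvert\Tr M\rvert \leq \lVert M \rVert_1$. Nothing is missing and no step over-assumes normalisation, exactly as you note.
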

\begin{fact}[\mbox{\cite[Lemma~I.2]{decoupling}}]
\label{fact:Dupuis_I.2}
Let $\rho^{AB}$ be a positive semidefinite operator, and let 
$0 \leq P^{B} \leq I^{B}$. Then,
$
\Tr_B[(P^B \otimes I^A) \rho^{AB} (P^B \otimes I^A)] \leq \rho^A.
$
\end{fact}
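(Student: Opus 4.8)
The plan is to complete $P^B$ to a trace-preserving operation on the $B$ register and then invoke positivity of the partial trace. First I would observe that $0 \le P^B \le I^B$ forces $0 \le (P^B)^2 \le I^B$ by the spectral theorem, so that $Q^B := \sqrt{I^B - (P^B)^2}$ is a well-defined positive semidefinite operator on $B$ with $(P^B)^2 + (Q^B)^2 = I^B$; thus $\{P^B, Q^B\}$ are the Kraus operators of a quantum channel on $B$.

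Next I would use the cyclicity of the partial trace inside the traced-out factor: for all operators $X^B, Z^B$ and $Y^{AB}$ one has $\Tr_B[(X^B \otimes I^A)\, Y^{AB}\, (Z^B \otimes I^A)] = \Tr_B[((Z^B X^B) \otimes I^A)\, Y^{AB}]$, which follows by linearity after checking it on product operators $Y^{AB} = Y^A \otimes Y^B$ using ordinary cyclicity of the trace on $B$. Applying this with $X^B = Z^B = P^B$ and, separately, with $X^B = Z^B = Q^B$ (both Hermitian), adding the two, and using $(P^B)^2 + (Q^B)^2 = I^B$, gives
\[
\Tr_B[(P^B \otimes I^A) \circ \rho^{AB}] + \Tr_B[(Q^B \otimes I^A) \circ \rho^{AB}] = \Tr_B[\rho^{AB}] = \rho^A .
\]
Now $(Q^B \otimes I^A) \circ \rho^{AB}$ is a conjugation of the positive semidefinite operator $\rho^{AB}$, hence positive semidefinite, and the partial trace maps positive semidefinite operators to positive semidefinite operators, so the second summand above is positive semidefinite. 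Rearranging the displayed identity therefore yields $\Tr_B[(P^B \otimes I^A) \circ \rho^{AB}] = \rho^A - \Tr_B[(Q^B \otimes I^A) \circ \rho^{AB}] \le \rho^A$, which is the claim.

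I do not expect a genuine obstacle: the only step that is not entirely formal is the construction of the complementary operator $Q^B$, and that rests on the elementary inequality $(P^B)^2 \le I^B$. As a consistency check one can also avoid $Q^B$ altogether by reducing to the rank-one case $\rho^{AB} = \ket{v}\bra{v}$ (the general case then following by linearity from a spectral decomposition of $\rho^{AB}$), since Fact~\ref{fact:vec_partial_trace} identifies $\Tr_B[(P^B \otimes I^A) \circ \ket{v}\bra{v}]$ with $L\, \overline{(P^B)^2}\, L^\dagger$ where $L := \vectorise^{-1}(\ket{v}) \in \cL(B,A)$, and $\overline{(P^B)^2} \le I^B$ then gives $L\, \overline{(P^B)^2}\, L^\dagger \le L L^\dagger = \Tr_B[\ket{v}\bra{v}]$. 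I would present the first argument, as it is the shortest.
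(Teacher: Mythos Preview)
Your proof is correct, and the alternative sketch via $\vectorise^{-1}$ is also fine. The argument, however, differs from the paper's. The paper proceeds by an explicit block computation: it diagonalises $P^B = \sum_i p_i \ket{b_i}\bra{b_i}$, expands $\rho^{AB} = \sum_{j,j'} A^A_{j,j'} \otimes \ket{b_j}\bra{b_{j'}}$ in that eigenbasis, notes that the diagonal blocks $A^A_{j,j}$ are positive semidefinite with $\rho^A = \sum_j A^A_{j,j}$, and then directly evaluates $\Tr_B[(P^B \otimes I^A)\circ\rho^{AB}] = \sum_j p_j^2 A^A_{j,j} \leq \sum_j A^A_{j,j}$. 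Your route instead completes $P^B$ to a two-outcome instrument via $Q^B := \sqrt{I^B - (P^B)^2}$ and uses cyclicity under the partial trace to get the identity $\Tr_B[(P^B\otimes I^A)\circ\rho^{AB}] + \Tr_B[(Q^B\otimes I^A)\circ\rho^{AB}] = \rho^A$, with the second summand manifestly positive semidefinite. Both arguments ultimately rest on $(P^B)^2 \leq I^B$; the paper's version is a bare-hands coordinate calculation, while yours is basis-free and packages the inequality as the trace-preserving property of a channel, which is arguably more conceptual and generalises immediately to any set of Kraus operators summing to a contraction.
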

\begin{proof}
We give a more direct and elementary proof of this fact than what 
was given in \cite{decoupling}. The proof is a simple application of 
the definition of the partial trace and the fact that 
$P^B \leq I^B \Rightarrow {(P^B)}^2 \leq I^B$. By spectral theorem for 
positive semidefinite matrices, we express $P^B$ in its eigenbasis 
as $P^B = \Sigma_{i=1}^{|B|} p_i \ket{b_i}\bra{b_i}^B $. Since 
$P^B \leq I^B$, therefore $p_i \leq 1,\; \forall \;i$. Now we 
express $\rho^{AB}$ in block diagonal form with 
$\{ b_j \}_{j=1}^{|B|}$ as the orthonormal basis for $\cH_B$:
$$
\rho^{AB}=\Sigma_{j,j'=1}^{|B|} A^A_{j,j'} \otimes \ket{b_j}\bra{b_{j'}}^B 
~~~ \Rightarrow ~~~
\rho^A = \Sigma_{j}^{|B|} A^A_{j,j}.
$$
The block matrices $A_{j,j}$ are positive semidefinite.
Now evaluating $\Tr_B [P^B \cdot \rho^{AB}]$:
\[
\Tr_B \left[P^B \cdot \rho^{AB}\right] = 
\sum_{j,j',k,l=1}^{|B|} p_k p_l A^A_{j,j'}  
\langle{b_k}|b_j \rangle \langle{b_{j'}}|b_l \rangle 
\langle{b_l}|b_k \rangle = 
\sum_{j=1}^{|B|} p_j^2 A^A_{j,j} \overset{a}{\leq} 
\sum_{j=1}^{|B|} A^A_{j,j} =
\rho^A,
\]
where (a) holds since $p_j \leq 1$ and $A_{j,j}$ are positive 
semidefinite matrices for all $j$. This completes the proof.
\end{proof}
\begin{fact}[\mbox{\cite[Lemma~I.3]{decoupling}}]
\label{fact:Dupuis_I.3}
Let $\ket{\psi}^{AB} \in A \otimes B$, $\rho^A \in \mathrm{Pos}(A)$ 
such that $\rho^A \leq \psi^A$. Then, there exists an operator 
$P^B$ on $B$ such that $0^B \leq P^B \leq I^B$ and 
$
\Tr_B[
(P^B \otimes I^A) 
\ket{\psi}^{AB}\bra{\psi}
(P^B \otimes I^A)
] = \rho^A.
$
\end{fact}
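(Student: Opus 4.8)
The plan is to reduce the claim to producing a single positive contraction on $B$, built from a Petz/recovery-type ``relative operator'', and then to check that this candidate lands in the correct range precisely because of the hypothesis $\rho^A \le \psi^A$. First I would use the Schmidt decomposition (equivalently, Fact~\ref{fact:unitary_equi_purification} applied to $\ket{\psi}^{AB}$ and a purification of its $A$-marginal) to write $\ket{\psi}^{AB} = \sum_i \sqrt{\lambda_i}\,\ket{i}^A \otimes \ket{i}^B$ with all $\lambda_i > 0$, so that $\psi^A = \sum_i \lambda_i \ket{i}\bra{i}^A$ and the support of $\psi^A$ equals $\mathrm{span}\{\ket{i}^A\}$. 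A one-line expansion then gives, for any operator $P^B$ on $B$,
\[
\Tr_B\!\big[(P^B \otimes I^A)\,\ket{\psi}\bra{\psi}\,(P^B \otimes I^A)\big]
= \sum_{i,j} \sqrt{\lambda_i \lambda_j}\; \bra{j}^B (P^B)^2 \ket{i}^B \;\ket{i}\bra{j}^A .
\]
So it suffices to find $0^B \le P^B \le I^B$ whose square has matrix entries $\bra{j}^B (P^B)^2 \ket{i}^B = \lambda_i^{-1/2}\lambda_j^{-1/2}\,\bra{i}^A \rho^A \ket{j}^A$ on the Schmidt subspace of $B$ (taking $P^B = 0$ on the orthogonal complement of $\{\ket{i}^B\}$). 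Equivalently, $(P^B)^2$ should be the operator obtained from $\hat\rho := (\psi^A)^{-1/2}\,\rho^A\,(\psi^A)^{-1/2}$ (Moore--Penrose pseudoinverse) by transporting it to $B$ via the Schmidt correspondence $\ket{i}^A \mapsto \ket{i}^B$ and conjugating entrywise, so that $P^B$ is the square root of that operator.

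It then remains to verify that $0 \le \hat\rho \le I$, which is the only place the precise form of the hypothesis enters. Here I would argue that $\rho^A \ge 0$ together with $\rho^A \le \psi^A$ forces the support of $\rho^A$ to lie inside the support of $\psi^A$, so the pseudoinverse $(\psi^A)^{-1/2}$ interacts cleanly with $\rho^A$; non-negativity of $\hat\rho$ is immediate since $\hat\rho = \big((\psi^A)^{-1/2}\sqrt{\rho^A}\big)\big((\psi^A)^{-1/2}\sqrt{\rho^A}\big)^\dagger$, and the upper bound follows by conjugating $\rho^A \le \psi^A$ by the fixed operator $(\psi^A)^{-1/2}$ (conjugation preserves the operator order), giving $\hat\rho \le (\psi^A)^{-1/2}\psi^A(\psi^A)^{-1/2} = \Pi^A \le I^A$, where $\Pi^A$ projects onto the support of $\psi^A$. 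Since $\hat\rho$, and hence its entrywise conjugate transported to $B$, is positive semidefinite with spectrum in $[0,1]$, its square root is a legitimate $P^B$ with $0^B \le P^B \le I^B$; substituting back into the displayed identity returns exactly $\rho^A$.

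The same argument reads more symmetrically in the vectorisation language of Fact~\ref{fact:vec_partial_trace}: setting $X := \vectorise^{-1}(\ket{\psi}^{AB}) \in \cL(B,A)$ one has $\psi^A = X X^\dagger$, the left-hand side of the identity above equals $X\,((P^B)^2)^{T}\,X^\dagger$, so the task becomes to find $0^B \le Q^B \le I^B$ with $X Q^B X^\dagger = \rho^A$, and $Q^B := X^{-1}\rho^A (X^{-1})^\dagger$ works by the same support-containment and operator-monotonicity considerations, after which $P^B := \sqrt{(Q^B)^T}$. I do not expect a genuine obstacle in this lemma; the only delicate points are (i) using $\rho^A \le \psi^A$ itself — rather than merely $\rho^A \le \Tr_B\ket{\psi}\bra{\psi}$ up to a constant — to guarantee $0^B \le P^B \le I^B$, and (ii) handling the kernel of $\psi^A$ correctly through the pseudoinverse so that the partial trace reproduces $\rho^A$ exactly and not just a compression of it.
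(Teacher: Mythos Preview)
Your proof is correct but takes a genuinely different route from the paper's. The paper argues via purification theory: it writes $\psi^A = \rho^A + \sigma^A$, builds a purification $\ket{\theta}^{ABQ} = \ket{\rho}^{AB}\ket{0}^Q + \ket{\sigma}^{AB}\ket{1}^Q$ of $\psi^A$ using an auxiliary qubit $Q$, invokes Fact~\ref{fact:unitary_equi_purification} to relate $\ket{\theta}$ to $\ket{\psi}^{AB}\ket{0}^Q$ by a unitary $U^{BQ}$, and then reads off a contraction $M^B = \bra{0}^Q U^{BQ}\ket{0}^Q$ whose polar part is the desired $P^B$. Your approach is instead a direct algebraic construction: you write down the Schmidt decomposition, compute the partial trace explicitly, and solve for $(P^B)^2$ as the transported transpose of the Petz-type operator $(\psi^A)^{-1/2}\rho^A(\psi^A)^{-1/2}$, with the bound $0\le P^B\le I$ following from operator monotonicity of conjugation applied to $\rho^A\le\psi^A$. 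Your argument is more explicit (it produces a closed-form $P^B$) and avoids the ancilla, the polar decomposition, and the basis-change step; the paper's argument has the virtue of a clean operational picture (a binary measurement on the environment) and does not need to track matrix entries or worry about the transpose. Both are short; yours is arguably the more elementary of the two.
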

\begin{proof}
We give a simpler and more direct proof of this fact than given in 
\cite{decoupling}. Since $\psi^A \geq \rho^A$, there exists 
a positive semidefinite matrix $\sigma^A$ such that 
$\psi^A=\rho^A + \sigma^A$. Let the vector $\ket{\rho}^{AB}$ be 
a purification of $\rho^A$ and the vector $\ket{\sigma}^{AB}$ be a 
purification of $\sigma^A$. The squares of the lengths of the purifying 
vectors are equal to the traces of the respective matrices.
Now let $Q = \C^2$ be the system representing a qubit. 
We define the unit length pure state $\ket{\theta}^{ABQ}$ as:
$$
\ket{\theta}^{ABQ} \triangleq 
\ket{\rho}^{AB} \otimes \ket{0}^Q + \ket{\sigma}^{AB} \otimes \ket{1}^Q
$$ 
It follows that $\ket{\theta}^{ABQ}$ is a purification of the state 
$\psi^A$ and so is the state $\ket{\psi}^{AB} \otimes \ket{0}^Q$. 
Thus by Fact~\ref{fact:unitary_equi_purification} there exists a unitary 
matrix $U^{BQ}$ on the composite system $BQ$ satisfying:
$$
\ket{\theta}^{ABQ}=(I^A \otimes U^{BQ})(\ket{\psi}^{AB} \ket{0}^Q)
$$ 
Now we define a POVM measurement that first appends the ancilla $Q$ 
initialized to state $\ket{0}^Q$ to the state $\ket{\psi}^{AB}$, 
followed by applying the unitary $I^A \otimes U^{BQ}$ on the state 
$\ket{\psi}^{AB}\ket{0}^Q$ and finally measuring the ancilla system 
$Q$ of the resultant state in computational basis 
$\{ \ket{0}\bra{0}^Q, \ket{1}\bra{1}^Q\}$. The measurement succeeds if
we get the outcome $0$ in the ancilla register. 
Formally, the outcome `success' is described by an operator 
$M^B := (I^B \otimes \bra{0}^Q) U^{BQ} (I^B \otimes \ket{0}^Q)$.
Clearly, $\lVert M^B \rVert_\infty \leq 1$. We thus have:
\begin{align*}
\lefteqn{
\Tr_B [(I^A \otimes M^B) 
\ket{\psi}\bra{\psi}^{AB}(I^A \otimes M^{\dagger \; B})] 
} \\
&= 
\Tr_B [(I^{AB} \otimes \bra{0}^Q)(I^A \otimes U^{BQ}) 
(\ket{\psi}^{AB}\ket{0}^Q)
(\bra{\psi}^{AB}\bra{0}^Q) 
(I^A \otimes U^{\dagger \; BQ}) (I^{AB} \otimes \ket{0}^Q)] \\
&=
\Tr_B [(I^{AB} \otimes \bra{0}^Q) \ket{\theta}\bra{\theta}^{AB}  
(I^{AB} \otimes \ket{0}^Q)]
\;=\;
\Tr_B [\ket{\rho}^{AB} \bra{\rho}] 
\;=\;
\rho^A.
\end{align*}
Now, to come up with $P^B$ as mentioned in the statement of the 
fact we express 
$M^B = U_M^B P^B$, using the polar decomposition from 
Fact~\ref{fact:polar}, with $P_B \geq 0$. Since 
$\lVert M \rVert_\infty \leq 1$, therefore $P^B \leq I^B$.
Thus we get,
\begin{align*}
\rho^A 
&= 
\Tr_B [(I^A \otimes M^B) \ket{\psi}\bra{\psi}^{AB}
(I^A \otimes M^{\dagger \;B})]\\
&=
\Tr_B [(I^A \otimes U_M^B) (I^A \otimes P^B) \ket{\psi}\bra{\psi}^{AB}  
(I^A \otimes P^B) (I^A \otimes U_M^{\dagger \; B})] \\
&\overset{a}{=} 
\Sigma_i (I^A \otimes \bra{i}^B) (I^A \otimes U_M^B) (I^A \otimes P^B) 
\ket{\psi}\bra{\psi}^{AB}  (I^A \otimes P^B) 
(I^A \otimes U_M^{\dagger \; B}) (I^A \otimes \ket{i}^B)\\
&= 
\Sigma_i  (I^A \otimes {(U_M^\dagger \ket{i})^B}^\dagger) 
(I^A \otimes P^B) \ket{\psi}\bra{\psi}^{AB}   
(I^A \otimes P^B) (I^A \otimes (U_M^\dagger \ket{i})^B)\\
&\overset{b}{=} 
\Sigma_i (I^A \otimes \bra{u_i}^B) (I^A \otimes P^B) 
\ket{\psi}\bra{\psi}^{AB}   (I^A \otimes P^B) (I^A \otimes \ket{u_i}^B)\\ 
&= 
\Tr_B [(I^A \otimes P^B) \ket{\psi}\bra{\psi}^{AB}   (I^A \otimes P^B)]
\end{align*}
where (a) follows by the basis dependent definition of $\Tr_B$ by 
fixing $\{ \ket{i} \}_{i=1}^{|B|}$ as an orthonormal basis for system $B$;
(b) holds since $U_M^B$ is a unitary matrix that maps orthonormal basis 
$\{\ket{i}\}_{\{i \in B\}}$ to orthonormal basis 
$\{\ket{u_i}\}_{\{i \in B\}}$.
We thus have a $0 \leq P^B \leq I^B$ satisfying the fact. This 
completes the proof.
\end{proof}
\begin{fact}[\mbox{\cite[Lemma~3.5]{decoupling}}]
\label{fact:Dupuis_3.5}
Let $\rho^{AB} \in \mathrm{Pos}(A \otimes B)$ and let
$\rho^B := \Tr_A \rho^{AB}$. Then,
\[
|A|^{-1} \leq
\frac{\lVert \rho^{AB} \rVert_2^2}{\lVert \rho^{B} \rVert_2^2} \leq
|A|.
\]
\end{fact}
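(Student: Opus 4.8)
The plan is to prove the two inequalities separately, and to deduce the (harder) right-hand inequality from the (easier) left-hand one by passing to a purification. Throughout one may assume $\rho^{AB}\neq 0$ (otherwise both Schatten $2$-norms vanish and the ratio is undefined), and since the ratio is invariant under rescaling $\rho^{AB}$ by a positive scalar one may further assume $\Tr\rho^{AB}=1$, so that $\rho^{AB}$ is a density operator.

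For the left-hand inequality, equivalently $\lVert\rho^B\rVert_2^2\leq |A|\,\lVert\rho^{AB}\rVert_2^2$, I would use an elementary block decomposition. Fix an orthonormal basis $\{\ket{a}^A\}_a$ of $A$ and write $\rho^{AB}=\sum_{a,a'}\ket{a}\bra{a'}^A\otimes M_{aa'}$ with $M_{aa'}:=(\bra{a}^A\otimes I^B)\rho^{AB}(\ket{a'}^A\otimes I^B)$; Hermiticity of $\rho^{AB}$ gives $M_{a'a}=M_{aa'}^\dagger$, and the definition of the partial trace gives $\rho^B=\sum_a M_{aa}$. Expanding $\Tr[(\rho^{AB})^2]$ in this block form yields $\lVert\rho^{AB}\rVert_2^2=\sum_{a,a'}\Tr[M_{aa'}M_{a'a}]=\sum_{a,a'}\lVert M_{aa'}\rVert_2^2\geq \sum_a\lVert M_{aa}\rVert_2^2$, while the triangle inequality followed by Cauchy--Schwarz gives $\lVert\rho^B\rVert_2=\bigl\lVert\sum_a M_{aa}\bigr\rVert_2\leq \sum_a\lVert M_{aa}\rVert_2\leq \sqrt{|A|}\,\bigl(\sum_a\lVert M_{aa}\rVert_2^2\bigr)^{1/2}$. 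Combining the two displays gives $\lVert\rho^B\rVert_2^2\leq |A|\,\lVert\rho^{AB}\rVert_2^2$, i.e. the left-hand bound.

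For the right-hand inequality $\lVert\rho^{AB}\rVert_2^2\leq |A|\,\lVert\rho^B\rVert_2^2$, the plan is to take a purification $\ket{\psi}^{ABC}$ of $\rho^{AB}$ and exploit that complementary reductions of a pure state share the same nonzero spectrum. Writing $\rho^C:=\Tr_{AB}\ket{\psi}\bra{\psi}$ and $\rho^{AC}:=\Tr_B\ket{\psi}\bra{\psi}$, the Schmidt decompositions across the cuts $(AB)\,|\,C$ and $B\,|\,(AC)$ give $\lVert\rho^{AB}\rVert_2=\lVert\rho^C\rVert_2$ and $\lVert\rho^B\rVert_2=\lVert\rho^{AC}\rVert_2$; moreover $\rho^C=\Tr_A\rho^{AC}$. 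Applying the left-hand inequality already established to the positive semidefinite operator $\rho^{AC}$ on $A\otimes C$ (with $A$ the traced-out system) gives $\lVert\rho^C\rVert_2^2\leq |A|\,\lVert\rho^{AC}\rVert_2^2$, which is precisely $\lVert\rho^{AB}\rVert_2^2\leq |A|\,\lVert\rho^B\rVert_2^2$. Putting the two inequalities together finishes the proof.

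I expect the right-hand inequality to be the only non-routine point. A direct attack from the block form --- trying to bound the off-diagonal contribution $\sum_{a\neq a'}\lVert M_{aa'}\rVert_2^2$ using only positivity of the $2\times 2$ principal blocks of $\rho^{AB}$ --- loses a spurious factor of $|B|$, so one genuinely has to use global positivity of $\rho^{AB}$; the purification step is exactly what makes global positivity available, by recasting the upper bound as an instance of the easy lower bound applied to the complementary system. Everything else is routine norm manipulation.
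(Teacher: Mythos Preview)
Your proof is correct. Both inequalities are established soundly: the block decomposition plus Cauchy--Schwarz for the lower bound is standard, and the purification trick for the upper bound---reducing it to the lower bound applied on the complementary system---is clean and correct, since Schmidt decomposition guarantees $\lVert\rho^{AB}\rVert_2=\lVert\rho^{C}\rVert_2$ and $\lVert\rho^{B}\rVert_2=\lVert\rho^{AC}\rVert_2$ for any purification.

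Note, however, that the paper does \emph{not} give its own proof of this fact: it is simply quoted as \cite[Lemma~3.5]{decoupling} from Dupuis' thesis, with no argument supplied. So there is nothing in the present paper to compare your approach against. For context, Dupuis' original argument proceeds via the swap trick (Fact~\ref{fact:swap} here), writing $\lVert\rho^{AB}\rVert_2^2=\Tr[(\rho^{AB}\otimes\rho^{AB})(F^{A_1A_2}\otimes F^{B_1B_2})]$ and $\lVert\rho^{B}\rVert_2^2=\Tr[(\rho^{AB}\otimes\rho^{AB})(I^{A_1A_2}\otimes F^{B_1B_2})]$, and then invoking the operator inequalities $|A|^{-1}I^{A_1A_2}\leq |A|^{-1}F^{A_1A_2}+\Pi^{A_1A_2}_{\mathrm{asym}}$ etc.\ to sandwich the ratio. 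Your purification-and-duality argument is arguably more transparent: it avoids any direct manipulation of swap operators and makes explicit that the two inequalities are really a single statement viewed from complementary sides of a pure state. The normalisation to $\Tr\rho^{AB}=1$ is harmless but unnecessary---the purification construction works verbatim for any nonzero positive semidefinite operator.
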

 
In order to upper bound Schatten $1$-norm of an operator, sometimes it 
is more convenient to upper bound Schatten $2$-norm of a slightly 
modified operator. The following fact, which is nothing but an
application of the Cauchy-Schwarz inequality, allows us to do so.
\begin{fact} 
\label{fact:cs_tilde}
Let $M \in \cL(\cH)$ and $\sigma \in \cD(\cH)$, $\sigma > 0$. Then
$ 
\lVert M \rVert_1 \leq 
\lVert \sigma^{-1/4} M \sigma^{-1/4} \rVert_2.
$
\end{fact}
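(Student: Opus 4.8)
The plan is to factor $M$ through the operator $X := \sigma^{-1/4} M \sigma^{-1/4}$, whose Schatten $2$-norm is exactly the right-hand side, and then apply a weighted H\"older (equivalently, Cauchy--Schwarz) estimate together with the normalisation $\Tr \sigma = 1$.

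First I would note that, since $\sigma > 0$, the operator $\sigma^{-1/4}$ is a genuine inverse and $\sigma^{1/4}\sigma^{-1/4} = \sigma^{-1/4}\sigma^{1/4} = I$. Hence the exact identity
\[
M = \sigma^{1/4}\, X\, \sigma^{1/4}, \qquad X := \sigma^{-1/4} M \sigma^{-1/4},
\]
holds. Next I would invoke the three-factor H\"older inequality for Schatten norms, $\lVert ABC \rVert_1 \leq \lVert A \rVert_p \lVert B \rVert_q \lVert C \rVert_r$ whenever $1/p + 1/q + 1/r = 1$, with $p = r = 4$ and $q = 2$, obtaining
\[
\lVert M \rVert_1 = \lVert \sigma^{1/4} X \sigma^{1/4} \rVert_1 \leq \lVert \sigma^{1/4} \rVert_4 \cdot \lVert X \rVert_2 \cdot \lVert \sigma^{1/4} \rVert_4 .
\]
Finally, $\lVert \sigma^{1/4} \rVert_4 = \big(\Tr (\sigma^{1/2})^2\big)^{1/4} = (\Tr \sigma)^{1/4} = 1$ because $\sigma \in \cD(\cH)$, so the two outer factors drop out and $\lVert M \rVert_1 \leq \lVert X \rVert_2 = \lVert \sigma^{-1/4} M \sigma^{-1/4} \rVert_2$, which is the claim.

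I do not expect any genuine obstacle: the statement is precisely a weighted Cauchy--Schwarz bound. If one prefers to avoid quoting three-factor H\"older, the same conclusion follows from two applications of Cauchy--Schwarz for the Hilbert--Schmidt inner product: write $\lVert M \rVert_1 = |\Tr(U^\dagger M)|$ for an optimal unitary $U$, use cyclicity to get $\Tr(U^\dagger M) = \Tr\big((\sigma^{1/4} U^\dagger \sigma^{1/4}) X\big)$, apply Cauchy--Schwarz once to split off $\lVert X \rVert_2$, and apply it a second time to get $\lVert \sigma^{1/4} U \sigma^{1/4} \rVert_2^2 = \Tr(\sigma^{1/2} U \sigma^{1/2} U^\dagger) \leq \lVert \sigma^{1/2} \rVert_2^2 = \Tr \sigma = 1$. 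The one point to be careful about is the hypothesis $\sigma > 0$: this is exactly what makes $M = \sigma^{1/4} X \sigma^{1/4}$ an exact identity; with only $\sigma \geq 0$ the inequality fails for any $M$ with support outside $\mathrm{supp}(\sigma)$.
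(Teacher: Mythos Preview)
Your proof is correct. The paper does not give a proof for this fact beyond remarking that it ``is nothing but an application of the Cauchy-Schwarz inequality,'' which is exactly the weighted Cauchy--Schwarz/H\"older estimate you carry out; your three-factor H\"older with exponents $(4,2,4)$ and the alternative via $\lVert M\rVert_1 = |\Tr(U^\dagger M)|$ are both standard ways to make that remark precise.
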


We will also need Winter's gentle measurement 
lemma~\cite{winter:gentle}.
\begin{fact}
\label{fact:gentle}
Let $P$ be a positive operator such that $P \leq I$. For any density 
matrix $\rho$, satisfying $\Tr [P \rho P] \geq 1 - \epsilon$, it holds 
that $\lVert \rho - P \rho P \rVert_1 \leq 2 \sqrt{\epsilon}.$
\end{fact}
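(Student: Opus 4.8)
The plan is to reduce the statement to the usual formulation of the gentle measurement lemma, in which the post‑measurement operator is $\sqrt{\Lambda}\,\rho\,\sqrt{\Lambda}$ for some $0 \leq \Lambda \leq I$, and then to prove that formulation by a direct trace‑norm estimate. Concretely, set $\Lambda := P^2$. Since $0 \leq P \leq I$ we have $0 \leq \Lambda \leq I$ and $\sqrt{\Lambda} = P$, so $P \rho P = \sqrt{\Lambda}\,\rho\,\sqrt{\Lambda}$ and the hypothesis $\Tr[P\rho P] \geq 1-\epsilon$ becomes $\Tr[\Lambda \rho] \geq 1-\epsilon$. It therefore suffices to show: if $0 \leq \Lambda \leq I$ and $\Tr[\Lambda \rho] \geq 1-\epsilon$ for a density matrix $\rho$, then $\lVert \rho - \sqrt{\Lambda}\,\rho\,\sqrt{\Lambda} \rVert_1 \leq 2\sqrt{\epsilon}$.

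To prove this, I would first split
\[
\rho - \sqrt{\Lambda}\,\rho\,\sqrt{\Lambda} = (I - \sqrt{\Lambda})\,\rho + \sqrt{\Lambda}\,\rho\,(I - \sqrt{\Lambda})
\]
and apply the triangle inequality for the trace norm. For each term, write $\rho = \sqrt{\rho}\cdot\sqrt{\rho}$ and invoke the Cauchy--Schwarz inequality for Schatten norms, $\lVert XY \rVert_1 \leq \lVert X \rVert_2 \lVert Y \rVert_2$, together with $\lVert \sqrt{\rho} \rVert_2 = \sqrt{\Tr\rho} = 1$ and the submultiplicativity $\lVert \sqrt{\Lambda}\,\sqrt{\rho}\rVert_2 \leq \lVert \sqrt{\Lambda}\rVert_\infty \lVert \sqrt{\rho}\rVert_2 \leq 1$ (using $\Lambda \leq I$). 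This reduces everything to bounding $\lVert (I - \sqrt{\Lambda})\sqrt{\rho}\rVert_2^2 = \Tr[(I - \sqrt{\Lambda})^2 \rho]$. The key step is the operator inequality $(I - \sqrt{\Lambda})^2 \leq I - \Lambda$, which follows by applying the scalar inequality $\lambda \leq \sqrt{\lambda}$ for $\lambda \in [0,1]$ to $\Lambda$ via the spectral theorem (equivalently, $\Lambda \leq \sqrt{\Lambda}$, whence $I - 2\sqrt{\Lambda} + \Lambda \leq I - 2\Lambda + \Lambda = I - \Lambda$). Consequently $\Tr[(I - \sqrt{\Lambda})^2 \rho] \leq \Tr[(I - \Lambda)\rho] = 1 - \Tr[\Lambda \rho] \leq \epsilon$, so each of the two terms in the decomposition has trace norm at most $\sqrt{\epsilon}$, and the bound $2\sqrt{\epsilon}$ follows. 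Unwinding the substitution $\Lambda = P^2$ gives the stated claim for $P$.

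The only mildly delicate point is the operator inequality $(I - \sqrt{\Lambda})^2 \leq I - \Lambda$; everything else is a routine use of Cauchy--Schwarz and submultiplicativity for Schatten norms. An alternative route would go through the Fuchs--van de Graaf inequalities relating trace distance and fidelity, noting that $(\sqrt{\Lambda}\otimes I)\ket{\psi}$ is a subnormalised purification of $\sqrt{\Lambda}\,\rho\,\sqrt{\Lambda}$ whenever $\ket{\psi}$ purifies $\rho$; but that introduces subnormalised‑state bookkeeping and typically a worse constant, so the direct computation above is preferable. (This is essentially Winter's original argument, included here for completeness.)
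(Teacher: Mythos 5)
Your proof is correct: the substitution $\Lambda := P^2$ (so $\sqrt{\Lambda} = P$) is exact, the decomposition $\rho - \sqrt{\Lambda}\rho\sqrt{\Lambda} = (I-\sqrt{\Lambda})\rho + \sqrt{\Lambda}\rho(I-\sqrt{\Lambda})$ together with H\"older/Cauchy--Schwarz for Schatten norms and the operator inequality $(I-\sqrt{\Lambda})^2 \leq I-\Lambda$ (from $\Lambda \leq \sqrt{\Lambda}$ on $[0,1]$) gives each term a bound of $\sqrt{\Tr[(I-\Lambda)\rho]} \leq \sqrt{\epsilon}$, yielding $2\sqrt{\epsilon}$ overall. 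The paper does not prove Fact~\ref{fact:gentle}; it simply cites Winter~\cite{winter:gentle}, and your argument is precisely the standard trace-norm proof (with the sharpened constant $2\sqrt{\epsilon}$ rather than Winter's original $\sqrt{8\epsilon}$), so there is no divergence to report beyond noting that you have supplied the omitted proof correctly.
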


We now state an important geometric fact about how a pair of subspaces
of a Hilbert space interact. This fact, first discovered by Jordan
a hundred and fifty years ago but which has since been independently
rediscovered many times, defines {\em canonical angles} between a pair of
subspaces. These angles are sometimes called as {\em chordal angles}.
\begin{fact} \mbox{\cite[Fact~6]{sen:interference}}
\label{fact:chordal}
Let $A$, $B$ be subspaces of a Hilbert space $\cH$. Then there is a
decomposition of $\cH$ as an orthogonal direct sum of the following
types of subspaces:
\begin{enumerate}

\item
One dimensional spaces orthogonal to both $A$ and $B$;

\item
One dimensional spaces contained in both $A$ and $B$;

\item
One dimensional spaces contained in $A$ and orthogonal to $B$;

\item
One dimensional spaces contained in $B$ and orthogonal to $A$;

\item
Two dimensional spaces intersecting $A$, $B$ each in one dimensional
spaces.

\end{enumerate}
Moreover, the one dimensional spaces in (2) and (3) above together with
the one dimensional intersections of the spaces in (5) with $A$ form
an orthonormal basis of $A$. A similar statement holds for $B$.
\end{fact}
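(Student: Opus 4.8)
The plan is to prove Fact~\ref{fact:chordal} by diagonalising a single self-adjoint operator and reading off the five types of subspaces from its spectral decomposition. Let $P_A$ and $P_B$ be the orthogonal projections of $\cH$ onto $A$ and $B$ respectively, and consider the operator $M := (P_A P_B P_A)|_A$ acting on the Hilbert space $A$. It is self-adjoint and satisfies $0 \leq M \leq I^A$, so $A$ has an orthonormal eigenbasis for $M$ with all eigenvalues in $[0,1]$. Using $P_A v = v$ one checks the identity $\langle v, M v\rangle = \lVert P_B v\rVert_2^2$ for $v \in A$; hence the kernel of $M$ is exactly $\{v \in A : v \perp B\}$, which we cut into one-dimensional spaces of type (3). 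Similarly, $M v = v$ for a unit vector $v$ forces $\lVert P_B v\rVert_2 = \lVert v\rVert_2$ and therefore $v \in B$, so the eigenvalue-$1$ eigenspace is exactly $A \cap B$, which we cut into one-dimensional spaces of type (2).

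The heart of the argument is the treatment of an eigenvalue $\lambda \in (0,1)$ with unit eigenvector $v$. First I would put $v' := P_B v - \lambda v$ and verify the two identities $v' \perp A$ and $\lVert v'\rVert_2^2 = \lambda(1-\lambda) > 0$; these show that $W_v := \mathrm{span}\{v, v'\} = \mathrm{span}\{v, P_B v\}$ is genuinely two-dimensional. Since $v'$ is a nonzero vector orthogonal to $A$ we get $W_v \cap A = \C v$, and since $P_B v \in W_v \cap B$ while $v \notin B$ we get $W_v \cap B = \C (P_B v)$, so $W_v$ is a space of type (5). Mutual orthogonality of all the pieces built so far is then a bookkeeping computation: eigenvectors of $M$ for distinct eigenvalues are orthogonal; every $v'$ lies in $A^\perp$, hence is orthogonal to all vectors of type (2), type (3) and to every other $v_\mu$; and the identity $\langle P_B v_\lambda, P_B v_\mu\rangle = \langle v_\lambda, M v_\mu\rangle = \mu\,\langle v_\lambda, v_\mu\rangle$ gives $\langle v'_\lambda, v'_\mu\rangle = 0$ when $\lambda \neq \mu$.

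Let $N$ be the orthogonal direct sum of all the type (2), type (3) and type (5) spaces constructed above. By construction $A \subseteq N$, and $N$ is $P_B$-invariant: each $W_v$ equals $\mathrm{span}\{v, P_B v\}$, $P_B$ acts as the identity on $A \cap B$, and $P_B$ annihilates the type (3) part. Hence $N^\perp$ is $P_B$-invariant too. Inside $N^\perp$ I would take the subspace $B \cap N^\perp$, cut it into one-dimensional spaces of type (4) via any orthonormal basis, and let $X$ be its orthogonal complement within $N^\perp$; then for $x \in X$ we have $P_B x \in N^\perp$ by invariance, while $P_B x$ is orthogonal to $B \cap N^\perp$ because $\langle P_B x, y\rangle = \langle x, y\rangle = 0$ for all $y \in B \cap N^\perp$, so $P_B x = 0$, i.e. $x \perp B$; together with $X \perp A$ (since $A \subseteq N$) this lets us cut any orthonormal basis of $X$ into one-dimensional spaces of type (1). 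This exhausts $\cH = N \oplus N^\perp$. The ``moreover'' clause then follows at once from the orthogonal decomposition $A = \big(\bigoplus_\lambda \C v_\lambda\big) \oplus (A \cap B) \oplus (\text{type (3)})$ together with $\C v_\lambda = W_{v_\lambda} \cap A$, and symmetrically for $B$ with $P_B v_\lambda$ playing the role of $v_\lambda$. I expect no genuine conceptual obstacle; the only delicate point is the organisational one of showing that the residual space $N^\perp$ splits cleanly into types (1) and (4), and the $P_B$-invariance of $N$ is exactly the lever that makes this work.
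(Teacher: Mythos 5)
The paper states Fact~\ref{fact:chordal} without proof, citing it as the classical result of Jordan on canonical (principal) angles between subspaces, so there is no paper proof to compare against; what you have written is essentially the standard spectral-theoretic proof of that classical fact and, as far as I can tell, it is correct. The key choices --- diagonalising $M = (P_A P_B P_A)|_A$, reading off the kernel as $A \cap B^\perp$, the eigenvalue-$1$ eigenspace as $A \cap B$, building the two-dimensional spaces $W_v = \operatorname{span}\{v, P_B v\}$ from eigenvectors at $\lambda \in (0,1)$, establishing $P_B$-invariance of $N$ so that $N^\perp$ splits into a $B$-part and a part annihilated by $P_B$ --- are all the right ones, and the algebra (e.g.\ $\lVert v' \rVert_2^2 = \lambda(1-\lambda)$, the orthogonality of the $v'_\lambda$, and the final step $P_B x \in (B \cap N^\perp) \cap (B \cap N^\perp)^\perp = \{0\}$) checks out. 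One small imprecision: you phrase the orthogonality of the $v'$'s as holding ``when $\lambda \neq \mu$'', but you also need it for two distinct orthonormal eigenvectors sharing the same eigenvalue $\lambda$; fortunately the identical computation gives $\langle v'_i, v'_j \rangle = 0$ whenever $\langle v_i, v_j\rangle = 0$, regardless of whether the eigenvalues coincide, so the argument covers degenerate spectra as well --- it just deserves an explicit word. Similarly, in deriving $P_B x = 0$ you should state explicitly that $P_B x \in B$, so that together with $P_B x \in N^\perp$ you get $P_B x \in B \cap N^\perp$, which combined with $P_B x \perp (B \cap N^\perp)$ forces $P_B x = 0$; the reasoning is there implicitly but is worth making overt. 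The ``moreover'' clause follows exactly as you indicate from $A = (A\cap B) \oplus (A\cap B^\perp) \oplus \bigoplus_\lambda \C v_\lambda$ and, for $B$, from the $P_B$-invariance of every piece giving $B = (A\cap B) \oplus (B \cap N^\perp) \oplus \bigoplus_\lambda \C(P_B v_\lambda)$.
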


We end this section by stating two properties of the 
so-called {\em swap trick} that will be useful later on.
\begin{fact}[\mbox{\cite[Lemma~3.3]{decoupling}}]
\label{fact:swap}
For two operators $M^A, N^A \in \cL(A)$, we have
$
\Tr[(MN)^A] = \Tr[(M^{A_1} \otimes N^{A_2}) F^{A_1 A_2}], 
$
where $A_1$, $A_2$ are two Hilbert spaces of the same dimension as
$A$ and $F^{A_1 A_2}$ swaps the tensor multiplicand systems 
$A_1$ and $A_2$.
\end{fact}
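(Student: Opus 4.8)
The plan is to prove the identity by the most direct route: expand the swap operator in a fixed basis and compute both traces explicitly. First I would fix an orthonormal basis $\{\ket{i}\}_i$ of $A$ and use it to identify the two copies $A_1$, $A_2$ with $A$; relative to this basis the swap operator admits the standard expansion $F^{A_1 A_2} = \sum_{i,j} (\ket{i}\bra{j})^{A_1} \otimes (\ket{j}\bra{i})^{A_2}$. A one-line check confirms that this operator does send $\ket{v}^{A_1}\otimes\ket{w}^{A_2}$ to $\ket{w}^{A_1}\otimes\ket{v}^{A_2}$, so it is indeed the swap referred to in the statement.

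Next I would multiply out $(M^{A_1}\otimes N^{A_2})\,F^{A_1 A_2} = \sum_{i,j} (M\ket{i}\bra{j}) \otimes (N\ket{j}\bra{i})$ and take the trace, using the tensor identity $\Tr[X\otimes Y] = (\Tr X)(\Tr Y)$ term by term. This yields $\Tr[(M^{A_1}\otimes N^{A_2})F^{A_1A_2}] = \sum_{i,j}\bra{j}M\ket{i}\bra{i}N\ket{j}$. Summing over $i$ first and inserting the completeness relation $\sum_i \ket{i}\bra{i} = I^A$ collapses the right-hand side to $\sum_j \bra{j}(MN)\ket{j} = \Tr[(MN)^A]$, which is exactly the claimed left-hand side.

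There is no real obstacle in this argument; it is a routine bookkeeping calculation, and the only point deserving a moment's attention is verifying that the displayed expansion of $F$ genuinely implements the swap and that the identification of $A_1,A_2$ with $A$ is harmless --- which is clear since the final value $\Tr[(MN)^A]$ is manifestly basis-independent. If a basis-free presentation is preferred, an equally short alternative is to observe that both sides are bilinear in the pair $(M,N)$, so it suffices to check the identity on rank-one operators $M = \ket{a}\bra{b}$ and $N = \ket{c}\bra{d}$, in which case both sides reduce to $\langle b|c\rangle\langle d|a\rangle$ using $F(\ket{a}\otimes\ket{c}) = \ket{c}\otimes\ket{a}$ together with cyclicity of the trace.
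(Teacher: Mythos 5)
Your proof is correct. The paper does not reproduce a proof of this fact — it simply cites Dupuis' thesis — but the computation you give (expand $F^{A_1A_2}=\sum_{i,j}\ket{i}\bra{j}\otimes\ket{j}\bra{i}$, multiply, trace term-by-term, and collapse the resolution of the identity) is the standard argument for the swap trick and is exactly what one finds in the cited source.
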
 
\begin{fact}
\label{fact:swappartialtrace}
For an operator $M^{AR} \in \cL(A \otimes R)$, we have
\[
\bigg\lVert 
\Tr_{R_1 R_2} \left[
(I^{A_1 A_2} \otimes F^{R_1 R_2})
(M^{A_1 R_1} \otimes (M^\dag)^{A_2 R_2})
\right]
\bigg\rVert_1 \leq
|A| \bigg\lVert M^{AR} \bigg\rVert_2^2.
\]
\end{fact}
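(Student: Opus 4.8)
The plan is to reduce the left-hand side to a trace-norm of a single partial trace of a rank-$|A|^2$-ish operator and then apply the norm inequality $\lVert N \rVert_1 \leq \sqrt{\Tr I}\,\lVert N \rVert_2$ on the space $A_1 A_2$. First I would use Fact~\ref{fact:vec_partial_trace} (the $\vectorise^{-1}$ identity) to rewrite the inner partial trace $\Tr_{R_1 R_2}$ acting on a product of the form $\ket{x}\bra{y}$. Concretely, fix an orthonormal basis of $A \otimes R$ and write the spectral/SVD-type decomposition $M^{AR} = \sum_i \ket{x_i}^{AR}$ suitably, or more simply observe that $(I^{A_1 A_2} \otimes F^{R_1 R_2})(M^{A_1 R_1} \otimes (M^\dag)^{A_2 R_2})$, after tracing out $R_1 R_2$ with the swap inserted, is exactly of the form $\Tr_{R}[ (\text{something})^{A_1 R} (\text{something})^{A_2 R}]$ reorganised so that Fact~\ref{fact:vec_partial_trace} identifies it with $N N^\dag$ where $N := (\vectorise^{-1}(\vectorise(M)))$-type operator living in $\cL(R, A_1)\otimes \cL(R,A_2)^*$...

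More cleanly: the swap trick (Fact~\ref{fact:swap}) tells us that inserting $F^{R_1R_2}$ and tracing over $R_1R_2$ is the algebraic operation of ``gluing'' the two $R$-legs, so
\[
\Tr_{R_1 R_2}[(I^{A_1 A_2} \otimes F^{R_1 R_2})(M^{A_1 R_1} \otimes (M^\dag)^{A_2 R_2})]
= (\mathrm{id}^{A_1} \otimes \mathrm{id}^{A_2})\big[ \vectorise^{-1}_{A,R}(\ket{m}) \, (\vectorise^{-1}_{A,R}(\ket{m}))^\dagger \big],
\]
where $\ket{m}^{AR} := \vectorise_{A,R}(M)$ is the vectorisation of $M$ viewed as a vector in $A\otimes R$. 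By Fact~\ref{fact:vec_partial_trace} with $\ket{x} = \ket{y} = \ket{m}$, this equals $\Tr_R[\ket{m}^{AR}\bra{m}]$, a positive semidefinite operator on $A$ (not $A_1 A_2$!) — wait, I need to be careful about which space survives. Let me reorganise: the correct reading is that the surviving operator lives on $A_1 \otimes A_2 = A \otimes A$, and equals $N N^\dagger$ where $N = \vectorise^{-1}(\ket{m})$ is the operator in $\cL(R, A)$ obtained by de-vectorising $M$. So the left-hand side is $\lVert N N^\dagger \rVert_1 = \lVert N \rVert_2^2$. Finally $\lVert N \rVert_2 = \lVert \ket{m} \rVert_2 = \lVert M \rVert_2$ since $\vectorise$ is an isometry, which already gives $\lVert M\rVert_2^2$ — but the claimed bound has an extra factor $|A|$, which must come from the fact that I have mis-identified the spaces and the surviving operator is actually of the form (swap-partial-trace giving an operator on $A_1A_2$ whose trace-norm is bounded via $\lVert \cdot\rVert_1 \le \sqrt{|A_1||A_2|}\,\lVert\cdot\rVert_2 = |A|\,\lVert\cdot\rVert_2$, applied to something of Schatten-2 norm $\le \lVert M\rVert_2^2/|A|^{1/2}$ or similar).

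The honest plan is therefore: (i) write $N := (\vectorise^{A,R})^{-1}(\vectorise^{A,R}(M))$ is just $M$ reread, and express the operator inside the trace-norm as $\Tr_{R}$ of a bipartite-in-$A_1A_2$ thing using Fact~\ref{fact:vec_partial_trace}; (ii) identify it as a positive operator $P^{A_1 A_2}$ with $\Tr P = \lVert M\rVert_2^2$ and $\lVert P \rVert_2 \le \lVert M\rVert_2^2$; (iii) apply $\lVert P \rVert_1 = \Tr P$ directly if $P \ge 0$ — giving $\lVert M \rVert_2^2$, and then note $|A| \ge 1$ makes the stated bound weaker, hence it holds. The main obstacle I anticipate is getting the tensor-leg bookkeeping exactly right — specifically verifying that after inserting $F^{R_1R_2}$ and tracing out $R_1R_2$, the resulting operator on $A_1\otimes A_2$ is precisely $\Tr_R[\ket{m}\bra{m}]$ re-vectorised, i.e. equals $X X^\dagger$ for $X = \vectorise^{-1}(\ket{m}) \in \cL(A_2^*, A_1)$ or its appropriate transpose, so that its trace-norm is manifestly $\lVert X\rVert_2^2 = \lVert M\rVert_2^2 \le |A|\,\lVert M\rVert_2^2$. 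I would carry this out by fixing bases and writing $M = \sum_{a,r} M_{ar}\ket{a}\bra{r}$ (so $M^{AR}$ is really $M$ viewed in $\cL(R,A)$, with its ``$AR$'' label meaning the vectorised vector $\sum M_{ar}\ket{a}^A\ket{r}^R$), expanding $M^{A_1R_1}\otimes (M^\dagger)^{A_2 R_2} = \sum M_{a_1 r_1} \overline{M_{a_2 r_2}} \ket{a_1}\bra{r_1}\otimes\ket{r_2}\bra{a_2}$, applying $F^{R_1R_2}$ and $\Tr_{R_1R_2}$ to collapse $r_1 = r_2 =: r$, obtaining $\sum_{a_1,a_2}(\sum_r M_{a_1 r}\overline{M_{a_2 r}})\ket{a_1}\bra{a_2}$ — wait that's an operator on $A_1\otimes A_2$ only if we keep both $\ket{a_1}$ and $\bra{a_2}$ as... no, this is $\sum (MM^\dagger)_{a_1 a_2}\ket{a_1}\bra{a_2} = MM^\dagger$, an operator on a single copy $A$, with $\lVert MM^\dagger\rVert_1 = \lVert M\rVert_2^2$. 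Then since $|A| \ge 1$, $\lVert M\rVert_2^2 \le |A|\,\lVert M\rVert_2^2$ and we are done. The factor $|A|$ in the statement is simply slack (presumably kept for uniformity with how the fact is invoked later), so the proof is short once the vectorisation bookkeeping is pinned down.
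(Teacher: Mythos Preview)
Your proposal contains a genuine misreading that invalidates the argument. The object $M^{AR}$ is declared to lie in $\cL(A\otimes R)$, i.e.\ it is an \emph{operator on} $A\otimes R$, not a vector in $A\otimes R$ nor an element of $\cL(R,A)$. Your final basis expansion $M=\sum_{a,r}M_{ar}\ket{a}\bra{r}$ treats $M$ as a rectangular map $R\to A$, and the subsequent computation collapsing everything to $MM^\dagger$ on a single copy of $A$ follows from that wrong typing. With the correct reading, $M^{A_1R_1}\otimes (M^\dag)^{A_2R_2}$ is an operator on $A_1R_1A_2R_2$, and tracing out $R_1R_2$ leaves an operator on $A_1\otimes A_2$, not on a single copy of $A$.

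Consequently, your claim that the trace norm equals $\lVert M\rVert_2^2$ and that the factor $|A|$ is mere slack is false. Take $M^{AR}=I^A\otimes\ket{0}\bra{0}^R$: then the operator inside the norm is $I^{A_1}\otimes I^{A_2}$, whose trace norm is $|A|^2$, while $\lVert M\rVert_2^2=|A|$, so the bound $|A|\,\lVert M\rVert_2^2=|A|^2$ is attained. The paper's proof proceeds by block-decomposing $M^{AR}=\sum_{r,r'}M_{rr'}^A\otimes\ket{r}\bra{r'}^R$ with $M_{rr'}^A\in\cL(A)$, showing the partial trace equals $\sum_{r,r'}M_{rr'}^{A_1}\otimes(M_{rr'}^\dag)^{A_2}$, applying the triangle inequality and multiplicativity $\lVert X\otimes X^\dag\rVert_1=\lVert X\rVert_1^2$, and finally using $\lVert M_{rr'}\rVert_1^2\le |A|\,\lVert M_{rr'}\rVert_2^2$ to obtain the factor $|A|$. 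The $\vectorise^{-1}$ machinery and positivity arguments you invoke are not needed and, as used, rest on the wrong interpretation of the operator.
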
 
\begin{proof}
Fix an orthonormal basis $\{\ket{r}\}_{\{r \in R\}}$ for the system 
$R$. Let
$
M^{AR} =
\sum_{r r'} M_{r r'}^A \otimes \ket{r}\bra{r'}^R ,
$
where $M_{r r'}^A$ is an operator in $A$ for every $r$, $r'$.
Then,
\begin{eqnarray*}
\lefteqn{
\Tr_{R_1 R_2} [
(I^{A_1 A_2} \otimes F^{R_1 R_2})
(M^{A_1 R_1} \otimes (M^\dag)^{A_2 R_2})
]
} \\
& = &
\sum_{r r' r'' r'''}
\left(M_{r r'}^{A_1} \otimes (M_{r''' r''}^\dag)^{A_2}\right) 
\Tr_{R_1 R_2} \left[
F^{R_1 R_2} 
\left(\ket{r}\bra{r'}^{R_1}  \otimes \ket{r''}\bra{r'''}^{R_2}  \right)
\right] \\
& = &
\sum_{r r' r'' r'''}
\left(M_{r r'}^{A_1} \otimes (M_{r''' r''}^\dag)^{A_2}\right)
\Tr_{R_1 R_2} \left[
F^{R_1 R_2} 
\ket{r}^{R_1} \ket{r''}^{R_2} \bra{r'}^{R_1} \bra{r'''}^{R_2}
\right] \\
& = &
\sum_{r r' r'' r'''}
\left(M_{r r'}^{A_1} \otimes (M_{r''' r''}^\dag)^{A_2}\right)
\Tr_{R_1 R_2} \left[
\ket{r''}^{R_1} \ket{r}^{R_2} \bra{r'}^{R_1} \bra{r'''}^{R_2}
\right] 
\;=\;
\sum_{r r'}
M_{r r'}^{A_1} \otimes (M_{r r'}^\dag)^{A_2}.
\end{eqnarray*}
So,
\begin{align*}
\bigg\lVert
\sum_{r r'}
M_{r r'}^{A_1} \otimes (M_{r r'}^\dag)^{A_2}
\bigg\rVert_1
  &\leq  
\sum_{r r'}
\bigg\lVert
M_{r r'}^{A_1} \otimes (M_{r r'}^\dag)^{A_2}
\bigg\rVert_1 
=
\sum_{r r'} \bigg\lVert M_{r r'}^{A} \bigg\rVert_1^2 \\
&\leq
\sum_{r r'} |A| \bigg\lVert M_{r r'}^{A} \bigg\rVert_2^2 \\
&= 
|A| \bigg\lVert M^{AR} \bigg\rVert_2^2.
\end{align*}
This completes the proof.
\end{proof}

\subsection{Entropic quantities}
\label{subsec:entropies}
The Shannon entropy of a random variable $X$ with probability 
distribution $(p_x)_x$ is defined as $H(X)_p := -\sum_x p_x \log p_x$.
For a quantum system $B$ in a state $\omega^B$, the Shannon entropy 
is defined analogously as $H(B)_\omega := -\Tr [\omega \log \omega]$.
For a bipartite quantum system $AB$ in a state $\omega^{AB}$, the 
conditional Shannon entropy is defined as 
$
H(A|B)_\omega := H(AB)_\omega - H(B)_\omega.
$

We recall the definition of the smooth conditional R\'{e}nyi 2-entropy 
from \cite{decoupling}.
\begin{definition}
\label{def:Renyi2}
Let $0 \leq \epsilon < 1$. 
The $\epsilon$-smooth conditional R\'{e}nyi 2-entropy for a 
bipartite positive 
semidefinite operator $\rho^{AR}$ on systems $A$ and $R$ is defined as:
$$
H_2^\epsilon(A|R)_{\rho} := 
-2 \log 
\min_{
\substack{ 
{\sigma^{AR} \in \mathrm{Pos}(AR):}\\
{\lVert \rho^{AR}-\sigma^{AR}\rVert_1 \leq \epsilon} \\
{\omega^R \in \cD(R): \omega^R > 0^R}
}
}
\{ 
\lVert 
(\omega^R \otimes \I^A)^{-1/4} \sigma^{AR} (\omega^R \otimes \I^A)^{-1/4} 
\rVert_2
\}.
$$ 
When $\epsilon = 0$, 
we simply refer to the above quantity as conditional 
R\'{e}nyi $2$-entropy and denote it by $H_2(A|R)_{\rho}$ and
define 
$
\trho^{AR} :=
(\omega^R \otimes \I^A)^{-1/4} \rho^{AR} (\omega^R \otimes \I^A)^{-1/4}.
$
\end{definition}

We also recall the definition of $\epsilon$-smooth max-entropy 
defined in \cite{QuantumAEP}.
\[
\Hmax^\epsilon(B)_\rho :=
2 \log \min_{
\substack{
{\sigma^{B} \in \mathrm{Pos}(B):}\\
{\lVert \rho^{B}-\sigma^{B}\rVert_1 \leq \epsilon}
}
} 
\{\Tr \sqrt{\sigma}\}
\]

Next, we recall the definition of $\epsilon$-smooth min-entropy
defined in \cite{decoupling}.
\begin{definition}
\label{def:Hmin}
Let $0 \leq \epsilon < 1$. The  $\epsilon$-smooth min-entropy of 
$\rho^B$ is defined as:
\[
H_{\mathrm{min}}^\epsilon(B)_\rho := -\log 
\min_{
\substack{ 
\sigma^B \in \mathrm{Pos}(B): \\
\lVert \sigma^B - \rho^B \rVert_1 \leq \epsilon
}} \,
\lVert \sigma^B \rVert_\infty.
\]
\end{definition}

We now define a new quantity that we call the 
{\em smooth modified max-entropy}.
\begin{definition}
\label{def:Hmaxprime}
The $\epsilon$-smooth modified max-entropy of system $B$ 
under a quantum state $\omega^B$ is defined as:
\[
\mathbb{H}_{\max}^\epsilon(B)_\omega := 
\log \lVert ((\omega''_\epsilon)^B)^{-1} \rVert_\infty,
\]
where $(\omega''_\epsilon)^B$ is the positive semidefinite 
matrix obtained by
zeroing out those smallest eigenvalues of $\omega^B$ that sum to
less than or equal to $\epsilon$.
The $\epsilon$-smooth modified max-entropy of a probability distribution
can be defined similarly.
\end{definition}
It is easy to see that
$
\Hmax^\epsilon(B)_\omega \leq 
\mathbb{H}_{\max}^\epsilon(B)_\omega \leq 
\log (|B| / \epsilon)
$
for any state $\omega^B$.\\
A slightly finer linear algebraic analysis would give us that 
$
\mathbb{H}_{\max}^\epsilon(B)_\omega \leq \log 
\left( \frac{|B|-\textrm{supp}((\omega''_\epsilon)^B)+1}{\epsilon} 
\right)
$.
One of the applications of this smoothed modified max-entropy is found 
in a recent work on one-shot purity distillation in \cite{One_shot_purity}.

We next define a novel entropic quantity called {\em smooth modified 
conditional R\'{e}nyi 2-entropy}.
\begin{definition}
\label{def:Renyi2prime}
Let $0 \leq \epsilon, \delta < 1$. 
The $(\epsilon,\delta)$-smooth modified conditional R\'{e}nyi 2-entropy 
for a bipartite positive semidefinite operator $\omega^{AB}$ on 
systems $A$ and $B$ is defined as:
\begin{eqnarray*}
\lefteqn{\mathbb{H}_2^{\epsilon,\delta}(A|B)_{\omega}} \\
& := &
-2 \log 
\min_{
\substack{ 
{
\eta^{AB}: 0^{AB} \leq \eta^{AB} \leq \omega^{AB},
\lVert \omega^{AB}-\eta^{AB} \rVert_1 \leq \epsilon
}\\
{
\forall \ket{v} \in \mathrm{supp}(\eta^{AB}):
\lVert 
(I^{A} \otimes \Pi^B_{\mathrm{supp}(\omega'''_{\epsilon,\delta})}) \ket{v} 
\rVert_2^2 
\geq 1- \epsilon
}
}
}
\lVert 
(I^A \otimes (\omega'''_{\epsilon,\delta})^{B})^{-1/4} 
\eta^{AB} 
(I^A \otimes (\omega'''_{\epsilon,\delta})^{B})^{-1/4} 
\rVert_2,
\end{eqnarray*}
where $(\omega'''_{\epsilon,\delta})^B$ is the positive 
semidefinite operator obtained
by zeroing out those eigenvalues of $\omega^B$ that are smaller than
$2^{-(1+\delta)\mathbb{H}_{\max}^\epsilon(B)_\omega}$.
\end{definition}
Observe that $(\omega'''_{\epsilon, \delta})^B \geq (\omega''_\epsilon)^B$,
where $(\omega''_\epsilon)^B$ is defined in 
Definition~\ref{def:Hmaxprime} above.\\
We remark that the above modified definitions of the smoothed max and 
the conditional 2-entropy is inspired (not directly inferred though) from 
the so-called information spectrum methods used in the analysis of 
asymptotic independent but not identical classical and quantum 
information theory. We refer the reader to 
\cite{infor_spectrum_Hayashi, information_spectrum_Han, 
Classical_info_spectrum} for a detailed exposition.

It is easy to see, via Fact~\ref{fact:gentle}, that for any 
state $\omega^{AB}$,
$
H_2^{4\sqrt{\epsilon}}(A|B)_{\omega} \geq 
\mathbb{H}_2^{\epsilon,\delta}(A|B)_{\omega} 
$
for any $\delta > 0$.
Also the following fact holds.
\begin{fact}
\label{fact:Renyi2upperbound}
Let $\epsilon > 0$. Then,
$
H_2^\epsilon(A|B)_\omega  \leq 
H(A|B)_\omega + 8 \epsilon \log |A| + 
2 + 2 \log \epsilon^{-1}.
$
\end{fact}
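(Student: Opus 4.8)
The plan is to choose, inside the minimisation defining $H_2^\epsilon(A|B)_\omega$, a specific near-optimal pair $(\sigma^{AB}, \omega^B_\ast)$ and then estimate the resulting Schatten $2$-norm. The natural choice is to take the smoothed operator $\sigma^{AB}$ to be the truncation of $\omega^{AB}$ obtained by zeroing out its smallest eigenvalues summing to at most $\epsilon$, so that $\lVert \omega^{AB} - \sigma^{AB}\rVert_1 \le \epsilon$ and $\sigma^{AB} \le \omega^{AB}$; and to take the conditioning state to be $\omega^B_\ast := \Tr_A \sigma^{AB}$, renormalised to be a density operator (one must be slightly careful that this is strictly positive, which can be arranged by mixing in $\xi I^B/|B|$ for a tiny $\xi$ and absorbing the resulting error into constants, or by smoothing $\omega^B$ separately). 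With this choice, $\log \lVert (\omega^B_\ast \otimes \I^A)^{-1/4}\sigma^{AB}(\omega^B_\ast\otimes\I^A)^{-1/4}\rVert_2$ becomes the quantity to control, and $-2$ times it lower bounds $H_2^\epsilon(A|B)_\omega$ — wait, we want an \emph{upper} bound on $H_2^\epsilon$, so actually the logic is reversed: $H_2^\epsilon(A|B)_\omega$ is \emph{at most} $-2\log$ of the value attained by \emph{any} admissible pair, so I must instead produce a pair for which the $2$-norm is \emph{large}, i.e.\ at least $2^{-\frac12(H(A|B)_\omega + \ldots)}$. So the real task is a \emph{lower bound} on the minimised $2$-norm, which is obtained simply by noting that for the \emph{true} optimiser $(\sigma^{AB},\omega^B_\ast)$ one has $\lVert (\omega^B_\ast\otimes \I^A)^{-1/4}\sigma^{AB}(\omega^B_\ast\otimes\I^A)^{-1/4}\rVert_2 \ge$ (its value on a well-chosen test, e.g.\ its own trace against a suitable operator) — concretely, Cauchy--Schwarz gives $\lVert X \rVert_2 \ge \lVert X\rVert_1 / \sqrt{\mathrm{rank}} \ge |\Tr[X Y]|$ for $\lVert Y\rVert_2\le 1$, and one takes $X = (\omega^B_\ast\otimes\I^A)^{-1/4}\sigma^{AB}(\omega^B_\ast\otimes\I^A)^{-1/4}$, $Y$ proportional to the same, reducing matters to estimating $\Tr[(\omega^B_\ast\otimes\I^A)^{-1/2}(\sigma^{AB})^2(\omega^B_\ast\otimes\I^A)^{-1/2}]$ from below.

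The key analytic step is then to relate this trace to the \emph{collision conditional entropy} $H_2(A|B)$ of $\omega^{AB}$ with conditioning on the marginal $\omega^B$ itself (not some optimised $\omega^B_\ast$), for which there is a clean bound: for any state, $H_2(A|B)_{\omega} \le H(A|B)_\omega$ is \emph{false} in general, but $H_2$ with marginal conditioning satisfies $-2\log\Tr[(\omega^B\otimes\I^A)^{-1/2}(\omega^{AB})^2(\omega^B\otimes\I^A)^{-1/2}] \le H(A|B)_\omega$ fails too in the direction we need — so instead I would invoke the standard inequality $H_2^{\downarrow}(A|B) \le H(A|B) + \log|A| \cdot(\text{something})$; more robustly, I would use the operator inequality that replacing $\omega^B_\ast$ by $\omega^B$ changes the quantity by at most a factor controlled by $\lVert \omega^B (\omega^B_\ast)^{-1}\rVert_\infty$ and $\lVert \omega^B_\ast (\omega^B)^{-1}\rVert_\infty$, both of which are $1 + O(\epsilon)$ after smoothing, and then bound the unsmoothed collision entropy $-2\log\lVert \tilde\omega^{AB}\rVert_2$ against $H(A|B)_\omega$ using the inequality $-\log\Tr[M^2] \le \mathrm{Tr}[-M\log M]$ valid for any density operator $M$ (here with $M$ the appropriately sandwiched $\omega^{AB}$), together with a change-of-base / dimension factor accounting for the $\log|A|$ and $\log\epsilon^{-1}$ terms.

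The main obstacle I expect is bookkeeping the smoothing errors so that they collapse into the clean additive form $8\epsilon\log|A| + 2 + 2\log\epsilon^{-1}$: one must track how truncating eigenvalues summing to $\epsilon$ in $\omega^{AB}$ perturbs (i) the marginal $\omega^B$, (ii) its inverse fourth powers appearing in the sandwich, and (iii) the entropy $H(A|B)$ itself, and Fannes-type continuity for the conditional entropy contributes the $\epsilon\log|A|$-scale terms while the $\log\epsilon^{-1}$ and the additive constants come from lower-bounding the smallest surviving eigenvalue of the smoothed marginal (which can be as small as $\sim\epsilon/|B|$, but one only needs it on the support). A secondary technical point is ensuring strict positivity of the conditioning state so that the pseudoinverses are genuine inverses on the relevant support; this is handled by an arbitrarily small perturbation. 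I would organise the write-up as: (1) fix the smoothed $\sigma^{AB}$ and conditioning state; (2) reduce the $2$-norm to a trace via Cauchy--Schwarz; (3) bound that trace by $2^{-H_2(A|B)_\omega}$ up to $(1+O(\epsilon))$ factors; (4) bound $H_2(A|B)_\omega \le H(A|B)_\omega + (\text{dimension/smoothing terms})$ via $-\log\Tr[M^2]\le H(M)$; (5) collect constants.
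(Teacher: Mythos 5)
Your plan has a directional error that it never fully recovers from. The definition gives $H_2^\epsilon(A|B)_\omega = -2\log\min_{(\sigma,\omega_\ast)}\lVert\cdot\rVert_2$, so an \emph{upper} bound on $H_2^\epsilon$ requires a \emph{lower} bound on the minimum that holds uniformly over \emph{every} admissible pair. Fixing one convenient pair $(\sigma^{AB},\omega^B_\ast)$ and estimating the $2$-norm there only shows the minimum is \emph{at most} that value, hence $H_2^\epsilon$ is \emph{at least} $-2\log$ of it: that is a lower bound on $H_2^\epsilon$, the opposite of what the fact asserts. Your mid-stream correction (``produce a pair for which the $2$-norm is large'') compounds this: exhibiting one pair with large norm is irrelevant to the minimum. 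Your outlined steps (1)--(5) therefore cannot yield the inequality; step (2) is additionally circular, since taking $Y$ proportional to $X$ in Cauchy--Schwarz just reproduces $\lVert X\rVert_2$.

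There are also two factual problems. First, the assertion ``$H_2(A|B)_\omega\le H(A|B)_\omega$ is false in general'' is wrong: by monotonicity in $\alpha$ of both the Petz and sandwiched R\'{e}nyi divergences, one has $H_2^{\uparrow}(A|B)\le H(A|B)$ and $H_2^{\downarrow}(A|B)\le H(A|B)$. Ironically, this correct monotonicity is exactly what rescues the argument: since the $\min$ over $\omega_\ast$ inside the definition is unconstrained, $\max_{\omega_\ast}\bigl(-\tilde D_2(\sigma\Vert I^A\otimes\omega_\ast)\bigr)=H_2^{\uparrow}(A|B)_\sigma\le H(A|B)_\sigma$ for each admissible $\sigma$, and then continuity of $H(A|B)$ in $\sigma$ (Alicki--Fannes) gives the $\epsilon\log|A|$ term --- which is precisely the route the paper takes by citing \cite{tomamichel:Renyi,QuantumAEP,AlickiFannes}. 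Second, your proposed control of $\lVert\omega^B(\omega^B_\ast)^{-1}\rVert_\infty=1+O(\epsilon)$ is unjustified: the minimisation over $\omega^R$ in Definition~\ref{def:Renyi2} ranges over \emph{all} full-rank density operators, with no constraint tying $\omega^B_\ast$ to $\omega^B$. Finally, the $2+2\log\epsilon^{-1}$ term (with no $\epsilon$ prefactor) does not come from the smallest surviving eigenvalue; it is the $\frac{1}{\alpha-1}\log(2/\epsilon^2)$ loss (evaluated at $\alpha=2$) intrinsic to the QAEP-type conversion from smooth R\'{e}nyi $2$-entropy to von Neumann entropy.
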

\begin{proof}
The proof follows by combining Equation~8 of \cite{tomamichel:Renyi}
with Theorem~7, Lemma~2 and Equation~33 of \cite{QuantumAEP} and then
applying the Alicki-Fannes inequality \cite{AlickiFannes}.
\end{proof}

\subsection{Types and typicality}
The smooth entropic quantities defined in the previous section are
suitably bounded by the standard Shannon entropic quantities in the
iid limit, as will be shown in Section~\ref{sec:iid}. 
In order to lay the groundwork for the proofs in Section~\ref{sec:iid},
we recall the definitions of types, typical sequences and subspaces.

\begin{definition}
Let $X$ be a finite set. Fix a probability distribution $p$ on $X$. 
Let $n$ be a positive integer. Let 
$X^n$ denote the random variable
corresponding to $n$ independent copies of $X$. 
The notation $x^n$ shall represent a sequence of length $n$ over the 
alphabet $X$ . Let $N(a|x^n)$ denote the number of occurrences of the 
symbol $a \in X$ in the sequence $x^n$. The multiset 
$
\{N(a|x^n)\}_{a \in X}
$
is called the {\em type} of $x^n$. The set of all possible types is
nothing but the set of all possible $|X|$-tuples of non-negative
integers summing up to $n$.
\end{definition}
\begin{definition}
Let $0 < \delta < 1$. The set 
of strongly $\delta$-typical types of
length $n$ over the alphabet $X$ pertaining to the distribution $p$ 
is defined as \cite{book:elgamalkim}
$
\bigg\{
(m_a)_{a \in X}: 
\forall a \in X, \;
m_a \in n p(a) (1 \pm \delta)
\bigg\}.
$
A sequence $x^n$ is said to be strongly $\delta$-typical if its type
is strongly $\delta$-typical. The set of strongly $\delta$-typical 
sequences is denoted by $T^{X^n}_{p,\delta}$.
\end{definition}
Let $p^n$ denote the $n$-fold tensor power of probability distribution
$p$. The strongly typical sequences satisfy 
the following property which is 
called as Asymptotic Equipartition Property (AEP) in classical
Shannon theory.
\begin{fact}[\cite{sen:interference}]
\label{fact:classical_AEP}
The number of types is ${n+|X|-1 \choose |X|-1}$. The set of all possible
sequences $X^n$ is partitioned into a disjoint union, over all possible 
types, of sequences having a given type.
Let $0 < \epsilon, \delta < 1/2$. 
Define $p_{\mathrm{min}} := 2^{-\mathbb{H}_{\max}^{\epsilon/2}(X)_p}$.
Let 
$
n \geq 4 p^{-1}_{\mathrm{min}} \delta^{-2} 
\log(|X|/\epsilon).
$
Then,
\begin{eqnarray*}
\sum_{x^n \in T^{X^n}_{p,\delta}} p^{n}(x^n) 
& \geq &
1-\epsilon, \\
\forall x^n \in T^{X^n}_{p,\delta}: 
2^{-nH(X)(1+\delta)} 
& \leq & 
p^{n}(x^n) 
\;\leq\;
2^{-nH(X)(1-\delta)},  \\
2^{nH(X)(1-\delta)} (1-\epsilon)
& \leq & 
|T^{X^n}_{p,\delta}| 
\;\leq\;
2^{nH(X)(1+\delta)}.
\end{eqnarray*}
\end{fact}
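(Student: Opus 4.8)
The plan is to handle the three assertions in turn: the first two are pure bookkeeping, and the third is a large-deviation argument whose only non-textbook ingredient is the treatment of low-probability symbols. A type is, by definition, an $|X|$-tuple of non-negative integers summing to $n$, so counting types is the standard stars-and-bars count $\binom{n+|X|-1}{|X|-1}$. The partition claim is immediate: the map $x^n \mapsto (N(a|x^n))_{a \in X}$ assigns to each sequence its unique type, and its fibres are exactly the type classes, which therefore partition $X^n$.

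Next I would prove the per-sequence probability bounds. If $x^n$ has type $(m_a)_{a\in X}$ then $p^n(x^n) = \prod_{a \in X} p(a)^{m_a}$, hence $\log p^n(x^n) = \sum_{a \in X} m_a \log p(a)$, where any $a$ with $p(a)=0$ is forced to have $m_a = 0$ and contributes nothing. Each $\log p(a)$ is non-positive, and strong $\delta$-typicality gives $np(a)(1-\delta) \le m_a \le np(a)(1+\delta)$; multiplying through by $\log p(a)$, which reverses the inequalities, and summing over $a$ yields $-n(1+\delta) H(X)_p \le \log p^n(x^n) \le -n(1-\delta) H(X)_p$, i.e. the claimed sandwich $2^{-nH(X)(1+\delta)} \le p^n(x^n) \le 2^{-nH(X)(1-\delta)}$.

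The core estimate is $\prob[X^n \in T^{X^n}_{p,\delta}] \ge 1-\epsilon$. Writing $N_a := N(a|X^n)$, each $N_a$ is $\mathrm{Binomial}(n,p(a))$ with mean $np(a)$, and I would bound the failure probability by the union bound $\sum_{a \in X} \prob[\,|N_a - np(a)| > \delta n p(a)\,]$. This is where the smooth modified max-entropy does its work: by its definition the symbols zeroed out in forming $p_{\mathrm{min}} = 2^{-(\Hmax')^{\epsilon/2}(X)_p}$ carry total probability at most $\epsilon/2$, while every surviving symbol obeys $p(a) \ge p_{\mathrm{min}}$. For a surviving symbol the hypothesis $n \ge 4 p_{\mathrm{min}}^{-1}\delta^{-2}\log(|X|/\epsilon)$ gives $\mu := np(a) \ge np_{\mathrm{min}} \ge 4\delta^{-2}\log(|X|/\epsilon)$, so a standard relative Chernoff bound $\prob[\,|N_a-\mu|>\delta\mu\,] \le 2e^{-c\delta^2\mu}$ pushes each such term below $\epsilon/(2|X|)$ and their sum below $\epsilon/2$; the at most $\epsilon/2$-mass of low-probability symbols is controlled separately against the remaining $\epsilon/2$ budget. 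Adding the two contributions gives the desired $1-\epsilon$.

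Finally, the cardinality bounds follow by the usual squeeze, combining the last two steps: from $1-\epsilon \le \sum_{x^n \in T^{X^n}_{p,\delta}} p^n(x^n) \le |T^{X^n}_{p,\delta}| \cdot 2^{-nH(X)(1-\delta)}$ one gets $|T^{X^n}_{p,\delta}| \ge (1-\epsilon) 2^{nH(X)(1-\delta)}$, and from $1 \ge \sum_{x^n \in T^{X^n}_{p,\delta}} p^n(x^n) \ge |T^{X^n}_{p,\delta}| \cdot 2^{-nH(X)(1+\delta)}$ one gets $|T^{X^n}_{p,\delta}| \le 2^{nH(X)(1+\delta)}$. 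The one place I expect friction is the low-probability symbols in the concentration step: a symbol of minuscule positive probability has a binomial count that does not concentrate multiplicatively, so Chernoff cannot be applied to it directly, and it is precisely the device of discarding a mass-$\le\epsilon/2$ set of symbols (encoded by $(\Hmax')^{\epsilon/2}$) and calibrating $n$ against the resulting floor $p_{\mathrm{min}}$ that rescues the union bound; matching the universal constants to the stated hypothesis $n \ge 4 p_{\mathrm{min}}^{-1}\delta^{-2}\log(|X|/\epsilon)$ is the only genuinely delicate part.
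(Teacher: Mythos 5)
The paper does not prove this statement itself; it is quoted as a fact from \cite{sen:interference}, so there is no in-paper proof to compare you against. Your sketch is the textbook route (stars-and-bars for the type count, the product-form estimate for the per-sequence probability sandwich, union bound plus multiplicative Chernoff for the measure of the typical set, and the squeeze for the cardinality bounds), and the first two items and the final squeeze are correct as you state them. You also correctly flag the low-probability symbols as the one delicate spot.

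The gap is that your proposed treatment of that spot does not actually close it. You argue that the $\le\epsilon/2$ total mass carried by the symbols zeroed out in defining $p_{\mathrm{min}}$ can be ``controlled separately against the remaining $\epsilon/2$ budget,'' but total mass is the wrong quantity: the definition of $T^{X^n}_{p,\delta}$ used here imposes $m_a \in np(a)(1\pm\delta)$ for \emph{every} $a\in X$, including the zeroed-out ones, and for a symbol with $0<p(a)$ small enough that $np(a)(1+\delta)<1$ there is no integer at all in $[np(a)(1-\delta),\,np(a)(1+\delta)]$, since the lower endpoint is strictly positive. The hypothesis $n\ge 4p_{\mathrm{min}}^{-1}\delta^{-2}\log(|X|/\epsilon)$ does not rule this out, because $p_{\mathrm{min}}$ only floors the smallest \emph{surviving} probability and places no lower bound on the zeroed-out ones. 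Concretely, with $|X|=3$, $p=(0.5,\,0.499999,\,10^{-6})$, $\epsilon=0.1$, $\delta=0.1$, the hypothesis is met at $n\approx 4000$ yet $np(3)\approx 0.004$ leaves $[0.0036,0.0044]$ with no integers, so $T^{X^n}_{p,\delta}$ is empty and the claimed measure bound fails. Rescuing the argument requires either relaxing the typicality constraint so that it is only imposed on symbols with $p(a)\ge p_{\mathrm{min}}$ (which is presumably what \cite{sen:interference} actually does, and which would then also change the per-sequence bounds and the cardinality squeeze) or adding a hypothesis excluding such tiny positive probabilities; bounding the discarded mass alone does not do it. A secondary, more routine loose end is that the constant $4$ in the hypothesis has to be calibrated against whichever multiplicative Chernoff constant you invoke so that the per-symbol failure probability drops below $\epsilon/(2|X|)$; you note this but never carry it out.
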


In the quantum setting,
we extend the notion of types and typical sequences with respect to a 
particular 
distribution to the notion of type subspaces and typical subspaces with 
respect to the 
$n$-fold tensor product of a quantum state.
\begin{definition}
Let $\rho$ be a density matrix over a Hilbert space $B$. Consider a 
canonical eigenbasis 
$\cB = \{ \ket{\chi_1},\ldots, \ket{\chi_{|B|}} \}$
of $\rho$. Consider the diagonalisation 
$\rho = \sum_{\chi \in \cB} q(\chi) \ket{\chi}\bra{\chi}$, where
the set of eigenvalues
$\{q(\chi)\}_\chi$ can be treated as a probability distribution 
over $\cB$. 
Given a type $(m(\chi))_{\chi \in \cB}$, which is nothing but a $|B|$-tuple
of non-negative integers summing to $n$, we define the corresponding
type subspace to be the span of all $n$-fold tensor products of vectors 
from $\cB$ having the given type.
\end{definition}
\begin{definition}
Let $0 < \epsilon, \delta < 1/2$.
The strongly 
$\delta$-typical subspace of $B^{\otimes n}$ corresponding to the $n$-fold 
tensor power operator $\rho^{\otimes n}$, $T^{B^n}_{\rho, \delta}$,
is defined as the orthogonal direct sum of all type subspaces with
strongly $\delta$-typical types with respect to the probability
distribution $q$ on $\cB$.
\end{definition}
Let $\Pi^{B^n}_{\rho, \delta}$ denote the orthogonal projection onto 
$T^{B^n}_{\rho, \delta}$. The typical projector satisfies the following so 
called quantum AEP analogous to that of Fact~\ref{fact:classical_AEP}:
\begin{fact}[\cite{sen:interference}]
\label{fact:quantum_AEP}
The number of types is ${n+|B|-1 \choose |B|-1}$. The Hilbert space
$B^{\otimes n}$ can be decomposed into an orthogonal direct sum,
over all possible types, of type subspaces.
Let $0 < \epsilon, \delta < 1/2$. Let $\rho$ be a quantum state. 
Define
$
q_{\mathrm{min}} \cong q_{min}(\rho) :=
2^{-\mathbb{H}_{\max}^{\epsilon/2}(B)_\rho}.
$
Suppose that 
$n \geq 4 q^{-1}_{\mathrm{min}} \delta^{-2} \log (|B| / \epsilon)$. 
Then,
\begin{eqnarray*}
\Tr [\rho^{\otimes n} \Pi^{B^n}_{\rho,\delta}] 
& \geq &
1-\epsilon, \\
2^{-nH(X)(1+\delta)} \Pi^{B^n}_{\rho,\delta}
& \leq & 
\Pi^{B^n}_{\rho,\delta} \rho^{\otimes n}
\;=\;
\rho^{\otimes n} \Pi^{B^n}_{\rho,\delta}
\;=\;
\Pi^{B^n}_{\rho,\delta} \rho^{\otimes n} \Pi^{B^n}_{\rho,\delta}
\;\leq\;
2^{-nH(X)(1-\delta)}  \Pi^{B^n}_{\rho,\delta}, \\
2^{nH(X)(1-\delta)} (1-\epsilon)
& \leq & 
\Tr \Pi^{B^n}_{\rho,\delta}
\;\leq\;
2^{nH(X)(1+\delta)}.
\end{eqnarray*}
\end{fact}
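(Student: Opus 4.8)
The plan is to reduce the entire statement to the classical Asymptotic Equipartition Property of Fact~\ref{fact:classical_AEP}, applied to the eigenvalue distribution of $\rho$, by exploiting that $\rho^{\otimes n}$ and the typical projector $\Pi^{B^n}_{\rho,\delta}$ are simultaneously diagonal in a product eigenbasis.

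First I would diagonalise $\rho = \sum_{\chi \in \cB} q(\chi)\ket{\chi}\bra{\chi}$ in a canonical eigenbasis $\cB$ of size $|B|$ and regard $q = (q(\chi))_{\chi \in \cB}$ as a probability distribution on $\cB$, so that $\rho^{\otimes n} = \sum_{x^n} q^n(x^n)\ket{x^n}\bra{x^n}$ where $\ket{x^n} := \ket{\chi_{x_1}} \otimes \cdots \otimes \ket{\chi_{x_n}}$ ranges over the product eigenbasis and $q^n(x^n) = \prod_{i=1}^{n} q(x_i)$. Each such product basis vector carries a well-defined type, namely the type of the string $x^n$ over the alphabet $\cB$; product basis vectors of distinct types are orthogonal, and every product basis vector lies in exactly one type subspace, so $B^{\otimes n}$ is the orthogonal direct sum of the type subspaces. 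Since the set of types is exactly the set of $|B|$-tuples of non-negative integers summing to $n$, a stars-and-bars count gives $\binom{n+|B|-1}{|B|-1}$ of them; this settles the first two assertions. By definition, $\Pi^{B^n}_{\rho,\delta}$ is the orthogonal projection onto the span of $\{\ket{x^n} : x^n \in T^{X^n}_{q,\delta}\}$, hence it commutes with $\rho^{\otimes n}$ and $\Pi^{B^n}_{\rho,\delta}\rho^{\otimes n} = \rho^{\otimes n}\Pi^{B^n}_{\rho,\delta} = \Pi^{B^n}_{\rho,\delta}\rho^{\otimes n}\Pi^{B^n}_{\rho,\delta} = \sum_{x^n \in T^{X^n}_{q,\delta}} q^n(x^n)\ket{x^n}\bra{x^n}$.

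Next I would invoke Fact~\ref{fact:classical_AEP} with $X := \cB$, $p := q$, and the same $\epsilon,\delta$. The hypothesis $n \geq 4 q_{\mathrm{min}}^{-1}\delta^{-2}\log(|B|/\epsilon)$ is precisely the classical hypothesis, once one observes that the $\epsilon/2$-smooth modified max-entropy of the distribution $q$ equals $(\Hmax')^{\epsilon/2}(B)_\rho$: by Definition~\ref{def:Hmaxprime} the modified max-entropy of a state depends only on its eigenvalues, so $p_{\mathrm{min}} = 2^{-(\Hmax')^{\epsilon/2}(X)_p} = q_{\mathrm{min}}$. The three conclusions of Fact~\ref{fact:classical_AEP} then transfer termwise. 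The mass bound gives $\Tr[\rho^{\otimes n}\Pi^{B^n}_{\rho,\delta}] = \sum_{x^n \in T^{X^n}_{q,\delta}} q^n(x^n) \geq 1 - \epsilon$. The per-string bounds $2^{-nH(1+\delta)} \leq q^n(x^n) \leq 2^{-nH(1-\delta)}$ for $x^n \in T^{X^n}_{q,\delta}$ are exactly the eigenvalues of the diagonal operator $\Pi^{B^n}_{\rho,\delta}\rho^{\otimes n}\Pi^{B^n}_{\rho,\delta}$ on its support, which yields the sandwich operator inequalities (with $H = H(B)_\rho = H(q)$, the von Neumann entropy of $\rho$, for which the symbol $H(X)$ in the statement is a shorthand). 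Finally $\Tr\Pi^{B^n}_{\rho,\delta} = |T^{X^n}_{q,\delta}|$, so the rank bounds follow from the cardinality bounds on the typical set.

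There is essentially no hard step here; it is a reduction plus bookkeeping. The only points that need care are (i) confirming that the typical subspace as defined --- an orthogonal direct sum of whole type subspaces --- coincides with the span of the typical product eigenvectors, so that it is diagonal jointly with $\rho^{\otimes n}$, and (ii) matching the smoothing threshold, i.e.\ that $(\Hmax')^{\epsilon/2}$ of the eigenvalue distribution agrees with $(\Hmax')^{\epsilon/2}$ of the state $\rho$. Once these are in place the statement is literally Fact~\ref{fact:classical_AEP} transported through the eigenbasis.
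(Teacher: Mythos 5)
Your proposal is correct. The paper itself does not prove this fact---it cites it directly to [sen:interference]---but your reduction is the standard one: the typical projector is simultaneously diagonal with $\rho^{\otimes n}$ in the product eigenbasis, the eigenvalue distribution plays the role of $p$ in the classical Fact~\ref{fact:classical_AEP}, and Definition~\ref{def:Hmaxprime} makes $(\Hmax')^{\epsilon/2}$ manifestly a function of the spectrum alone so that $q_{\mathrm{min}}$ matches $p_{\mathrm{min}}$; with those identifications the three conclusions transfer term by term.
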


\subsection{Concentration of measure}
We state the main tool for concentration of measure of
Lipschitz functions defined on the sphere or on the unitary group in
high dimensions.
\begin{definition}
A complex valued function $f$ defined on a subset 
$S \subseteq \mathbb{C}^n$ is said to be 
$L$-Lipschitz, or with Lipschitz constant $L$, if 
$\forall x, y \in S$ it satisfies the following inequality:
\begin{equation}
\lvert f(x)-f(y) \rvert \leq L \lVert x-y \rVert_2.
\end{equation}
\end{definition}
\begin{remark}
In order to find Lipschitz constant of a complex valued Lipschitz 
function $f$ with the domain as the the set of unitary operators, we find 
an upper bound on the ratio $\frac{|f(U)-f(V)|}{\lVert U-V \rVert_2}$. 
We recall that the norm $\lVert U-V \rVert_2$ is the Schatten-2 norm of 
the operator $U-V$ defined as $\sqrt{\Tr[{(U-V)}^\dagger (U-V)]}$. 
\end{remark}
\begin{fact} 
\label{fact:Levy}
{\bf (Levy's Lemma~\cite{Levy, AGZ})} Let $f$ be an $L$-Lipschitz 
function on $\U(n)$ where the metric on $\U(n)$ is
induced by the embedding of $\U(n)$ into $\C^{n^2}$. In 
other words, the metric on $\U(n)$ is
taken to be the Schatten 2-norm. Consider the Haar probability measure on
$\U(n)$. Let the mean of $f$ be $\mu$. Then:
\[
\prob_{U \sim \Haar}\left(\lvert f(U) - \mu \rvert \geq \lambda\right) 
\leq 2 \exp\left(-\frac{n \lambda^2}{4L^2}\right).
\]
\end{fact}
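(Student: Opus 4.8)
The plan is to deduce this sub-Gaussian tail bound from the general principle that a compact Riemannian manifold whose Ricci curvature is bounded below by a strictly positive constant $K$ satisfies Gaussian concentration of measure for Lipschitz functions at scale $\sim K^{-1/2}$. Concretely I would proceed in three stages: (i) describe the Riemannian geometry that the Schatten-$2$ metric puts on $\U(n)$; (ii) compute a lower bound on its Ricci curvature; and (iii) invoke the Gromov--L\'evy comparison inequality --- or, equivalently, the Bakry--\'Emery curvature criterion followed by Herbst's argument --- to pass from the curvature bound to the stated inequality.

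For stage (i): the Schatten-$2$ inner product $\langle M, N\rangle = \Re\Tr[M^\dagger N]$ on $\C^{n\times n}$ restricts to a bi-invariant Riemannian metric $g$ on the compact connected Lie group $\U(n)$, whose tangent space at the identity is the space of skew-Hermitian matrices with inner product $\langle X,Y\rangle = -\Re\Tr[XY]$. Since $\U(n)$ is compact and connected, its normalised Haar measure is exactly the normalised Riemannian volume measure, and the geodesics through a point $U$ are the curves $t\mapsto U\exp(tX)$.

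For stage (ii): for any bi-invariant metric the Ricci tensor is $\mathrm{Ric}(X,X) = \tfrac14 \sum_i \lVert [X,e_i]\rVert_2^2 = -\tfrac14 B(X,X)$, where $\{e_i\}$ is an orthonormal basis of the Lie algebra and $B(X,Y) = \Tr[\mathrm{ad}_X\,\mathrm{ad}_Y]$ is the Killing form (the second equality uses that $\mathrm{ad}_X$ is skew-adjoint for bi-invariant $g$). On the semisimple part $SU(n)$ one has $B(X,Y) = 2n\,\Tr[XY]$, so for skew-Hermitian $X$, $\mathrm{Ric}(X,X) = -\tfrac{n}{2}\Tr[X^2] = \tfrac{n}{2}\lVert X\rVert_2^2$, i.e.\ $\mathrm{Ric} = \tfrac{n}{2}\,g$ on $SU(n)$. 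The only remaining direction in $\U(n)$ is the one-dimensional centre $\{e^{i\theta}I\}_{\theta}$, which is flat; but it has Schatten-$2$ diameter only $O(\sqrt n)$, so conditioning on the central phase reduces the estimate to $SU(n)$ while changing the conditional mean of an $L$-Lipschitz function by at most $O(L/\sqrt n)$, a term that is absorbed into the constants.

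For stage (iii): applying the Gromov--L\'evy theorem to $SU(n)$ with $\mathrm{Ric}\geq \tfrac{n}{2}\,g$ and to the $1$-Lipschitz function $f/L$ yields $\prob(|f-m|\geq\lambda)\leq 2\exp(-\tfrac{n}{4}\lambda^2/L^2)$ around the median $m$; replacing $m$ by the mean $\mu$ (which differs from $m$ by $O(L/\sqrt n)$, again absorbed) gives the claimed inequality. Alternatively one avoids isoperimetry altogether: $\mathrm{Ric}\geq Kg$ implies, via Bakry--\'Emery, a logarithmic Sobolev inequality with constant $K$ for the heat semigroup on $\U(n)$, and Herbst's argument converts this directly into a sub-Gaussian tail of exactly the stated shape. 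The \emph{main obstacle}, and essentially the only place real care is needed, is the bookkeeping: fixing the exact numerical coefficient in the Ricci bound (equivalently the Killing-form normalisation on $SU(n)$) and tracking constants through the median-versus-mean and central-$U(1)$ reductions so that the clean exponent $n\lambda^2/(4L^2)$ comes out. For this reason the result is usually invoked as a black box, and we follow \cite{AGZ} for the precise constant.
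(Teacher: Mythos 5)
The paper cites this lemma as a black box from~\cite{AGZ} and gives no proof, so there is no in-paper argument to compare against; the only question is whether your sketch would actually establish the stated bound. Your overall strategy (curvature lower bound on a compact group, then Gromov--L\'evy or Bakry--\'Emery plus Herbst) is the standard one, and your Ricci computation on $SU(n)$ is correct: with the bi-invariant Schatten-$2$ metric one has $\mathrm{Ric}(X,X)=-\tfrac14 B(X,X)=\tfrac{n}{2}\lVert X\rVert_2^2$, i.e.\ $\mathrm{Ric}=\tfrac{n}{2}g$ on $SU(n)$, and Bakry--\'Emery followed by Herbst then gives the exponent $n\lambda^2/(4L^2)$ on $SU(n)$.

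The genuine gap is in your handling of the centre. First, because $\mathrm{Ric}$ vanishes along the central direction, the Bakry--\'Emery criterion cannot be applied to $\U(n)$ itself --- it requires a strictly positive lower bound everywhere --- so a reduction to $SU(n)$ is unavoidable and your ``alternative route'' is really the same route. Second, and more importantly, the reduction as you wrote it does not work. You correctly note that the centre $\{e^{i\theta}I\}$ has Schatten-$2$ diameter $\Theta(\sqrt{n})$, but then conclude that conditioning on the central phase shifts the conditional mean of an $L$-Lipschitz function by only $O(L/\sqrt{n})$. This is a non sequitur: a circle of diameter $\Theta(\sqrt{n})$ a priori permits shifts of order $L\sqrt{n}$, which is enormous compared to the target scale $L/\sqrt{n}$ and cannot be ``absorbed into constants.'' What actually makes the reduction work is a different, much smaller circle: the base $U(1)\cong\U(n)/SU(n)$ of the Riemannian submersion $\det\colon\U(n)\to U(1)$. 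A unit-speed horizontal geodesic $t\mapsto e^{it/\sqrt{n}}I$ has $\det\bigl(e^{it/\sqrt{n}}I\bigr)=e^{it\sqrt{n}}$, so it already wraps the base once at $t=2\pi/\sqrt{n}$; the base circle therefore has length $\Theta(1/\sqrt{n})$, not $\Theta(\sqrt{n})$. The discrepancy arises because most of the centre circle sits inside $SU(n)$-cosets (indeed $e^{2\pi ik/n}I\in SU(n)$), so only a short transverse arc contributes to variation across fibres. Replacing ``condition on the central phase'' by ``condition on $\det U$'' and using the $O(1/\sqrt{n})$ horizontal-geodesic length gives the correct $O(L/\sqrt{n})$ shift in conditional means and closes the argument. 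As a side remark, the cited theorem in~\cite{AGZ} is stated for $SU(n)$, $SO(n)$, $Sp(n)$ but not $\U(n)$; the $\U(n)$ statement (with a slightly worse constant) can be found, for example, in Meckes' monograph on the classical compact groups, where exactly this fibration over the determinant is used.
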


\begin{remark}
L\'evy's lemma is widely used tool in high dimensional probability and 
asymptotic geometric functional analysis. While the original form can 
be found in \cite{Levy}, several simplifications have been developed in
the literature and one such classic reference is \cite{Ledoux}. Here 
we use a much simplified form derived for sub-Gaussian distribution 
described in \cite[Theorem~5.1.4]{vershynin}.  
\end{remark}

The following fact can be used to compute upper bounds on
the centralised moments of Lipschitz functions. 
The proof follows Bellare and Rompel's
seminal work on concentration for sums of $t$-wise independent random
variables \cite[Lemma~A.1]{BellareRompel} and its quantum adaptation
by Low\cite[Lemma~3.3]{low_2009}. However, inspired by a technique from
\cite{jl}, we extend the earlier results
in an important and essential way by computing upper bounds on the
centralised moments of squares of Lipschitz functions also, which will
be required in Section~\ref{subsec:gmoment}.
\begin{fact}
\label{fact:moment}
Let $X$ be a non-negative random variable. Suppose there is a number
$\mu$ satisfying, for any $\kappa > 0$, the tail bound
$
\mathbb{P}[|X - \mu| > \kappa] \leq C \exp(-a \kappa^2)
$
for some positive constants $C$, $a$. 
Let $m$ be a positive integer. 
Then
\[
\E[(X-\mu)^{2m}] \leq C \left(\frac{m}{a}\right)^m, ~~~
\E[(X^2-\mu^2)^{2m}] \leq 
\left\{
\begin{array}{l l}
3 C \left(\frac{9 m \mu^2 }{a}\right)^{m} 
& 1 \leq m \leq \frac{9}{64} a \mu^2 \\
3 C \left(\frac{64 m^2}{a^2}\right)^{m} 
& \mbox{otherwise}
\end{array}
\right..
\]

\end{fact}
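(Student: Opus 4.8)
The plan is to prove the two moment bounds in turn, obtaining the second from the first by an elementary case split.

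\emph{First bound.} Since $2m$ is even, $(X-\mu)^{2m}=|X-\mu|^{2m}$ is a nonnegative random variable, so the layer-cake (tail-integral) identity gives $\E[(X-\mu)^{2m}]=\int_0^\infty 2m\,\kappa^{2m-1}\,\mathbb{P}[\,|X-\mu|>\kappa\,]\,d\kappa$. Inserting the hypothesised tail bound yields $\E[(X-\mu)^{2m}]\le C\int_0^\infty 2m\,\kappa^{2m-1}e^{-a\kappa^2}\,d\kappa$, and the substitution $u=a\kappa^2$ evaluates the integral to $a^{-m}\,m\,\Gamma(m)=m!\,a^{-m}$. Hence $\E[(X-\mu)^{2m}]\le C\,m!\,a^{-m}\le C(m/a)^m$, using $m!\le m^m$. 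The same computation with $m$ replaced by $2m$ gives the auxiliary estimate $\E[(X-\mu)^{4m}]\le C(2m/a)^{2m}=C\,4^m m^{2m}a^{-2m}$, which I will reuse below.

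\emph{Second bound.} Factor $X^2-\mu^2=(X-\mu)(X+\mu)$. Since $X,\mu\ge 0$ we have $X+\mu=(X-\mu)+2\mu\le|X-\mu|+2\mu$, whence the pointwise bound $X+\mu\le 3\max\{\mu,\,|X-\mu|\}$. Split the expectation over the complementary events $\{|X-\mu|\le\mu\}$ and $\{|X-\mu|>\mu\}$. On the first, $|X^2-\mu^2|\le 3\mu\,|X-\mu|$, so that contribution is at most $9^m\mu^{2m}\,\E[(X-\mu)^{2m}]\le C(9m\mu^2/a)^m$ by the first bound. On the second, $|X^2-\mu^2|\le 3|X-\mu|^2$, so that contribution is at most $9^m\,\E[(X-\mu)^{4m}]\le 9^m C\,4^m m^{2m}a^{-2m}=C(36m^2/a^2)^m\le C(64m^2/a^2)^m$ by the auxiliary estimate. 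Adding, $\E[(X^2-\mu^2)^{2m}]\le C[(9m\mu^2/a)^m+(64m^2/a^2)^m]$. Finally, $1\le m\le\tfrac{9}{64}a\mu^2$ is exactly equivalent to $9m\mu^2/a\ge 64m^2/a^2$, and its negation to the reverse inequality; bounding the smaller of the two $m$-th powers by the larger collapses the sum to $2C(9m\mu^2/a)^m$ in the first regime and $2C(64m^2/a^2)^m$ in the second, as claimed.

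\emph{Where the work is.} Everything above is routine moment calculus; the delicate part is the bookkeeping of constants. One has to choose the split threshold ($|X-\mu|$ against $\mu$) and the factor $3$ in $X+\mu\le 3\max\{\mu,|X-\mu|\}$ so that the two output constants come out as exactly $9$ and $64$ and the crossover sits at exactly $m=\tfrac{9}{64}a\mu^2$, where consistency of the statement forces the two bounds to coincide. A secondary point to watch is absorbing the prefactor $m\,\Gamma(m)$ from the Gaussian integral cleanly via $m\,\Gamma(m)=m!\le m^m$, so that no stray $\poly(m)$ factor leaks into the final exponential form.
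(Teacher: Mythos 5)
Your argument is correct, and it differs from the paper's in one genuine respect. The first moment bound and the overall strategy---split the sample space into a ``near $\mu$'' region and its complement, bound each contribution separately, and then collapse the sum by observing that the crossover $m=\tfrac{9}{64}a\mu^2$ is exactly where $9m\mu^2/a$ and $64m^2/a^2$ coincide---are the same. (The paper splits on $\{X<2\mu\}$ versus $\{X\ge 2\mu\}$; since $X\ge 0$ this is essentially your $\{|X-\mu|\le\mu\}$ versus $\{|X-\mu|>\mu\}$.) Where you genuinely diverge is the tail region. The paper bounds that contribution by a second layer-cake integral, $2m\int_{3\mu^2}^\infty \prob[X^2-\mu^2\ge x]\,x^{2m-1}\,dx$, together with the estimate $\sqrt{\mu^2+x}\ge\mu+\sqrt{x}/2$ for $x\ge 3\mu^2$, finally evaluating a $\Gamma(2m+1)$ integral to get $C(8m/a)^{2m}=C(64m^2/a^2)^m$. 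You instead use the pointwise inequality $|X^2-\mu^2|\le 3(X-\mu)^2$ on $\{|X-\mu|>\mu\}$ and then simply invoke the first moment bound with $m$ replaced by $2m$, obtaining $C(36m^2/a^2)^m$. Your route avoids the second tail integral and the auxiliary square-root estimate entirely, is more modular (the first bound is reused twice), and in fact yields a marginally sharper constant ($36$ rather than $64$), which you deliberately relax so the crossover matches the stated statement. Both approaches are valid; yours is the more elementary of the two.
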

\begin{proof}
Let $\Omega$ with a probability measure $d\omega$ be the sample space 
serving as the domain of the measurable function $X$.
Then,
\begin{eqnarray*}
\lefteqn{
\E[(X-\mu)^{2m}]
} \\
& = &
\int_\Omega (X(\omega) - \mu)^{2m} \, d\omega
\;=\;
\int_\Omega \int_{0 \leq x \leq (X(\omega) - \mu)^{2m}}  \, dx \, d\omega 
\;=\;
\int_0^\infty \int_{(X(\omega) - \mu)^{2m} \geq x}  \, d\omega \, dx \\
& = &
\int_0^\infty \prob[(X-\mu)^{2m} \geq x]\,dx
\;=\;
\int_0^\infty \prob[|X-\mu| \geq x^{\frac{1}{2m}}]\,dx 
\;\leq\;
C \int_0^\infty \exp(-a x^{1/m}) \,dx \\
& = &
C m a^{-m} \int_0^\infty e^{-y} y^{m-1}  \,dy 
\;=\;
C a^{-m} \Gamma(m+1)
\;\leq\;
C \left(\frac{m}{a}\right)^m.
\end{eqnarray*}

Let $A := \{\omega: 0 \leq X(\omega) < 2 \mu\}$ and $\bar{A}$ denote its 
complement in $\Omega$.We thus have 
\begin{equation}
    \label{eq:P(bar_A)}
\prob(\bar{A}) \leq C \exp{(-a \mu^2)}
\end{equation}
and,
\[
\E\left[(X^2-\mu^2)^{2m}\right] = 
\int_\Omega (X(\omega)^2 - \mu^2)^{2m} \, d\omega =
\int_A (X(\omega)^2 - \mu^2)^{2m} \, d\omega +
\int_{\bar{A}} (X(\omega)^2 - \mu^2)^{2m} \, d\omega,
\]
We now separately upper bound the two integrals on the right hand 
side above.  For the first integral,
\begin{eqnarray*}
\int_A (X(\omega)^2 - \mu^2)^{2m} \, d\omega 
& = &
\int_A (X(\omega) - \mu)^{2m} (X(\omega) + \mu)^{2m} \, d\omega 
\;\leq\;
(3\mu)^{2m}
\int_\Omega (X(\omega) - \mu)^{2m} \, d\omega \\ 
& \leq &
C (3 \mu)^{2m} \left(\frac{m}{a}\right)^m
\; =    \;
C \left(\frac{9 m \mu^2}{a}\right)^m. 
\end{eqnarray*}
Now if $m > \frac{9}{64} a \mu^2$, the last term above can be further
upper bounded by $C \left(\frac{64 m^2}{a^2}\right)^m.$

For the second integral,
\begin{eqnarray*}
\lefteqn{
\int_{\bar{A}} (X(\omega)^2 - \mu^2)^{2m} \, d\omega 
} \\
& = &
2m 
\int_{\bar{A}} \int_{0 \leq x \leq (X(\omega)^2 - \mu^2)}
x^{2m-1} \, dx \, d\omega \\
& = &
2m
\int_0^{3 \mu^2} \int_{\bar{A}}
x^{2m-1} \, d\omega \, dx+\,
2m 
\int_{3 \mu^2}^\infty \int_{X(\omega)^2 \geq \mu^2 +x}
x^{2m-1} \, d\omega \, dx \\
& = &
{(3 \mu^2)}^{2m} \prob(\bar{A}) +\,
2m 
\int_{3 \mu^2}^\infty \int_{X(\omega)^2 \geq \mu^2 +x}
x^{2m-1} \, d\omega \, dx \\
& \overset{a}{\leq} &
C {(3 \mu^2)}^{2m} \exp{(-a \mu^2)} +\,
2m 
\int_{3 \mu^2}^\infty \int_{X(\omega) \geq \mu + (\sqrt{x}/2)}
x^{2m-1} \, d\omega \, dx \\
& \overset{b}{\leq} &
C \left(\frac{9 m \mu^2}{a}\right)^m +\,
2m 
\int_{3 \mu^2}^\infty \prob[X - \mu \geq \sqrt{x}/2]
x^{2m-1} \, dx \\
& \leq &
C \left(\frac{9 m \mu^2}{a}\right)^m +\,
2C m 
\int_{3 \mu^2}^\infty \exp(-a x / 4) x^{2m-1} \, dx \\
& \leq &
C \left(\frac{9 m \mu^2}{a}\right)^m +\,
2C m \left(\frac{4}{a}\right)^{2m}
\int_0^\infty e^{-y} y^{2m-1} \, dy \\
&  =   &
C \left(\frac{9 m \mu^2}{a}\right)^m +\,
C \left(\frac{4}{a}\right)^{2m} \Gamma(2m+1) \\
& \leq &
C \left(\frac{9 m \mu^2}{a}\right)^m +\,
C \left(\frac{8m}{a}\right)^{2m}.
\end{eqnarray*}
where (a) follows from Equation~\ref{eq:P(bar_A)} and the observation 
that for any $x \geq 3 \mu^2$, $\sqrt{\mu^2 + x}>\mu + \sqrt{x}/2$; and 
(b) follows from the observation that 
\begin{equation} \label{eq:Exp_barA}
\exp{(-a \mu^2)} 
\leq 
\left(\frac{m}{a \mu^2}\right)^m
\end{equation},
which can be seen from the following cases:
\begin{itemize}

\item[]{Case 1: }For all $m \geq a\mu^2$, it holds that 
$\left(\frac{m}{a \mu^2}\right)^m \geq 1$ and $\exp(-a\mu^2) \leq 1$. 
Thus Equation~\ref{eq:Exp_barA} holds trivially.

\item[]{Case 2: }For $m < a \mu^2$, choose a $\delta <1 $ and 
let $m= \delta a \mu^2$.
Then 
$
\left(\frac{m}{a \mu^2}\right)^m = 
\exp{(-\delta \ln{(\frac{1}{\delta})}a \mu^2)} > 
\exp(-a \mu^2)
$ 
holds since $\delta < 1$. 
As $\delta$ was chosen arbitrarily, Equation~\ref{eq:Exp_barA} holds.
\end{itemize}
Thus, Equation~\ref{eq:Exp_barA} implies:
$$
C {(3 \mu^2)}^{2m} \exp{(-a \mu^2)} 
 \leq 
C \left(\frac{9 m \mu^2}{a}\right)^m 
$$
and (b) holds.

Now if $m \leq \frac{9}{64} a \mu^2$, the last term above can be further
upper bounded by $C \left(\frac{9 m \mu^2}{a}\right)^m.$
Thus
$
\E[(X^2-\mu^2)^{2m}] \leq 
3 C \left(\frac{9 m \mu^2 }{a}\right)^{m} 
$
if $m \leq \frac{9}{64} a \mu^2$, and\\
$
\E[(X^2-\mu^2)^{2m}] \leq 
3 C \left(\frac{64 m^2}{a^2}\right)^{m} 
$
if $m > \frac{9}{64} a \mu^2$. This
completes the proof of the fact.
\end{proof}

\subsection{Unitary t-designs}
\begin{definition}
\label{def:monomial}
 A monomial in elements of a matrix $U$ is of degree $(r,s)$ if it 
contains $r$ conjugated elements and $s$ unconjugated elements of $U$. 
We call 
it balanced if $r = s$ and will simply say a balanced monomial has 
degree $t$ if it is of degree $(t,t)$. A polynomial is of degree $t$ 
if it is a sum of balanced monomials of degree at most $t$.
\end{definition}
\begin{definition}
\label{def:design}
A probability distribution $\nu$ on a finite set of $d \times d$
unitary matrices is said to be a 
an $\epsilon$-approximate unitary $t$-design if for all 
balanced monomials $M$ of degree at most $t$, the following holds 
\cite{low_2009}:
\begin{align*}
\bigg\lvert \mathbb{E}_{\nu}\left[M(U)\right] - 
\mathbb{E}_{\Haar}\left[M(U)\right] \bigg\rvert 
\leq \frac{\epsilon}{d^t}
\end{align*}
If $\epsilon = 0$, we say that $\nu$ is an exact unitary $t$-design,
or just unitary $t$-design.
\end{definition}

For technical ease, we use quantum tensor product expanders (qTPEs) in
place of unitary designs in our actual proofs. 
The formal definition of a qTPE follows.
\begin{definition}
\label{def:qTPE}
A quantum $t$-tensor product expander ($t$-qTPE) in $\cH$, $|\cH| = d$, 
of degree $s$ can be defined as a quantum operation
$
\cG: L(\cH^{\otimes t}) \rightarrow L(\cH^{\otimes t})
$
that can be expressed as
$
\cG(M) = 
\frac{1}{s} \sum_{i=1}^s (U_i)^{\otimes t} M (U_i^{-1})^{\otimes t},
$
for any matrix $M \in L(\cH^{\otimes t})$, where
$\{U_i\}_{i=1}^s$ are $d \times d$ unitary matrices. The 
qTPE is said to have second singular value $\lambda$ if
$
\lVert \cG - \cI \rVert_\infty \leq \lambda,
$
where $\cI$ is the `ideal' quantum operation defined by its action on
a matrix $M$ by
$
\cI(M) := 
\int_{U \in \U(D)} U^{\otimes t} M (U^\dag)^{\otimes t} \,
d\,\Haar(U).
$
In other words, if $M \in L(\cH^{\otimes t})$, then
$
\lVert \cG(M) - \cI(M) \rVert_2 \leq \lambda \lVert M \rVert_2.
$
We use  the notation $(d, s, \lambda, t)$-qTPE to denote such a
quantum tensor product expander.
\end{definition}
A $(d, s, \lambda, t)$-qTPE can be sequentially iterated 
$O(\frac{t \log d + \log \epsilon^{-1}}{\log \lambda^{-1}})$
times to obtain an $\epsilon$-approximate unitary 
$t$-design \cite[Lemma~2.7]{low_2009}.
For $t = \polylog(d)$, efficient construction of $t$-qTPEs 
are known~\cite{brandao2012local,sen:zigzag}.

\section{Proof of the main theorem}
\label{sec:main}
We now state our main theorem in full detail.
\begin{theorem}
\label{thm:main}
Consider a quantum state 
$\rho^{AR}$ shared between a system $A$ and a reference $R$. Let 
$\cT^{A \to B}$ be a completely positive trace preserving superoperator
with input system $A$ and output
system $B$.
Let $U$ be a unitary on the system $A$. Define the function
\[
f(U) := 
\bigg\lVert
\left[\cT^{A \to B} \otimes \I^R\right]\left(
(U^A \otimes I^R) \rho^{AR} (U^{A \dagger} \otimes I^R)
\right)
- \omega^B \otimes \rho^R
\bigg\rVert_1,
\]
where $\I^R$ is the identity superoperator on $R$ and 
$I^R$ is the identity operator on $R$. 
Let $A'$ be a new system having the same dimension as $A$.
Define the Choi-Jamio{\l}kowski state
$
\omega^{A'B} := 
(\cT^{A \to B} \otimes \I^{A'})(
\ket{\Phi}\bra{\Phi}^{AA'}
),
$
of $\cT^{A \to B}$ where
$\ket{\Phi}^{AA'} := |A|^{-1/2} \sum_a \ket{a}^A \otimes \ket{a}^{A'}$ is
the standard EPR state on system $A A'$. 
Let $0 < \epsilon, \delta < 1/3$. Let $\kappa > 0$. Then,
\[
\prob_{U^A \sim \TPE}\left[
f(U) >
2^{
-\frac{1}{2} H_2^\epsilon(A|R)_\rho 
-\frac{1}{2} \mathbb{H}_2^{\epsilon,\delta}(A'|B)_\omega + 1
} + 14 \sqrt{\epsilon} + 2 \kappa
\right] \leq 
7 \cdot 2^{-a \kappa^2}.
\]
where $U^A$ is a unitary matrix chosen uniformly at random from a 
$(|A|, s, \lambda, t)$-quantum tensor product expander, 
$
a := 
|A| \cdot
2^{
-(1+\delta)\mathbb{H}_{\max}^\epsilon(B)_\omega +
H_2^{\epsilon}(A|R)_\rho - 9
},
$
$t := 8 a \kappa^2,$
$
\lambda :=
(|A|^{-8} |B|^{-6} \cdot \mu^2)^t.
$
As mentioned in Remark~\ref{rem:parameterization}, by choosing 
parameters 
$
\alpha:=2^{
-\frac{1}{2} H_2^\epsilon(A|R)_\rho 
-\frac{1}{2} \mathbb{H}_2^{\epsilon,\delta}(A'|B)_\omega+1
}
$ 
and 
$
\beta:=\sqrt{\frac{1}{a}}
$ 
the above concentration inequality can also be expressed as:
\begin{equation*}
\label{eq:main_decoupling_parameterized}
\prob_{U^A \sim \mathrm{design}}[
f(U) >
\alpha + 14 \sqrt{\epsilon} + \sqrt{t/2}\beta
] \leq 
7 \cdot 2^{-\frac{t}{8}}.
\end{equation*}
The quantity $\mu$ is defined as 
$\mu := \E_{\Haar}[g(U)]$ for a related function
$g(U)$ defined in Equation~\ref{eq:gU} below.
We require that$\mu < 1$. 

\end{theorem}  
\begin{remark}
    Such qTPEs can be obtained by sequentially iterating 
$O\left(t (\log |A| + \log |B| + \log \mu^{-1})\right)$ times an 
$(|A|, s, O(1), t)$-qTPE, which is
polynomial in $t$, $\log \mu^{-1}$ and the number of input and output 
qubits of the CPTP map $\cT$ \cite{brandao2012local,sen:zigzag}.
\end{remark}
The proof is broken
into three subsections. In the first subsection, we show that, for
any probability distribution on $U^A$, instead
of proving a tail bound for the given random variable $f(U)$,
it suffices to prove a tail bound for a related random variable $g(U)$,
where $f(U)$, $g(U)$ were informally defined in 
Section~\ref{subsec:prooftechnique}.
In the second subsection, we first obtain an upper bound on the Lipschitz
constant of $g(U)$ which by Levy's lemma (Fact~\ref{fact:Levy}) leads to 
a tail bound for $g(U)$ where
$U$ is chosen from the Haar measure. We then
obtain upper bounds on the centralised moments of $(g(U))^2$ 
under the Haar measure.
Now $(g(U))^2$ is a balanced 
degree two polynomial in the matrix entries of $U$. In the final
subsection, we apply Low's method to finally obtain a tail bound
for $(g(U))^2$ for a uniformly random $U$ chosen from a unitary design.
This finishes the proof of Theorem~\ref{thm:main}.

\subsection{From $f(U)$ to $g(U)$}
\label{sec:f(U)_to_g(U)}
We summarize various operators superoperators and their 
Choi-Jamio\l{}kowski states that we will be working with, in order to 
establish the proof of Theorem~\ref{thm:main}, respectively in the 
following Table~\ref{table:superoperators} and \ref{table:Choi_states}
\begin{center}
\begin{table}[ht]
\caption{Various superoperators used in going from $f(U)$ to $g(U)$.} 
\begin{tabularx}{\linewidth}{ l X }
\toprule
$\cT^{A \to B}: $
& unperturbed decoupling CPTP superoperator with  Stinespring 
unitary: 
$V_{\cT}^{AC \to BZ}$ \\
\hline \\
$\hat{\cT}^{A \to B}(\cdot):=$ 
& $\Tr_Z \left[ (P^Z \otimes I^B)V_{\cT} \circ (\cdot)\right]$: 
CP Trace non-increasing map; $P$ comes from Lemma~\ref{fact:Dupuis_I.3}  \\
\hline \\
$(\cT')^{A \to B}(\cdot):=$
& $\Pi^B_{\omega'''_{\epsilon, \delta}} \circ \hat{\cT}$: CP Trace 
non-increasing map \\
\bottomrule
\end{tabularx}
\label{table:superoperators}
\end{table}
\begin{table}[ht]
\caption{Choi-Jamio\l{}kowski states of superoperators in 
Table~\ref{table:superoperators}.}
\begin{tabularx}{\linewidth}{l X}
\toprule
$ \omega^{A'B}:=$ 
&$(\cT \otimes \cI^{A'})(\Phi^{AA'})$;  
$\Tr(\omega^{A'B})=1$; Desired smoothing: 
$(\omega'''_{\epsilon, \delta})^B:=$ 
positive semidefinite operator obtained by zeroing out those eigenvalues 
of $\omega^B$ that are smaller than 
$2^{-(1+\delta)\mathbb{H}_{\max}^\epsilon(B)_\omega}$\\
\hline\\
$\eta^{A'B}:=$
& $(\hat{\cT} \otimes \cI^{A'})(\Phi^{AA'})$; $\Tr(\eta) \leq 1$\\
\hline\\
$(\omega')^{A'B}:=$
& $(\cT \otimes \cI^{A'})(\Phi^{AA'})$;
 $=\Pi^B_{\omega'''_{\epsilon, \delta}} \circ \eta^{A'B}$ and 
$\Tr[\omega'] \leq 1$\\
\bottomrule
\end{tabularx}
\label{table:Choi_states}
\end{table}
\end{center} 

Recall that for a unitary $U$ on the system $A$, we define the value 
taken by the decoupling function at $U$ as follows:
\[
f(U) := 
\bigg\lVert
\left[\cT^{A \to B} \otimes \I^R\right]\left(
(U \otimes I^R) \rho^{AR} (U^{ \dagger} \otimes I^R)
\right)
- \omega^B \otimes \rho^R
\bigg\rVert_1.
\]
Let $\eta^{AB} \leq \omega^{AB}$ be the positive semidefinite operator
achieving the optimum in Definition~\ref{def:Renyi2prime}.
Let $V_\cT^{AC \rightarrow BZ}$ be a unitary Stinespring dilation of
the CPTP map $\cT^{A \rightarrow B}$ provided by 
Fact~\ref{fact:stinespring}. Thus 
\[
(\cT^{A \to B} \otimes \I^{A'})(M^{A A'}) = 
\Tr_Z \left[
(V_{\cT}^{AC \to BZ} \otimes I^{A'})
(M^{AA'} \otimes \ket{0}\bra{0}^C) 
(V_{\cT}^{AC \to BZ} \otimes I^{A'})^{\dagger} 
\right]
\]
for any $M^{A A'} \in \cL(A \otimes A')$, where $A'$ is a new Hilbert
space of the same dimension as $A$.
Recall that
$
\omega^{A'B} := 
(\cT^{A \to B} \otimes \I^{A'})(\Phi^{A A'}),
$
where $\Phi^{A A'}$ is the standard EPR pure state on $A A'$.
By Fact~\ref{fact:Dupuis_I.3}, there exists a POVM element on $Z$,
$0^Z \leq P^Z \leq I^Z$, such that
\begin{eqnarray}
\label{eq:perturbT}
\lefteqn{(\hcT^{A \to B} \otimes \I^{A'})(\Phi^{A A'})} \\ \nonumber
& := &
\Tr_Z \left[
(P^Z \otimes I^{B A'})
(V_{\cT}^{AC \to BZ} \otimes I^{A'})
(\Phi^{AA'} \otimes \ket{0}\bra{0}^C) 
(V_{\cT}^{AC \to BZ} \otimes I^{A'})^{\dagger} 
(P^Z \otimes I^{B A'})
\right] \\ 
& = &
\eta^{B A'}. \nonumber
\end{eqnarray}
The superoperator $\hcT^{A \to B}$ is completely positive and trace
non-increasing.
Define the function
\[
\hat{f}(U) := 
\bigg\lVert
\left[\hcT^{A \to B} \otimes \I^R\right]\left(
(U^A \otimes I^R) \rho^{AR} (U^{A \dagger} \otimes I^R)
\right)
- \omega^B \otimes \rho^R
\bigg\rVert_1.
\]
By Fact~\ref{fact:Dupuis_I.2}, 
\begin{equation}
(\hcT^{A \to B} \otimes \I^R)(
(U \otimes I^R) \rho^{AR} (U^{ \dagger} \otimes I^R)
) \leq
(\cT^{A \to B} \otimes \I^R)(
(U \otimes I^R) \rho^{AR} (U^{ \dagger} \otimes I^R)
).
\end{equation}
Hence by Fact~\ref{fact:Dupuis_I.1}, $f(U) \leq 2 \hat{f}(U)$.

Let $\Pi^B_{\omega'''_{\epsilon,\delta}}$ be the projector onto the
support of $(\omega'''_{\epsilon,\delta})^B$, where we recall from Definition~\ref{def:Renyi2prime} that
${(\omega'''_{\epsilon, \delta})}^B$ is the positive semidefinite 
operator obtained by zeroing out those eigenvalues of $\omega^B$ that 
are smaller than
$2^{-(1+\delta)\mathbb{H}_{\max}^\epsilon(B)_\omega}$.
Define the completely positive trace non-increasing superoperator
$
(\cT')^{A \to B} := 
\Pi^B_{\omega'''_{\epsilon,\delta}} \circ \hcT^{A \to B}.
$
From Definition~\ref{def:Renyi2prime} and
Fact~\ref{fact:gentle}, we can conclude 
\begin{align*}
\bigg\lVert
\left[\hcT^{A \to B} \otimes \I^R\right]\left(
(U \otimes I^R) \rho^{AR} (U^{ \dagger} \otimes I^R)
\right) -
\left[(\cT')^{A \to B} \otimes \I^R\right]\left(
(U \otimes I^R) \rho^{AR} (U^{ \dagger} \otimes I^R)
\right)
\bigg\rVert_1\\
\leq 2\sqrt{\epsilon} ~~~~~~~~~~~~~~~&
\end{align*}
in fact, we can conclude that
$
\lVert \hcT^{A \to B} - (\cT')^{A \to B} \rVert_\Diamond 
\leq 2\sqrt{\epsilon}.
$
Define the function
\[
f'(U) := 
\bigg\lVert
\left[(\cT')^{A \to B} \otimes \I^R\right]\left(
(U \otimes I^R) \rho^{AR} (U^{ \dagger} \otimes I^R)
\right)
- \omega^B \otimes \rho^R
\bigg\rVert_1.
\]
By triangle inequality, $|f'(U) - \hat{f}(U)| \leq 2\sqrt{\epsilon}$
which further implies that
$f(U) \leq 2 f'(U) + 4\sqrt{\epsilon}$.

Define the states $(\rho')^{AR}$, $\xi^R$ to be the ones achieving the 
optimum
in Definition~\ref{def:Renyi2} of $H_2^\epsilon(A|R)_\rho$ i.e.
\begin{equation} \label{eq:vanilla_H2}
\lVert 
(I^A \otimes \xi^R)^{-1/4}
(\rho')^{AR}
(I^A \otimes \xi^R)^{-1/4}
\rVert_2 =
2^{-\frac{1}{2} H_2^\epsilon(A|R)_\rho}.
\end{equation}
Define the function
\[
f''(U) := 
\bigg\lVert
\left[(\cT')^{A \to B} \otimes \I^R\right]\left(
(U^A \otimes I^R) (\rho')^{AR} (U^{A \dagger} \otimes I^R)
\right)
- \omega^B \otimes (\rho')^R
\bigg\rVert_1.
\]
By triangle inequality,
$|f''(U) - f'(U)| \leq 2\epsilon$ which implies that\\
$f(U) \leq 2 f''(U) + 8\sqrt{\epsilon}$.

Now define the Hermitian positive semidefinite matrix
$
(\omega')^{A'B} :=
\left[(\cT')^{A \to B} \otimes \I^{A'}\right](\Phi^{A A'}).
$
Observe that 
$
(\omega')^{A'B} = 
\Pi^B_{\omega'''_{\epsilon,\delta}} \circ \eta^{A'B}.
$
From Definition~\ref{def:Renyi2prime} and
Fact~\ref{fact:gentle}, we have 
$
\lVert
(\omega')^{A'B} -
\eta^{A'B}
\rVert_1 \leq 2\sqrt{\epsilon}
$
which further implies that
$
\lVert
(\omega')^{A'B} -
\omega^{A'B}
\rVert_1 \leq 3\sqrt{\epsilon}.
$
Define the function
\[
f'''(U) := 
\bigg\lVert
\left[(\cT')^{A \to B} \otimes \I^R\right]\left(
(U \otimes I^R) (\rho')^{AR} (U^{ \dagger} \otimes I^R)
\right)
- (\omega')^B \otimes (\rho')^R
\bigg\rVert_1.
\]
Again by triangle inequality,
$|f'''(U) - f''(U)| \leq 3\sqrt{\epsilon}$ which implies that\\
$f(U) \leq 2 f'''(U) + 14\sqrt{\epsilon}$.

Observe now that the range space of $(\cT')^{A \to B}$ is contained
in the support of $(\omega'''_{\epsilon,\delta})^B$. By 
Fact~\ref{fact:cs_tilde}, we can upper 
bound $f'''(U)$ by the function $g(U)$ defined by
\begin{equation}
\label{eq:gU}
g(U) := 
\bigg\lVert
\left[(\tcT')^{A \to B} \otimes \I^R\right]\left(
(U^A \otimes I^R) (\trho')^{AR} (U^{A \dagger} \otimes I^R)
\right)
- (\tomega')^B \otimes (\trho')^R
\bigg\rVert_2,
\end{equation}
where
$
(\tcT')^{A \to B} := 
((\omega'''_{\epsilon,\delta})^{-1/4})^B  \circ (\cT')^{A \to B},
$
\begin{equation}
\label{eq:tilde_rho}
(\trho')^{AR} :=
(I^A \otimes \xi^R)^{-1/4}
(\rho')^{AR}
(I^A \otimes \xi^R)^{-1/4},
\end{equation}
and
\begin{equation}
\label{eq:tilde_pomega}
(\tomega')^{A'B} :=
((\tcT')^{A \to B} \otimes \I^{A'})(\Phi^{A A'}) =
(I^{A'} \otimes (\omega'''_{\epsilon,\delta})^B)^{-1/4}
(\omega')^{A'B}
(I^{A'} \otimes (\omega'''_{\epsilon,\delta})^B)^{-1/4}.
\end{equation}
Thus,
$f(U) \leq 2 g(U) + 14\sqrt{\epsilon}$.
Recall from Definitions~\ref{def:Renyi2} and \ref{def:Renyi2prime} 
respectively, that
\begin{equation}
\label{eq:H2_rho_AR}
2^{-\frac{1}{2} H_2^\epsilon(A|R)_\rho} =
\lVert (\trho')^{AR} \rVert_2, \text{ and }
\end{equation}
\begin{equation}
\label{eq:H2_prime_epsilon}
2^{-\frac{1}{2} \mathbb{H}_2^{\epsilon,\delta}(A'|B)_\rho} =
\lVert (\tomega')^{A'B} \rVert_2.
\end{equation}

We have thus shown the following lemma.
\begin{lemma}
\label{lem:f_to_g}
Let $\mu, \kappa > 0$.
For all probability distributions on $U^A$,
\[
\mathbb{P}_{U^A}[f(U) > 2 \mu + 14\sqrt{\epsilon} + 2 \kappa]
\leq
\mathbb{P}_{U^A}[g(U) > \mu + \kappa].
\]
In particular this holds for
$
\mu =
2^{
-\frac{1}{2} H_2^\epsilon(A|R)_\rho 
-\frac{1}{2} \mathbb{H}_2^{\epsilon,\delta}(A'|B)_\omega 
} =
\lVert (\trho')^{AR} \rVert_2 \cdot
\lVert (\tomega')^{A'B} \rVert_2.
$
\end{lemma}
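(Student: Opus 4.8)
The plan is to reduce the whole statement to a single \emph{deterministic, pointwise} inequality, namely
$f(U) \le 2\,g(U) + 14\sqrt{\epsilon}$ for every unitary $U$ on $A$, with $g$ the function of Equation~\ref{eq:gU}. Granting this, the probability bound is immediate for \emph{any} distribution on $U^A$: if $f(U) > 2\mu + 14\sqrt{\epsilon} + 2\kappa$ then $2g(U) + 14\sqrt{\epsilon} \ge f(U) > 2\mu + 14\sqrt{\epsilon} + 2\kappa$, hence $g(U) > \mu + \kappa$, so the event on the left-hand side is contained in the event on the right-hand side and monotonicity of $\mathbb{P}$ finishes it. The ``in particular'' clause is then just the substitution of the stated value of $\mu$, which equals $\lVert(\trho')^{AR}\rVert_2\cdot\lVert(\tomega')^{A'B}\rVert_2$ by Equations~\ref{eq:H2_rho_AR} and~\ref{eq:H2_prime_epsilon}.

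To establish the pointwise inequality I would interpolate between $f$ and $g$ through a chain of auxiliary functions, each step replacing one ingredient (the channel, the support of its output, the input state, or the target marginal $\omega^B$) by a nearby one at a controlled additive cost, while tracking the factor-of-two amplifications coming from Fact~\ref{fact:Dupuis_I.1} so that the accumulated error telescopes to exactly $14\sqrt{\epsilon}$. Concretely: (i) pass from $\cT$ to the completely positive trace-nonincreasing map $\hcT$ obtained by inserting into a Stinespring dilation (Fact~\ref{fact:stinespring}) the POVM element $P^Z$ furnished by Fact~\ref{fact:Dupuis_I.3}, so that the (sub-)Choi state becomes the optimiser $\eta^{A'B}$ of Definition~\ref{def:Renyi2prime} (with first system $A'$); Fact~\ref{fact:Dupuis_I.2} shows the $\hcT$-output lies below the $\cT$-output, so Fact~\ref{fact:Dupuis_I.1} gives $f(U)\le 2\hat f(U)$. (ii) Project the $\hcT$-output onto $\mathrm{supp}((\omega'''_{\epsilon,\delta})^B)$: the support constraint in Definition~\ref{def:Renyi2prime} together with Winter's gentle measurement lemma (Fact~\ref{fact:gentle}) yields $\lVert\hcT-\cT'\rVert_\Diamond\le 2\sqrt{\epsilon}$, hence $f(U)\le 2f'(U)+4\sqrt{\epsilon}$. (iii) Replace the input $\rho^{AR}$ by the smoothing optimiser $(\rho')^{AR}$ of Definition~\ref{def:Renyi2}, costing $2\epsilon$ in trace distance since the map is trace-nonincreasing and unitary conjugation is an isometry, giving $f(U)\le 2f''(U)+8\sqrt{\epsilon}$. (iv) Replace $\omega^B$ by $(\omega')^B$ using $\lVert(\omega')^{A'B}-\omega^{A'B}\rVert_1\le 3\sqrt{\epsilon}$ (again gentle measurement plus the triangle inequality), giving $f(U)\le 2f'''(U)+14\sqrt{\epsilon}$.

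The final and most delicate step is $f'''(U)\le g(U)$. Here I would use that the range of $(\cT')^{A\to B}$ lies inside $\mathrm{supp}((\omega'''_{\epsilon,\delta})^B)$, so the operator whose trace norm is $f'''(U)$ is supported on $\mathrm{supp}((\omega'''_{\epsilon,\delta})^B)\otimes R$; the weighted Cauchy-Schwarz inequality (Fact~\ref{fact:cs_tilde}) with weight $(\omega'''_{\epsilon,\delta})^B\otimes\xi^R$ converts the Schatten-$1$ norm into the Schatten-$2$ norm of the operator conjugated by $((\omega'''_{\epsilon,\delta})^B)^{-1/4}$ on $B$ and $(\xi^R)^{-1/4}$ on $R$. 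Since the $R$-weight commutes through $U^A$, this sends $(U^A\otimes I^R)(\rho')^{AR}(U^{A\dagger}\otimes I^R)$ to $(U^A\otimes I^R)(\trho')^{AR}(U^{A\dagger}\otimes I^R)$ via Equation~\ref{eq:tilde_rho}, absorbing the $B$-weight turns $\cT'$ into $\tcT'$ and $(\omega')^B\otimes(\rho')^R$ into $(\tomega')^B\otimes(\trho')^R$ via Equation~\ref{eq:tilde_pomega}, and what remains is precisely $g(U)$. Well-definedness is exactly what Definitions~\ref{def:Hmaxprime} and~\ref{def:Renyi2prime} were designed to guarantee: zeroing out only the eigenvalues below $2^{-(1+\delta)(\Hmax')^\epsilon(B)_\omega}$ keeps $(\omega'''_{\epsilon,\delta})^B$ invertible on the relevant support, so the pseudoinverse weight is harmless.

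I expect the main obstacle to be not any single estimate but the orchestration — choosing the two novel smooth entropies so that the gentle-measurement projection step, the weighted Cauchy-Schwarz step with an invertible weight, and the match of the resulting expression to $g(U)$ in Equations~\ref{eq:H2_rho_AR} and~\ref{eq:H2_prime_epsilon} all succeed simultaneously with a clean $14\sqrt{\epsilon}$ error budget. The probability inequality of Lemma~\ref{lem:f_to_g} itself is then a one-line consequence of event inclusion.
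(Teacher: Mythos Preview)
Your proposal is correct and follows essentially the same approach as the paper: the paper proves the pointwise inequality $f(U)\le 2g(U)+14\sqrt{\epsilon}$ via exactly the chain $f\to\hat f\to f'\to f''\to f'''\to g$ you describe (using Facts~\ref{fact:Dupuis_I.1}--\ref{fact:Dupuis_I.3}, gentle measurement, the smoothing of $\rho$ and $\omega$, and finally Fact~\ref{fact:cs_tilde} with weight $(\omega'''_{\epsilon,\delta})^B\otimes\xi^R$), then deduces the probability bound by event inclusion. Your identification of the support condition as what makes the pseudoinverse weight legitimate is also precisely the point the paper makes.
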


\subsection{Bounding centralised moments of $(g(U))^2$ under Haar measure}
\label{subsec:gmoment}
We now upper bound the tail of $g(U)$ when $U^A$ is chosen from
the Haar measure. For this we
need to upper bound the Lipschitz constant of $g(U)$ as follows. 
\begin{lemma}
\label{lem:lipschitz}
The Lipschitz constant of $g(U)$, that is for unitary operators 
$U,\,V$, the ratio $\frac{|g(U)-g(V)|}{\lVert U-V \rVert_2}$ is less than
$
2^{
\frac{1+\delta}{2} \mathbb{H}_{\max}^\epsilon(B)_\omega -
\frac{1}{2} H_2^\epsilon(A|R)_\rho + 1
}.
$ 
\end{lemma}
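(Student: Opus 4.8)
Write $X_U := (U^A\otimes I^R)(\trho')^{AR}(U^{A\dagger}\otimes I^R)$, so that $g(U)=\lVert(\tcT'{}^{A\to B}\otimes\I^R)(X_U)-(\tomega')^B\otimes(\trho')^R\rVert_2$. Since the term $(\tomega')^B\otimes(\trho')^R$ does not depend on $U$, the reverse triangle inequality for the Schatten $2$-norm gives
\[
\lvert g(U)-g(V)\rvert\le\bigl\lVert(\tcT'{}^{A\to B}\otimes\I^R)(X_U-X_V)\bigr\rVert_2 .
\]
So the plan is to bound the $2$-norm on the right by $2^{\frac{1+\delta}{2}(\Hmax')^\epsilon(B)_\omega-\frac12 H_2^\epsilon(A|R)_\rho+1}\,\lVert U-V\rVert_2$, after which the claim follows because the metric on $\U(A)$ used in Levy's Lemma (Fact~\ref{fact:Levy}) is exactly the Schatten $2$-norm. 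I would first record $\lVert U-V\rVert_\infty\le\lVert U-V\rVert_2$, so it is enough to bound the right-hand side by $2^{\cdots}\lVert U-V\rVert_\infty$.

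\textbf{Reduction via the structure of $X_U-X_V$.} Put $E:=U-V$ and write
\[
X_U-X_V=(E^A\otimes I^R)(\trho')^{AR}(U^{A\dagger}\otimes I^R)+(V^A\otimes I^R)(\trho')^{AR}(E^{A\dagger}\otimes I^R),
\]
so by triangle inequality it suffices to bound $\bigl\lVert(\tcT'\otimes\I^R)\bigl((E\otimes I)(\trho')(U^\dagger\otimes I)\bigr)\bigr\rVert_2$ by $2^{\frac{1+\delta}{2}(\Hmax')^\epsilon(B)_\omega-\frac12 H_2^\epsilon(A|R)_\rho}\lVert E\rVert_\infty$, and likewise for the second summand. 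Next I would peel off the weighting operator: since $(\tcT')^{A\to B}=((\omega'''_{\epsilon,\delta})^{-1/4})^B\circ(\cT')^{A\to B}$ with $(\cT')^{A\to B}$ completely positive and trace non-increasing, we get the factor $\lVert((\omega'''_{\epsilon,\delta})^{-1})^B\rVert_\infty^{1/2}=2^{\frac{1+\delta}{2}(\Hmax')^\epsilon(B)_\omega}$ out in front (here Definition~\ref{def:Hmaxprime} and Definition~\ref{def:Renyi2prime} pin down the smallest retained eigenvalue of $(\omega'''_{\epsilon,\delta})^B$), leaving us to show
\[
\bigl\lVert(\cT'\otimes\I^R)\bigl((E\otimes I)(\trho')^{AR}(U^\dagger\otimes I)\bigr)\bigr\rVert_2\ \le\ \lVert E\rVert_\infty\,\lVert(\trho')^{AR}\rVert_2 ,
\]
and then invoke $\lVert(\trho')^{AR}\rVert_2=2^{-\frac12 H_2^\epsilon(A|R)_\rho}$ from Equation~\ref{eq:H2_rho_AR}. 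The factor of $2$ coming from the two summands produces the ``$+1$'' in the exponent, and the two inequalities being non-strict while an intermediate estimate is strict yields the ``less than''.

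\textbf{The core estimate.} To prove the displayed inequality for $\cT'$ I would use a Stinespring/Kraus representation $\cT'(M)=\sum_z L_z M L_z^{\dagger}$ with $\sum_z L_z^{\dagger}L_z\le I^A$ (Fact~\ref{fact:stinespring}), factor $(\trho')^{AR}=((\trho')^{AR})^{1/2}((\trho')^{AR})^{1/2}$, and apply a weighted Cauchy--Schwarz inequality of the matrix type $\lVert\sum_z\mathbf A_z\mathbf B_z^{\dagger}\rVert_2\le\lVert\sum_z\mathbf A_z\mathbf A_z^{\dagger}\rVert_\infty^{1/2}\cdot(\sum_z\lVert\mathbf B_z\rVert_2^2)^{1/2}$, with the two families chosen so that the sum $\sum_z\mathbf B_z\mathbf B_z^{\dagger}$ collapses, via $\sum_z L_z^{\dagger}L_z\le I^A$ and unitary invariance, to something controlled by $\lVert(\trho')^{AR}\rVert_2$ rather than $\operatorname{Tr}(\trho')^{AR}$, and so that the $\mathbf A$-side operator-norm does not pick up a dimension factor. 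The essential point exploited here is that $(\Tr_A\otimes\I^R)(X_U-X_V)=(\trho')^R-(\trho')^R=0$, i.e.\ $X_U-X_V$ has vanishing partial trace on $A$; this kills the ``prepare $(\tomega')^B$ on the trace of $M$'' part of $\tcT'$, which is precisely the part responsible for the large $2\!\to\!2$ norm of $\tcT'$ in general, and it is what makes the dimension-free bound possible. I expect \emph{this} step --- arranging the Cauchy--Schwarz split and the choice of $(\trho')^{AR}$-powers so that no stray factor of $\sqrt{|A|}$, $\sqrt{|Z|}$, or $\operatorname{Tr}(\trho')^{AR}$ survives --- to be the main obstacle; everything else (the reverse triangle inequality, peeling off $((\omega'''_{\epsilon,\delta})^{-1/4})^B$, and the final substitution of the entropic identities) is routine.
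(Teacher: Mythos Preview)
Your overall architecture (reverse triangle inequality, split $X_U-X_V$ into two summands, pull out the weighting $(\omega'''_{\epsilon,\delta})^{-1/4}$) matches the paper, but the core estimate you write down is false as stated, and the sketch you give for it would not recover the correct version either.

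\textbf{The $\lVert E\rVert_\infty$ reduction leads to a false target.} You reduce to proving, for a completely positive trace non-increasing $\cT'$,
\[
\bigl\lVert(\cT'\otimes\I^R)\bigl((E\otimes I)(\trho')^{AR}(U^{\dagger}\otimes I)\bigr)\bigr\rVert_2\ \le\ \lVert E\rVert_\infty\,\lVert(\trho')^{AR}\rVert_2 .
\]
This inequality is not true in general. Take $R$ trivial, $\cT'=\Tr_A$ (so $B$ is one–dimensional and certainly CP and trace non-increasing), $(\trho')^A=\pi^A=I^A/|A|$, $U=I^A$, $V=-I^A$ so $E=2I^A$. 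Then the left side equals $|\Tr[E\pi^A]|=2$, whereas $\lVert E\rVert_\infty\,\lVert\pi^A\rVert_2=2/\sqrt{|A|}$, which is strictly smaller for $|A|>1$. So your displayed ``core estimate'' cannot be what is proved. The bound that actually holds---and is exactly sharp in this example---replaces $\lVert E\rVert_\infty$ by $\lVert E\rVert_2=\lVert U-V\rVert_2$. Your ``record $\lVert U-V\rVert_\infty\le\lVert U-V\rVert_2$, so it suffices to bound by $\lVert U-V\rVert_\infty$'' is a reduction to a strictly stronger claim, and that stronger claim is false.

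\textbf{Why your Cauchy--Schwarz sketch does not close the gap.} Even targeting the $\lVert E\rVert_2$ version, the split $(\trho')^{1/2}(\trho')^{1/2}$ with Kraus operators $L_z$ runs into the wrong normalization: a CP trace non-increasing map satisfies $\sum_z L_z^\dagger L_z\le I$ but \emph{not} $\sum_z L_z L_z^\dagger\le I$, and the pairings you describe land the $L_z L_z^\dagger$ sum in the operator norm slot; alternatively the $B$-side collapses to $\Tr(\trho')$ rather than $\lVert(\trho')\rVert_2^2$. Your appeal to $\Tr_A(X_U-X_V)=0$ is also a red herring here: that identity concerns the full difference $X_U-X_V$, not either summand $(E\otimes I)(\trho')(U^\dagger\otimes I)$ separately, so it cannot be used once you have split.

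\textbf{How the paper avoids this.} The paper does \emph{not} peel off $(\omega'''_{\epsilon,\delta})^{-1/4}$ first and then attempt a dimension-free $2\to2$ bound on $\cT'$. Instead it works with a single Stinespring dilation $W_{\tcT'}$ of $\tcT'$ and, crucially, uses the identity $\Tr_Z\ket{\alpha}^{BZ}\bra{\beta}=\vectorise^{-1}(\ket{\alpha})\bigl(\vectorise^{-1}(\ket{\beta})\bigr)^\dagger$ (Fact~\ref{fact:vec_partial_trace}) to turn the partial trace into a product of $B\times Z$ matrices. After expanding $(\trho')^{AR}=\sum_{kl}\tilde M'_{kl}\otimes\ket{k}\bra{l}$ and taking the SVD of each block, this recasts each summand as $P_U(Q_U-Q_V)^\dagger$ for block operators $P_U,Q_U,Q_V$ built from $W_{\tcT'}$; then the elementary bound $\lVert P_U(Q_U-Q_V)^\dagger\rVert_2\le\lVert P_U\rVert_2\lVert Q_U-Q_V\rVert_2$ together with $\lVert W_{\tcT'}\rVert_\infty\le 2^{\frac{1+\delta}{4}(\Hmax')^\epsilon(B)_\omega}$ (used once on each factor) yields the required $2^{\frac{1+\delta}{2}(\Hmax')^\epsilon(B)_\omega}\lVert(\trho')^{AR}\rVert_2\lVert U-V\rVert_2$. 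The vec trick is precisely what prevents the partial trace over $Z$ from costing a $\sqrt{|Z|}$ factor, which is the obstruction your sketch does not overcome.
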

\begin{proof}
From Equations~\ref{eq:tilde_rho}, \ref{eq:vanilla_H2} and Definition~\ref{def:Renyi2} recall 
that $\lVert{(\trho')}^{AR} \rVert_2=2^{-1/2H_2^\epsilon(A|R)}$. 
Write $(\trho')^{AR}$ in any canonical tensor basis for 
$A \otimes R$:
\begin{align*}
(\trho')^{AR} &= 
\sum_{ij} \sum_{kl} \trho'_{ijkl} 
\ket{i}\bra{j}^A \otimes \ket{k}\bra{l}^R = 
\sum_{kl} (\tilde{M}'_{kl})^A \otimes \ket{k}\bra{l}^R\\
&=  \sum_{kl} \sum_{x} s_x^{kl} \ket{a_x^{kl}}\bra{b_x^{kl}}^A  \otimes 
	\ket{k}\bra{l}^R,
\end{align*}
where $\tilde{M}'_{kl} := \sum_{ij} \tilde{\rho}_{ijkl} 
\ket{i}\bra{j}^R$, and 
$\tilde{M}'_{kl} = \sum_{x} s_x^{kl} \ket{a^{kl}_x}\bra{b^{kl}_x}^A$ 
is the singular value decomposition of $\tilde{M}'_{kl}$.

Let $W_{\tcT'}^{AC \rightarrow BZ}$ be a Stinespring dilation of the
completely positive trace non-increasing map $\tcT'^{A \rightarrow B}$
provided by Fact~\ref{fact:stinespring}.\\
Thus 
$
(\tcT')^{A \rightarrow B}(M^A) =
\Tr_Z [W_{\tcT'} \circ (M^A \otimes 0^C)],
$
where $0^C := \ket{0}\bra{0}^C$.
Note that 
\begin{equation}
\label{eq:W_tilde_tau_prime}
W_{\tcT'}^{AC \rightarrow BZ} =
\left((\omega'''_{\epsilon,\delta})^B \otimes I^Z\right)^{-1/4}
(\Pi^B_{\omega'''_{\epsilon,\delta}} \otimes I^Z) 
(I^B \otimes P^Z) 
V_{\cT}^{AC \rightarrow BZ},
\end{equation}
where $V_{\cT}^{AC \rightarrow BZ}$ is the unitary Stinespring dilation
of $\cT^{A \rightarrow B}$ provided by Fact~\ref{fact:stinespring}
and $P^Z$, $\Pi^B_{\omega'''_{\epsilon,\delta}}$ and
$(\omega'''_{\epsilon,\delta})^B$ are defined in 
Section~\ref{sec:f(U)_to_g(U)}. We have
\begin{equation}
\label{eq:W_tilde_tau_prime_infty}
\lVert
W_{\tcT'}^{AC \rightarrow BZ}
\rVert_\infty \leq 
\lVert
((\omega'''_{\epsilon,\delta})^B \otimes I^Z)^{-1/4}
\rVert_\infty =
\lVert
((\omega'''_{\epsilon,\delta})^B)^{-1}
\rVert_\infty^{1/4} \leq
2^{\frac{(1+\delta)}{4} \mathbb{H}_{\max}^\epsilon(B)_\omega},
\end{equation}
where the last inequality follows from the definition of
$(\omega'''_{\epsilon,\delta})^B$ given in 
Definition~\ref{def:Renyi2prime}. Moreover,
\begin{equation}
\label{eq:W_tilde_tau_prime_two}
\lVert
W_{\tcT'}^{AC \rightarrow BZ}
\rVert_2 \leq 
\sqrt{|B| |Z|} 
\lVert
W_{\tcT'}^{AC \rightarrow BZ}
\rVert_\infty \leq 
\sqrt{|B| |A|} \cdot
2^{\frac{(1+\delta)}{4} \mathbb{H}_{\max}^\epsilon(B)_\omega},
\end{equation}
where we used that $|Z| \leq |A|$ guaranteed by 
Fact~\ref{fact:stinespring}.

Let $U^A$, $V^A$ be two unitaries on $A$. Then,
\begin{align} 
\lefteqn{\bigg|g(U)-g(V)\bigg|} \label{eq:lem2_1}\\
& \leq 
\bigg\lVert 
\left[(\tcT')^{A \rightarrow B} \otimes \I^R\right]
\left((U^A \otimes I^R ) \circ (\trho')^{AR}\right) - 
\left[(\tcT')^{A \rightarrow B} \otimes \I^R\right]
\left((V^A \otimes I^R ) \circ (\trho')^{AR}\right)
\bigg\rVert_2  \label{eq:lem2_2} \\
&  =   
\bigg\lVert 
\left[(\tcT')^{A \rightarrow B} \otimes \I^R\right]
\left(
(U^A \otimes I^R ) \circ (\trho')^{AR} - 
(V^A \otimes I^R ) \circ (\trho')^{AR}
\right)
\bigg\rVert_2 \nonumber \\
& \leq    
\bigg\lVert 
\left[(\tcT')^{A \rightarrow B} \otimes \I^R\right]
\left(
(U^A \otimes I^R)(\trho')^{AR}(U^A \otimes I^R)^\dag -
(U^A \otimes I^R)(\trho')^{AR}(V^A \otimes I^R)^\dag
\right)
\bigg\rVert_2 \nonumber\\
&
~~
{} +
\bigg\lVert 
\left[(\tcT')^{A \rightarrow B} \otimes \I^R\right]
\left(
(U^A \otimes I^R)(\trho')^{AR}(V^A \otimes I^R)^\dag -
(V^A \otimes I^R)(\trho')^{AR}(V^A \otimes I^R)^\dag
\right)
\bigg\rVert_2. \nonumber
\end{align}
We now upper bound
\begin{align*}
\lefteqn{
\bigg\lVert 
\left[(\tcT')^{A \rightarrow B} \otimes \I^R\right]
\left(
(U^A \otimes I^R)(\trho')^{AR}(U^A \otimes I^R)^\dag -
(U^A \otimes I^R)(\trho')^{AR}(V^A \otimes I^R)^\dag
\right)
\bigg\rVert_2
} \\
& =
\bigg\lVert 
\sum_{kl} 
\left(
\sum_x s^{kl}_x 
(\tcT')^{A \rightarrow B}
\left(
(U \ket{a_x^{kl}}^A\bra{b_x^{kl}} U^\dag) -
(U \ket{a_x^{kl}}^A\bra{b_x^{kl}} V^\dag)
\right) 
\right) \otimes \ket{k}^R\bra{l}
\bigg\rVert_2 \\
& =
\sqrt{
\sum_{kl} 
\left\lVert 
\sum_x s^{kl}_x 
(\tcT')^{A \rightarrow B}
\left(
(U \ket{a_x^{kl}}\bra{b_x^{kl}}^A U^\dag) -
(U \ket{a_x^{kl}}\bra{b_x^{kl}}^A V^\dag)
\right)
\right\rVert_2^2
}.
\end{align*}
Fix $k$, $l$. For ease of notation drop the superscript $kl$ below. 
We now upper bound
\begin{align*}
\lefteqn{
\left\lVert 
\sum_x s_x 
(\tcT')^{A \rightarrow B}
(
(U \ket{a_x}^A\bra{b_x} U^\dag) -
(U \ket{a_x}^A\bra{b_x} V^\dag)
) 
\right\rVert_2
} \\
& = 
\left\lVert 
\sum_x s_x
\bigg(
\Tr_Z [W_{\tcT'} \circ ((U \ket{a_x}^A\bra{b_x} U^\dag) \otimes 0^C)] -
\Tr_Z [W_{\tcT'} \circ ((U \ket{a_x}^A\bra{b_x} V^\dag) \otimes 0^C)] 
\bigg)
\right\rVert_2 \\
& \overset{a}{=} 
\left\lVert 
\sum_x s_x
\left(
P_{x,U}^{B \times Z} (Q_{x,U}^{B \times Z})^\dag -
P_{x,U}^{B \times Z} (Q_{x,V}^{B \times Z})^\dag 
\right)
\right\rVert_2 \\
& \overset{b}{=} 
\left\lVert 
P_{U}^{BQ \times ZA} (Q_{U}^{BQ \times ZA})^\dag -
P_{U}^{BQ \times ZA} (Q_{V}^{BQ \times ZA})^\dag 
\right\rVert_2 \\
& \leq
\left\lVert P_{U}^{BQ \times ZA} \right\rVert_2
\left\lVert
Q_{U}^{BQ \times ZA} -
Q_{V}^{BQ \times ZA}
\right\rVert_2 \\
& \overset{c} {\leq} 
2^{\frac{1+\delta}{2} \mathbb{H}_{\max}^\epsilon(B)_\omega}
\lVert \tilde{M}'_{kl} \rVert_2
\lVert U-V \rVert_2,
\end{align*}
where
\vspace*{-1mm}
\begin{itemize}
\item[(a)] 
\[
P_{x,U}^{B \times Z} := 
(\vectorise^{B,Z})^{-1}(
W_{\tcT'}^{AC \rightarrow BZ}(U^A \otimes I^C) 
(\ket{a_x}^A \otimes\ket{0}^C)
),
\]
\[
Q_{x,U}^{B \times Z} := 
(\vectorise^{B,Z})^{-1}(
W_{\tcT'}^{AC \rightarrow BZ}(U^A \otimes I^C) 
(\ket{b_x}^A \otimes\ket{0}^C)
),
\]
and $Q_{x,V}^{B \times Z}$ is defined similarly.
The above operators map system $Z$ to system $B$ or are 
$B \times Z$ matrices for fixed bases of $B$ and $Z$. The 
equality holds due to Fact~\ref{fact:vec_partial_trace}.

\item[(b)] 
Let $Q$ be a single qubit register and $x$ range over the computational
basis of $A$.
\[
P_U^{BQ \times ZA} := 
\sum_x s_x (P_{x,U}^{B \times Z} \otimes \ket{0}^Q \bra{x}^A),
~~~
Q_U^{BQ \times ZA} := 
\sum_x (Q_{x,U}^{B \times Z} \otimes \ket{0}^Q \bra{x}^A),
\]
and $Q_V^{BQ \times ZA}$ is defined similarly. The equality follows
by inspection.

\item[(c)]
We have
\begin{eqnarray*}
\lVert P_{U,kl}^{BQ \times ZA} \rVert_2^2 
& = &
\sum_x (s_x^{kl})^2 \lVert P_{x,U,kl}^{B \times Z} \rVert_2^2 
\;=\;
\sum_x (s_x^{kl})^2 
\lVert 
W_{\tcT'}^{AC \rightarrow BZ}(U^A \otimes I^C) 
(\ket{a_x^{kl}}^A \otimes\ket{0}^C)
\rVert_2^2 \\
& \leq &
\lVert 
W_{\tcT'}^{AC \rightarrow BZ}
\rVert_\infty^{2}
\sum_x (s_x^{kl})^2 
\;\leq\;
2^{\frac{1+\delta}{2} \mathbb{H}_{\max}^\epsilon(B)_\omega}
\lVert \tilde{M}'_{kl} \rVert_2^2,
\end{eqnarray*}
where the last inequality follows from 
Equation~\ref{eq:W_tilde_tau_prime_infty}.  
Again using Equation~\ref{eq:W_tilde_tau_prime_infty} in the second
and third inequality below, we get
\begin{eqnarray*}
\lefteqn{
\left\lVert Q_{U}^{BQ \times ZA} - Q_{V}^{BQ \times ZA} \right\rVert_2^2 
} \\
& = &
\sum_x \left\lVert Q_{x,U}^{B \times Z} - 
Q_{x,V}^{B \times Z} \right\rVert_2^2 \\
& = &
\sum_x 
\left\lVert 
W_{\tcT'}^{AC \rightarrow BZ}(U^A \otimes I^C) 
(\ket{b_x}^A \otimes\ket{0}^C) -
W_{\tcT'}^{AC \rightarrow BZ}(V^A \otimes I^C) 
(\ket{b_x}^A \otimes\ket{0}^C)
\right\rVert_2^2 \\
& = &
\sum_x 
\left\lVert 
W_{\tcT'}^{AC \rightarrow BZ}
((U^A \otimes I^C) - (V^A \otimes I^C))
(\ket{b_x}^A \otimes\ket{0}^C)
\right\rVert_2^2 \\
& \leq &
\lVert 
W_{\tcT'}^{AC \rightarrow BZ}
\rVert_\infty^{2}
\sum_x 
\left\lVert 
((U^A \otimes I^C) - (V^A \otimes I^C))
(\ket{b_x}^A \otimes\ket{0}^C)
\right\rVert_2^2 \\
& \leq &
\lVert 
((\omega'''_{\epsilon,\delta})^B)^{-1}
\rVert_\infty^{1/2}
\sum_x 
\left\lVert 
((U^A \otimes I^C) - (V^A \otimes I^C))
(\ket{b_x}^A \otimes\ket{0}^C)
\right\rVert_2^2 \\
& =    &
\lVert ((\omega'''_{\epsilon,\delta})^B)^{-1} \rVert_\infty^{1/2}
\sum_x \lVert (U - V)^A \ket{b_x} \rVert_2^2 \\
& =   &
\lVert ((\omega'''_{\epsilon,\delta})^B)^{-1} \rVert_\infty^{1/2}
\sum_x \bra{b_x} (U-V)^\dag (U - V) \ket{b_x} \\
& =    &
\lVert ((\omega'''_{\epsilon,\delta})^B)^{-1} \rVert_\infty^{1/2}
\Tr [(U-V)^\dag (U - V)] 
\; \leq   \;
2^{\frac{1+\delta}{2} \mathbb{H}_{\max}^\epsilon(B)_\omega}
\lVert U-V \rVert_2^2.
\end{eqnarray*}
$
\Rightarrow \lVert P_{U}^{BQ \times ZA} \rVert_2
\lVert
Q_{U}^{BQ \times ZA} -
Q_{V}^{BQ \times ZA}
\rVert_2 
\leq
2^{\frac{1+\delta}{2} \mathbb{H}_{\max}^\epsilon(B)_\omega}
\lVert \tilde{M}'_{kl} \rVert_2
\lVert U-V \rVert_2,
$
thereby proving the desired inequality.
\end{itemize}
This implies that
\begin{eqnarray*}
\lefteqn{
\left\lVert 
\left[(\tcT')^{A \rightarrow B} \otimes \I^R\right]
\left(
(U \otimes I)(\trho')^{AR}(U \otimes I)^\dag -
(U \otimes I)(\trho')^{AR}(V \otimes I)^\dag
\right)
\right\rVert_2
} \\
& \leq &
\sqrt{
\sum_{kl}
2^{(1+\delta) \mathbb{H}_{\max}^\epsilon(B)_\omega}
\lVert \tilde{M}'_{kl} \rVert_2^2
\lVert U-V \rVert_2^2 
} \\
&  =   &
2^{\frac{1+\delta}{2} \mathbb{H}_{\max}^\epsilon(B)_\omega}
\lVert U-V \rVert_2 
\sqrt{
\sum_{kl}
\lVert \tilde{M}'_{kl} \rVert_2^2
} 
\; =  \;
2^{\frac{1+\delta}{2} \mathbb{H}_{\max}^\epsilon(B)_\omega}
\lVert U-V \rVert_2 
\lVert (\trho')^{AR} \rVert_2.
\end{eqnarray*}
Hence
$
|g(U) - g(V)| \leq
2 \cdot 2^{\frac{1+\delta}{2} \mathbb{H}_{\max}^\epsilon(B)_\omega}
\lVert (\trho')^{AR} \rVert_2
\lVert U-V \rVert_2.
$
This completes the proof of the lemma.
\end{proof}

\begin{note}
\label{note:W_tilde(T)}
The proof technique to find a tighter Lipschitz constant of the 
function $g$ (as opposed to the naive one) entails the evaluation of 
the Schatten-$2$ norm of the partial isometry 
$W_{\tcT'}^{AC \rightarrow BZ}$. This is imperative in finding the 
closeness between the higher order moments of the function $g(U)^2$ under 
Haar measure and quantum tensor product expander in 
Lemma~\ref{lem:MomentTPEHaar}. Hence we re-write 
Equation~\ref{eq:W_tilde_tau_prime_two} below:
\begin{equation*}
\lVert
W_{\tcT'}^{AC \rightarrow BZ}
\rVert_2 \leq 
\sqrt{|B| |Z|} 
\lVert
W_{\tcT'}^{AC \rightarrow BZ}
\rVert_\infty \leq 
\sqrt{|B| |A|} \cdot
2^{\frac{(1+\delta)}{4} \mathbb{H}_{\max}^\epsilon(B)_\omega}.
\end{equation*}
\end{note}

\begin{lemma}
\label{lem:maxg}
For any unitary $U \in \U(A)$, 
$
g(U) \leq
(2|A|)^{1/2} \cdot 
2^{
\frac{1+\delta}{2} \mathbb{H}_{\max}^\epsilon(B)_\omega -
\frac{1}{2} H_2^\epsilon(A|R)_\rho
}. 
$
\end{lemma}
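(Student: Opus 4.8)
The plan is to prove the equivalent squared bound $g(U)^2 \leq 2|A|\cdot 2^{(1+\delta)(\Hmax')^\epsilon(B)_\omega - H_2^\epsilon(A|R)_\rho}$ and then take square roots. First I would write $X(U) := ((\tcT')^{A\to B}\otimes\I^R)((U^A\otimes I^R)(\trho')^{AR}(U^{A\dagger}\otimes I^R))$ and $Y := (\tomega')^B\otimes(\trho')^R$, so that $g(U) = \lVert X(U) - Y\rVert_2$. Both $X(U)$ and $Y$ are positive semidefinite (the former because it is the image of the positive semidefinite operator $(U^A\otimes I^R)(\trho')^{AR}(U^{A\dagger}\otimes I^R)$ under the completely positive map $(\tcT')^{A\to B}\otimes\I^R$), hence $\Tr[X(U)Y]\geq 0$, and expanding the Hilbert--Schmidt norm gives $g(U)^2 = \lVert X(U)\rVert_2^2 + \lVert Y\rVert_2^2 - 2\Tr[X(U)Y] \leq \lVert X(U)\rVert_2^2 + \lVert Y\rVert_2^2$. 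It then suffices to bound each of these two Hilbert--Schmidt norms by $|A|\cdot 2^{(1+\delta)(\Hmax')^\epsilon(B)_\omega - H_2^\epsilon(A|R)_\rho}$.

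For $\lVert X(U)\rVert_2^2$ I would reuse the computations from the proof of Lemma~\ref{lem:lipschitz}. Writing $(\trho')^{AR} = \sum_{kl}(\tilde{M}'_{kl})^A\otimes\ket{k}^R\bra{l}$ in an orthonormal tensor basis gives $\lVert X(U)\rVert_2^2 = \sum_{kl}\lVert(\tcT')^{A\to B}(U\tilde{M}'_{kl}U^\dagger)\rVert_2^2$. Using the singular value decomposition $\tilde{M}'_{kl} = \sum_x s_x^{kl}\ket{a_x^{kl}}\bra{b_x^{kl}}$, the Stinespring dilation $(\tcT')^{A\to B}(M) = \Tr_Z[W_{\tcT'}\circ(M\otimes 0^C)]$, and the $\vectorise^{-1}$ identity of Fact~\ref{fact:vec_partial_trace} exactly as in Lemma~\ref{lem:lipschitz}, I can write $(\tcT')^{A\to B}(U\tilde{M}'_{kl}U^\dagger) = P_{U,kl}^{BQ\times ZA}(Q_{U,kl}^{BQ\times ZA})^\dagger$, so $\lVert(\tcT')^{A\to B}(U\tilde{M}'_{kl}U^\dagger)\rVert_2\leq\lVert P_{U,kl}\rVert_2\lVert Q_{U,kl}\rVert_2$. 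The estimate $\lVert P_{U,kl}\rVert_2^2\leq\lVert W_{\tcT'}\rVert_\infty^2\sum_x(s_x^{kl})^2 = \lVert W_{\tcT'}\rVert_\infty^2\lVert\tilde{M}'_{kl}\rVert_2^2$ already appears in that proof, and the analogous computation --- now using that the right singular vectors $\{\ket{b_x^{kl}}\}_x$ form an orthonormal family, in place of a difference of unitaries --- gives $\lVert Q_{U,kl}\rVert_2^2 = \sum_x\lVert W_{\tcT'}(U^A\otimes I^C)(\ket{b_x^{kl}}^A\otimes\ket{0}^C)\rVert_2^2 \leq |A|\,\lVert W_{\tcT'}\rVert_\infty^2$. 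Plugging in $\lVert W_{\tcT'}\rVert_\infty^2\leq 2^{\frac{1+\delta}{2}(\Hmax')^\epsilon(B)_\omega}$ from Equation~\ref{eq:W_tilde_tau_prime_infty} and summing over $k,l$ gives $\lVert X(U)\rVert_2^2 \leq |A|\cdot 2^{(1+\delta)(\Hmax')^\epsilon(B)_\omega}\,\lVert(\trho')^{AR}\rVert_2^2 = |A|\cdot 2^{(1+\delta)(\Hmax')^\epsilon(B)_\omega - H_2^\epsilon(A|R)_\rho}$, the last equality by Equation~\ref{eq:H2_rho_AR}.

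For $\lVert Y\rVert_2^2 = \lVert(\tomega')^B\rVert_2^2\,\lVert(\trho')^R\rVert_2^2$ I would bound the two factors separately. Applying Fact~\ref{fact:Dupuis_3.5} to the positive semidefinite operator $(\trho')^{AR}$ with marginal $(\trho')^R$ gives $\lVert(\trho')^R\rVert_2^2 \leq |A|\,\lVert(\trho')^{AR}\rVert_2^2 = |A|\cdot 2^{-H_2^\epsilon(A|R)_\rho}$. For the other factor I would show $(\tomega')^B\leq((\omega'''_{\epsilon,\delta})^B)^{1/2}$: tracing out $A'$ in Equation~\ref{eq:tilde_pomega} gives $(\tomega')^B = ((\omega'''_{\epsilon,\delta})^B)^{-1/4}(\omega')^B((\omega'''_{\epsilon,\delta})^B)^{-1/4}$ with $(\omega')^B = \Pi^B_{\omega'''_{\epsilon,\delta}}\,\eta^B\,\Pi^B_{\omega'''_{\epsilon,\delta}}$; since $\eta^{A'B}\leq\omega^{A'B}$ forces $\eta^B\leq\omega^B$, and $\Pi^B_{\omega'''_{\epsilon,\delta}}\,\omega^B\,\Pi^B_{\omega'''_{\epsilon,\delta}} = (\omega'''_{\epsilon,\delta})^B$ (the latter being the truncation of $\omega^B$ onto the range of $\Pi^B_{\omega'''_{\epsilon,\delta}}$), we get $(\omega')^B\leq(\omega'''_{\epsilon,\delta})^B$, and conjugating by $((\omega'''_{\epsilon,\delta})^B)^{-1/4}$ gives the claim. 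Since $M\leq N$ for positive semidefinite $M,N$ implies $\lVert M\rVert_2\leq\lVert N\rVert_2$ (because $\Tr[N^2]-\Tr[M^2] = \Tr[(N-M)N]+\Tr[M(N-M)]\geq 0$), it follows that $\lVert(\tomega')^B\rVert_2^2\leq\Tr[(\omega'''_{\epsilon,\delta})^B]\leq\Tr[\omega^B] = 1$. As $(\Hmax')^\epsilon(B)_\omega\geq 0$ (the largest eigenvalue of $\omega^B$ is at most one), this yields $\lVert Y\rVert_2^2\leq |A|\cdot 2^{-H_2^\epsilon(A|R)_\rho}\leq |A|\cdot 2^{(1+\delta)(\Hmax')^\epsilon(B)_\omega - H_2^\epsilon(A|R)_\rho}$. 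Adding the two bounds and taking square roots completes the proof. I expect the only delicate points to be the faithful re-use of the Stinespring and $\vectorise^{-1}$ estimates of Lemma~\ref{lem:lipschitz} in bounding $\lVert X(U)\rVert_2$, and the verification that $(\tomega')^B\leq((\omega'''_{\epsilon,\delta})^B)^{1/2}$; the remaining steps are routine bookkeeping.
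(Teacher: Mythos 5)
Your proof is correct, but it follows a genuinely different route from the paper's. The paper observes that, because $U \pi^A U^\dagger = \pi^A$ and $(\tcT')(\pi^A) = (\tomega')^B$, one can fold both terms of the difference into a single centred operator $\gamma^{AR} := (\trho')^{AR} - \pi^A\otimes(\trho')^R$, so that $g(U) = \lVert((\tcT')^{A\to B}\otimes\I^R)((U^A\otimes I^R)\circ\gamma^{AR})\rVert_2$; then one application of the Stinespring/vec estimate (with $\lVert U\rVert_2 = |A|^{1/2}$ playing the role that $\lVert U - V\rVert_2$ played in the Lipschitz bound) together with $\lVert\gamma^{AR}\rVert_2^2 \leq 2\lVert(\trho')^{AR}\rVert_2^2$ from Equation~\ref{eq:elltwogamma} finishes the proof. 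You instead drop the cross term $\Tr[X(U)Y]$ by positivity, reducing to $\lVert X(U)\rVert_2^2 + \lVert Y\rVert_2^2$, and then bound the two pieces separately: the $X$ piece by the same Stinespring/vec computation on the uncentred $(\trho')^{AR}$, and the $Y$ piece via Fact~\ref{fact:Dupuis_3.5} together with the operator bound $(\tomega')^B \leq ((\omega'''_{\epsilon,\delta})^B)^{1/2}$ (which in turn needs $\eta^{A'B} \leq \omega^{A'B}$, the truncation identity $\Pi^B_{\omega'''}\omega^B\Pi^B_{\omega'''} = (\omega'''_{\epsilon,\delta})^B$, and monotonicity of $\lVert\cdot\rVert_2$ under the Loewner order). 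Interestingly, the factor of $2$ ultimately comes from the same positivity observation in both arguments --- in the paper it is used inside Equation~\ref{eq:elltwogamma} to control $\lVert\gamma\rVert_2$, while you use it directly on $g(U)^2$. The paper's route is shorter and reuses $\gamma$, which is also the operator appearing in Lemma~\ref{lem:MomentTPEHaar}, so it stays closer to the rest of the machinery; your route is more self-contained, makes explicit the single-unitary version of the Lipschitz computation (rather than saying ``argue similarly''), and establishes the auxiliary bound $\lVert(\tomega')^B\rVert_2 \leq 1$, which has some independent interest (it is implicitly needed when the paper later writes $\alpha$ and $\beta$ in Proposition~\ref{prop:decoupling_general_Haar}). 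Both routes deliver exactly the claimed constant $(2|A|)^{1/2}$.
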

\begin{proof}
Define the Hermitian matrix 
$\gamma^{AR} := (\trho')^{AR} - \pi^A \otimes (\trho')^R$. We have
\begin{equation}
\label{eq:elltwogamma}
\begin{array}{rcl}
\lefteqn{\lVert \gamma^{AR} \rVert_2^2 } \\
& = &
\lVert (\trho')^{AR} \rVert_2^2 +
\lVert \pi^A \otimes (\trho')^R \rVert_2^2 -
\langle (\trho')^{AR}, (\pi^A \otimes (\trho')^R) \rangle -
\langle (\pi^A \otimes (\trho')^R), (\trho')^{AR} \rangle \\
& \leq &
\lVert (\trho')^{AR} \rVert_2^2 +
\lVert \pi^A \otimes (\trho')^R \rVert_2^2
\;\leq\;
2 \lVert (\trho')^{AR} \rVert_2^2, 
\end{array}
\end{equation}
where we used the fact that $(\trho')^{AR}$, $(\pi^A \otimes (\trho')^R)$
are positive semidefinite matrices in the first inequality and
Fact~\ref{fact:Dupuis_3.5} in the second inequality.

Observe that 
$
g(U) = 
\left\lVert 
[(\tcT')^{A \to B} \otimes \I^R]\left((U^A \otimes I^R) \circ 
\gamma^{AR}\right)
\right\rVert_2.
$
Arguing similarly as in the proof of Lemma~\ref{lem:lipschitz} (proceeding along the similar lines as in Equations~\ref{eq:lem2_1}, \ref{eq:lem2_2} without the second term inside the norm and similarly leading to the inequalities (a), (b) and (c) without involving their respective second terms which are subtracted),
we can conclude that
\[
g(U) \leq
2^{\frac{1+\delta}{2} \mathbb{H}_{\max}^\epsilon(B)_\omega} 
\lVert \gamma^{AR} \rVert_2 \cdot \lVert U \rVert_2 \leq
(2|A|)^{1/2} \cdot 
2^{
\frac{1+\delta}{2} \mathbb{H}_{\max}^\epsilon(B)_\omega -
\frac{1}{2} H_2^\epsilon(A|R)_\rho
}, 
\]
where we have used $\lVert U \rVert_2 = |A|^{1/2}$ and 
Equation~\ref{eq:H2_prime_epsilon} in the last inequality. 
This completes the proof.
\end{proof}

We now apply Levy's Lemma (Fact~\ref{fact:Levy}) 
to obtain an exponential upper bound on the deviation of $g(U)$
about its expectation $\mu$ when $U$ is chosen from the Haar measure.
\begin{proposition}
\label{prop:decoupling_general_Haar}
Define $\mu := \mathbb{E}_{U \sim \Haar}[g(U)]$. 
Define 
$
a :=
|A| 
2^{H_2^\epsilon(A|R)_\rho-
(1+\delta)\mathbb{H}_{\max}^\epsilon(B)_\omega - 4}.
$
Let $\kappa > 0$. 
Then
\begin{equation} \label{eq:concHaar}
\prob_{U \sim \Haar}[|g(U) - \mu| > \kappa] \leq 
2 \exp(-a \kappa^2).
\end{equation}
Additionally,
\begin{eqnarray} 
\mu^2 
& \leq &
\E_{U \sim \Haar}[(g(U))^2] 
\; = \;
\alpha \lVert (\trho')^R \rVert_2^2 + 
\beta \lVert (\trho')^{AR} \rVert_2^2 -
\lVert (\tomega')^B \rVert_2^2 \lVert (\trho')^R \rVert_2^2 \label{eq:equality}\\
& < &
\lVert (\tilde{\omega}')^{A'B}\rVert_2^2 \cdot
\lVert (\tilde{\rho}')^{AR}\rVert_2^2 
\;=\;
2^{-\mathbb{H}_2^{\epsilon,\delta}(A'|B)_\omega - H_2^\epsilon(A|R)_\rho}, \label{eq:inequality}
\end{eqnarray}
where
$(\tilde{\omega}')^{A^\prime B}$, 
$(\tilde{\rho}')^{AR}$ are defined in Equations~\ref{eq:tilde_pomega}, 
\ref{eq:tilde_rho} respectively, 
\[
\delta_1 :=
\lVert (\tomega')^B \rVert_2^2 
\frac{|A|^2 - |A| \eta}{|A|^2 - 1},
~~
\delta_2 :=
\lVert (\tilde{\omega}')^{A'B}\rVert_2^2 
\frac{|A|^2 - |A| \eta^{-1}}{|A|^2 - 1},
~~
\eta := 
\frac{
\lVert (\tilde{\omega}')^{A'B}\rVert_2^2
}{
\lVert (\tomega')^B \rVert_2^2 
}.
\]
Moreover if
\[
a \cdot \E_{U \sim \Haar}[(g(U))^2] +
\log \E_{U \sim \Haar}[(g(U))^2] >
\log |A| +
\mathbb{H}_{\max}^\epsilon(B)_\omega -
H_2^\epsilon(A|R)_\rho, \text{ then }
\]
\begin{align} \label{eq:mu2}
&\E_{U \sim \Haar}[(g(U))^2] \leq 8\mu^2.
\end{align}
\end{proposition}       
\begin{proof}
Equations~(3.31, 3.32, 3.33, 3.37 and 3.43) in \cite{decoupling} and Equations~(3.44, 3.45 and 3.46) used in the proof of Fact~\ref{fact:dupuisexpectation} (in \cite{decoupling})
implies Equation~\ref{eq:equality} and Equation~\ref{eq:inequality} rspectively.
Fact~\ref{fact:Levy} applied to the function $g(U)$ 
with upper bound on the Lipschitz constant given by 
Lemma~\ref{lem:lipschitz} gives Equation~\ref{eq:concHaar}. 
Further,
\begin{eqnarray*}
\lefteqn{
\E_{U \sim \Haar}[(g(U))^2] 
} \\
& \overset{a}{=}  &
 \E_{U \sim \Haar}\left[(g(U))^2 \one \left\{ g(U) \leq \left(\mu + 2^{-1} 
\left\{\underset{U \sim \Haar}{\E}\left[(g(U))^2\right]\right\}^{1/2}
\right) \right\} \right] \\
&       &
  {} +
  \E_{U \sim \Haar} \left[(g(U))^2 \one\left\{ g(U) \leq \left(\mu + 2^{-1} 
\left\{\underset{U \sim \Haar}{\E}\left[(g(U))^2\right]\right\}^{1/2}
\right) \right\} \right]\\
& \overset{b}{\leq} &
\left(\mu + 2^{-1} 
\left\{\underset{U \sim \Haar}{\E}\left[(g(U))^2\right]\right\}^{1/2}
\right)^2 \\
&   &
{} + 
\left[(2|A|)^{1/2} \cdot 
2^{
\frac{1+\delta}{2} \mathbb{H}_{\max}^\epsilon(B)_\omega -
\frac{1}{2} H_2^\epsilon(A|R)_\rho
} \right]^2\cdot
\underset{U \sim \Haar}{\prob}\left[
g(U) - \mu > 2^{-1} 
\left\{\underset{U \sim \Haar}{{\E}}[(g(U))^2]\right\}^{1/2}
\right] \\
& \leq &
2 \mu^2 + 2^{-1} \underset{U \sim \Haar}{\E}[(g(U))^2] \\
&   &
{} +
{\left[ (2|A|)^{1/2} \cdot 
2^{
\frac{1+\delta}{2} \mathbb{H}_{\max}^\epsilon(B)_\omega -
\frac{1}{2} H_2^\epsilon(A|R)_\rho
}\right] }^2 \cdot
2 \exp(-2^{-2} a \cdot \underset{U \sim \Haar}{\E}[(g(U))^2]) \\
& \leq &
2 \mu^2 + \frac{3}{4} \underset{U \sim \Haar}{\E}[(g(U))^2].
\end{eqnarray*}
where in (a) the notation $\one\left\{g(U) \leq \left(\mu + 2^{-1} 
\left[\underset{U \sim \Haar}{\E}\left[(g(U))^2\right]\right]^{1/2}
\right)^2\right\}$ denotes the indicator function of the event in its argument, taking a value one if the event occurs and zero otherwise; (b) holds because the first summand uses the fact $\E(\one(event)) = \prob(event) \leq 1$ and we use Lemma~\ref{lem:maxg} to upper bound the second summand.\\  
This implies that 
$
\underset{U \sim \Haar}{{\E}}[(g(U))^2] \leq 8 \mu^2,
$
completing the proof of Equation~\ref{eq:mu2} and hence of the proposition.
\end{proof}

We now evaluate the higher order moments of the functions 
$g(U) - \mu$ and $(g(U))^2 - \mu^2$.
\begin{lemma}
\label{lem:Haar_moments_g}
Define $\mu := \mathbb{E}_{U \sim \Haar}[g(U)]$.
Let $m$ be a positive integer.
Then
the $(2m)$-th moments of the functions $g(U)-\mu$ and $(g(U))^2-\mu^2$ 
are upper bounded by
\[
\E_{U \sim \Haar}(\lvert g(U)-\mu \rvert^{2m}) \leq 
2 (m 
2^{
(1+\delta)\mathbb{H}_{\max}^\epsilon(B)_\omega -
H_2^{\epsilon}(A|R)_\rho + 4
} |A|^{-1}
)^m,
\]
\begin{eqnarray*}
\lefteqn{\E_{U \sim \Haar}(\lvert (g(U))^2-\mu^2 \rvert^{2m})} \\
& \leq &
\left\{
\begin{array}{l l}
6 (9 m \mu^2 
2^{
(1+\delta)\mathbb{H}_{\max}^\epsilon(B)_\omega -
H_2^{\epsilon}(A|R)_\rho + 4
} |A|^{-1}
)^m
& ; m \leq
\frac{9}{64} |A| \mu^2 \cdot
2^{
-(1+\delta)\mathbb{H}_{\max}^\epsilon(B)_\omega +
H_2^{\epsilon}(A|R)_\rho - 4
} \\
6 (64 m^2 
2^{
2 (1+\delta)\mathbb{H}_{\max}^\epsilon(B)_\omega -
2 H_2^{\epsilon}(A|R)_\rho + 8
} |A|^{-2}
)^m
& \mbox{; otherwise}
\end{array}
\right..
\end{eqnarray*}
\end{lemma}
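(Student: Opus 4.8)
The plan is to invoke Fact~\ref{fact:moment} with the non-negative random variable $X := g(U)$, where $U$ is drawn from the Haar measure on $\U(A)$. Proposition~\ref{prop:decoupling_general_Haar} supplies precisely the hypothesis required: the sub-Gaussian tail bound $\prob_{U \sim \Haar}[|g(U) - \mu| > \kappa] \leq 2 \exp(-a \kappa^2)$ holding for every $\kappa > 0$, where $\mu := \E_{U \sim \Haar}[g(U)]$ and $a := |A| \cdot 2^{H_2^\epsilon(A|R)_\rho - (1+\delta)(\Hmax')^\epsilon(B)_\omega - 4}$. Hence I would apply Fact~\ref{fact:moment} with constant $C = 2$ and this value of $a$, and then simply track the exponents.

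For the first bound, Fact~\ref{fact:moment} gives $\E_{U \sim \Haar}[|g(U) - \mu|^{2m}] \leq 2 (m/a)^m$; substituting $a^{-1} = |A|^{-1} \cdot 2^{(1+\delta)(\Hmax')^\epsilon(B)_\omega - H_2^\epsilon(A|R)_\rho + 4}$ yields the claimed expression verbatim. For the second bound, Fact~\ref{fact:moment} splits into two regimes according to whether $m \leq \tfrac{9}{64} a \mu^2$ or not; computing $\tfrac{9}{64} a \mu^2 = \tfrac{9}{64} |A| \mu^2 \cdot 2^{-(1+\delta)(\Hmax')^\epsilon(B)_\omega + H_2^\epsilon(A|R)_\rho - 4}$ shows that this threshold is exactly the one appearing in the statement. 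In the small-$m$ regime the bound $2C(9 m \mu^2/a)^m$ becomes $4\bigl(9 m \mu^2 |A|^{-1} \cdot 2^{(1+\delta)(\Hmax')^\epsilon(B)_\omega - H_2^\epsilon(A|R)_\rho + 4}\bigr)^m$, and in the complementary regime the bound $2C(64 m^2/a^2)^m$ becomes $4\bigl(64 m^2 |A|^{-2} \cdot 2^{2(1+\delta)(\Hmax')^\epsilon(B)_\omega - 2 H_2^\epsilon(A|R)_\rho + 8}\bigr)^m$, matching the two cases of the lemma.

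There is essentially no obstacle here: all of the genuine work --- the Lipschitz estimate of Lemma~\ref{lem:lipschitz}, the invocation of Levy's lemma, and the Gamma-function integral inside Fact~\ref{fact:moment} that converts a Gaussian tail into a centralised-moment bound --- has already been carried out in the preceding results. The only points deserving a moment's care are (i) that Fact~\ref{fact:moment} only requires $\mu$ to be \emph{some} number for which the tail bound holds, which the true mean certainly is, and (ii) that Proposition~\ref{prop:decoupling_general_Haar} indeed states its tail bound uniformly over all $\kappa > 0$, as Fact~\ref{fact:moment} demands; both are immediate. The lemma then follows by the exponent bookkeeping indicated above.
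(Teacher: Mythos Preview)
Your proposal is correct and takes essentially the same approach as the paper: apply Fact~\ref{fact:moment} with $X = g(U)$, $C = 2$, and the value of $a$ supplied by Proposition~\ref{prop:decoupling_general_Haar}, then substitute $a^{-1}$ to obtain the stated bounds. Your choice $X = g(U)$ is in fact cleaner than the paper's wording ``the non-negative random variable $|g(U) - \mu|$,'' since taking $X = |g(U) - \mu|$ with centring constant $0$ would give the first moment bound but not the second one on $(g(U))^2 - \mu^2$; the intended application is exactly what you wrote.
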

\begin{proof}
Applying Fact~\ref{fact:moment} to the non-negative random variable
$|g(U) - \mu|$ with concentration given by 
Proposition~\ref{prop:decoupling_general_Haar} 
gives the bounds of the lemma.
\end{proof}

\begin{note}
It might might seem apparent at this point that one can directly apply 
\cite[Lemma~3.4]{low_2009} to obtain the expectation of the function 
$\left[g^2(U)-\mu^2 \right]^{2m}$ and then apply Markov's inequality to 
conclude the result. However, there is a technical subtlety here. 
Low's Lemma~3.4 in \cite{low_2009} has an additive parameter `$\alpha$', 
which is the sum of the coefficients in the polynomial expansion of 
$g^2(U)$. Since the function $g$ depends the perturbed superoperator 
$\cT'$ (and not $\cT$), finding the value of $\alpha$ is more complicated 
then our subsequent analysis (especially Lemma~6). Furthermore, applying 
Low's technique at this point with crude approximation (still not so easy)
will give a much larger value of $t$ than what we get by appealing to 
the definition of quantum tensor product expanders. It might not be 
feasible to conclude that under any assumptions computationally 
efficient designs can lead to decoupling. In contrast, our analysis 
results in computationally efficient decoupling under the setting 
Theorem~2, for $|A_1|=\mathrm{poly} (\log |A_2|)$ and $\kappa=O(\mu)$. 
Hence our subsequent method that leads to Lemma~6 and 7 are needed to 
obtain a handle on the parameters $\lambda,\, m$ of the quantum tensor 
product expanders. This provides a way to analyze how large a value of 
the unitary design parameter $t$ is required to obtain an exponentially 
decaying tail for any deviation $\kappa$.
\end{note}

\subsection{Concentration of $(g(U))^2$ under $t$-design}
In this section we finally obtain an exponential concentration for
$(g(U))^2$ when $U$ is chosen uniformly at random from a unitary $t$-design
for suitable $t$. We first prove the following lemma.
\begin{lemma}
\label{lem:TtwoSwap}
Let $\cT^{A \to B}$ be a completely positive superoperator with
Stinespring dilation $W_{\cT}^{AC \to BZ}$, where
$|A| |C| = |B| |Z|$, the input ancillary system is $C$ 
and the output ancillary system is $D$. Let 
$F^{A_1 A_2}$ and $F^{B_1 B_2}$ be the appropriate
swap operators. Then
$$
\left\lVert 
\left((\cT^{\dagger})^{B_1 \to A_1} \otimes 
(\cT^{\dagger})^{B_2 \to A_2}\right)
(F^{B_1 B_2})
\right\rVert_2 = 
\left\lVert 
\left(\cT^{A_1 \to B_1} \otimes \cT^{A_2 \to B_2}\right)(F^{A_1 A_2})
\right\rVert_2 \leq 
\lVert W_{\cT} \rVert_2^4.
$$ 
\end{lemma}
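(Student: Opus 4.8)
The plan is to prove the equality and the upper bound separately; the equality is where a genuine idea is needed, and the bound is then routine bookkeeping with the Stinespring dilation.

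\emph{The upper bound.} Since $\cT$ is completely positive, Fact~\ref{fact:stinespring} lets me write $\cT(M^A) = \Tr_Z[W_\cT(M^A\otimes(\ket{0}\bra{0})^C)W_\cT^\dagger]$. Put $\ket{w_i} := W_\cT(\ket{i}^A\otimes\ket{0}^C)\in B\otimes Z$ and $P_i := (\vectorise^{B,Z})^{-1}(\ket{w_i})\in\cL(Z,B)$. By Fact~\ref{fact:vec_partial_trace}, $\cT(\ket{i}^A\bra{j}) = \Tr_Z[\ket{w_i}^{BZ}\bra{w_j}] = P_iP_j^\dagger$, so expanding $F^{A_1A_2} = \sum_{ij}\ket{i}^{A_1}\bra{j}\otimes\ket{j}^{A_2}\bra{i}$ gives
\[
(\cT^{A_1\to B_1}\otimes\cT^{A_2\to B_2})(F^{A_1A_2}) = \sum_{ij}(P_iP_j^\dagger)^{B_1}\otimes(P_jP_i^\dagger)^{B_2}.
\]
Then the triangle inequality together with $\lVert X\otimes Y\rVert_2 = \lVert X\rVert_2\lVert Y\rVert_2$, the bound $\lVert MN\rVert_2\le\lVert M\rVert_2\lVert N\rVert_\infty$, and $\lVert\cdot\rVert_\infty\le\lVert\cdot\rVert_2$ yield
\[
\Big\lVert\sum_{ij}(P_iP_j^\dagger)\otimes(P_jP_i^\dagger)\Big\rVert_2 \le \sum_{ij}\lVert P_iP_j^\dagger\rVert_2\,\lVert P_jP_i^\dagger\rVert_2 \le \Big(\sum_i\lVert P_i\rVert_2^2\Big)^2,
\]
and I would finish with $\sum_i\lVert P_i\rVert_2^2 = \sum_i\lVert\ket{w_i}\rVert_2^2\le\lVert W_\cT\rVert_2^2$, valid because $\vectorise^{B,Z}$ is an isometry and $\{\ket{i}^A\otimes\ket{0}^C\}_i$ extends to an orthonormal basis of $A\otimes C$. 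Together with the equality below, this bounds both quantities in the lemma by $\lVert W_\cT\rVert_2^4$.

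\emph{The equality.} Here I would reduce each Schatten-$2$ norm to a trace of a superoperator. By Hilbert--Schmidt adjointness,
\[
\lVert(\cT\otimes\cT)(F^{A_1A_2})\rVert_2^2 = \big\langle F^{A_1A_2}, \big((\cT^\dagger\cT)\otimes(\cT^\dagger\cT)\big)(F^{A_1A_2})\big\rangle,
\]
\[
\lVert(\cT^\dagger\otimes\cT^\dagger)(F^{B_1B_2})\rVert_2^2 = \big\langle F^{B_1B_2}, \big((\cT\cT^\dagger)\otimes(\cT\cT^\dagger)\big)(F^{B_1B_2})\big\rangle.
\]
The key observation is that for any Hermitian-preserving superoperator $\cN$ on $\cL(\cH)$, expanding $F$ in the computational basis and invoking the swap trick (Fact~\ref{fact:swap}) gives
\[
\big\langle F^{\cH_1\cH_2}, (\cN\otimes\cN)(F^{\cH_1\cH_2})\big\rangle = \sum_{ij}\Tr[\cN(\ket{i}\bra{j})\cN(\ket{j}\bra{i})] = \sum_{ij}\lVert\cN(\ket{i}\bra{j})\rVert_2^2,
\]
using $\cN(\ket{j}\bra{i})=\cN(\ket{i}\bra{j})^\dagger$; and the right-hand side is precisely the trace of $\cN^\dagger\cN$ regarded as a linear operator on $\cL(\cH)$, since $\{\ket{i}\bra{j}\}_{ij}$ is an orthonormal basis of $\cL(\cH)$. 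Since the adjoint of a completely positive map is completely positive, both $\cT^\dagger\cT$ and $\cT\cT^\dagger$ are Hermitian-preserving (indeed self-adjoint), so applying the observation to each shows that $\lVert(\cT\otimes\cT)(F^{A_1A_2})\rVert_2^2$ equals the trace of $(\cT^\dagger\cT)^2$ on $\cL(A)$, while $\lVert(\cT^\dagger\otimes\cT^\dagger)(F^{B_1B_2})\rVert_2^2$ equals the trace of $(\cT\cT^\dagger)^2$ on $\cL(B)$. These coincide by cyclicity of the (superoperator) trace, reading the word $\cT^\dagger\cT\cT^\dagger\cT$ as an endomorphism of $\cL(A)$ versus $\cT\cT^\dagger\cT\cT^\dagger$ as an endomorphism of $\cL(B)$.

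I expect the main obstacle to be exactly this last point: one has to spot that, after peeling off $F$ with the swap trick, both Schatten-$2$ norms collapse to the trace of the \emph{same} operator word written in two equivalent ways, so cyclicity of the trace closes the argument; everything else (the Stinespring bookkeeping and the Schatten-norm submultiplicativity) is mechanical.
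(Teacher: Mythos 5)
Your upper-bound argument is essentially the paper's: you expand $F^{A_1A_2}$ in the computational basis, pull the partial trace into a product $P_iP_j^\dagger$ via $\vectorise^{-1}$ (Fact~\ref{fact:vec_partial_trace}), and apply the triangle inequality together with Schatten-norm submultiplicativity and the isometry property of $\vectorise$; the paper's chain of inequalities after its display labelled (a) is the same computation in slightly different notation.

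For the equality you take a genuinely different and cleaner route. The paper proves $\lVert(\cT^\dagger\otimes\cT^\dagger)(F^{B_1B_2})\rVert_2 = \lVert(\cT\otimes\cT)(F^{A_1A_2})\rVert_2$ by squaring, doubling up to four copies $A_1,A_2,A_1',A_2'$, invoking the swap trick Fact~\ref{fact:swap} twice, peeling off the adjoint by Hilbert--Schmidt duality, and regrouping the swaps as $F^{(B_1B_1')(B_2B_2')} = F^{B_1B_2}\otimes F^{B_1'B_2'}$; it is a correct but notationally heavy tensor-network manipulation. You instead observe that both squared norms collapse, via the swap trick and Hermiticity-preservation of $\cT^\dagger\cT$ (and $\cT\cT^\dagger$), to the \emph{superoperator trace} of the cyclic word $\cT^\dagger\cT\cT^\dagger\cT$ read as an endomorphism of $\cL(A)$ versus $\cL(B)$, and then invoke cyclicity $\Tr_{\cL(A)}[QP]=\Tr_{\cL(B)}[PQ]$ for $P:\cL(A)\to\cL(B)$, $Q:\cL(B)\to\cL(A)$. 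All the ingredients check out: $\cT^\dagger$ is CP because $\cT$ is, hence $\cT^\dagger\cT$ and $\cT\cT^\dagger$ are Hermitian-preserving, giving $\cN(\ket{j}\bra{i})=\cN(\ket{i}\bra{j})^\dagger$; and $\{\ket{i}\bra{j}\}_{ij}$ is a Hilbert--Schmidt orthonormal basis of $\cL(\cH)$, so $\sum_{ij}\lVert\cN(\ket{i}\bra{j})\rVert_2^2 = \Tr[\cN^\dagger\cN]$ as a superoperator. Your approach is more conceptual: it isolates the equality as an instance of cyclicity of trace for rectangular superoperators, avoids the four-copy bookkeeping entirely, and makes it transparent that the equality has nothing to do with positivity beyond the Hermiticity-preservation needed to identify $\cN(\ket{j}\bra{i})$ with $\cN(\ket{i}\bra{j})^\dagger$. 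The paper's version stays closer to the swap-operator calculus used elsewhere in Section~3, which is perhaps why the authors chose it, but yours is the argument I would keep.
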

\begin{proof}
By Stinespring representation of $\cT$ as given 
in Fact~\ref{fact:stinespring}, \\
$
\cT^{A \to B}(M^A) = 
\Tr_Z \left[W_{\cT} (M^A \otimes \ket{0}\bra{0}^C) W_{\cT}\right]
$
for any $M^A \in \cL(A)$.
Expressing the swap operator $F^{A_1 A_2}$ in computational basis,
we have
\[
F^{A_1 A_2} =
\sum_{a a'} 
\{\ket{a'}\ket{a}\}\{\bra{a}\bra{a'}\}^{A_1 A_2} =
\sum_{a a'} 
\ket{a'}\bra{a}^{A_1} \otimes \ket{a}\bra{a'}^{A_2}.
\]
Note that the swap operator is Hermitian. Further, observe that
\begin{align*}
\lefteqn{
\left\lVert
\left[(\cT^{\dagger})^{B_1 \to A_1} \otimes 
(\cT^{\dagger})^{B_2 \to A_2}\right]
(F^{B_1 B_2})
\right\rVert_2^2
} \\
&  \overset{a}{=} 
\Tr \left[
\left\{
\left((\cT^{\dagger})^{B_1 \to A_1} \otimes 
(\cT^{\dagger})^{B_2 \to A_2}\right)
(F^{B_1 B_2})
\right\}^2
\right] \\
&  \overset{b}{=} 
\Tr \left[
\left\{
\left(
(\cT^{\dagger} \otimes \cT^{\dagger})
F^{B_1 B_2}
\right)
\otimes
\left(
(\cT^{\dagger}\otimes \cT^{\dagger})
F^{B_1' B_2'}
\right)
\right\} F^{(A_1 A_2) (A_1' A_2')}
\right] \\
& \overset{c}{=} 
\Tr \left[
(F^{B_1 B_2} \otimes F^{B_1' B_2'})
\left\{
\left[(\cT^{A_1 \to B_1} \otimes \cT^{A_2 \to B_2}) \otimes
(\cT^{A_1' \to B_1'} \otimes \cT^{A_2' \to B_2'})\right]
 F^{(A_1 A_2) (A_1' A_2')}
\right\}
\right] \\
& \overset{d}{=} 
\Tr \left[
F^{(B_1 B_1') (B_2 B_2')}
\left\{
[(\cT^{A_1 \to B_1} \otimes \cT^{A_1' \to B_1'})] \otimes
[\cT^{A_2 \to B_2} \otimes \cT^{A_2' \to B_2'}]
(F^{A_1 A_1'} \otimes F^{A_2 A_2'})
\right\}
\right] \\
& = 
\Tr \left[
F^{(B_1 B_1') (B_2 B_2')}
\left\{
[\cT^{A_1 \to B_1} \otimes \cT^{A_1' \to B_1'}](F^{A_1 A_1'}) \otimes
[\cT^{A_2 \to B_2} \otimes \cT^{A_2' \to B_2'}](F^{A_2 A_2'})
\right\} 
\right] \\
& \overset{e}{=} 
\Tr \left[
\left\{
[\cT^{A_1 \to B_1} \otimes \cT^{A_1' \to B_1'}](F^{A_1 A_1'})\right\}^2
\right] \\
&= 
\left\lVert
[\cT^{A_1 \to B_1} \otimes \cT^{A_2 \to B_2}](F^{A_1 A_2})
\right\rVert_2^2,
\end{align*}
where in
\begin{itemize}

\item[(a)] 
we use the fact that $\cT^\dag$ is completely positive
as $\cT$ is completely positive and the fact that the swap operator
is Hermitian, implying that
$
((\cT^{\dagger})^{B_1 \to A_1} \otimes (\cT^{\dagger})^{B_2 \to A_2})
(F^{B_1 B_2})
$ 
is Hermitian,

\item[(b)] 
we take $A_1'$, $A_2'$ to be two new systems 
of the same dimension as $A$, $F^{(A_1 A_2) (A_1' A_2')}$ as the
operator swapping $(A_1 A_2)$ with $(A_1' A_2')$ and 
Fact~\ref{fact:swap},

\item[(c)]
we use the definition of the adjoint of a superoperator under the 
Hilbert-Schmidt inner product,

\item[(d)]
we use a property of the swap operator,

\item[(e)]
we use Fact~\ref{fact:swap}.

\end{itemize}
This proves the first of the equalities asserted above.

Finally,
\begin{align*}
\lefteqn{
\left\lVert
[\cT^{A_1 \to B_1} \otimes \cT^{A_2 \to B_2}](F^{A_1 A_2})
\right\rVert_2
} \\
&= 
\left\lVert 
\sum_{a a'}
\left\{
\left(\Tr_Z \left[
W_{\cT} 
(\ket{a'}\bra{a}^{A_1} \otimes \ket{0}\bra{0}^C) 
W_{\cT}^{\dagger}
\right]
\right)^{B_1} \otimes 
\left(
\Tr_Z \left[
W_{\cT} 
(\ket{a}\bra{a'}^{A_2} \otimes \ket{0}\bra{0}^C ) 
W_{\cT}^{\dagger}
\right] \right)^{B_2}
\right\}
\right\rVert_2 \\
&= 
\left\lVert 
\sum_{a a'}
\left\{
\left( \Tr_Z \left[
W_{\cT} 
(\ket{a'}\ket{0})(\bra{a}\bra{0})^{A_1 C} 
W_{\cT}^{\dagger}
\right]
\right)^{B_1} \otimes 
\left(
\Tr_Z \left[
W_{\cT} 
(\ket{a}\ket{0})(\bra{a'}\bra{0})^{A_2 C} 
W_{\cT}^{\dagger}
\right]
\right)^{B_2} \right\}
\right\rVert_2 \\
& \overset{a}{=} 
\left\lVert 
\sum_{a a'} 
(P_{a'} P_a^\dagger)^{B_1} \otimes (P_a P_{a'}^\dagger)^{B_2}
\right\rVert_2 
\;\leq\; 
\sum_{a a'} 
\left\lVert 
(P_{a'} P_a^\dagger)^{B_1} \otimes (P_a P_{a'}^\dagger)^{B_2}
\right\rVert_2 \\
& =
\sum_{a a'} 
\left\lVert (P_{a'} P_a^\dagger)^{B_1} \right\rVert_2 \cdot
\left\lVert (P_a P_{a'}^\dagger)^{B_2} \right\rVert_2 
\; \leq \; 
\sum_{a a'} 
\lVert P_{a'}^{B \times Z} \rVert_2^2 \cdot 
\lVert P_a^{B \times Z} \rVert_2^2 
\; = \;
\left(
\sum_{a} 
\lVert P_a^{B \times Z} \rVert_2^2 
\right)^2 \\
& =
\left(
\sum_{a} 
\left\lVert \left(W_\cT (\ket{a}^{A} \otimes \ket{0}^C)\right)^{BZ} 
\right\rVert_2^2 
\right)^2 
\; \leq
\left(
\sum_{a c} 
(\bra{a}^{A}\bra{c}^C) W_\cT^\dag W_\cT (\ket{a}^{A}\ket{c}^C)
\right)^2 \\
& =
\left(
\Tr [W_\cT^\dag W_\cT]
\right)^2 
\; = \;
\lVert W_{\cT} \rVert_2^4,
\end{align*}
where in (a) we define 
$
P_{a}^{B \times Z} := 
(\vectorise^{B,Z})^{-1}((W_{\cT}(\ket{a}^A \ket{0}^C))^{BZ})
$ 
and use Fact~\ref{fact:vec_partial_trace}.
This completes the proof of the lemma.
\end{proof}

Note that $(g(U))^2$ is a balanced degree two polynomial in the matrix
entries of $U$. We now find out how close the moments of
$(g(U))^2$ under Haar measure are to their counterparts under
$t$-design. 

We do this by using the following Proposition~\ref{prop:g_exp_2i}.
\begin{proposition}
\label{prop:g_exp_2i}
The $(2i)^{th}$ power of the function $g(U)$ which is equivalent to the 
simplification of the balanced degree two polynomial polynomial 
$[g(U)^2]^i$ is given by:
\begin{eqnarray*}
    \lefteqn{
            (g(U)^{2})^{i}
            } \\
    & = &
     \Tr \bigg[
    \{
    (\gamma')^{A_1 A_2}
    \}^{\otimes i} \cdot\\
    &  &
    ~~~~~~~
    \left\{
    \left(U^\dag)^{A_1} \otimes (U^\dag)^{A_2}\right)^{\otimes i} \circ
    \left[((\tcT')^\dag)^{B_1 \to A_1} \otimes 
    ((\tcT')^\dag)^{B_2 \to A_2}\right]
    (F^{B_1 B_2})
    \right\}^{\otimes i}
    \bigg],
\end{eqnarray*}
where we define the Hermitian matrices 
$\gamma^{AR} := (\trho')^{AR} - \pi^A \otimes (\trho')^R$ and \\
$
(\gamma')^{A_1 A_2} :=
\Tr_{R_1 R_2}
\left\{
(I^{A_1 A_2} \otimes F^{R_1 R_2})
(\gamma^{A_1 R_1} \otimes \gamma^{A_2 R_2})
\right\}.
$   
\end{proposition}
\begin{proof} 
Let $A_1$, $A_2$ denote two Hilbert spaces of the same dimension as
$A$; similarly for $B_1$, $B_2$ and $R_1$, $R_2$.
Further, $A_1(j)$, $1 \leq j \leq i$ denote Hilbert spaces of the same 
dimension as $A_1$, likewise for $A_2(j)$, $B_1(j)$, $B_2(j)$,
$R_1(j)$ and $R_2(j)$.
Observe that
\begin{eqnarray*}
\lefteqn{
(g(U)^{2})^{i}
} \\
& = &
\left(
\left\lVert 
[(\tcT')^{A \to B} \otimes \I^R]\left((U^A \otimes I^R) \circ 
(\trho')^{AR}\right) - 
(\tomega'^B) \otimes (\trho')^R
\right\rVert_2^2
\right)^i \\
& = &
\left(
\Tr \left[
\left\{
[(\tcT')^{A \to B} \otimes \I^R]((U^A \otimes I^R) \circ \gamma^{AR})
\right\}^2
\right]
\right)^i \\
& = &
\left(
\Tr \left[
\left\{
[(\tcT')^{A_1 \to B_1} \otimes \I^{R_1}]
((U^{A_1} \otimes I^{R_1}) \circ \gamma^{A_1 R_1})
\right\} 
\right. \right.\\
&   &   
~~~~ \left.\left.
\otimes
\left\{
[(\tcT')^{A_2 \to B_2} \otimes \I^{R_2}]
((U^{A_2} \otimes I^{R_2}) \circ \gamma^{A_2 R_2})
\right\} 
(F^{B_1 B_2} \otimes F^{R_1 R_2})
\right]
\right)^i \\
& = &
\bigg(
\Tr \bigg[
(I^{A_1 A_2} \otimes F^{R_1 R_2})
(\gamma^{A_1 R_1} \otimes \gamma^{A_2 R_2}) \\
&  &
~~~~~~~~
\left\{
\left(
((U^\dag)^{A_1} \otimes (U^\dag)^{A_2}) \circ
\left(
[((\tcT')^\dag)^{B_1 \to A_1} \otimes ((\tcT')^\dag)^{B_2 \to A_2}]
(F^{B_1 B_2})
\right)
\right) \otimes I^{R_1 R_2}
\right\}
\bigg]
\bigg)^i \\
& = &
\Tr \bigg[
\bigotimes\limits_{j=1}^i
\bigg(
(I^{A_1(j) A_2(j)} \otimes F^{R_1(j) R_2(j)})
(\gamma^{A_1(j) R_1(j)} \otimes \gamma^{A_2(j) R_2(j)}) \\
&  &
~~~~~~~~~~~~
\left\{
((U^\dag)^{A_1(j)} \otimes (U^\dag)^{A_2(j)}) \circ
\left(
\left[((\tcT')^\dag)^{B_1(j) \to A_1(j)} \otimes 
 ((\tcT')^\dag)^{B_2(j) \to A_2(j)}\right]
(F^{B_1(j) B_2(j)})
\right) \right.\\
&   &
~~~~~~~~~~~~~~~~~~~~~ \left.
{} \otimes I^{R_1(j) R_2(j)}
\right\}
\bigg)
\bigg] \\
& = &
\Tr \bigg[
\left\{
\bigotimes\limits_{j=1}^i
\left(
(I^{A_1(j) A_2(j)} \otimes F^{R_1(j) R_2(j)})
(\gamma^{A_1(j) R_1(j)} \otimes \gamma^{A_2(j) R_2(j)})
\right)
\right\} \\
&  &
~~~~~~~
\left\{
\bigotimes\limits_{j=1}^i
\left(
((U^\dag)^{A_1(j)} \otimes (U^\dag)^{A_2(j)}) \circ
\left[((\tcT')^\dag)^{B_1(j) \to A_1(j)} \otimes 
 ((\tcT')^\dag)^{B_2(j) \to A_2(j)}\right]
(F^{B_1(j) B_2(j)})
\right) \right.\\
&   &
~~~~~~~~~~~~~~~~~~~~~\left.
{} \otimes 
\bigotimes\limits_{j=1}^i
I^{R_1(j) R_2(j)}
\right\}
\bigg] \\
& = &
\Tr \bigg[
\Tr_{\bigotimes\limits_{j=1}^i (R_1(j) R_2(j))}
\left\{
\bigotimes\limits_{j=1}^i
\left(
(I^{A_1(j) A_2(j)} \otimes F^{R_1(j) R_2(j)})
(\gamma^{A_1(j) R_1(j)} \otimes \gamma^{A_2(j) R_2(j)})
\right)
\right\}
\cdot \\
&  &
~~~
\bigotimes\limits_{j=1}^i
\left\{
\left((U^\dag)^{A_1(j)} \otimes (U^\dag)^{A_2(j)}\right) \circ
\left(
\left[((\tcT')^\dag)^{B_1(j) \to A_1(j)} \otimes 
 ((\tcT')^\dag)^{B_2(j) \to A_2(j)}\right]
(F^{B_1(j) B_2(j)})
\right)
\right\}
\bigg] \\
& = &
\Tr \bigg[
\left\{
\Tr_{R_1 R_2}
\left(
(I^{A_1 A_2} \otimes F^{R_1 R_2})
(\gamma^{A_1 R_1} \otimes \gamma^{A_2 R_2})
\right)
\right\}^{\otimes i} \\
&  &
~~~~~~~
\left\{
\left((U^\dag)^{A_1} \otimes (U^\dag)^{A_2}\right) \circ
\left(
\left[((\tcT')^\dag)^{B_1 \to A_1} \otimes 
 ((\tcT')^\dag)^{B_2 \to A_2}\right]
(F^{B_1 B_2})
\right)
\right\}^{\otimes i}
\bigg] \\
& = &
\Tr \bigg[
\{
(\gamma')^{A_1 A_2}
\}^{\otimes i} \cdot\\
&  &
~~~~~~~
\left\{
\left(U^\dag)^{A_1} \otimes (U^\dag)^{A_2}\right)^{\otimes i} \circ
\left[((\tcT')^\dag)^{B_1 \to A_1} \otimes 
 ((\tcT')^\dag)^{B_2 \to A_2}\right]
(F^{B_1 B_2})
\right\}^{\otimes i}
\bigg]
\end{eqnarray*}
This completes the proof of the proposition. 
\end{proof}

\begin{lemma}
\label{lem:MomentTPEHaar}
Let $i$ be a positive integer.
Consider a $(|A|, s, \lambda, 4i)$-qTPE for some positive integer $s$
and $\lambda \geq 0$. Then,
\[
\left|
\underset{U \sim \TPE}{\E}[(g(U))^{2i}] - 
\underset{U \sim \Haar}{\E}[(g(U))^{2i}] 
\right| \leq
(2 |A|^3 |B|^2)^i \cdot \lambda \cdot 
2^{i (1+\delta) 
\mathbb{H}_{\max}^\epsilon(B)_\omega -i H_2^\epsilon(A|R)_\rho}.
\]
\end{lemma}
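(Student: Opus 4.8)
The plan is to convert both $\E_{U \sim \TPE}[(g(U))^{2i}]$ and $\E_{U \sim \Haar}[(g(U))^{2i}]$ into the Hilbert--Schmidt pairing of one fixed operator with the image, under the $2i$-th moment superoperator, of a second fixed operator, to bound the difference of the two moment superoperators by $\lambda$ via the defining property of a qTPE, and then to finish with the Cauchy--Schwarz inequality $|\Tr[XY]| \le \lVert X \rVert_2 \lVert Y \rVert_2$ together with the $2$-norm estimates already available in the paper.

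First I would use the reformulation from the proof of Lemma~\ref{lem:maxg}: $g(U) = \lVert ((\tcT')^{A \to B} \otimes \I^R)((U^A \otimes I^R) \circ \gamma^{AR}) \rVert_2$, where $\gamma^{AR} := (\trho')^{AR} - \pi^A \otimes (\trho')^R$ is Hermitian. Since $(\tcT')^{A \to B}$ is completely positive (it is a conjugation composed with the CP map $\hcT$), the operator $X(U) := ((\tcT')^{A \to B} \otimes \I^R)((U^A \otimes I^R) \circ \gamma^{AR})$ is Hermitian, so $(g(U))^2 = \Tr[X(U)^2] = \Tr[(X(U) \otimes X(U)) F^{(BR)_1 (BR)_2}]$ by the swap trick (Fact~\ref{fact:swap}). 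Transferring the two copies of $\tcT'$ to the other side of the trace via the Hilbert--Schmidt adjoint, exactly as in the proof of Lemma~\ref{lem:TtwoSwap}, and then tracing out the two reference copies against $F^{R_1 R_2}$ exactly as in the computation inside the proof of Fact~\ref{fact:swappartialtrace} (using $\gamma^\dagger = \gamma$), I obtain
\[
(g(U))^2 = \Tr_{A^{\otimes 2}}\big[ \Theta^{A_1 A_2} \, (U^{A_1} \otimes U^{A_2}) \, \Psi^{A_1 A_2} \, (U^{A_1 \dagger} \otimes U^{A_2 \dagger}) \big],
\]
where $\Theta^{A_1 A_2} := ((\tcT')^\dagger \otimes (\tcT')^\dagger)(F^{B_1 B_2})$ and $\Psi^{A_1 A_2} := \Tr_{R_1 R_2}[(I^{A_1 A_2} \otimes F^{R_1 R_2})(\gamma^{A_1 R_1} \otimes \gamma^{A_2 R_2})]$ are fixed operators on $A^{\otimes 2}$.

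Raising this to the $i$-th power and relabelling the $2i$ copies of $A$ gives the identity $(g(U))^{2i} = \Tr_{A^{\otimes 2i}}[\Theta^{\otimes i} (U^{\otimes 2i}) \Psi^{\otimes i} (U^\dagger)^{\otimes 2i}]$. Averaging $U$ over the Haar measure turns the conjugation by $U^{\otimes 2i}$ into the ideal channel $\cI$ of Definition~\ref{def:qTPE} at level $2i$, while averaging over the qTPE turns it into the corresponding qTPE channel $\cG$ at level $2i$ (the level-$2i$ channel obtained from a $(|A|,s,\lambda,4i)$-qTPE by padding with identities on the remaining $2i$ tensor factors still has second singular value at most $\lambda$). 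Hence
\[
\big| \E_{U \sim \TPE}[(g(U))^{2i}] - \E_{U \sim \Haar}[(g(U))^{2i}] \big| = \big| \Tr[ \Theta^{\otimes i} (\cG - \cI)(\Psi^{\otimes i}) ] \big| \le \lVert \Theta^{\otimes i} \rVert_2 \, \lVert (\cG - \cI)(\Psi^{\otimes i}) \rVert_2 \le \lambda \, \lVert \Theta \rVert_2^i \, \lVert \Psi \rVert_2^i,
\]
where the last step uses Definition~\ref{def:qTPE} and multiplicativity of the Schatten $2$-norm under tensor products.

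It remains to estimate the two norms. Lemma~\ref{lem:TtwoSwap}, applied to the completely positive map $\tcT'$ with Stinespring dilation $W_{\tcT'}$, gives $\lVert \Theta \rVert_2 = \lVert (\tcT' \otimes \tcT')(F^{A_1 A_2}) \rVert_2 \le \lVert W_{\tcT'} \rVert_2^4 \le |A|^2 |B|^2 \, 2^{(1+\delta)(\Hmax')^\epsilon(B)_\omega}$, the last inequality coming from Equation~\ref{eq:W_tilde_tau_prime_two}. For $\Psi$, Fact~\ref{fact:swappartialtrace} with $M^{AR} = \gamma^{AR}$, combined with Equation~\ref{eq:elltwogamma}, gives $\lVert \Psi \rVert_2 \le \lVert \Psi \rVert_1 \le |A| \, \lVert \gamma^{AR} \rVert_2^2 \le 2|A| \, \lVert (\trho')^{AR} \rVert_2^2 = 2|A| \, 2^{-H_2^\epsilon(A|R)_\rho}$, the last equality being Equation~\ref{eq:H2_rho_AR}. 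Substituting these bounds gives
\[
\big| \E_{U \sim \TPE}[(g(U))^{2i}] - \E_{U \sim \Haar}[(g(U))^{2i}] \big| \le \lambda \big( |A|^2 |B|^2 \, 2^{(1+\delta)(\Hmax')^\epsilon(B)_\omega} \big)^i \big( 2|A| \, 2^{-H_2^\epsilon(A|R)_\rho} \big)^i = (2|A|^3 |B|^2)^i \, \lambda \, 2^{i(1+\delta)(\Hmax')^\epsilon(B)_\omega - i H_2^\epsilon(A|R)_\rho},
\]
which is the asserted bound. The step I expect to be most delicate is the reduction in the second paragraph: the adjoint transfer of the $\tcT'$ maps and the partial traces over the $2i$ reference copies must be organised so that (i) no spurious dimension factor such as a power of $|R|$ survives, and (ii) the operator fed to $(\cG - \cI)$ is a genuinely fixed operator on $A^{\otimes 2i}$ whose $2$-norm is controlled. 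This is exactly where Lemma~\ref{lem:TtwoSwap} and the partial-trace identity underlying Fact~\ref{fact:swappartialtrace} carry the weight; everything afterwards is bookkeeping and substitution.
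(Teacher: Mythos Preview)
Your proposal is correct and follows essentially the same route as the paper: both reduce $(g(U))^{2i}$ via the swap trick and adjoint transfer to a trace pairing of the fixed operators $\Theta^{\otimes i}=\big(((\tcT')^\dagger\otimes(\tcT')^\dagger)(F^{B_1B_2})\big)^{\otimes i}$ and $\Psi^{\otimes i}=(\gamma')^{\otimes i}$, invoke the qTPE bound (Definition~\ref{def:qTPE}) for the $\lambda$ factor, and then estimate the two norms with Lemma~\ref{lem:TtwoSwap}, Equation~\ref{eq:W_tilde_tau_prime_two}, Fact~\ref{fact:swappartialtrace} and Equation~\ref{eq:elltwogamma}. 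The only cosmetic differences are that the paper places $(\cG-\cI)$ on $\Theta^{\otimes i}$ rather than $\Psi^{\otimes i}$ and uses the H\"older pairing $\lVert\cdot\rVert_1\lVert\cdot\rVert_\infty\le\lVert\cdot\rVert_1\lVert\cdot\rVert_2$ instead of your Cauchy--Schwarz $\lVert\cdot\rVert_2\lVert\cdot\rVert_2$ followed by $\lVert\Psi\rVert_2\le\lVert\Psi\rVert_1$; both variants yield the identical final bound.
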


 \begin{proof}
We evaluate $[g(U)]^{2i}$ using Proposition~\ref{prop:g_exp_2i} as:
\begin{eqnarray*}
\lefteqn{
 (g(U)^{2})^{i}
 } \\
 & = &
    \Tr \bigg[
 \{
 (\gamma')^{A_1 A_2}
 \}^{\otimes i} \cdot\\
 &  &
 ~~~~~~~
 \left\{
 \left(U^\dag)^{A_1} \otimes (U^\dag)^{A_2}\right)^{\otimes i} \circ
 \left[((\tcT')^\dag)^{B_1 \to A_1} \otimes 
  ((\tcT')^\dag)^{B_2 \to A_2}\right]
 (F^{B_1 B_2})
 \right\}^{\otimes i}
 \bigg].
 \end{eqnarray*}
Hence by taking expectation with respect to Haar and TPE and using the 
linearity of expectation we get,
\begin{eqnarray*}
\lefteqn{
\left|
\underset{U \sim \TPE}{\E}[(g(U))^{2i}] - 
\underset{U \sim \Haar}{\E}[(g(U))^{2i}] 
\right|
} \\
& = &
\bigg|
\underset{U \sim \TPE}{\E}\bigg[
\Tr \left[
\left\{
(\gamma')^{A_1 A_2}
\right\}^{\otimes i} \cdot \right.\\
&  &
~~~~~~~~~~~~~~~~\left.
\left\{(U^\dag)^{A_1} \otimes (U^\dag)^{A_2}\right\}^{\otimes i} \circ
\left\{
\left[((\tcT')^\dag)^{B_1 \to A_1} \otimes 
 ((\tcT')^\dag)^{B_2 \to A_2}\right]
(F^{B_1 B_2})
\right\}^{\otimes i}
\right]
\bigg] \\
&   &
~~~~~
{}-
\underset{U \sim \Haar}{\E}\bigg[
\Tr \left[
\left\{
(\gamma')^{A_1 A_2}
\right\}^{\otimes i} \cdot \right.\\
&  &
~~~~~~~~~~~~~~~~~~~~\left.
\left\{(U^\dag)^{A_1} \otimes (U^\dag)^{A_2}\right\}^{\otimes i} \circ
\left\{
\left[((\tcT')^\dag)^{B_1 \to A_1} \otimes 
 ((\tcT')^\dag)^{B_2 \to A_2}\right]
(F^{B_1 B_2})
\right\}^{\otimes i}
\right]
\bigg] 
\bigg| \\
& = &
\bigg|
\Tr \bigg[
\left\{(\gamma')^{A_1 A_2}\right\}^{\otimes i} \cdot \\
&  &
~~~~
\left\{
\underset{U \sim \TPE}{\E}\left[
\left((U^\dag)^{A_1} \otimes (U^\dag)^{A_2}\right)^{\otimes i} \circ
\left(
\left[((\tcT')^\dag)^{B_1 \to A_1} \otimes 
 ((\tcT')^\dag)^{B_2 \to A_2}\right]
(F^{B_1 B_2})
\right)^{\otimes i}
\right] \right.\\
&   &
~~~~~~~\left.
{}-
\underset{U \sim \Haar}{\E}\left[
\left((U^\dag)^{A_1} \otimes (U^\dag)^{A_2}\right)^{\otimes i} \circ
\left(
\left[((\tcT')^\dag)^{B_1 \to A_1} \otimes 
 ((\tcT')^\dag)^{B_2 \to A_2}\right]
(F^{B_1 B_2})
\right)^{\otimes i}
\right]\right\}
\bigg] 
\bigg| \\
& \leq &
\left\lVert ((\gamma')^{A_1 A_2})^{\otimes i} \right\rVert_1 \cdot\\
&   &
~~~~~
{} 
\left\lVert
\underset{U \sim \TPE}{\E}\left[
\left\{(U^\dag)^{A_1} \otimes (U^\dag)^{A_2}\right\}^{\otimes i} \circ
\left\{
\left[((\tcT')^\dag)^{B_1 \to A_1} \otimes 
 ((\tcT')^\dag)^{B_2 \to A_2}\right]
(F^{B_1 B_2})
\right\}^{\otimes i}
\right] \right.\\
&   &
~~~~~~~\left.
{}-
\underset{U \sim \Haar}{\E}\left[
\left\{(U^\dag)^{A_1} \otimes (U^\dag)^{A_2}\right\}^{\otimes i} \circ
\left\{
\left[((\tcT')^\dag)^{B_1 \to A_1} \otimes 
 ((\tcT')^\dag)^{B_2 \to A_2}\right]
(F^{B_1 B_2})
\right\}^{\otimes i}
\right]
\right\rVert_\infty \\
& \leq &
\left\lVert (\gamma')^{A_1 A_2} \right\rVert_1^i \cdot\\
&   &
~~~~~
{}
\left\lVert
\underset{U \sim \TPE}{\E}\left[
\left\{(U^\dag)^{A_1} \otimes (U^\dag)^{A_2}\right\}^{\otimes i} \circ
\left\{
\left[((\tcT')^\dag)^{B_1 \to A_1} \otimes 
 ((\tcT')^\dag)^{B_2 \to A_2}\right]
(F^{B_1 B_2})
\right\}^{\otimes i}
\right] \right.\\
&   &
~~~~~~\left.
{}-
\underset{U \sim \Haar}{\E}\left[
\left\{(U^\dag)^{A_1} \otimes (U^\dag)^{A_2}\right\}^{\otimes i} \circ
\left\{
\left[((\tcT')^\dag)^{B_1 \to A_1} \otimes 
 ((\tcT')^\dag)^{B_2 \to A_2}\right]
(F^{B_1 B_2})
\right\}^{\otimes i}
\right]
\right\rVert_2 \\
& \overset{a}{\leq} &
\left(|A| \lVert \gamma^{A R} \rVert_2^2\right)^i \cdot \lambda \cdot
\left\lVert
\left\{
\left[((\tcT')^\dag)^{B_1 \to A_1} \otimes 
 ((\tcT')^\dag)^{B_2 \to A_2}\right]
(F^{B_1 B_2})
\right\}^{\otimes i}
\right\rVert_2 \\
& \overset{b}{\leq} &
\left(2 |A| \lVert (\trho')^{A R} \rVert_2^2\right)^i \cdot \lambda \cdot
\left\lVert
\left[((\tcT')^\dag)^{B_1 \to A_1} \otimes 
 ((\tcT')^\dag)^{B_2 \to A_2}\right]
(F^{B_1 B_2})
\right\rVert_2^i \\
& \overset{c}{\leq} &
\left(2 |A| \lVert (\trho')^{A R} \rVert_2^2\right)^i \cdot \lambda \cdot
\lVert W_{\tcT'} \rVert_2^{4i} 
\;\overset{d}{\leq}\;
(2 |A|)^i 2^{-i H_2^\epsilon(A|R)_\rho} \cdot \lambda \cdot
2^{i (1+\delta) \mathbb{H}_{\max}^\epsilon(B)_\omega}
(|A|^2 |B|^2)^i \\
& = &
(2 |A|^3 |B|^2)^i \cdot \lambda \cdot 
2^{i (1+\delta) 
\mathbb{H}_{\max}^\epsilon(B)_\omega -i H_2^\epsilon(A|R)_\rho},
\end{eqnarray*}
where
\begin{itemize}
\item[(a)]
follows from Fact~\ref{fact:swappartialtrace} and 
Definition~\ref{def:qTPE},

\item[(b)]
follows from Equation~\ref{eq:elltwogamma}.

\item[(c)]
follows from Lemma~\ref{lem:TtwoSwap},

\item[(d)]
follows from Note~\ref{note:W_tilde(T)} and 
Equations~\ref{eq:W_tilde_tau_prime_two} and \ref{eq:tilde_rho}.
\end{itemize}
This completes the proof of the lemma.
\end{proof}

Now we upper bound the centralised $(2m)$-th moment of 
${(g(U))}^2$ under the approximate unitary design.
\begin{lemma}
\label{lem:CentralMomentTPE}
Let $m$ be a positive integer.
Suppose $\mu := \E_{U \sim \Haar}[g(U)]$ and
$H_2^\epsilon(A|R)_\rho < 0$.
Consider a $(|A|, s, \lambda, 4m)$-qTPE for some positive integer $s$
and 
\[
\lambda^{1/m} \leq
\left\{
\hspace*{-2mm}
\begin{array}{l l}
2^4 m \cdot |A|^{-7} |B|^{-4} \mu^2 \cdot
2^{
-(1+\delta) \mathbb{H}_{\max}^\epsilon(B)_\omega + 
H_2^{\epsilon}(A|R)_{\rho}, 
} 
&
\hspace*{-2mm}
m <
\frac{9}{64} |A| \mu^2 \cdot
2^{
-(1+\delta)\mathbb{H}_{\max}^\epsilon(B)_\omega +
H_2^{\epsilon}(A|R)_\rho - 4
} \\
2^{10} m^2 \cdot |A|^{-8} |B|^{-4} 
&
\hspace*{-2mm}
\mbox{otherwise}
\end{array}
\right..
\]
Then,
\begin{eqnarray*}
\lefteqn{\underset{U \sim \TPE}{\E}[((g(U))^2 - \mu^2)^{2m}]} \\
& \leq &
\left\{
\begin{array}{l l}
7 (9 m \mu^2
2^{
(1+\delta)\mathbb{H}_{\max}^\epsilon(B)_\omega -
H_2^{\epsilon}(A|R)_\rho + 4
} |A|^{-1}
)^m 
&
m <
\frac{9}{64} |A| \mu^2 \cdot
2^{
-(1+\delta)\mathbb{H}_{\max}^\epsilon(B)_\omega +
H_2^{\epsilon}(A|R)_\rho - 4
} \\
7 (64 m^2 
2^{
2 (1+\delta)\mathbb{H}_{\max}^\epsilon(B)_\omega -
2 H_2^{\epsilon}(A|R)_\rho + 8
} |A|^{-2}
)^m 
&
\mbox{otherwise}
\end{array}
\right..
\end{eqnarray*}
\end{lemma}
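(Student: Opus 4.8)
The plan is to expand $((g(U))^2-\mu^2)^{2m}$ as a polynomial in $(g(U))^2$, transfer it term by term from the Haar measure to the qTPE using Lemma~\ref{lem:MomentTPEHaar}, and then absorb the resulting perturbation into the Haar-measure bound of Lemma~\ref{lem:Haar_moments_g}. The value of $\lambda$ demanded in the hypothesis is dictated precisely by requiring this perturbation to stay below the Haar bound in each of the two regimes of $m$.

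First I would apply the binomial theorem to write, for every unitary $U$,
\[
((g(U))^2-\mu^2)^{2m}=\sum_{j=0}^{2m}\binom{2m}{j}(-\mu^2)^{2m-j}(g(U))^{2j}.
\]
This is meaningful because $(g(U))^2$ is a balanced polynomial of degree two in the matrix entries of $U$, so each summand $(g(U))^{2j}=((g(U))^2)^{j}$ is a balanced polynomial of degree at most $4m$, and the qTPE degree in the hypothesis is large enough to legitimise applying Lemma~\ref{lem:MomentTPEHaar} to every one of these. Taking expectations under the qTPE and under the Haar measure and subtracting, the $j=0$ term cancels, and the triangle inequality yields
\[
\bigl|\E_{\TPE}[((g(U))^2-\mu^2)^{2m}]-\E_{\Haar}[((g(U))^2-\mu^2)^{2m}]\bigr|\le\sum_{j=1}^{2m}\binom{2m}{j}\mu^{2(2m-j)}\bigl|\E_{\TPE}[(g(U))^{2j}]-\E_{\Haar}[(g(U))^{2j}]\bigr|.
\]
Bounding the $j$-th difference on the right by Lemma~\ref{lem:MomentTPEHaar} (with $i=j$) gives the estimate $\lambda D^{j}$ for it, where $D:=2|A|^{3}|B|^{2}\,2^{(1+\delta)(\Hmax')^\epsilon(B)_\omega-H_2^\epsilon(A|R)_\rho}$, so resumming the binomial series,
\[
\bigl|\E_{\TPE}[((g(U))^2-\mu^2)^{2m}]-\E_{\Haar}[((g(U))^2-\mu^2)^{2m}]\bigr|\le\lambda\bigl((\mu^2+D)^{2m}-\mu^{4m}\bigr)\le\lambda(\mu^2+D)^{2m}.
\]

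Next I would bring in Lemma~\ref{lem:Haar_moments_g}, which already delivers $\E_{\Haar}[((g(U))^2-\mu^2)^{2m}]$ in precisely the two-regime form $4(\cdots)^{m}$ appearing in the statement; here one checks, using the definition of $a$ from Proposition~\ref{prop:decoupling_general_Haar}, that the threshold $\tfrac{9}{64}|A|\mu^2\,2^{-(1+\delta)(\Hmax')^\epsilon(B)_\omega+H_2^\epsilon(A|R)_\rho-4}$ equals $\tfrac{9}{64}a\mu^2$, so the two regimes match those of Fact~\ref{fact:moment}. Combining with the previous display,
\[
\E_{\TPE}[((g(U))^2-\mu^2)^{2m}]\le\E_{\Haar}[((g(U))^2-\mu^2)^{2m}]+\lambda(\mu^2+D)^{2m},
\]
and it remains to verify that the stated bound on $\lambda^{1/m}$ is exactly what forces $\lambda(\mu^2+D)^{2m}$ to be at most the Haar bound in the corresponding regime, upgrading $4(\cdots)^{m}$ to $5(\cdots)^{m}$. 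I would carry this out by splitting into the cases $D\le\mu^2$ and $D>\mu^2$, bounding $(\mu^2+D)^{2m}\le(2\max\{\mu^2,D\})^{2m}$ and $\binom{2m}{j}\le 4^{m}$, rewriting $D$ via the identity $2^{(1+\delta)(\Hmax')^\epsilon(B)_\omega-H_2^\epsilon(A|R)_\rho}=|A|/(16a)$, and matching exponents regime by regime; the hypothesis $H_2^\epsilon(A|R)_\rho<0$ (equivalently $2^{-H_2^\epsilon(A|R)_\rho}\ge1$) is what lets the exponential factors be compared in the required direction.

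The step I expect to be the main obstacle is this last piece of bookkeeping: carrying the binomial coefficients, the competition between $D^{j}$ and $\mu^{2(2m-j)}$, and the precise constants (the $16$ in the first regime and the $2^{10}$ in the second) cleanly through the algebra while respecting the crossover between the two regimes of $m$. Everything upstream --- the expansion, the term-by-term transfer through Lemma~\ref{lem:MomentTPEHaar}, and the resummation --- is routine once the balanced-polynomial structure of $(g(U))^2$ is exploited.
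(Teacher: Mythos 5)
Your proposal follows essentially the same route as the paper's proof: binomial expansion of $((g(U))^2-\mu^2)^{2m}$, term-by-term transfer from Haar to qTPE via Lemma~\ref{lem:MomentTPEHaar}, resummation of the resulting geometric-type series into $\lambda(\mu^2+D)^{2m}$, and absorption into the Haar-measure bound of Lemma~\ref{lem:Haar_moments_g} by choosing $\lambda$ small enough to make the perturbation at most one extra unit of the Haar bound in each regime of $m$. The only cosmetic difference is in the final bookkeeping: the paper bounds $\mu^2+D\le 3|A|^3|B|^2\,2^{(1+\delta)(\Hmax')^\epsilon(B)_\omega-H_2^\epsilon(A|R)_\rho}$ directly (using $\mu^2<1\le|A|^3|B|^2$ and $H_2^\epsilon<0$), whereas you propose a case split on $D\lessgtr\mu^2$ with $(\mu^2+D)^{2m}\le(2\max\{\mu^2,D\})^{2m}$; since $D>1>\mu^2$ always holds here, both reduce to the same estimate and the exponent-matching you describe then verifies the stated $\lambda^{1/m}$ thresholds in each regime exactly as in the paper.
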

\begin{proof}
From Lemma~\ref{lem:MomentTPEHaar}, we get
\begin{align*}
\lefteqn{
\left|
\underset{U \sim \TPE}{\E}[((g(U))^2 - \mu^2)^{2m}] - 
\underset{U \sim \Haar}{\E}[((g(U))^2 - \mu^2)^{2m}]
\right|
} \\
& = 
\left|
\sum_{i=0}^{2m} {{2m}\choose{i}} (-\mu^2)^{2m-i}
(
\underset{U \sim \TPE}{\E}[(g(U))^{2i}] - 
\underset{U \sim \Haar}{\E}[(g(U))^{2i}] 
)
\right| \\
& \leq 
\sum_{i=0}^{2m} {{2m}\choose{i}} (\mu^2)^{2m-i}
\left|
\underset{U \sim \TPE}{\E}[(g(U))^{2i}] - 
\underset{U \sim \Haar}{\E}[(g(U))^{2i}] 
\right| \\
& \leq 
\lambda
\sum_{i=0}^{2m} {{2m}\choose{i}} (\mu^2)^{2m-i} 
(2 |A|^3 |B|^2)^i \cdot
2^{i (1+\delta) 
\mathbb{H}_{\max}^\epsilon(B)_\omega -i H_2^\epsilon(A|R)_\rho} \\
& =
\lambda
(
\mu^2 +
(2 |A|^3 |B|^2) \cdot
2^{(1+\delta) 
\mathbb{H}_{\max}^\epsilon(B)_\omega - H_2^\epsilon(A|R)_\rho} 
)^{2m} \\
&\overset{a}{\leq}
\lambda
(
(3 |A|^3 |B|^2) \cdot
2^{(1+\delta) 
\mathbb{H}_{\max}^\epsilon(B)_\omega - H_2^\epsilon(A|R)_\rho} 
)^{2m}.
\end{align*}
where (a) follows from the observation that $\mu^2 \leq \E_U[{g(U)}^2]$ (from Jensen's inequality for the function $x^2$) and using the upper bound on $g(U)$ from Lemma~\ref{lem:maxg} and the fact that $|A|, |B| \geq 1$.\\ 
Using Lemma~\ref{lem:Haar_moments_g} and the above inequality, we get
\begin{eqnarray*}
\lefteqn{
\underset{U \sim \TPE}{\E}\left[\left((g(U))^2 - \mu^2\right)^{2m}\right] 
} \\
& \leq &
\lambda
\left(
(3 |A|^3 |B|^2)^2 \cdot
2^{2 (1+\delta) \mathbb{H}_{\max}^\epsilon(B)_\omega - 
	2 H_2^\epsilon(A|R)_\rho
} 
\right)^{m} \\
&  &
{} +
\left\{
\begin{array}{l l}
6 \left(9 m \mu^2
2^{
(1+\delta)\mathbb{H}_{\max}^\epsilon(B)_\omega -
H_2^{\epsilon}(A|R)_\rho + 4
} |A|^{-1}
\right)^m 
&
;m <
\frac{9}{64} |A| \mu^2 \cdot
2^{
-(1+\delta)\mathbb{H}_{\max}^\epsilon(B)_\omega +
H_2^{\epsilon}(A|R)_\rho - 4
} \\
6 \left(64 m^2 
2^{
2 (1+\delta)\mathbb{H}_{\max}^\epsilon(B)_\omega -
2 H_2^{\epsilon}(A|R)_\rho + 8
} |A|^{-2}
\right)^m 
&
\mbox{; otherwise}
\end{array}
\right. \\
& \leq &
\left\{
\begin{array}{l l}
7 \left(9 m \mu^2
2^{
(1+\delta)\mathbb{H}_{\max}^\epsilon(B)_\omega -
H_2^{\epsilon}(A|R)_\rho + 4
} |A|^{-1}
\right)^m 
&
;m <
\frac{9}{64} |A| \mu^2 \cdot
2^{
-(1+\delta)\mathbb{H}_{\max}^\epsilon(B)_\omega +
H_2^{\epsilon}(A|R)_\rho - 4
} \\
7 \left(64 m^2 
2^{
2 (1+\delta)\mathbb{H}_{\max}^\epsilon(B)_\omega -
2 H_2^{\epsilon}(A|R)_\rho + 8
} |A|^{-2}
\right)^m 
&
\mbox{; otherwise}
\end{array}
\right..
\end{eqnarray*}
This completes the proof of the lemma.
\end{proof}

\begin{proof}[Proof of Theorem~\ref{thm:main}]
Define the positive integer
$
m :=
\lfloor
|A| \kappa^2
2^{
-(1+\delta)\mathbb{H}_{\max}^\epsilon(B)_\omega +
H_2^{\epsilon}(A|R)_\rho - 8
}
\rfloor.
$
Using Lemma~\ref{lem:CentralMomentTPE}, we get
\begin{align*}
\lefteqn{
\prob_{U \sim \TPE}[g(U) - \mu  >  \kappa] 
\leq 
\prob_{U \sim \TPE}[(g(U))^2  > (\mu + \kappa)^2] 
} \\
& \leq 
\left\{
\begin{array}{l l}
\prob_{U \sim \TPE}[(g(U))^2 - \mu^2 > 2 \mu \kappa]
& ; \kappa \leq \mu \\
\prob_{U \sim \TPE}[(g(U))^2 - \mu^2 > \kappa^2]
& \mbox{; otherwise} \\
\end{array}
\right. \\
& \leq 
\left\{
\begin{array}{l l}
\prob_{U \sim \TPE}[((g(U))^2 - \mu^2)^{2m} > (2 \mu \kappa)^{2m}]
& ;\kappa \leq \mu \\
\prob_{U \sim \TPE}[((g(U))^2 - \mu^2)^{2m} > \kappa^{4m}]
& \mbox{; otherwise} \\
\end{array}
\right. \\
& \leq 
\left\{
\begin{array}{l l}
\frac{\E_{U \sim \TPE}\left[\left((g(U))^2 - 
\mu^2\right)^{2m}\right]}{(2 \mu \kappa)^{2m}}
& ;\kappa \leq \mu \\
\frac{\E_{U \sim \TPE}\left[\left((g(U))^2 - 
\mu^2\right)^{2m}\right]}{\kappa^{4m}}
& \mbox{; otherwise} \\
\end{array}
\right. \\
& \leq
\left\{
\begin{array}{l l}
\frac{
7 \left(9 m \mu^2
2^{
(1+\delta)\mathbb{H}_{\max}^\epsilon(B)_\omega -
H_2^{\epsilon}(A|R)_\rho + 4
} |A|^{-1}
\right)^m
}{
(2\mu \kappa)^{2m}
} 
& ; \kappa \leq \mu,
m <
\frac{9}{64} |A| \mu^2 \cdot
2^{
-(1+\delta)\mathbb{H}_{\max}^\epsilon(B)_\omega +
H_2^{\epsilon}(A|R)_\rho - 4
} \\
\frac{
7 \left(8 m 
2^{
(1+\delta)\mathbb{H}_{\max}^\epsilon(B)_\omega -
H_2^{\epsilon}(A|R)_\rho + 4
} |A|^{-1}
\right)^{2m}
}{
\kappa^{4m}
} 
& \mbox{; otherwise} \\
\end{array}
\right. \\
& \leq
\left\{
\begin{array}{l l}
7 \left(3 m \kappa^{-2}
2^{
(1+\delta)\mathbb{H}_{\max}^\epsilon(B)_\omega -
H_2^{\epsilon}(A|R)_\rho + 4
} |A|^{-1}
\right)^m
& ;\kappa \leq \mu \\
7 \left(8 m \kappa^{-2} 
2^{
(1+\delta)\mathbb{H}_{\max}^\epsilon(B)_\omega -
H_2^{\epsilon}(A|R)_\rho + 4
} |A|^{-1}
\right)^{2m}
& \mbox{; otherwise} \\
\end{array}
\right. \\
& \leq
7 \cdot 2^{-a \kappa^2},
\end{align*}
where
$
a := 
|A| \cdot
2^{
-(1+\delta)\mathbb{H}_{\max}^\epsilon(B)_\omega +
H_2^{\epsilon}(A|R)_\rho - 9
}.
$\\
Since $0<\delta <1$, ${\mathbb{H}_{max}}^\epsilon(B)_\omega \leq 
\log |B|$ and $H_2^\epsilon(A|R)_\rho \geq -\log |A|$, therefore\\ 
$
(m \cdot |A|^{-8} |B|^{-6} \cdot \mu^2)^m \leq
(
2^4 \cdot m \cdot |A|^{-7} |B|^{-4} \mu^2 \cdot 
2^{
-(1+\delta)\mathbb{H}_{\max}^\epsilon(B)_\omega +
H_2^{\epsilon}(A|R)_\rho - 4
}
)^m
$\\
And thus, the above probability analysis requires us to use a 
$(|A|, s, \lambda', 4m)$-qTPE with
\[
\lambda' \leq 
\left(m \cdot |A|^{-8} |B|^{-6} \cdot \mu^2\right)^m,
\]
if $\kappa < \mu$ and
\begin{equation}
\label{eq:lambda}
\lambda' \leq 
\left(
2^{10} \cdot m^2 \cdot |A|^{-8} |B|^{-4}
\right)^m
\end{equation}
otherwise.
Define
$
t := 
|A| \kappa^2
2^{
H_2^{\epsilon}(A|R)_\rho
-\mathbb{H}_{\max}^\epsilon(B)_\omega - 6
}.
$
Then $t \geq 4m$ and a $(|A|, s, \lambda, t)$-qTPE suffices for the
derandomization where
$
\lambda :=
(|A|^{-8} |B|^{-6} \cdot \mu^2)^t.
$\\
Note that this value of $\lambda$ is smaller than that obtained in 
Equation~\ref{eq:lambda} and hence a unitary chosen at random from 
this qTPE shall also achieve exponential concentration. 
For a qTPE to achieve this smaller $\lambda$ requires slightly 
more number of iterations of a qTPE with constant singular value gap,
as compared to a qTPE with $\lambda'$ in Equation~\ref{eq:lambda}.
Combined with Lemma~\ref{lem:f_to_g}, we finally get
\[
\prob_{U \sim \TPE}\left[
f(U) >
2^{
-\frac{1}{2} H_2^\epsilon(A|R)_\rho 
-\frac{1}{2} \mathbb{H}_2^{\epsilon,\delta}(A'|B)_\omega + 1
} + 14\sqrt{\epsilon} + 2 \kappa
\right] \leq 
7 \cdot 2^{-a \kappa^2}.
\]
Combining Proposition~\ref{prop:decoupling_general_Haar} with the 
above analysis finishes the proof of Theorem~\ref{thm:main}.
\end{proof}

\section{The asymptotic iid case} 
\label{sec:iid}
In this section we first show that the smooth one-shot entropies 
defined in Section~\ref{subsec:entropies} approach their natural Shannon
entropic analogues in the asymptotic iid limit. We then use those
results to prove the asymptotic iid version of the main theorem. The
asymptotic iid version will be stated in terms of Shannon entropies.

\subsection{Asymptotic smoothing of $\mathbb{H}_{\max}^\epsilon$ and
$\mathbb{H}_2^{\epsilon,\delta}$}
In this section, we use the properties of typical sequences and subspaces 
to find an upper bound on $\mathbb{H}_{\max}^\epsilon$ and a lower bound on
$\mathbb{H}_2^{\epsilon,\delta}$ in the asymptotic limit 
of many iid
copies of the underlying quantum states. The bounds obtained will be
the Shannon entropic quantities that one would expect.
We first prove a few essential lemmas.
\begin{lemma}
\label{lem:projection1}
Suppose we have a density matrix $\omega$ on the system $AB$.
Let $\ket{w_j}^{AB}$, $j \in [|A| |B|]$ be the eigenvectors of
$\omega^{AB}$ with eigenvalues $q_j$. 
For $j \in [|A| |B|]$, define 
$\theta_j^B := \Tr_A [\ket{w_j}^{AB}\bra{w_j}]$.
Let $p_j:=\{ p_j(i)\}_{\{ i \in [|B|]\}}$, 
$j \in [|A||B|]$ be the probability
distribution on $[|B|]$ obtained by measuring 
$\theta_j$ in the eigenbasis of $\omega^{B}$. 
Let $0 < \epsilon, \delta < 1/3$. 
Define 
$
q_{\mathrm{min}} := 
2^{-\mathbb{H}_{\max}^{\epsilon/2}(AB)_\omega},
$ 
$
p_{\mathrm{min}} := 
\min_{j \in [|A| |B|]}
2^{-\mathbb{H}_{\max}^{\epsilon/2}([|B|])_{p_j}},
$ 
Let 
$
n \geq 2^5 q^{-1}_{\mathrm{min}} p^{-1}_{\mathrm{min}} \delta^{-2} 
\log(|A| |B|/\epsilon).
$
Consider the $n$-fold tensor power 
$\omega^{A^n B^n} := (\omega^{AB})^{\otimes n}$.
Let $\tau$ be a strongly $\delta$-typical type of
an eigenvector sequence of $\omega^{A^n B^n}$. Let 
$(\ket{v_1} \otimes \cdots \otimes \ket{v_n})^{A^n B^n}$
be an eigenvector sequence of type $\tau$. 
Let 
$
\sigma^{B^n} := 
\Pi^{B^n}_{\omega, 3\delta}
\omega^{B^n}
\Pi^{B^n}_{\omega, 3\delta}.
$
Then,
\[
\Tr [
(I^{A^n} \otimes \Pi^{B^n}_{\omega, 3\delta}) 
(\ket{v_1} \cdots \ket{v_n}
\bra{v_1} \cdots \bra{v_n})^{A^n B^n}
] \geq 1 - \epsilon.
\]
\end{lemma}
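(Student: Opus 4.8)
\emph{Proof idea.} The plan is to reduce the claimed quantum inequality to a classical concentration statement about empirical frequencies, and then dispatch it with a multiplicative Chernoff bound and a union bound.

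First I would compute the reduced state on $B^n$. Writing $\ket{v_i} = \ket{w_{j_i}}$ for indices $j_1,\dots,j_n$ whose empirical type is $\tau$, we get $\Tr_{A^n}[(\ket{v_1}\cdots\ket{v_n})^{A^nB^n}(\bra{v_1}\cdots\bra{v_n})] = \bigotimes_{i=1}^n \theta_{j_i}^{B_i}$. Since the support of $\sigma^{B^n}$ is precisely the strongly $3\delta$-typical subspace of $(\omega^B)^{\otimes n}$ (within each typical type subspace the eigenvalue of $\omega^{B^n}$ is strictly positive), we have $\Pi^{B^n}_\sigma = \Pi^{B^n}_{\omega,3\delta}$, which is permutation invariant; hence $\Tr[(I^{A^n}\otimes\Pi^{B^n}_\sigma)(\ket{v_1}\cdots\ket{v_n})(\bra{v_1}\cdots\bra{v_n})] = \Tr[\Pi^{B^n}_{\omega,3\delta}\,\bigotimes_j (\theta_j^B)^{\otimes m_j}]$, where $m_j$ is the multiplicity of $\ket{w_j}$ in $\tau$.

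Next, $\Pi^{B^n}_{\omega,3\delta}$ is diagonal in the product eigenbasis of $(\omega^B)^{\otimes n}$ and equals the indicator of strong $3\delta$-typicality with respect to the eigenvalue distribution $q$ of $\omega^B$; measuring each copy of $\theta_j$ in the eigenbasis of $\omega^B$ gives an independent sample of $p_j$, so the above trace equals $\prob_{b^n \sim \bigotimes_j p_j^{\otimes m_j}}[\,b^n \text{ strongly }3\delta\text{-typical w.r.t. }q\,]$. The key algebraic input is the mixing identity $\omega^B = \Tr_A \omega^{AB} = \sum_j q_j \theta_j$, which upon measurement gives $q = \sum_j q_j p_j$. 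Combining this with strong $\delta$-typicality of $\tau$ (that is, $m_j \in n q_j(1\pm\delta)$), the expected count $\mu_b := \E[N(b|b^n)] = \sum_j m_j p_j(b)$ lies in $n q(b)(1\pm\delta)$ for every $b$; in particular $\mu_b = 0$ whenever $q(b)=0$, so those coordinates are typical deterministically. For $q(b)>0$, $N(b|b^n)$ is a sum of $n$ independent Bernoulli variables with mean $\mu_b \in nq(b)(1\pm\delta)$, so a standard multiplicative Chernoff bound yields $\prob[N(b|b^n)\notin nq(b)(1\pm3\delta)] \leq 2\exp(-c\,\delta^2\,nq(b))$, the slack between $\delta$ and $3\delta$ absorbing the $\pm\delta$ gap between $\mu_b$ and $nq(b)$.

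To finish via a union bound I need $nq(b)$ large for all relevant $b$, and this is exactly where the two smoothing parameters enter: discarding the bottom-$(\epsilon/2)$ eigenmass of $\omega^{AB}$ (surviving eigenvalues $\geq q_{\mathrm{min}}$) together with the bottom-$(\epsilon/2)$ mass of each $p_j$ (surviving values $\geq p_{\mathrm{min}}$) removes total probability at most $\epsilon$, while on the surviving part every contributing probability is $\geq q_{\mathrm{min}} p_{\mathrm{min}}$. With $n \geq 2^5 q_{\mathrm{min}}^{-1} p_{\mathrm{min}}^{-1} \delta^{-2}\log(|A||B|/\epsilon)$ each Chernoff exponent is at least $\log(|A||B|/\epsilon)$, so summing over the at most $|B|$ values of $b$, together with the $\epsilon/2+\epsilon/2$ coming from the discarded mass, keeps the total failure probability below $\epsilon$, giving the claimed bound. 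The main obstacle is precisely this bookkeeping: choosing which ``bad'' events to union-bound so that the low-weight eigenvectors of $\omega^{AB}$ and the low-probability outcomes of the $p_j$'s are charged to the $\epsilon/2+\epsilon/2$ budget rather than spoiling the Chernoff estimates, and lining up the quantifiers and the dependence of $n$ on $q_{\mathrm{min}}, p_{\mathrm{min}}, \delta, \epsilon$; everything else is routine.
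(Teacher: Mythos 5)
Your reduction to the classical statement is correct and matches the paper's: rewriting the eigenvector sequence as $\bigotimes_j \ket{w_j}^{\otimes m_j}$, tracing out $A^n$ to get $\bigotimes_j (\theta_j)^{\otimes m_j}$, noting $\Pi^{B^n}_\sigma = \Pi^{B^n}_{\omega,3\delta}$, and interpreting the trace as $\prob_{b^n}[b^n \text{ strongly } 3\delta\text{-typical w.r.t. } q]$ under the product measure $\bigotimes_j p_j^{\otimes m_j}$, using the mixing identity $q = \sum_j q_j p_j$. From here, however, you take a genuinely different route than the paper. The paper groups the coordinates into blocks by $j$, applies the AEP (Fact~\ref{fact:quantum_AEP}) to each i.i.d.\ block $p_j^{\otimes m_j}$ to get $\Tr[\Pi_j (\theta_j)^{\otimes m_j}] \geq 1 - \epsilon/(|A||B|)$, multiplies, and then establishes the purely combinatorial containment $\bigotimes_j \Pi_j \leq \Pi^{B^n}_{\omega,3\delta}$ by the same $(1\pm\delta)^2 \subseteq (1\pm 3\delta)$ counting you use. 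You instead union-bound letter-by-letter over $b$, applying a multiplicative Chernoff bound to each count $N(b|b^n)$.

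The block decomposition is not just a stylistic preference; it is what lets the stated hypotheses do their work, and this is where your proposal has a real gap. Your Chernoff exponent for the letter $b$ is $\Theta(\delta^2 \mu_b) = \Theta(\delta^2 n q(b))$ where $q(b) = r_b$ is an eigenvalue of $\omega^B$. To beat the union bound you need $q(b) \gtrsim q_{\mathrm{min}} p_{\mathrm{min}}$ for every $b$ in the support of $\omega^B$, but nothing in the hypotheses bounds the eigenvalues of $\omega^B$ from below: $q_{\mathrm{min}}$ is a cutoff on the spectrum of $\omega^{AB}$ and $p_{\mathrm{min}}$ a cutoff on each $p_j$, and it is entirely possible that $q(b) = \sum_j q_j p_j(b)$ is tiny because every $j$ with $q_j \geq q_{\mathrm{min}}$ has $p_j(b) < p_{\mathrm{min}}$. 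Your proposed fix — charging such letters to an $\epsilon/2 + \epsilon/2$ ``discarded mass'' budget — does not go through as sketched: strong $3\delta$-typicality is a conjunction over all $b$, so a low-weight letter cannot simply be dropped from the requirement, and the discarded eigenmass of $\omega^{AB}$ and of the $p_j$'s does not translate into an $\epsilon$-mass bound on the failure event at those letters. You flag exactly this bookkeeping as ``the main obstacle,'' and indeed it is the unresolved step. The paper avoids the issue entirely because the AEP is invoked per block (where $p_{\mathrm{min}}$ directly controls the per-block exponent via $m_j \geq n q_{\mathrm{min}}(1-\delta)$), and the passage from block-typicality to global $3\delta$-typicality is then deterministic, requiring no probabilistic control at letters with small $q(b)$.
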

\begin{proof}
Since $\tau$ is a strongly $\delta$-typical type, the number of
occurrences $n_j$ of each $\ket{w_j}^{AB}$ in the sequence
$\ket{v_1}^{AB}, \ldots, \ket{v_n}^{AB}$ is $n q_j (1 \pm \delta)$. After a
suitable rearranging, we can write
\[
(\ket{v_1} \otimes \cdots \otimes \ket{v_n})^{A^n B^n} =
(\ket{w_1}^{\otimes n_1} \otimes \cdots \otimes 
\ket{w_{|A||B|}}^{\otimes n_{|A||B|}})^{A^n B^n}.
\]
To prove the lemma, it suffices to show that
\begin{align} \label{eq:reduced_theta}
\Tr [
\Pi^{B^n}_{\omega, 3\delta}
(
(\theta_1^B)^{\otimes n_1} \otimes \cdots \otimes 
(\theta_{|A||B|}^B)^{\otimes n_{|A||B|}}
)
] \geq 1 - \epsilon.
\end{align}
Let $\ket{x_1}^B, \ldots, \ket{x_{|B|}}^B$ be the eigenbasis of 
$\omega^B$ with eigenvalues $r_1, \ldots, r_{|B|}$. Observe that we
have the operator equality
$
\sum_{j=1}^{|A||B|} q_j \theta_j^B = \omega^B.
$
Now consider the matrices $\theta_j^B$ in the basis
$\ket{x_1}^B, \ldots, \ket{x_{|B|}}^B$. Thus for any $i \in [|B|]$,
$\sum_{j=1}^{|A||B|} q_j p_j(i) = r_i$.\\
Let $\Pi^{B^{n_j}}_{j}$ be the projector onto the eigenvectors of
$(\omega^B)^{\otimes n_j}$ that are strongly $\delta$-typical according
to $p_j$ (and not with respect to the eigenvalues $\{r_i\}_{i \in [|B|]}$ of $\omega^B$). Note that this is not the same as projecting onto the eigenvectors of ${(\theta_j^{B})}^{\otimes n_j}$ and also ${(\theta_j^{B})}^{\otimes n_j}$ does not commute with $\omega_B^{\otimes n_j}$.\\ 
Since $\{p_j(i)\}_{i \in [|B|]}$ is the probability distribution on $B$ obtained by measuring $\theta_j^B$ in the eigenbasis of $\omega^B$ for all $j \in [|A||B|]$ and $\Pi_j^{B^{n_j}}$ is the typical projector with respect to the probability distribution $\{p_j\}$ (not on the eigenbasis of $\omega^{B^{n_j}}$), therefore by Fact~\ref{fact:quantum_AEP} we have,
$
\Tr [\Pi^{B^{n_j}}_{j} (\theta_j^B)^{\otimes n_j}] \geq
1 - \frac{\epsilon}{|A||B|}.
$
Thus,
\begin{align} \label{eq:typical_pj}
\Tr [
(
\Pi^{B^{n_1}}_{1} \otimes \cdots \otimes
\Pi^{B^{n_{|A||B|}}}_{|A||B|}
)
(
(\theta_1^B)^{\otimes n_1} \otimes \cdots \otimes 
(\theta_{|A||B|}^B)^{\otimes n_{|A||B|}}
)
] &\geq
\prod_{j=1}^{|A||B|} \left(1 - \frac{\epsilon}{|A||B|} \right) \nonumber\\ 
& \geq 1-\epsilon\;.
\end{align}

Fix an eigenvector $\left( \textrm{ of }\omega^{B^{n_1}} \otimes \ldots \otimes \omega^{B^{n_{|A||B|}}}\right)$ in the support of 
$
\Pi^{B^{n_1}}_{1} \otimes \cdots \otimes
\Pi^{B^{n_{|A||B|}}}_{|A||B|}
;
$
the eigenvector can be viewed as a sequence of length $n_1$ (typical with respect to probability $p_1$), $\ldots, n_{|A||B|}$ (typical with respect to probability distribution $p_{|A||B|}$) such that $\sum_{j=1}^{|A||B|}n_j=n$.
Then the number of occurrences of $\ket{x_i}$ in the sequence is
\[
\sum_{j=1}^{|A||B|} n q_j (1 \pm \delta) p_j(i) (1 \pm \delta) = 
r_i (1 \pm 3\delta).
\]
This shows that the eigenvector is strongly $(3\delta)$-typical
for the state $\omega^{B^n}$. Also, since eigenvectors corresponding to distinct eigenvalues are orthogonal, therefore $\mathop{\bigotimes} \limits_{j=1}^{|A||B|}\Pi_j^{{B}^{n_j}}$ commutes with $\Pi_{\omega, 3\delta}^{B^n}$. In other words,
\begin{align} \label{eq:op_ineq}
\Pi^{B^{n_1}}_{1} \otimes \cdots \otimes
\Pi^{B^{n_{|A||B|}}}_{|A||B|} \leq
\Pi^{B^n}_{\omega, 3\delta}
\end{align}
Combining Equation~\ref{eq:typical_pj} with Equation~\ref{eq:op_ineq} and then Equation~\ref{eq:reduced_theta} completes the proof of the lemma.
\end{proof}
\textbf{Note: }The non trivial take away message of the above lemma is that a typical vector of $\omega^{AB}$ still retains the typicality property when projected onto the marginal $\omega^B$ (in the sense that any typical eigenvector of ${(\omega^{AB})}^{\otimes n}$ has a large projection onto the subspace spanned by $I^{A^n} \otimes \Pi_{\omega, 3\delta}^{B^n}$; even though the eigenvectors of $\omega^{AB}$ and $\omega^B$ are not related), but the probability distribution is not the one obtained by eigenvalues of $\omega^B$. Although the above statement might look believable in analogy with the classical typical sequences, it is still non-trivial to prove as is suggested by the proof above. The proof centrally uses the property of strongly typical types.
\begin{lemma}
\label{lem:projection2}
Consider the setting of Lemma~\ref{lem:projection1}.
Let $V_\tau \leq A^n B^n$ denote the type subspace corresponding to
type $\tau$. Then there is a subspace $\hat{V}_\tau \leq V_\tau$,
$|\hat{V}_\tau| \geq (1 - \sqrt{\epsilon}) |V_\tau|$ such that
for every vector $\ket{v} \in \hat{V}_\tau$,
$
\lVert (I^{A^n} \otimes \Pi^{B^n}_\sigma) \ket{v} \rVert_2^2 \geq
1 - \sqrt{\epsilon}.
$
\end{lemma}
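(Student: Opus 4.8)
The plan is to reduce the assertion about \emph{arbitrary} vectors in a subspace to the assertion about the orthonormal product basis of $V_\tau$ already handled by Lemma~\ref{lem:projection1}, and then to carve out the large subspace $\hat V_\tau$ by a Markov-type argument applied to a single positive semidefinite operator. First I would observe that the proof of Lemma~\ref{lem:projection1} used only the \emph{type} $\tau$ of the eigenvector sequence and not its ordering: after rearranging, everything depended solely on the multiplicities $n_j$. Consequently, for \emph{every} product eigenvector $\ket{v} = \ket{v_1} \otimes \cdots \otimes \ket{v_n}$ of $\omega^{A^n B^n}$ whose type equals $\tau$ we have $\lVert (I^{A^n} \otimes \Pi^{B^n}_\sigma) \ket{v} \rVert_2^2 \geq 1 - \epsilon$. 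Such product vectors form an orthonormal basis of $V_\tau$.

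Next, write $\Pi := I^{A^n} \otimes \Pi^{B^n}_\sigma$ and let $P_\tau$ be the orthogonal projector onto $V_\tau$. Consider the operator $M := P_\tau \Pi P_\tau$ regarded as an operator on the finite-dimensional space $V_\tau$. Since $\Pi$ is an orthogonal projector, $0 \leq M \leq P_\tau$, so every eigenvalue of $M$ lies in $[0,1]$. For each basis vector $\ket{v}$ above, $\ket{v} \in V_\tau$ gives $P_\tau \ket{v} = \ket{v}$, hence $\bra{v} M \ket{v} = \bra{v} \Pi \ket{v} = \lVert \Pi \ket{v} \rVert_2^2 \geq 1 - \epsilon$; summing over the basis yields $\Tr M \geq (1-\epsilon)\, |V_\tau|$. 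Set $N := P_\tau - M$. Then $N \geq 0$, its eigenvalues lie in $[0,1]$, and $\Tr N \leq \epsilon\, |V_\tau|$.

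Now apply Markov's inequality at the level of eigenvalues of $N$: the number of eigenvalues of $N$ exceeding $\sqrt{\epsilon}$ is at most $\Tr N / \sqrt{\epsilon} \leq \sqrt{\epsilon}\, |V_\tau|$. Let $\hat V_\tau \leq V_\tau$ be the span of the eigenvectors of $N$ with eigenvalue at most $\sqrt{\epsilon}$, so that $|\hat V_\tau| \geq (1 - \sqrt{\epsilon})\, |V_\tau|$ (no floor issue, since the count of large eigenvalues is an integer bounded by $\sqrt{\epsilon}\,|V_\tau|$). For any unit vector $\ket{v} \in \hat V_\tau$ we then have $\bra{v} N \ket{v} \leq \sqrt{\epsilon}$, and, using again that $\ket{v} \in V_\tau$ and that $\Pi$ is idempotent and self-adjoint, $\lVert \Pi \ket{v} \rVert_2^2 = \bra{v} \Pi \ket{v} = \bra{v} M \ket{v} = 1 - \bra{v} N \ket{v} \geq 1 - \sqrt{\epsilon}$, which is the claimed bound.

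The only genuinely new point is the observation that Lemma~\ref{lem:projection1} in fact holds for every product eigenvector of type $\tau$, which is what licenses the lower bound on $\Tr M$; everything after that is the standard ``good subspace'' extraction. I expect the main obstacle to be purely notational, namely keeping careful track of which operators act on $V_\tau$ versus on all of $A^n B^n$, and of the fact that $P_\tau \ket{v} = \ket{v}$ for $\ket{v} \in V_\tau$ makes $\bra{v} M \ket{v} = \bra{v} \Pi \ket{v}$.
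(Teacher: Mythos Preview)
Your proof is correct and is essentially the same argument as the paper's, only packaged more directly. The paper invokes Jordan's lemma (Fact~\ref{fact:chordal}) with $A := V_\tau$ and $B := \mathrm{supp}(I^{A^n}\otimes\Pi^{B^n}_\sigma)$ to obtain a basis $\{\ket{a}\}$ of $V_\tau$ for which the projected vectors $\Pi_B\ket{a}$ are pairwise orthogonal, then applies the trace bound $\Tr[\Pi_B \tfrac{\Pi_A}{|A|}] \geq 1-\epsilon$ from Lemma~\ref{lem:projection1} and Markov's inequality to select $\hat V_\tau$ as the span of the good basis vectors. Your eigenbasis of $M = P_\tau \Pi P_\tau$ on $V_\tau$ is exactly this Jordan basis (the eigenvalues are the squared cosines of the canonical angles), so the two constructions of $\hat V_\tau$ coincide. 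The advantage of your phrasing is that it sidesteps the explicit appeal to Fact~\ref{fact:chordal}: the needed ``diagonalisation'' comes for free from the spectral theorem applied to a single Hermitian operator, and the extension from basis vectors to arbitrary $\ket{v}\in\hat V_\tau$ is immediate since $N\!\restriction_{\hat V_\tau}$ has operator norm at most $\sqrt{\epsilon}$, rather than requiring the orthogonality observation about $\Pi_B\ket{a}$. Both proofs rely on the same reading of Lemma~\ref{lem:projection1} as holding for every product eigenvector of type $\tau$, which you make explicit.
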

\begin{proof}
We invoke Fact~\ref{fact:chordal} with $A := V_\tau$ and 
$B := I^{A^n} \otimes \Pi^{B^n}_\sigma$ in order to prove this lemma.
Take the basis for $A$ provided by Fact~\ref{fact:chordal}.
Call it $\{\ket{a}\}_a$.
Observe that the vectors $\Pi_B \ket{a}$ are pairwise orthogonal
(some of them may be the zero vector).
From Lemma~\ref{lem:projection1}, we know that
$\Tr [\Pi_B \frac{\Pi_A}{|A|}] \geq 1 - \epsilon$.
By Markov's inequality, there is a subset $S$ of the basis vectors of
$A$, $|S| \geq (1-\sqrt{\epsilon}) |A|$ such that for all $a \in S$, 
$
\lVert \Pi_B \ket{a} \rVert_2^2 =\Tr [\Pi_B \ket{a}\bra{a}] 
\geq 1 - \sqrt{\epsilon}.
$
Define the subspace $\hat{A} := \mathrm{span}_{a \in S} \ket{a}$.
From the above observation, for any vector $\ket{v} \in \hat{A}$,
$\lVert \Pi_B \ket{v} \rVert_2^2 \geq 1 - \sqrt{\epsilon}$.
This subspace $\hat{A}$ serves as the subspace $\hat{V}_\tau$ required
by the lemma.
\end{proof}
\begin{lemma}
\label{lem:projection3}
Let $0 < \epsilon < 1$.
Let $\ket{v_1}, \ldots, \ket{v_t}$ be orthonormal vectors lying in
a Hilbert space $\cH$. Suppose there is a subspace $B \leq \cH$ with the
property that $\lVert \Pi_B \ket{v_i} \rVert_2^2 \geq 1 - \epsilon$ for
all $i \in [t]$. Let $\ket{v}$ be a unit vector lying in the span of
the vectors $\ket{v_i}$. Then,
$
\lVert \Pi_B \ket{v} \rVert_2^2 \geq 1 - 8 t \sqrt{\epsilon}.
$
\end{lemma}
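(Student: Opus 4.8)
The plan is to expand $\ket{v}$ in the orthonormal family $\ket{v_1},\ldots,\ket{v_t}$ and control directly the component of $\ket{v}$ lying outside $B$. Write $\ket{v} = \sum_{i=1}^t c_i \ket{v_i}$; since the $\ket{v_i}$ are orthonormal and $\ket{v}$ is a unit vector, $\sum_{i=1}^t |c_i|^2 = 1$. The hypothesis $\lVert \Pi_B \ket{v_i} \rVert_2^2 \geq 1 - \epsilon$ is, for each $i$, equivalent to $\lVert (I - \Pi_B) \ket{v_i} \rVert_2 \leq \sqrt{\epsilon}$, because $\Pi_B$ is an orthogonal projector and $\ket{v_i}$ is a unit vector.

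First I would apply the triangle inequality to the identity $(I - \Pi_B)\ket{v} = \sum_{i=1}^t c_i (I - \Pi_B)\ket{v_i}$, obtaining $\lVert (I - \Pi_B)\ket{v} \rVert_2 \leq \sum_{i=1}^t |c_i| \, \lVert (I - \Pi_B)\ket{v_i} \rVert_2 \leq \sqrt{\epsilon} \sum_{i=1}^t |c_i|$. Then I would bound $\sum_{i=1}^t |c_i| \leq \sqrt{t} \, \bigl(\sum_{i=1}^t |c_i|^2\bigr)^{1/2} = \sqrt{t}$ by Cauchy--Schwarz, so that $\lVert (I - \Pi_B)\ket{v} \rVert_2 \leq \sqrt{t\epsilon}$.

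Finally, since $\Pi_B \ket{v}$ and $(I - \Pi_B)\ket{v}$ are orthogonal and sum to the unit vector $\ket{v}$, the Pythagorean identity gives $\lVert \Pi_B \ket{v} \rVert_2^2 = 1 - \lVert (I - \Pi_B)\ket{v} \rVert_2^2 \geq 1 - t\epsilon$. As $0 < \epsilon < 1$ we have $\epsilon \leq \sqrt{\epsilon} \leq 8\sqrt{\epsilon}$, hence $\lVert \Pi_B \ket{v} \rVert_2^2 \geq 1 - t\epsilon \geq 1 - 8 t \sqrt{\epsilon}$, which is in fact slightly stronger than claimed. There is no genuine obstacle in this argument; the only thing to watch is that the weaker power of $\epsilon$ appearing in the statement is harmless, so one may simply record the cleaner bound $1 - t\epsilon$ on the way through. (The extra slack in the stated constant presumably just matches what is convenient downstream when this lemma is invoked.)
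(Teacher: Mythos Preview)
Your argument is correct, and in fact yields the sharper bound $\lVert \Pi_B \ket{v} \rVert_2^2 \geq 1 - t\epsilon$, from which the stated $1 - 8t\sqrt{\epsilon}$ follows trivially since $\epsilon \leq \sqrt{\epsilon}$ for $0 < \epsilon < 1$.

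The paper takes a different and noticeably heavier route. It forms the $t \times t$ Hermitian matrix $M$ with entries $M_{ij} := \bra{v_i} \Pi_B \ket{v_j}$, so that $\lVert \Pi_B \ket{v} \rVert_2^2 = \alpha^\dag M \alpha$ where $\alpha$ is the coefficient vector of $\ket{v}$. The diagonal entries satisfy $M_{ii} \geq 1-\epsilon$ by hypothesis; for the off-diagonals the paper invokes the gentle measurement lemma to get $\lVert \Pi_B \ket{v_i} - \ket{v_i} \rVert_2 \leq 2\sqrt{\epsilon}$, then uses triangle inequalities on $\lVert \Pi_B \ket{v_i} - \Pi_B \ket{v_j} \rVert_2$ and $\lVert \Pi_B \ket{v_i} - \sqrt{-1}\,\Pi_B \ket{v_j} \rVert_2$ to bound $|M_{ij}| \leq 8\sqrt{\epsilon}$, and finally applies Gershgorin's circle theorem to conclude that the smallest eigenvalue of $M$ is at least $1 - \epsilon - 8(t-1)\sqrt{\epsilon} \geq 1 - 8t\sqrt{\epsilon}$. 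Your approach bypasses all of this by working directly with $(I-\Pi_B)\ket{v}$ and a single application of the triangle inequality plus Cauchy--Schwarz; it is shorter, uses no auxiliary facts, and produces a bound that is quadratically better in $\epsilon$. The paper's eigenvalue approach gains nothing here---the loss to $\sqrt{\epsilon}$ comes from the gentle-measurement step, which your argument simply does not need.
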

\begin{proof}
Let $\ket{v} = \sum_{i=1}^t \alpha_i \ket{v_i}$ where 
$\sum_{i=1}^t |\alpha_i|^2 = 1$. 
Define the column $t$-tuple $\alpha := (\alpha_1, \ldots, \alpha_t)^T$,
and the $t \times t$-matrix $M$ with 
$M_{ij} := \bra{v_i} \Pi_B \ket{v_j}$. Note that $M$ is Hermitian.
Then,
\[
\lVert \Pi_B \ket{v} \rVert_2^2 =
\sum_{i,j=1}^t \alpha_i^* \alpha_j \bra{v_i} \Pi_B \ket{v_j} =
\alpha^\dag M \alpha.
\]
We have $M_{ii} \geq 1-\epsilon$. 
For $i \neq j$, we use triangle inequality and 
Fact~\ref{fact:gentle} 
to obtain
\[
\lVert \Pi_B \ket{v_i} - \Pi_B \ket{v_j} \rVert_2 \geq
\lVert \ket{v_i} - \ket{v_j} \rVert_2 -
\lVert \Pi_B \ket{v_i} - \ket{v_i} \rVert_2 -
\lVert \Pi_B \ket{v_j} - \ket{v_j} \rVert_2 \geq
\sqrt{2} - 4 \sqrt{\epsilon},
\]
which implies that
$
2 - 8 \sqrt{2\epsilon} \leq
\lVert \Pi_B \ket{v_i} - \Pi_B \ket{v_j} \rVert_2^2 \leq
2 - 2 \Re(M_{ij}),
$
which further implies that 
$\Re(M_{ij}) \leq 4 \sqrt{2\epsilon}$.
Arguing similarly with 
$
\lVert \Pi_B \ket{v_i} - \sqrt{-1} \cdot \Pi_B \ket{v_j} \rVert_2,
$
we conclude that
$\Im(M_{ij}) \leq 4 \sqrt{2\epsilon}$. Thus,
$|M_{ij}| \leq 8 \sqrt{\epsilon}$.
By Gershgorin's theorem, the smallest eigenvalue of $M$ is
larger than 
$
1 - \epsilon - 8 (t-1) \sqrt{\epsilon} \geq
1 - 8 t \sqrt{\epsilon}.
$
So
$
\lVert \Pi_B \ket{v} \rVert_2^2 =
\alpha^\dag M \alpha \geq
1 - 8 t \sqrt{\epsilon},
$
completing the proof of the lemma.
\end{proof}

\begin{proposition}
\label{prop:H_max_iid}
Suppose we have a density matrix $\omega$ on the system $B$.
Let $0 < \epsilon, \delta < 1/3$ and  
define 
$
q_{\mathrm{min}} := 2^{-\mathbb{H}_{\max}^{\epsilon/2}(B)_\omega}.
$
Let 
$
n \geq 4 q_{\mathrm{min}}^{-1} \delta^{-2} \log(|B|/\epsilon).
$
Consider the $n$-fold tensor power 
$\omega^{B^n} := (\omega^{B})^{\otimes n}$.
Then,
$
n (1-\delta) H(B)_\omega \leq
\mathbb{H}_{\max}^{\epsilon}(B^n)_\omega \leq
n (1+\delta) H(B)_\omega.
$
\end{proposition}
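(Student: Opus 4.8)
The plan is to reduce the whole statement to the classical asymptotic equipartition property, Fact~\ref{fact:classical_AEP}, applied to the eigenvalue distribution of $\omega^B$. First I would diagonalise $\omega^B = \sum_x q(x) \ket{x}\bra{x}$, so that the eigenvalues of $\omega^{B^n} = (\omega^B)^{\otimes n}$ are exactly the products $q^n(x^n) = \prod_{i=1}^n q(x_i)$ indexed by sequences $x^n$ over the eigenbasis. Next I would unwind Definition~\ref{def:Hmaxprime} in this concrete setting: listing the eigenvalues of $\omega^{B^n}$ in increasing order $v_1 \le v_2 \le \cdots$ and letting $m$ be the largest index with $v_1 + \cdots + v_m \le \epsilon$, the matrix $(\omega''_\epsilon)^{B^n}$ is the one with $v_1,\dots,v_m$ zeroed out, so $(\Hmax')^\epsilon(B^n)_\omega = -\log v_{m+1}$. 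It therefore suffices to sandwich the smallest surviving eigenvalue $v_{m+1}$ between $2^{-n(1+\delta)H(B)_\omega}$ and $2^{-n(1-\delta)H(B)_\omega}$.

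To do this I would apply Fact~\ref{fact:classical_AEP} to the probability distribution $q$ on the eigenbasis of $\omega^B$, with error parameter $\epsilon$ and typicality parameter $\delta$; note that the quantity $p_{\mathrm{min}}$ appearing there equals our $q_{\mathrm{min}} = 2^{-(\Hmax')^{\epsilon/2}(B)_\omega}$, and the hypothesis $n \ge 4 q_{\mathrm{min}}^{-1}\delta^{-2}\log(|B|/\epsilon)$ is exactly the one required by the fact. This produces a strongly $\delta$-typical set $T$ of length-$n$ sequences with $\sum_{x^n \in T} q^n(x^n) \ge 1-\epsilon$ and $2^{-n(1+\delta)H(B)_\omega} \le q^n(x^n) \le 2^{-n(1-\delta)H(B)_\omega}$ for every $x^n \in T$; in particular $\sum_{x^n \notin T} q^n(x^n) \le \epsilon$.

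For the upper bound I would note that every sequence whose eigenvalue is strictly below $2^{-n(1+\delta)H(B)_\omega}$ must lie outside $T$, so the total weight of all such eigenvalues is at most $\epsilon$; hence the greedy zeroing procedure removes every one of them (and possibly more), which forces $v_{m+1} \ge 2^{-n(1+\delta)H(B)_\omega}$ and therefore $(\Hmax')^\epsilon(B^n)_\omega \le n(1+\delta)H(B)_\omega$. For the lower bound I would observe that every typical sequence has eigenvalue at most $2^{-n(1-\delta)H(B)_\omega}$, so the collection of eigenvalues that are $\le 2^{-n(1-\delta)H(B)_\omega}$ carries weight at least $1-\epsilon$, which exceeds $\epsilon$ since $\epsilon < 1/3$. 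The greedy procedure can thus zero only a proper sub-collection of them, so $v_{m+1}$ is still one of these eigenvalues, giving $v_{m+1} \le 2^{-n(1-\delta)H(B)_\omega}$ and hence $(\Hmax')^\epsilon(B^n)_\omega \ge n(1-\delta)H(B)_\omega$.

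The only routine bookkeeping is in making the ``greedy zeroing'' argument fully precise — dealing with ties among eigenvalues and with the strict-versus-weak inequalities at the thresholds $2^{-n(1\pm\delta)H(B)_\omega}$ — and in checking that the hypotheses of Fact~\ref{fact:classical_AEP} transfer verbatim; I expect no genuine obstacle, since all the analytic content is already packaged in the classical AEP. (Equivalently one could invoke the quantum AEP, Fact~\ref{fact:quantum_AEP}, applied directly to $\omega^B$ and its typical projector, with identical effect.)
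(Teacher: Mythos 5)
Your proposal is correct and follows essentially the same argument as the paper's proof: both identify $(\Hmax')^\epsilon(B^n)_\omega$ with the negative log of the smallest surviving eigenvalue after greedy zeroing, then sandwich that eigenvalue using the AEP (the paper invokes the quantum AEP, Fact~\ref{fact:quantum_AEP}, directly on $\omega^{\otimes n}$; you invoke the classical AEP on the eigenvalue distribution, which is equivalent for a tensor-power state, as you note). The observation that eigenvalues below $2^{-n(1+\delta)H(B)_\omega}$ have total weight $\le \epsilon$ while eigenvalues at most $2^{-n(1-\delta)H(B)_\omega}$ have total weight $\ge 1-\epsilon > \epsilon$ is exactly the paper's reasoning.
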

\begin{proof}
Consider the eigenvalues of $\omega^{B^n}$ that are not
strongly $\delta$-typical; call them atypical.
By Fact~\ref{fact:quantum_AEP}, the atypical eigenvalues sum to 
less than or equal to $\epsilon$ and the smallest typical eigenvalue
is at least $2^{-n (1+\delta) H(B)_\omega}$. Hence the eigenvalues
less than $2^{-n (1+\delta) H(B)_\omega}$ add up to less than or equal
to $\epsilon$. Moreover, the eigenvalues 
less than $2^{-n (1-\delta) H(B)_\omega}$ add up to more than
$1 - \epsilon > \epsilon$.
This completes the proof the proposition.
\end{proof}

\begin{proposition}
\label{prop:H_2_iid}
Suppose we have a density matrix $\omega$ on the system $AB$.
Let $\ket{w_j}^{AB}$, $j \in [|A| |B|]$ be the eigenvectors of
$\omega^{AB}$ with eigenvalues $q_j$. 
For $j \in [|A| |B|]$, define 
$\theta_j^B := \Tr_A [\ket{w_j}^{AB}\bra{w_j}]$.
Let $p_j := \{p_j(i)\}_{\{ i \in [|B|]\} }$, 
$j \in [|A||B|]$ be the probability
distribution on $[|B|]$ obtained by measuring 
$\theta_j$ in the eigenbasis of $\omega^{B}$. 
Let $0 < \epsilon, \delta < 1/3$. 
Define 
$
q_{\mathrm{min}} := 
2^{-\mathbb{H}_{\max}^{\epsilon/2}(AB)_\omega},
$ 
$
p_{\mathrm{min}} := 
\min_{j \in [|A| |B|]}
2^{-\mathbb{H}_{\max}^{\epsilon/2}([|B|])_{p_j}},
$ 
Let 
$
n := 2^5 q^{-1}_{\mathrm{min}} p^{-1}_{\mathrm{min}} \delta^{-2} 
\log(|A| |B|/\epsilon).
$
Consider the $n$-fold tensor power 
$\omega^{A^n B^n} := (\omega^{AB})^{\otimes n}$.
Let $\epsilon' := 8 (n+|A||B|)^{|A||B|} \epsilon^{1/4}$.
Then,
\begin{eqnarray*}
n H(A|B)_\omega + 32 n \sqrt{\epsilon'} \log |A| +
\log (\epsilon')^{-1}
 \geq 
H_2^{4\sqrt{\epsilon'}}(A^n|B^n)_\omega \geq
(H_2')^{\epsilon',5\delta}(A^n|B^n)_\omega ~~~&\\
\geq
n H(A|B)_\omega - 
n \delta (3 H(AB)_\omega + 7 H(B)_\omega).~~~~~~~~&
\end{eqnarray*}
\end{proposition}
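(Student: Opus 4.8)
The plan is to prove the three inequalities separately. The upper bound will follow immediately from Fact~\ref{fact:Renyi2upperbound} applied to the tensor power state $\omega^{A^nB^n}$: additivity gives $H(A^n|B^n)_\omega = nH(A|B)_\omega$ and $\log|A^n| = n\log|A|$, and the constant simplifies since $2 + 2\log(4\sqrt{\epsilon'})^{-1} = -2 + \log(\epsilon')^{-1} \le \log(\epsilon')^{-1}$. The middle inequality is nothing but the general bound $H_2^{4\sqrt{\epsilon}}(A|B)_\omega \ge (H_2')^{\epsilon,\delta}(A|B)_\omega$ recorded just below Definition~\ref{def:Renyi2prime}, specialised to $\omega^{A^nB^n}$ with smoothing parameters $\epsilon'$ and $5\delta$.

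For the lower bound, which is the substantive part, I would exhibit a single feasible point $\eta^{A^nB^n}$ in the optimisation of Definition~\ref{def:Renyi2prime} whose weighted Schatten $2$-norm is at most $2^{-\frac12(nH(A|B)_\omega - n\delta(H(AB)_\omega + 7H(B)_\omega))}$; taking $-2\log$ then beats the stated target exponent by $2n\delta H(AB)_\omega\ge 0$, so the proposition follows. The candidate I would use is $\eta^{A^nB^n} := \Pi_{\hat V}\,\omega^{A^nB^n}\,\Pi_{\hat V}$, where $\hat V := \bigoplus_{\tau}\hat V_\tau$, the sum running over strongly $\delta$-typical types $\tau$ of $\omega^{A^nB^n}$ and $\hat V_\tau \le V_\tau$ being the large subspace of the type subspace $V_\tau$ supplied by Lemma~\ref{lem:projection2}. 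Because $\omega^{A^nB^n}$ acts as the common eigenvalue $q_\tau$ of the type subspace $V_\tau$ on $V_\tau$ and the $V_\tau$ are mutually orthogonal, $\eta^{A^nB^n} = \sum_\tau q_\tau\,\Pi_{\hat V_\tau}$, so $0\le\eta^{A^nB^n}\le\omega^{A^nB^n}$ is automatic. The trace constraint will then follow by splitting $\omega^{A^nB^n}-\eta^{A^nB^n}$ into the atypical part of $\omega^{A^nB^n}$, of trace at most $\epsilon$ by the quantum AEP (Fact~\ref{fact:quantum_AEP}), plus $\sum_\tau q_\tau(\Pi_{V_\tau}-\Pi_{\hat V_\tau})$, of trace at most $\sqrt\epsilon\sum_\tau q_\tau\dim V_\tau \le \sqrt\epsilon$ by Lemma~\ref{lem:projection2}, giving $\|\omega^{A^nB^n}-\eta^{A^nB^n}\|_1\le\epsilon+\sqrt\epsilon\le\epsilon'$.

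The delicate constraint is the support/projection one. Setting $\sigma^{B^n} := \Pi^{B^n}_{\omega,3\delta}\,\omega^{B^n}\,\Pi^{B^n}_{\omega,3\delta}$, Lemma~\ref{lem:projection2} tells us that every unit vector of each $\hat V_\tau$ has $\|(I^{A^n}\otimes\Pi^{B^n}_\sigma)\ket v\|_2^2\ge 1-\sqrt\epsilon$. A unit vector of $\hat V$ is an orthonormal superposition of such vectors over the at most $\binom{n+|A||B|-1}{|A||B|-1}\le(n+|A||B|)^{|A||B|}$ types, so Lemma~\ref{lem:projection3} will upgrade the estimate to $\|(I^{A^n}\otimes\Pi^{B^n}_\sigma)\ket v\|_2^2\ge 1-8(n+|A||B|)^{|A||B|}\epsilon^{1/4} = 1-\epsilon'$ --- which is precisely why $\epsilon'$ is defined the way it is. It then remains to replace $\Pi^{B^n}_\sigma = \Pi^{B^n}_{\omega,3\delta}$ by $\Pi^{B^n}_{\mathrm{supp}(\omega'''_{\epsilon',5\delta})}$: a $3\delta$-typical eigenvalue of $\omega^{B^n}$ is at least $2^{-n(1+3\delta)H(B)_\omega}$, the truncation threshold in Definition~\ref{def:Renyi2prime} is $2^{-(1+5\delta)(\Hmax')^{\epsilon'}(B^n)_\omega}$, and Proposition~\ref{prop:H_max_iid} gives $(\Hmax')^{\epsilon'}(B^n)_\omega\ge n(1-\delta)H(B)_\omega$; since $(1+3\delta)\le(1+5\delta)(1-\delta)$ for $\delta$ in the stated range, we conclude $\Pi^{B^n}_{\omega,3\delta}\le\Pi^{B^n}_{\mathrm{supp}(\omega'''_{\epsilon',5\delta})}$ and the constraint holds.

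Finally I would estimate the objective. With $\Omega := (\omega'''_{\epsilon',5\delta})^{B^n}$, submultiplicativity of Schatten norms yields $\|(I^{A^n}\otimes\Omega)^{-1/4}\,\eta^{A^nB^n}\,(I^{A^n}\otimes\Omega)^{-1/4}\|_2 \le \|\Omega^{-1}\|_\infty^{1/2}\,\|\eta^{A^nB^n}\|_2$. The smallest nonzero eigenvalue of $\Omega$ is at least $2^{-(1+5\delta)(\Hmax')^{\epsilon'}(B^n)_\omega}$, and Proposition~\ref{prop:H_max_iid} bounds $(\Hmax')^{\epsilon'}(B^n)_\omega\le n(1+\delta)H(B)_\omega$, so $\|\Omega^{-1}\|_\infty\le 2^{n(1+7\delta)H(B)_\omega}$. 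Meanwhile $\|\eta^{A^nB^n}\|_2^2 = \sum_\tau q_\tau^2\dim\hat V_\tau \le (\max_\tau q_\tau)\sum_\tau q_\tau\dim V_\tau \le 2^{-n(1-\delta)H(AB)_\omega}$, using that typical eigenvalues of $\omega^{A^nB^n}$ are at most $2^{-n(1-\delta)H(AB)_\omega}$ (Fact~\ref{fact:quantum_AEP}) and $\sum_\tau q_\tau\dim V_\tau\le1$. Multiplying, the objective is at most $2^{\frac n2(1+7\delta)H(B)_\omega-\frac n2(1-\delta)H(AB)_\omega} = 2^{-\frac n2(H(A|B)_\omega-\delta(H(AB)_\omega+7H(B)_\omega))}$, and $-2\log$ of this feasible value gives the desired lower bound on $(H_2')^{\epsilon',5\delta}(A^n|B^n)_\omega$ after absorbing the surplus into the coefficient of $H(AB)_\omega$. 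I expect the hardest part of the argument to be the combinatorial bookkeeping in the support/projection constraint --- pushing the single-type estimate of Lemma~\ref{lem:projection2} through the direct sum over all typical types via Lemma~\ref{lem:projection3} so that the loss is only polynomial in $n$ (as encoded in $\epsilon'$), and simultaneously matching the $3\delta$-typical projector of $\omega^{B^n}$ against the eigenvalue truncation defining $\omega'''_{\epsilon',5\delta}$; once a feasible $\eta^{A^nB^n}$ is in hand, the norm estimate is short.
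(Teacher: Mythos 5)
Your proposal is correct and follows essentially the same route as the paper: the same feasible $\eta^{A^nB^n}$ built from the truncated typical type subspaces via Lemmas~\ref{lem:projection1}--\ref{lem:projection3}, the same comparison of the $3\delta$-typical projector on $B^n$ against the $\omega'''_{\epsilon',5\delta}$ truncation via Proposition~\ref{prop:H_max_iid}, and the same split of the final objective into $\lVert\Omega^{-1}\rVert_\infty^{1/2}\lVert\eta\rVert_2$. The one small deviation is in bounding $\lVert\eta\rVert_2$: you use $\max_\tau q_\tau$ times the trace to get $2^{-\frac n2(1-\delta)H(AB)_\omega}$, whereas the paper factors through $\lVert\Pi^{A^nB^n}_{\omega,\delta}\omega^{A^nB^n}\Pi^{A^nB^n}_{\omega,\delta}\rVert_2\le 2^{-\frac n2(1-3\delta)H(AB)_\omega}$; your estimate is marginally tighter, which is why your $\delta$-correction has coefficient $1$ on $H(AB)_\omega$ instead of the stated $3$, and you then slacken to match the statement, so the result is the same.
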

\begin{proof}
For a type $\tau$ of $\omega^{A^n B^n}$, define
$p_\tau := \Tr [\Pi_{V_\tau} \omega^{A^n B^n}]$.
By Fact~\ref{fact:quantum_AEP},
\[
\omega^{A^n B^n} =
\bigoplus_\tau \Pi_{V_\tau} \omega^{A^n B^n} \Pi_{V_\tau} =
\bigoplus_\tau p_\tau \frac{\Pi_{V_\tau}}{|V_\tau|},
\]
where the direct sum is over all types $\tau$.
Now define
$
\eta^{A^n B^n} :=
\bigoplus\limits_{\tau: \mathrm{typical}} 
p_\tau \frac{\Pi_{\hat{V}_\tau}}{|V_\tau|},
$
where the sum is only over strongly $\delta$-typical types $\tau$.
By Fact~\ref{fact:quantum_AEP} and Lemma~\ref{lem:projection2},
\[
\eta^{A^n B^n} \leq \omega^{A^n B^n},
~~~
\lVert \eta^{A^n B^n} - \omega^{A^n B^n} \rVert_1 \leq 2 \sqrt{\epsilon}.
\]

Let 
$
\sigma^{B^n} := 
\Pi^{B^n}_{\omega, 3\delta}
\omega^{B^n}
\Pi^{B^n}_{\omega, 3\delta}.
$
By Fact~\ref{fact:quantum_AEP}, we have
\[
\sigma^{B^n} \leq \omega^{B^n},
~~
\lVert \omega^{B^n} - \sigma^{B^n} \rVert_1 \leq \epsilon,
~~
\lVert (\sigma^{B^n})^{-1} \rVert_\infty \leq
2^{n H(B)_\omega (1+3\delta)}.
\]

From Lemma~\ref{lem:projection2},
we already know that for any strongly $\delta$-typical type $\tau$,
for any vector 
$\ket{w_\tau} \in \hat{V}_\tau$, 
$
\lVert (I^{A^n} \otimes \Pi^{B^n}_{\omega, 3\delta}) \ket{w_\tau} \rVert_2^2 \geq
1 - \sqrt{\epsilon}.
$
We now have to show a similar result for an arbitrary linear combination
of vectors $\ket{w_\tau}$ over all strongly $\delta$-typical types $\tau$.
For this we invoke Lemma~\ref{lem:projection3} and 
Fact~\ref{fact:quantum_AEP}.
We thus conclude that for any vector
$\ket{v} \in \mathrm{supp}(\eta^{A^n B^n})$,
\[
\lVert (I^{A^n} \otimes \Pi^{B^n}_{\omega, 3\delta}) \ket{v} \rVert_2 \geq
1 - 8 {n + |A||B| - 1 \choose |A||B| - 1} \epsilon^{1/4} \geq
1 - \epsilon'.
\]

By Fact~\ref{fact:quantum_AEP}, the smallest non-zero eigenvalue
of $(\omega'''_\delta)^{B^n}$ (which is defined by setting $\epsilon=0$ in $\omega'''_{\epsilon, \delta}$ from Definition~\ref{def:Renyi2prime}) is smaller than 
$2^{-nH(B)_\omega(1-\delta)}$. Again invoking Fact~\ref{fact:quantum_AEP},
we conclude that 
$
\mathrm{supp}((\omega'''_{\epsilon,5\delta})^{B^n}) \geq
\mathrm{supp}(\sigma^{B^n}).
$
Thus, for any vector
$\ket{v} \in \mathrm{supp}(\eta^{A^n B^n})$,
$
\lVert 
(I^{A^n} \otimes \Pi^{B^n}_{\omega'''_{\epsilon,5\delta}}) \ket{v} 
\rVert_2 \geq
1 - \epsilon'.
$
Moreover, by Proposition~\ref{prop:H_max_iid}
\begin{align*}
\log \lVert (\omega'''_{\epsilon,5\delta})^{B^n})^{-1} \rVert_\infty =
(1+5\delta) \log \lVert ((\omega''_\epsilon)^{B^n})^{-1} \rVert_\infty &=
(1+5\delta) \mathbb{H}_{\max}^\epsilon(B^n)_\omega\\
& \leq
n (1+7\delta) H(B)_\omega.
\end{align*}

Again using Fact~\ref{fact:quantum_AEP}, we get
\begin{align*}
\lefteqn{\mathbb{H}_2^{\epsilon',5\delta}(A^n|B^n)_\omega} \\
& \geq 
-2\log 
\left\lVert
\left(I^{A^n} \otimes (\omega'''_{\epsilon,5\delta})^{B^n}\right)^{-1/4} 
\eta^{A^n B^n} 
\left(I^{A^n} \otimes (\omega'''_{\epsilon,5\delta})^{B^n}\right)^{-1/4} 
\right\rVert_2 \\
& \geq
-2\log (
\left\lVert
\left(I^{A^n} \otimes (\omega'''_{\epsilon,5\delta})^{B^n}\right)^{-1/4} 
\rVert_\infty^2 
\cdot
\lVert
\eta^{A^n B^n} 
\right\rVert_2
) \\
&   =  
-2\log 
\left(
\left\lVert \left(\omega'''_{\epsilon,5\delta})^{B^n}\right)^{-1} 
\right\rVert_\infty^{1/2}
\cdot
\left(
\sum_{\tau: \mathrm{typical}} \frac{p_\tau^2 |\hat{V}_\tau|}{|V_\tau|^2}
\right)^{1/2}
\right) \\
&  \geq 
-\log \left\lVert \left(\omega'''_{\epsilon,5\delta})^{B^n}\right)^{-1} 
\right\rVert_\infty
-2 \log
\left(
\sum_{\tau: \mathrm{typical}} \frac{p_\tau^2}{|V_\tau|}
\right)^{1/2} \\
&   =  
-\log \lVert (\omega'''_{\epsilon,5\delta})^{B^n})^{-1} \rVert_\infty
-2 \log 
\lVert
\Pi^{A^n B^n}_{\omega,\delta}
\omega^{A^n B^n}
\Pi^{A^n B^n}_{\omega,\delta}
\rVert_2 \\
& \geq 
- n H(B)_\omega (1+7\delta)
-2 \log (
2^{-n H(AB)_\omega (1-\delta)} \cdot
\lVert
\Pi^{A^n B^n}_{\omega,\delta}
\rVert_2
) \\
&   =  
- n H(B)_\omega (1+7\delta)
-2 \log (
2^{-n H(AB)_\omega (1-\delta)} \cdot
(\Tr \Pi^{A^n B^n}_{\omega,\delta})^{1/2}
) \\
& \geq 
- n H(B)_\omega (1+7\delta)
-2 \log 2^{-\frac{n}{2} H(AB)_\omega (1-3\delta)} \\
& \geq 
n H(A|B)_\omega -
n \delta (3 H(AB)_\omega + 7H(B)).
\end{align*}
Combining with Fact~\ref{fact:Renyi2upperbound} 
completes the proof of the proposition.
\end{proof}

\begin{remark}
\label{rem:modified_2entropy}
The importance of using smooth one shot entropic  quantities has been
remarked in many earlier works. In particular, it is only the smooth
quantities that approach their natural Shannon entropic counterparts
in asymptotic iid limit, not their unsmoothed version.
$H_2^\epsilon(\cdot|\cdot)$ is also shown to be bounded by its natural
Shannon counterpart in the asymptotic iid limit. However we would like
to emphasise here the importance of the additional smoothing parameter
$\delta$, which is crucial in this work.

We note that the smooth decoupling theorem was first proved by Szehr
et al. \cite{Smooth_decoupling} in terms of the conditional 
$H_{\min}^\epsilon$ quantity. Actually their proof
works without any change for the conditional $H_2^\epsilon$ quantity 
also, which is
a slightly stronger statement. However an important step in their
proof was to replace the original superoperator in the statement of
the decoupling theorem by another superoperator whose Choi matrix is a
smoothed version of the Choi matrix of the original superoperator.
They showed that on average over the Haar random unitary $U$ applied to
the input state, the new superoperator gives an output state close to
the output state of the original superoperator. However this does not
mean that for every unitary $U$, the output of the new superoperator is
close to the output of the original superoperator. However, in order
to prove a concentration result for decoupling, we need to bound a
certain Lipschitz constant. By its very definition the Lipschitz
constant should work for {\em all} unitaries not just for average Haar
random ones. This stringent requirement leads us to the second
smoothing
parameter $\delta$. Existence of this $\delta$ implies that for every
unitary U, the output state of the new superoperator is $\delta$-close
in trace distance to the output state of the original superoperator.
The parameter $\delta$ is the one shot analogue of the (non-trivial)
observation that there is a large subspace of the jointly typical
subspace of $(\rho^{A B})^{\otimes n}$ such that every vector in this
subspace has projection length of at most $1 - \delta$ on the tensor
product of the marginal typical subspaces on $(\rho^A)^{\otimes n}$
and $(\rho^B)^{\otimes n}$.
\end{remark}

\begin{remark}
Consider fixed systems $A$, $B$ and a fixed state 
$\omega^{AB}$. For a  fixed $\delta$, divide the smooth modified 
conditional R\'{e}nyi $2$-entropy, the smooth conditional
R\'{e}nyi $2$-entropy and the smooth modified max-entropy
by $n$ and let $\epsilon \rightarrow 0$. This implies that 
$n \rightarrow \infty$ and
$\epsilon' \rightarrow 0$. Finally, let $\delta \rightarrow 0$. This
shows that in the asymptotic iid limit, the smooth conditional
R\'{e}nyi $2$-entropy divided by $n$ is lower bounded by the 
smooth modified conditional
R\'{e}nyi $2$-entropy divided by $n$ which is further lower bounded by 
the conditional
Shannon entropy, and the smooth  modified
max-entropy divided by $n$ is upper bounded by the 
Shannon entropy.
\end{remark}

\subsection{Proof of the iid extension of Theorem~\ref{thm:main}}
\label{sec:decoupling_iid}
In this section we take our main one-shot concentration result
and apply it in the asymptotic iid
setting.  That is, we take the $n$-fold tensor product copy of the channel
$\cT$ and the state $\rho^{AR}$, apply Theorem~\ref{thm:main} to it,
and obtain bounds in terms of the standard Shannon entropies.
\begin{corollary}
\label{cor:main}
Consider the setting of Theorem~\ref{thm:main} above.
Consider the density matrix $\omega^{A'B}$.
Let $\ket{w_j}^{A'B}$, $j \in [|A| |B|]$ be the eigenvectors of
$\omega^{A'B}$ with eigenvalues $q_j$. 
For $j \in [|A| |B|]$, define 
$\theta_j^B := \Tr_{A'} [\ket{w_j}\bra{w_j}^{AB}]$.
Let $p_j$, $j \in [|A||B|]$ be the probability
distribution on $[|B|]$ obtained by measuring 
$\theta_j$ in the eigenbasis of $\omega^{B}$. 
Define 
$
q_{\mathrm{min}} := 
2^{-\mathbb{H}_{\max}^{\epsilon/2}(A'B)_\omega},
$ 
$
p_{\mathrm{min}} := 
\min_{j \in [|A| |B|]}
2^{-\mathbb{H}_{\max}^{\epsilon/2}([|B|])_{p_j}},
$ 
$
n := 2^5 q^{-1}_{\mathrm{min}} p^{-1}_{\mathrm{min}} \delta^{-2} 
\log\frac{|A| |B|}{\epsilon}.
$
Consider the $n$-fold tensor powers 
$\omega^{(A')^n B^n} := (\omega^{A'B})^{\otimes n}$,
$\rho^{A^n R^n} := (\rho^{AR})^{\otimes n}$.
Let $\epsilon' := 8 (n+|A||B|)^{|A||B|} \epsilon^{1/4}$.
Let $\kappa > 0$.
Then,
\begin{align*}
\underset{U}{\prob}\bigg[
f(U) >
2^{
-\frac{n}{2} (H(A|R)_\rho - \delta(3 H(AR)_\rho + 7 H(R)_\rho))
-\frac{n}{2} (H(A'|B)_\omega - \delta(3 H(A'B)_\omega + 7 H(B)_\omega))
}\\ 
+ 28 (\epsilon')^{1/4} + 2 \kappa
\bigg] 
\leq 7 \cdot 2^{-a \kappa^2},~~~~~~~&
\\\hspace{0.9\textwidth}
\end{align*}
which can also be expressed as:

\begin{align*}
\underset{U}{\prob}\bigg[
f(U) >
2^{
-\frac{n}{2} (H(A|R)_\rho - \delta(3 H(AR)_\rho + 7 H(R)_\rho))
-\frac{n}{2} (H(A'|B)_\omega - \delta(3 H(A'B)_\omega + 7 H(B)_\omega))
}\\ 
+ 28 (\epsilon')^{1/4} + \sqrt{t/2} \, \beta
\bigg] 
\leq 7 \cdot 2^{-a \kappa^2},~~~~~~~&
\\\hspace{0.9\textwidth}
\end{align*}

where the unitary $U^{A^n}$ is chosen uniformly at random from a 
$(|A|^n, s, \lambda, t)$-qTPE, and the parameters $a$, $t$, $\alpha$, 
$\beta$ are defined as
\[
\begin{array}{c}
a := |A|^n \cdot
2^{
n (H(A|R)_\rho - \delta(3 H(AR)_\rho + 7 H(R)_\rho))
- n H(B)_\omega (1+7\delta) - 9
},\\
t :=
|A|^n \kappa^2
2^{
n(H(A|R)_\rho + 32 \sqrt{\epsilon'}) + \log (\epsilon')^{-1}
- n H(B)_\omega (1-5\delta) - 6
},\\
\beta := \sqrt{1/a}, \\
\end{array}
\] 
and $\mu$, $\lambda$ are defined in Theorem~\ref{thm:main} above.
\end{corollary} 
\begin{proof}
The proof follows by a direct application of 
Theorem~\ref{thm:main} and 
Propositions~\ref{prop:H_max_iid} and \ref{prop:H_2_iid}.
We just need to keep in mind that the `weighting state' in the definition
of $\mathbb{H}_2^{\epsilon', 5\delta}((A')^n|B^n)_{\omega^{\otimes n}}$ 
is $(\omega'''_{\epsilon,5\delta})^{B^n}$ as in the proof of
Proposition~\ref{prop:H_2_iid}, and the function $g(U)$ is defined
with respect to the perturbed Choi-Jamio{\l}kowski state 
$\eta^{(A')^n B^n}$ defined in Proposition~\ref{prop:H_2_iid}  
and a perturbed input state $(\rho')^{A^n R^n}$ which is
$4\sqrt{\epsilon'}$-close to the original input state
$\rho^{A^n R^n}$.  We get
\begin{eqnarray*}
\mu 
& \leq &
2^{
-\frac{1}{2} \mathbb{H}_2^{\epsilon', 5\delta}(
A^{\prime\; n}|B^n)_{\omega^{\otimes n}} -
\frac{1}{2} H_2^{4 \sqrt{\epsilon'}}(A^n|R^n)_{\rho^{\otimes n}}
} \\
& \leq &
2^{
-\frac{n}{2} (H(A|R)_\rho - \delta(3 H(AR)_\rho + 7 H(R)_\rho))
-\frac{n}{2} (H(A'|B)_\omega - \delta(3 H(A'B)_\omega + 7 H(B)_\omega))
}, \\ 
a 
& = & 
|A|^n 
2^{
H_2^{4 \sqrt{\epsilon'}}(A^n|R^n)_{\rho^{\otimes n}} - 
(1+\delta)\mathbb{H}_{\max}^{\epsilon}(B^n)_{\omega^{\otimes n}} - 9
}\\
& \geq & 
|A|^n 
2^{
n (H(A|R)_\rho - \delta(3 H(AR)_\rho + 7 H(R)_\rho)) - 
n H(B)_\omega (1+7\delta) - 9
}, \\ 
t 
& = & 
|A|^n \kappa^2
2^{
H_2^{4 \sqrt{\epsilon'}}(A^n|R^n)_{\rho^{\otimes n}} - 
\mathbb{H}_{\max}^{\epsilon}(B^n)_{\omega^{\otimes n}} - 6
}\\
& \leq & 
|A|^n \kappa^2
2^{
n (H(A|R)_\rho + 32 \sqrt{\epsilon'}) + \log (\epsilon')^{-1}  - 
n H(B)_\omega (1-5\delta) - 6
}.
\end{eqnarray*}
Substituting the above expressions in Theorem~\ref{thm:main} proves 
the desired corollary.
\end{proof}
\section{Decoupling under partial trace (also referred to as the 
Fully Quantum Slepian Wolf (FQSW) \cite{mother_protocol})}
\label{sec:fqsw}
In this section, we prove the concentration result for the 
Fully Quantum Slepian Wolf (FQSW) problem with respect to unitary 
designs.
\begin{theorem}
\label{thm:fqsw}
Consider the setting of Theorem~\ref{result:main}. Consider the FQSW
decoupling function
\[
f(U) = f_{FQSW}(U^{A_1 A_2}) := 
\lVert 
\Tr_{A_2} [(U^{A_1 A_2} \otimes I^R) \circ \rho^{A_1 A_2 R})] - 
\pi^{A_1} \otimes \rho^R 
\rVert_1.
\]
Suppose we are promised that 
\[
\begin{array}{c}
\lVert (\trho')^R \rVert_2^2 <
0.9 |A_1| |A_2| \lVert (\trho')^{AR} \rVert_2^2,
|A_1| \geq 2, |A_2| > |A_1|, \\
|A_2| 2^{H_2^\epsilon(A_1 A_2|R)_\rho - 8} - 4 > 
-H_2^\epsilon(A_1 A_2|R) +
\log |A_1| + 2 \log |A_2|.
\end{array}
\]
In other words $\rho^{A_1 A_2 R}$ is not too close to a tensor product 
state on 
$A_1 A_2 \otimes R$ or $\rho^{A_1 A_2}$ is not too close to the
completely mixed state $\pi^{A_1 A_2}$.
The following concentration inequality holds:
\[
\prob_{U \sim \mathrm{design}}\left[
f(U) > 
\sqrt{\frac{|A_1|}{|A_2|}} \cdot
2^{-\frac{1}{2} H_2^\epsilon(A|R)_\rho + 1 } 
+ 14 \sqrt{\epsilon} + 2 \kappa
\right] \leq 
7 \cdot 2^{-a \kappa^2},
\]
where the unitary $U^A$ is chosen uniformly at random from an 
$(|A_1||A_2|, s, \lambda, t)$-qTPE
$a := |A_2| 2^{H_2^\epsilon(A|R)_\rho - 9}$ and
$t := 8 a \kappa^2$.
The quantity $\lambda$, defined in 
Theorem~\ref{thm:main} satisfies the inequality
\[
\left(0.008 |A_2|^{-9} |A_1|^{-13} 
2^{-H_2^\epsilon(A_1 A_2|R)_\rho}\right)^t <
\lambda < 
\left(|A_2|^{-9} |A_1|^{-13} 2^{-H_2^\epsilon(A_1 A_2|R)_\rho}\right)^t.
\]
for values of $\kappa = O(\mu)$, $t = O(|A_1|)$.
Above concentration inequality can also be expressed as:

\begin{align}
\prob_{U \sim \mathrm{design}}[
f(U) > 
\sqrt{\frac{|A_1|}{|A_2|}} \cdot
2^{-\frac{1}{2} H_2^\epsilon(A|R)_\rho + 1 } 
+ 14 \sqrt{\epsilon} + \beta \, \sqrt{|A_1|/2} 
] \leq 
7 \cdot 2^{-\frac{|A_1|}{2}},
\end{align}
where $\beta := \sqrt{1/a}$.
Thus, if $|A_1| \leq \polylog(|A_2|)$ and $\kappa = O(\mu)$  
then efficient
constructions for such qTPEs exist.
\end{theorem}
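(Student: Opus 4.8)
The plan is to obtain Theorem~\ref{thm:fqsw} as a direct specialisation of Theorem~\ref{thm:main} to the channel $\cT^{A\to B}:=\Tr_{A_2}$, with $A=A_1\otimes A_2$ and $B=A_1$. The first step is to evaluate every object appearing in Theorem~\ref{thm:main} for this channel. Since $\ket{\Phi}^{AA'}=\ket{\Phi}^{A_1A_1'}\otimes\ket{\Phi}^{A_2A_2'}$, tracing out $A_2$ gives the Choi state $\omega^{A'B}=\Phi^{A_1A_1'}\otimes\pi^{A_2'}$, whence $\omega^B=\pi^{A_1}$. All eigenvalues of $\omega^B$ equal $|A_1|^{-1}$, which lies strictly above the threshold $2^{-(1+\delta)\log|A_1|}=|A_1|^{-(1+\delta)}$, so nothing is zeroed out: $(\Hmax')^\epsilon(B)_\omega=\log|A_1|$, $(\omega'''_{\epsilon,\delta})^B=\pi^{A_1}$, and $\Pi^B_{\mathrm{supp}(\omega'''_{\epsilon,\delta})}=I^B$. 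Because this projector is the identity, the support constraint in Definition~\ref{def:Renyi2prime} is vacuous, so $\eta^{A'B}:=\omega^{A'B}$ is a feasible (and, up to $O(\epsilon)$, optimal) point, giving $\lVert(I^{A'}\otimes\pi^{A_1})^{-1/4}\omega^{A'B}(I^{A'}\otimes\pi^{A_1})^{-1/4}\rVert_2=|A_1|^{1/2}\lVert\Phi^{A_1A_1'}\rVert_2\lVert\pi^{A_2'}\rVert_2=\sqrt{|A_1|/|A_2|}$. Hence $(H_2')^{\epsilon,\delta}(A'|B)_\omega\geq\log(|A_2|/|A_1|)$ and $2^{-\frac12(H_2')^{\epsilon,\delta}(A'|B)_\omega}\leq\sqrt{|A_1|/|A_2|}$.

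With these values the decoupling threshold of Theorem~\ref{thm:main} collapses to $2^{-\frac12 H_2^\epsilon(A|R)_\rho+1}\sqrt{|A_1|/|A_2|}+14\sqrt\epsilon+2\kappa$, which is exactly the stated tail bound, and the identification $f(U)=f_{FQSW}(U)$ is immediate since $\omega^B=\pi^{A_1}$. Substituting $|A|=|A_1||A_2|$, $|B|=|A_1|$ and $(\Hmax')^\epsilon(B)_\omega=\log|A_1|$ into the formula $a:=|A|\,2^{-(1+\delta)(\Hmax')^\epsilon(B)_\omega+H_2^\epsilon(A|R)_\rho-9}$ of Theorem~\ref{thm:main} gives $a=|A_2|\,|A_1|^{-\delta}\,2^{H_2^\epsilon(A|R)_\rho-9}$, which for small $\delta$ is essentially the announced value $a=|A_2|\,2^{H_2^\epsilon(A|R)_\rho-9}$; $t=8a\kappa^2$ transfers verbatim.

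The remaining work is to locate $\mu$, hence $\lambda$, inside its claimed window using the Haar-expectation formula of Proposition~\ref{prop:decoupling_general_Haar}. For $\cT=\Tr_{A_2}$ one computes $\lVert(\tomega')^B\rVert_2^2=1$, $\lVert(\tomega')^{A'B}\rVert_2^2=|A_1|/|A_2|$, and $\eta=|A_1|/|A_2|$, so that $\E_{\Haar}[(g(U))^2]=\frac{|A_1|^2-1}{|A_1|^2|A_2|^2-1}\bigl(|A_1||A_2|\,\lVert(\trho')^{AR}\rVert_2^2-\lVert(\trho')^R\rVert_2^2\bigr)$. The promise $\lVert(\trho')^R\rVert_2^2<0.9\,|A_1||A_2|\,\lVert(\trho')^{AR}\rVert_2^2$ together with $|A_1|\geq2$ forces the parenthesised term to be at least $0.1\,|A_1||A_2|\,\lVert(\trho')^{AR}\rVert_2^2$, pinning $\E_{\Haar}[(g(U))^2]$ between explicit constant multiples of $\frac{|A_1|}{|A_2|}2^{-H_2^\epsilon(A|R)_\rho}$. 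Combining the upper side with $\mu^2\leq\E_{\Haar}[(g(U))^2]<\lVert(\tomega')^{A'B}\rVert_2^2\lVert(\trho')^{AR}\rVert_2^2$ and $\lambda=(|A|^{-8}|B|^{-6}\mu^2)^t$ gives $\lambda<(|A_2|^{-9}|A_1|^{-13}2^{-H_2^\epsilon(A_1A_2|R)_\rho})^t$, while the third promise $|A_2|2^{H_2^\epsilon(A_1A_2|R)_\rho-8}-4>-H_2^\epsilon(A_1A_2|R)+\log|A_1|+2\log|A_2|$ is precisely what makes the refinement hypothesis $a\E_{\Haar}[(g(U))^2]+\log\E_{\Haar}[(g(U))^2]>\log|A|+(\Hmax')^\epsilon(B)_\omega-H_2^\epsilon(A|R)_\rho$ of Proposition~\ref{prop:decoupling_general_Haar} hold, so $\mu^2\geq\frac18\E_{\Haar}[(g(U))^2]$ and the matching lower bound $\lambda>(0.008\,|A_2|^{-9}|A_1|^{-13}2^{-H_2^\epsilon(A_1A_2|R)_\rho})^t$ follows. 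Finally, for $\kappa=O(\mu)$ one has $\kappa^2=O(\frac{|A_1|}{|A_2|}2^{-H_2^\epsilon(A|R)_\rho})$, so $t=8a\kappa^2=O(|A_1|)$; for $|A_1|\leq\polylog(|A_2|)$ this is $\polylog(|A|)$, and the efficient constructions of \cite{brandao2012local,sen:zigzag} supply the required qTPEs.

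I expect the main obstacle to be the bookkeeping in this last paragraph: one must check that the three promises are simultaneously strong enough to (i) keep $\E_{\Haar}[(g(U))^2]$ bounded away from zero and (ii) trigger the $\E_{\Haar}[(g(U))^2]\leq 8\mu^2$ refinement of Theorem~\ref{thm:main}, because only a lower bound on $\mu$ (not merely on $\E_{\Haar}[(g(U))^2]$) certifies the lower bound on $\lambda$, which is in turn what makes the sharp $t=O(|A_1|)$ efficiency claim meaningful. Everything else is a matter of substituting dimensions and tracking universal constants.
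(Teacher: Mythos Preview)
Your proposal is correct and follows essentially the same route as the paper: specialise Theorem~\ref{thm:main} to $\cT=\Tr_{A_2}$, compute $\omega^{A'B}=\Phi^{A_1A_1'}\otimes\pi^{A_2'}$ and the resulting entropic quantities, then use the exact formula for $\E_{\Haar}[(g(U))^2]$ together with the promises to pin down $\mu^2$ within constant factors and hence bound $\lambda$. The only sharpening you should make is to set $\delta=0$ at the outset (as the paper does) rather than carry a generic $\delta$ and argue ``for small $\delta$'': since $\omega^B=\pi^{A_1}$ has all eigenvalues equal to $|A_1|^{-1}$, one gets $(\omega'''_{\epsilon,0})^{A_1}=\pi^{A_1}$ and $(\Hmax')^\epsilon(A_1)_\omega=\log|A_1|$ exactly, which yields the announced $a=|A_2|\,2^{H_2^\epsilon(A|R)_\rho-9}$ on the nose rather than up to an $|A_1|^{-\delta}$ factor.
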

\begin{proof}
In order to obtain the desired concentration result,
we apply Theorem~\ref{thm:main} with  the following parameters: 
\begin{itemize}

\item 
The input system $A$ to the superoperator
is $A := A_1 \otimes A_2$.
Output system $B := A_1$ and superoperator 
$\cT^{A \to A_1} := \Tr_{A_2}$;

\item 
The state 
$
\omega^{A'A_1} = 
(\cT^{A \to A_1} \otimes \I^{A'})(\Phi^{AA'}) = 
\Phi^{A_1 A'_1} \otimes  \pi^{A'_2};
$

\item 
Take $\delta = 0$.
We get $(\omega'''_{\epsilon, 0})^{A_1}=\pi^{A_1}$. Hence
$\mathbb{H}_{\max}^\epsilon(A_1)_\omega=\log |A_1|$ as the 
reduced state $\omega^{A_1}=\pi^{A_1}$;

\item 
The matrix
$
(\tomega')^{A'A_1} :=
(I^{A'_1} \otimes I^{A'_2} \otimes 
(\omega'''_{\epsilon, 0})^{A_1}
)^{-1/4} \circ 
(\Phi^{A_1 A'_1} \otimes \pi^{A'_2}) =
|A_1|^{1/2} (\Phi^{A_1 A'_1} \otimes \pi^{A'_2}).
$
Note that
$
(H'_2)^{\epsilon, 0}(A'|A_1)_\omega = 
-2 \log \lVert (\tomega')^{A' A_1} \rVert_2. 
$
We have,
\begin{align*}
\lVert 
(\tomega')^{A' A_1}
\rVert_2 
&= 
\lVert |A_1|^{1/2} \Phi^{A_1 A'_1} \otimes \pi^{A'_2} \rVert_2 
\;= \;
\sqrt{\frac{|A_1|}{|A_2|}}, \\
\lVert 
(\tomega')^{A_1}
\rVert_2 
& =
\lVert 
|A_1|^{1/2} \Phi^{A_1} 
\rVert_2 
\;=\;
1, \\
\eta 
& = 
\frac{
\lVert (\tomega')^{A' A_1} \rVert_2^2
}{
\lVert (\tomega')^{A_1} \rVert_2^2 
} 
\;=\;
\frac{|A_1|}{|A_2|}, \\
\delta_1 
& =
\lVert (\tomega')^{A_1} \rVert_2^2 
\frac{|A|^2 - |A| \eta}{|A|^2 - 1}
\;=\;
\frac{|A_1|^2 |A_2|^2 - |A_1|^2}{|A_1|^2 |A_2|^2 - 1}, \\
\delta_2
& =
\lVert (\tilde{\omega}')^{A'A_1}\rVert_2^2 
\frac{|A|^2 - |A| \eta^{-1}}{|A|^2 - 1}
\;=\;
\frac{|A_1|}{|A_2|} \cdot
\frac{|A_1|^2 |A_2|^2 - |A_2|^2}{|A_1|^2 |A_2|^2 - 1};
\end{align*} 

\item
The function
$
g(U) = 
|A_1|^{1/2}
\lVert
(\Tr_{A_2} \otimes \I^R)((U^A \otimes I^R) \circ (\trho')^{AR}) -
\pi^{A_1} \otimes (\trho')^R
\rVert_2.
$
Then,
\begin{align*}
\E_{U \sim \Haar}[(g(U))^2]
& =
\delta_1 \lVert (\trho')^R \rVert_2^2 + 
\delta_2 \lVert (\trho')^{AR} \rVert_2^2 -
\lVert (\tomega')^{A_1} \rVert_2^2 \lVert (\trho')^R \rVert_2^2 \\
& =
-\frac{|A_1|^2 - 1}{|A_1|^2 |A_2|^2 - 1}
\lVert (\trho')^R \rVert_2^2 + 
\frac{|A_1|}{|A_2|} \cdot
\frac{|A_1|^2 |A_2|^2 - |A_2|^2}{|A_1|^2 |A_2|^2 - 1}
\lVert (\trho')^{AR} \rVert_2^2 \\
& \leq 
\frac{|A_1|}{|A_2|} \cdot
\lVert (\trho')^{AR} \rVert_2^2, \\
\underset{{U \sim \Haar}}{\E}[(g(U))^2]
& =
-\frac{|A_1|^2 |A_2|^2 -  |A_2|^2}{(|A_1|^2 |A_2|^2 - 1) |A_2|^2}
\lVert (\trho')^R \rVert_2^2 + 
\frac{|A_1|}{|A_2|} \cdot
\frac{|A_1|^2 |A_2|^2 - |A_2|^2}{|A_1|^2 |A_2|^2 - 1}
\lVert (\trho')^{AR} \rVert_2^2 \\
& \geq 
\frac{0.1 |A_1|}{|A_2|} \cdot
\frac{|A_1|^2 |A_2|^2 - |A_2|^2}{|A_1|^2 |A_2|^2 - 1}
\lVert (\trho')^{AR} \rVert_2^2  \\
& \geq
\frac{0.1 |A_1|}{|A_2|} 
\left(1 - \frac{|A_2|^2}{|A_1|^2 |A_2|^2 }\right)
\lVert (\trho')^{AR} \rVert_2^2 \\
& \geq 
\frac{0.07 |A_1|}{|A_2|} \cdot
\lVert (\trho')^{AR} \rVert_2^2; 
\end{align*}

\item 
The tail probability exponent $a$ becomes
$
a = 
|A| |A_1|^{-1} 2^{H_2^\epsilon(A|R)_\rho - 9} = 
|A_2| 2^{H_2^\epsilon(A|R)_\rho - 9}.
$

\item
Define $\mu := \E_{\Haar}[g(U)]$. 
Since 
$
a \cdot \E_{U \sim \Haar}[(g(U))^2] +
\log \E_{U \sim \Haar}[(g(U))^2]  > \log |A_1| + \log |A_2| +
\mathbb{H}_{\max}^\epsilon(A_1)_\omega - H_2^\epsilon(A_1 A_2|R)_\rho,
$
we get $\mu^2 \leq \E[(g(U))^2] \leq 8\mu^2$;

\item
The qTPE parameter $t$ becomes
$
t = 
|A| \kappa^2 \cdot 2^{H_2^\epsilon(A|R)_\rho - 6}\cdot |A_1|^{-1} =
|A_2| \kappa^2 \cdot 2^{H_2^\epsilon(A|R)_\rho - 6}.
$
For values of $\kappa = O(\mu)$, $t = O(|A_1|)$;

\item
The qTPE parameter $\lambda$ becomes
$
\lambda =
(|A|^{-8} |A_1|^{-6} \mu^2)^t,
$
which satisfies
\[
(0.008 |A_2|^{-9} |A_1|^{-13} 2^{-H_2^\epsilon(A_1 A_2|R)_\rho})^t <
\lambda < 
(|A_2|^{-9} |A_1|^{-13} 2^{-H_2^\epsilon(A_1 A_2|R)_\rho})^t.
\]
Sequentially iterating 
$O(t (\log |A_1| + \log |A_2|))$ times an
$(|A_1||A_2|, s, O(1), t)$-qTPE gives us the desired
$(|A_1||A_2|, s, \lambda, t)$-qTPE for derandomisation.
Observe that if $t \leq \polylog(|A_2|)$, then efficient
constructions for such qTPEs exist \cite{brandao2012local, sen:zigzag}.

\end{itemize}

Now substituting these parameters in Theorem~\ref{thm:main} we get
\[
\prob_{U^{A_1 A_2} \sim \TPE}\left[
f(U) > 
2 \sqrt{\frac{|A_1|}{|A_2|}} 2^{-\frac{1}{2} H_2^\epsilon(A|R)_\rho} 
+ 14 \sqrt{\epsilon} + 2 \kappa
\right] \leq 
7 \cdot 2^{-a (\min\{\mu^2, \kappa^2\})},
\]
for $U$ chosen uniformly at random from a
$(|A_1| |A_2|, s, \lambda, t)$-qTPE.
This completes the proof.
\end{proof}

In the following section we give two applications of a simple form of 
our general decoupling, via Fully Quantum Slepian Wolf 
Theorem~\ref{thm:fqsw}. These applications are widely studied in 
theoretical physics and also gives an intuitive understanding of our 
theorem and a simple calculation of the various entropic quantities that 
govern the concentration of the decoupling theorem. These further aids 
the understanding that the nature is not as random as Haar measure 
rather performs efficient computation, at times the parameters 
oblivious to us makes it more random. 

\section{Applications of measure concentration of the decoupling 
theorem with unitary designs via FQSW}
\label{sec:Applications}
\subsection{Relative Thermalization}
An immediate application of measure concentration of FQSW is 
in the area of
quantum thermodynamics, in describing a  process called 
relative thermalization \cite{RelativeThermalization}.  One of 
the most fundamental questions in quantum thermodynamics is how a small
system starting out in a particular quantum state spontaneously 
thermalizes when brought in contact with a much larger environment e.g.
a bath. More
precisely when brought in contact with a bath,
the small system decouples from any other system, which we may
call as the reference system, it may be initially entangled with.
The formal definition of relative thermalization is as follows:
\begin{definition}
\label{def:relthermal}
Let system $S$, environment $E$ and reference $R$ be quantum systems and 
$\Omega \subseteq S \otimes E$ be a subspace corresponding to a physical 
constraint such as total energy. The global system is in a state 
$\rho^{\Omega R}$, supported in the Hilbert space 
$\Omega \otimes R$. The time evolution is described by a unitary
on $S \otimes E$. The state after time evolution is denoted by
$\sigma^{\Omega R}$. The system $S$ is said to be $\kappa$-thermalized 
relative to $R$ in state $\sigma^{\Omega R}$ if:
\[
\lVert \sigma^{SR} - \omega^S \otimes \sigma^R \rVert_1 \leq \kappa
\]
where $\sigma^{SR} := \Tr_E[\sigma^{\Omega R}]$ and 
$\omega^S \triangleq \Tr_E [ \frac{I^\Omega}{|\Omega|} ]$ is 
the so called local microcanonical state.
\end{definition}
Thus, relative thermalization requires that, after the environment 
$E$ is traced out, the system $S$
should be close to the state $\omega^S$ and should not have strong 
correlations with the reference $R$. If the time-evolution of 
$S \otimes E$ is modelled by a Haar random unitary on $\Omega$, then 
Fact~\ref{fact:dupuisexpectation} 
guarantees that relative thermalization occurs in expectation over 
the Haar measure. Furthermore, 
Fact~\ref{fact:dupuisconcentration} says that  
$
1- 
\exp\left( 
-\frac{|\Omega| \kappa^2}{2^{-H_{\mathrm{min}}^\epsilon(\Omega)_\rho + 6}}
\right)
$ fraction of Haar random unitaries achieve 
relative thermalization,
if 
$
2^{
-\frac{1}{2} H_2^\epsilon(\Omega|R)_\rho 
- \frac{1}{2}H_2^\epsilon(\Omega'|S)_\omega
} + 16 \epsilon \leq
\frac{\kappa}{2}.
$

Since Haar random unitaries are computationally inefficient, it is 
natural to wonder whether nature truly evolves via Haar random unitary. 
Hence, the work of Nakata et al.~\cite{Winter_decoupling} investigates 
what happens if 
the evolution of system plus environment is modelled by a unitary chosen 
from an efficiently implementable approximate unitary $2$-design. Their
unitary acts on the subspace $\Omega$ only. They 
show that relative thermalization indeed takes place for the same
parameter regime as Haar random unitaries, but for a much 
smaller fraction 
$
1-
\exp\left(
-\frac{\kappa^4}
{|\Omega|^3 2^{-4 H_{\mathrm{min}}^\epsilon(\Omega)_\rho + 22}}
\right)
$ of design unitaries. It is 
reasonable to expect that 
$
H_{\mathrm{min}}^\epsilon(\Omega)_\rho \leq
\log |\Omega| / 2,
$ 
i.e. $\rho^\Omega$ is not highly mixed over $\Omega$.
In this case the fraction of unitaries
achieving relative thermalization is only guaranteed to be at least
$1-\exp\left( -\frac{\kappa^4}{2^{22} |\Omega|} \right)$, which is
almost zero for large $|\Omega|$.

We now analyze relative thermalization using the lens of unitary designs.
We follow the proof technique of Theorem~\ref{thm:fqsw}.
\begin{itemize}

\item 
The superoperator is $\Tr_E$ with input system  $\Omega$ and output
system $S$;

\item 
The state 
$
\omega^{\Omega' S} = 
(\Tr_E \otimes \I^{\Omega'})(\Phi^{\Omega \Omega'})
$
and
$
{(\trho')}^{\Omega R}:=(\xi^R)^{-1/4} \circ \rho^{\Omega R};
$
\item 
Take $\epsilon = \frac{\kappa^2}{60}$ and $\delta = 0$.
Let 
$
2^{
-\frac{1}{2} H_2^\epsilon(\Omega|R)_\rho 
- \frac{1}{2}\mathbb{H}_2^{\epsilon,0}(\Omega'|S)_\omega
} \leq
\frac{\kappa}{4},
$
$
\mathbb{H}_2^{\epsilon,0}(\Omega'|S)_\omega \leq 
\log |\Omega| - \log |S|,
$
$|S| > 2$,
$\mathbb{H}_{\max}^\epsilon(S)_\omega = O\log (|S|)$, 
and
$
\lVert (\trho')^R \rVert_2^2 < 
0.9 |\Omega| \lVert (\trho')^{\Omega R} \rVert_2^2;
$

\item 
The matrix
$
(\tomega')^{\Omega' S} :=
(I^{\Omega'} \otimes (\omega'''_{\epsilon, \delta})^{S}
)^{-1/4} \circ 
\omega^{\Omega' S}.
$
Note that
$
(H'_2)^{\epsilon, 0}(\Omega'|S)_\omega = 
-2 \log \lVert (\tomega')^{\Omega' S} \rVert_2, 
$
$
H_2^{\epsilon}(\Omega|R)_\rho = 
-2 \log \lVert (\trho')^{\Omega R} \rVert_2, 
$
We have:
\begin{align*}
1 & \geq
\lVert 
(\tomega')^{S}
\rVert_2 
\geq 
\sqrt{1-\epsilon}, \\
\eta 
& = 
\frac{
\lVert (\tomega')^{\Omega' S} \rVert_2^2
}{
\lVert (\tomega')^{S} \rVert_2^2 
}, 
~~~
|\Omega|^{-1} \leq
2^{-\mathbb{H}_2^{\epsilon,0}(\Omega'|S)_\omega} \leq
\eta \leq
(1-\epsilon)^{-1} 2^{-\mathbb{H}_2^{\epsilon,0}(\Omega'|S)_\omega} \leq
1, \\
\delta_1 
& =
\lVert (\tomega')^{S} \rVert_2^2 
\frac{|\Omega|^2 - |\Omega| \eta}{|\Omega|^2 - 1}, ~~~
\lVert (\tomega')^{S} \rVert_2^2 (1 - |\Omega|^{-1}) \leq 
\alpha \leq 
\lVert (\tomega')^{S} \rVert_2^2, \\
\delta_2
& =
\lVert (\tilde{\omega}')^{\Omega' S}\rVert_2^2 
\frac{|\Omega|^2 - |\Omega| \eta^{-1}}{|\Omega|^2 - 1}, ~~~
\lVert (\tilde{\omega}')^{\Omega' S}\rVert_2^2
(1 - |S|^{-1}) \leq
\beta \leq
\lVert (\tilde{\omega}')^{\Omega' S}\rVert_2^2;
\end{align*} 

\item
The function\\
$
g(U) := 
\left\lVert
(I^{R} \otimes (\omega'''_{\epsilon, \delta})^{S}
)^{-1/4} \circ 
(
(\Tr_{E} \otimes \I^R)((U^\Omega \otimes I^R) \circ (\trho')^{\Omega R}) -
\omega^{S} \otimes (\trho')^R
)
\right\rVert_2.
$
Then,
\begin{align*}
\E_{U \sim \Haar}[(g(U))^2]
& =
\delta_1 \lVert (\trho')^R \rVert_2^2 + 
\delta_2 \lVert (\trho')^{\Omega R} \rVert_2^2 -
\lVert (\tomega')^{S} \rVert_2^2 \lVert (\trho')^R \rVert_2^2 \\
&\leq 
\lVert (\tomega')^{\Omega' S} \rVert_2^2 \lVert (\trho')^{\Omega R} 
\rVert_2^2, \\
\E_{U \sim \Haar}[(g(U))^2] 
& \geq 
\lVert (\tomega')^{\Omega' S} \rVert_2^2 \lVert (\trho')^{\Omega R} 
\rVert_2^2
(1 - |S|^{-1})
-
\lVert (\tomega')^{S} \rVert_2^2 
\lVert (\trho')^R \rVert_2^2 |\Omega|^{-1} \\
& \geq 
0.05 \lVert (\tomega')^{\Omega' S} \rVert_2^2 \lVert (\trho')^{\Omega R} 
\rVert_2^2
\end{align*}

\item 
The tail probability exponent $a$ becomes
$
a \geq 
|\Omega| |S|^{-1} 2^{H_2^\epsilon(\Omega|R)_\rho - 9};
$

\item
Define $\mu := \E_{\Haar}[g(U)]$. 
We ensure that\\
$
a \cdot \E_{U \sim \Haar}[(g(U))^2] +
\log \E_{U \sim \Haar}[(g(U))^2]  > \log |\Omega| +
\mathbb{H}_{\max}^\epsilon(S)_\omega - H_2^\epsilon(\Omega|R)_\rho,
$
so that we get $\mu^2 \leq \E[(g(U))^2] \leq 8\mu^2$;

\item
The qTPE parameter $t$ becomes
$t = 8 a \kappa^2.$   If 
$
\kappa^2 = 
\poly(|S|) |\Omega|^{-1} \cdot
2^{-H_2^\epsilon(\Omega|R)_\rho}
$
(which is the case when $\kappa$ is close to $\mu$ in FQSW),
then
$t \leq \poly(|S|)$;

\item
The qTPE parameter $\lambda$ becomes
$
\lambda =
(|\Omega|^{-8} |S|^{-6} \mu^2)^t.
$
Sequentially iterating 
$O(t \log |\Omega|)$ times an
$(|\Omega|, s, O(1), t)$-qTPE gives us the desired
$(|\Omega|, s, \lambda, t)$-qTPE for derandomisation.
Observe that if $t \leq \polylog(|\Omega|)$, then efficient
constructions for such qTPEs exist \cite{brandao2012local, sen:zigzag}.

\end{itemize}

We have thus proved the following theorem for relative thermalization.
\begin{theorem}
\label{thm:relthermal}
Consider the setting of Theorem~\ref{thm:main} and
Definition~\ref{def:relthermal}.
Suppose we are promised that 
\[
\begin{array}{c}
2^{
-\frac{1}{2} H_2^\epsilon(\Omega|R)_\rho 
- \frac{1}{2}\mathbb{H}_2^{\epsilon,0}(\Omega'|S)_\omega
} \leq
\frac{\kappa}{4},
\mathbb{H}_2^{\epsilon,0}(\Omega'|S)_\omega \leq 
\log |\Omega| - \log |S|, 
|S| > 2, 
\mathbb{H}_{\max}^\epsilon(S)_\omega = O\log (|S|), \\
\lVert (\trho')^R \rVert_2^2 < 
0.9 |\Omega| \lVert (\trho')^{\Omega R} \rVert_2^2,
2^{-14} |\Omega| |S|^{-1} \cdot
2^{H_2^\epsilon(\Omega|R)_\rho}
- \mathbb{H}_2^{\epsilon,0}(\Omega'|S)_\omega >
2 \log |\Omega.
\end{array}
\]
Then $S$ is $\kappa$-thermalized relative to $R$ in state
$\sigma^{\Omega R}$ for an
$
1 - 5 \cdot 2^{-a \kappa^2}
$
fraction of unitaries $U$ on $\Omega$ 
where $U$ is chosen uniformly at random from a 
$(|\Omega|, s, \lambda, t)$-qTPE
$
a := |\Omega| |S|^{-1} \cdot
	2^{H_2^\epsilon(A|R)_\rho - 9}
$, 
$t := 8 a \kappa^2$ and
$\lambda := (|\Omega|^{-8} |S|^{-6} \mu^2)^t$.
For 
$
\kappa^2 = \poly(|S|) |\Omega|^{-1} \cdot 2^{-H_2^\epsilon(A|R)_\rho},
$
$t \leq \poly(|S|)$.
For such $\kappa$, which includes the case where $\kappa = O(\mu)$, 
if $|S| \leq \polylog(|\Omega|)$, then efficient
constructions for such qTPEs exist.
\end{theorem}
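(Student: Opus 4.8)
The plan is to derive Theorem~\ref{thm:relthermal} as an immediate instantiation of the one-shot concentration result Theorem~\ref{thm:main}, following the template already used for Theorem~\ref{thm:fqsw}. I would apply Theorem~\ref{thm:main} with the CPTP map taken to be $\cT^{\Omega \to S} := \Tr_E$ restricted to the constraint subspace $\Omega \subseteq S \otimes E$, so that the input system is $A := \Omega$, the output system is $B := S$, and $f(U)$ becomes $\lVert \Tr_E[(U^\Omega \otimes I^R)\circ\rho^{\Omega R}] - \omega^S \otimes \sigma^R\rVert_1$, i.e. exactly the relative-thermalisation distance of Definition~\ref{def:relthermal} (using that $\omega^S = \Tr_E[\pi^\Omega]$ and $\sigma^R = \Tr_E[\rho^{\Omega R}]$ is $U$-independent). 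Choosing $\delta = 0$ collapses the smooth modified conditional R\'{e}nyi $2$-entropy of the Choi--Jamio{\l}kowski state to $(H_2')^{\epsilon,0}(\Omega'|S)_\omega$, and choosing $\epsilon$ a small constant multiple of $\kappa^2$ makes the $14\sqrt{\epsilon}$ error term a small constant fraction of $\kappa$.

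First I would record the Choi--Jamio{\l}kowski state $\omega^{\Omega' S} = (\Tr_E \otimes \I^{\Omega'})(\Phi^{\Omega\Omega'})$ with reduced state $\omega^S$, and note that the hypothesis $(\Hmax')^\epsilon(S)_\omega = O(\log|S|)$ controls both the Lipschitz bound of Lemma~\ref{lem:lipschitz} and the weighting operator $(\omega'''_{\epsilon,0})^S$. Next I would invoke Proposition~\ref{prop:decoupling_general_Haar} to pin down $\E_{\Haar}[(g(U))^2] = \alpha\lVert(\trho')^R\rVert_2^2 + \beta\lVert(\trho')^{\Omega R}\rVert_2^2 - \lVert(\tomega')^S\rVert_2^2\lVert(\trho')^R\rVert_2^2$, reading off $\eta$, $\alpha$, $\beta$ from $\lVert(\tomega')^{\Omega' S}\rVert_2$ and $\lVert(\tomega')^S\rVert_2 \in [\sqrt{1-\epsilon},1]$; the promise $\lVert(\trho')^R\rVert_2^2 < 0.9\,|\Omega|\,\lVert(\trho')^{\Omega R}\rVert_2^2$ together with $|S| > 2$ is precisely what makes the negative term harmless, yielding $\E_{\Haar}[(g(U))^2] \le \lVert(\tomega')^{\Omega' S}\rVert_2^2\lVert(\trho')^{\Omega R}\rVert_2^2$ and $\E_{\Haar}[(g(U))^2] \ge 0.05\,\lVert(\tomega')^{\Omega' S}\rVert_2^2\lVert(\trho')^{\Omega R}\rVert_2^2$. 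From the latter the tail exponent works out to $a \ge |\Omega||S|^{-1}2^{H_2^\epsilon(\Omega|R)_\rho - 9}$; feeding $t := 8a\kappa^2$ and $\lambda := (|\Omega|^{-8}|S|^{-6}\mu^2)^t$ into Theorem~\ref{thm:main} gives the claimed $5\cdot 2^{-a\kappa^2}$ failure probability, while the hypothesis $2^{-\frac12 H_2^\epsilon(\Omega|R)_\rho - \frac12(H_2')^{\epsilon,0}(\Omega'|S)_\omega} \le \kappa/4$ controls the decoupling term so that the three error contributions sum to at most $\kappa$, establishing $\kappa$-thermalisation.

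The step I expect to be the main obstacle is verifying that the promised condition $2^{-14}|\Omega||S|^{-1}2^{H_2^\epsilon(\Omega|R)_\rho} - (H_2')^{\epsilon,0}(\Omega'|S)_\omega > 2\log|\Omega|$ implies the inequality $a\cdot\E_{\Haar}[(g(U))^2] + \log\E_{\Haar}[(g(U))^2] > \log|\Omega| + (\Hmax')^\epsilon(S)_\omega - H_2^\epsilon(\Omega|R)_\rho$ that is needed to invoke the last clause of Proposition~\ref{prop:decoupling_general_Haar} and conclude $\E[(g(U))^2] \le 8\mu^2$; this requires substituting the explicit forms of $a$ and of the lower bound $0.05 \cdot 2^{-(H_2')^{\epsilon,0}(\Omega'|S)_\omega - H_2^\epsilon(\Omega|R)_\rho}$ for $\E_{\Haar}[(g(U))^2]$, using $(\Hmax')^\epsilon(S)_\omega = O(\log|S|)$, and absorbing the constants $0.05$, $\log(0.05)$ and $2^{-9}$ into the $2^{-14}$ slack. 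Everything else is bookkeeping. The remaining routine point is the efficiency claim: when $\kappa^2 = \poly(|S|)|\Omega|^{-1}2^{-H_2^\epsilon(\Omega|R)_\rho}$ one gets $t = 8a\kappa^2 \le \poly(|S|)$, so for $|S| \le \polylog(|\Omega|)$ the required $(|\Omega|,s,\lambda,t)$-qTPE is obtained by sequentially iterating a constant-gap $(|\Omega|,s,O(1),t)$-qTPE $O(t\log|\Omega|)$ times, which is efficient by \cite{brandao2012local, sen:zigzag}.
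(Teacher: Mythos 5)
Your proposal is correct and follows essentially the same route as the paper: instantiating Theorem~\ref{thm:main} with $\cT = \Tr_E$, $A = \Omega$, $B = S$, $\delta = 0$, $\epsilon = \Theta(\kappa^2)$, and reading off $\eta, \alpha, \beta, a, t, \lambda$ from Proposition~\ref{prop:decoupling_general_Haar}, exactly as the paper does by mirroring the FQSW instantiation. One small slip: you write $\sigma^R = \Tr_E[\rho^{\Omega R}]$, but it should be $\sigma^R = \Tr_\Omega[\sigma^{\Omega R}] = \rho^R$ (tracing out all of $\Omega$, not just $E$); your conclusion that $\sigma^R$ is $U$-independent and matches the target $\rho^R$ in the decoupling function is nonetheless correct, and the rest of the argument is sound.
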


Theorem~\ref{thm:relthermal} achieves better performance
than the result of Dupius (Fact~\ref{fact:dupuisconcentration}) 
and the result of Nakata et al.~\cite{Winter_decoupling} in the 
following sense:
\begin{enumerate}

\item
In our result, the system plus environment 
evolves according to a unitary chosen uniformly at random
from an approximate unitary $t$-design 
for moderate values of $t$. Our unitary acts on the subspace $\Omega$
only. For a wide range of parameters, our unitaries require 
$O(|S| \log |\Omega|)$ random bits for a precise description which
is less than the $\Omega(|\Omega|^2 \log |\Omega|)$ random bits required 
by the Haar random unitaries of Dupuis, as well as less than
the $\Omega(|\Omega| \log |\Omega|)$ random bits required by the
approach of Nakata et al. Moreover, the random unitaries used by
Dupuis and by  Nakata et al. are not efficiently implementable,
whereas our unitaries are efficiently implementable
when $|S| \leq \polylog(|\Omega|)$.

\item
Our Theorem~\ref{thm:relthermal} shows that 
relative thermalisation still takes place for the fraction 
$1 - 5 \cdot 2^{-a \kappa^2}$ of unitaries, where
$
a = 
|\Omega| |S|^{-1} \cdot 2^{H_2^\epsilon(\Omega|R)_\rho - 9}.
$
Note that $H_2^{\epsilon}(\Omega|R)_\rho \geq -\log |\Omega|$ for
any state $\rho^{\Omega R}$. The equality is achieved 
when $\rho^{\Omega R}$ is maximally entangled on $\Omega$.
Under the reasonable 
assumption that $H_2^{\epsilon}(\Omega|R)_\rho \geq -0.5 \log |\Omega|$,
i.e., $\rho^{\Omega R}$ is not highly entangled on $\Omega$,
the fraction of unitaries that 
achieve relative thermalisation is at least
$1 - 5 \cdot 2^{-2^{-9} \cdot |\Omega|^{1/2} |S|^{-1} \kappa^2}.$
As $\Omega$ is generally of a much larger 
dimension than the system $S$, it is reasonable to expect that
$|S| < |\Omega|^{1/4}$. In that case, the fraction of unitaries
that achieve relative thermalisation in our result is guaranteed to
be at least
$1 - 5 \cdot 2^{-2^{-9} \cdot |\Omega|^{1/4} \kappa^2}$
which is nearly one for large $|\Omega|$.
We also assume that 
$H_{\mathrm{min}}^\epsilon(\Omega) \leq \log |\Omega| / 2$,
i.e. the state $\rho^\Omega$ is not highly mixed on $\Omega$.
For this range of parameters,
our decoupling result is much better than  that of
Nakata et al. which can only guarantee that
$1-\exp\left(-\frac{\kappa^4}{2^{22} |\Omega|} \right) \approx 0$ 
fraction of unitaries achieve relative thermalisation. 
However, our result is worse than that of Dupuis which guarantees that
$1-\exp\left(-\frac{|\Omega|^{3/2} \kappa^2}{2^6}\right)$ fraction
of Haar random unitaries achieve relative thermalisation.
Note that the unitaries of Nakata et al. and Dupuis used for obtaining 
concentration of measure for  decoupling, and consequently for relative 
thermalisation, are not efficient to implement. 
Only the unitaries used to obtain decoupling in expectation by 
Nakata et al. can be implemented efficiently.
\end{enumerate}

We summarise our relative thermalization result 
in Table~\ref{table_thermalisation} for a clear 
comparison with the known results of Dupuis and Nakata et al.
\begin{center}
\begin{table}[!h]
\small
\begin{tabular}{|c|c|c|c|}
\hline
& Dupius\cite{decoupling} 
& Nakata et al. \cite{Winter_decoupling} 
& Theorem~\ref{thm:relthermal} \\
\hline
Fraction
& $ 1-\exp\left(-|\Omega|^{3/2} \kappa^2\right)$ 
& 
$ 
1-\exp\left(-|\Omega|^{-1} \kappa^2\right)
$ 
& 
$1-\exp\left(-|\Omega|^{1/2} \kappa^2\right)$ \\
\hline
Randomness 
& $|\Omega|^2 \log |\Omega|$
& $|\Omega| \log |\Omega|$
& $\poly(|S|) \log |\Omega|$ \\
\hline
Random unitary
& Haar 
& $X$ and $Z$-diagonal
& Approx. $\poly(|S|)$-design \\
\hline
Efficiency
& Always inefficient 
& Always inefficient  
& Efficient for $|S| = \polylog(|\Omega|)$ \\
\hline
\end{tabular}
\caption{
Achieving $\kappa$-relative thermalization, 
$
\kappa^2 = \poly(|S|) |\Omega|^{-1} \cdot 
	2^{-H_2^\epsilon(A|R)_\rho},
$
$H_{\mathrm{min}}^\epsilon(\Omega)_\rho = \log |\Omega| / 2$,
$H_2^{\epsilon}(\Omega|R)_\rho \geq -0.5 \log |\Omega|$,
constant factors ignored.
} 
\label{table_thermalisation}
\end{table}
\end{center} 

\begin{remark}
\label{rem:Thermal}
Nakata et al's work and other works on relative thermalisation shows
that for some well known Hamiltonians in physics relative
thermalisation eventually occurs. Concentration bounds, even if given,
could not answer our everyday observation in nature that
thermalisation occurs with overwhelmingly high probability. Prior to
this work, only Haar random unitaries where known to achieve
thermalisation with overwhelmingly high probability. Our work is the
first one to show that relative thermalisation can indeed take place
with overwhelmingly high probabillity using unitaries that are simpler
than Haar random in a precise sense. Moreovoer, for a particular range
of parameters, relative thermalisation can even be achieved by
computationally efficient unitaries. Physically this work lays out a
clear direction of further resarch,  that simple unitaries made up of
a few basic builiding blocks can achieve relative thermalisation with
overwhelmingly high probability.
\end{remark}

\subsection{Application to the black hole information paradox model 
by Hayden-Preskill \cite{Hayden_Preskill}}
A brief description of Hayden-Preskil toy model for the black hole 
information paradox \cite{Hayden_Preskill}:
Hayden-Preskill in their seminal paper \cite{Hayden_Preskill} described 
the evolution of black hole as a unitary operator. Further they assumed 
that an old black hole which has emitted half of its information via 
Hawking's radiation can be assumed to be in a maximally entangled 
state with its environment. Alice throws a quantum register of 
$\log |M|$ qubits in a maximally entangled state with its reference $N$, 
in such an old black hole. Subsequently, after the black hole evolves 
unitarily and emits Hawking radiation, a receiver say Bob, can 
intercept the radiation and can decode Alice's information by waiting 
for no longer than $O(\log |M|)$ qubits of radiation. Hayden-Preskill 
applied the decoupling theorem, by assuming black hole evolution as 
a Haar random unitary, to conclude that for a `typical' black hole 
Alice's information can be decoded by Bob provided Bob has access to 
the initial black hole's environment, and emitted Hawking radiation. 
This is because if a black hole's internal state, after interaction with 
Alice's register and evolution, is decoupled with Alice's purifying 
environment, then Uhlmann's theorem gives a decoder that maps black 
hole's initial and received radiations to purifying register for Alice's 
reference. This information retrieval property, similar to channel 
coding problem was first observed by Hayden-Preskill with the above 
mentioned assumptions in \cite{Hayden_Preskill}.
This phenomenon was termed as information mirror or scrambling.

Now, if we assume that black holes must scramble efficiently, is this 
information retrieval property still generic to a typical but 
efficiently evolving black hole? That is, even assuming efficiently 
implementable dynamics, is the fraction of black holes that do not 
behave like ``information mirrors'' still exponentially small? We provide 
an affirmative answer to the above question.

\vspace*{6mm}

\textbf{Hayden-Preskill \cite{Hayden_Preskill} toy model setup:} 
\begin{itemize}

\item 
Alice's initial state: A register of $\log |M|$ qubits.
It is assumed to be in a maximally entangled state with a purifying 
reference ($N$), denoted by: 
$\ket{\Phi}^{MN}:=\frac{1}{|M|} \Sigma_{i=1}^{|M|} \ket{i}^M \ket{i}^N$.

\item 
Black hole is denoted by system $B_1$ (before interacting with 
Alice's register). It is assumed to maximally entangled with an 
environment $E$ (initial radiations), denoted by: 
$\ket{\Phi}^{B_1E}:=\frac{1}{|B_1|} 
\Sigma_{j=1}^{|B_1|}\ket{j}^{B_1}\ket{j}^E$.

\item 
Evolution of black hole is modelled by a unitary operator 
$U^{MB_1 \to RB_2}$. 
Thus $\mathbf{|M| \times |B_1|=|R| \times |B_2|}$.

\item 
Emitted radiations from black hole after evolution are denoted by 
quantum system $R$.

\item 
Final state of black hole is denoted by quantum system $B_2$. $B_2$ is 
inaccessible to Bob.  

\item 
The decoupling superoperator $\mathbf{\cT^{RB_2E \to B_2} := \Tr_{RE} } $,
as the systems $E$ and $N$ are in product state from the beginning.

\item 
The input state to the ``decoupling theorem'' 
$\Phi^{MN} \otimes \Phi^{B_1E}$.

\item 
The decoupling step works by analyzing the distance of
\begin{align*}
&\left[\cT^{REB_2 \to B_2} \otimes \I^{N}\right]
\left\{ (U^{MB_1 \to RB_2} \otimes I^{NE} ) \circ 
(\Phi^{MN} \otimes \Phi^{B_1E}) \right\}\\
&= \left[\Tr_{RE} \otimes \I^{N}\right]
\left\{ (U^{MB_1 \to RB_2} \otimes I^{NE} ) \cdot  
(\Phi^{MN} \otimes \Phi^{B_1E}) \right\}\\
&= \left[\Tr_{R} \otimes \I^{N}\right]
\left\{ (U^{MB_1 \to RB_2} \otimes I^{N} ) \circ  
(\Phi^{MN} \otimes \Phi^{B_1E}) \right\}
\end{align*}
with its expected value equal to $\pi^{B_2} \otimes \pi^{N}$.  This shows 
that the `average' state on the joint system $N,\;E,\;R$ is a pure 
state and hence there exists a decoding isometry 
$V_{\mathcal{D}}^{RE \to M}$ can map this pure state to $\Phi^{MN}$, 
thereby giving Bob the access to Alice's initial quantum information.
\end{itemize}
For the rest of this section we work with the following decoupling 
function:
\begin{equation}
\label{eq:bh}
f(U):=\left\| \left[\Tr_R \otimes \I^{N}\right] 
\left\{ (U^{MB_1 \to RB_2} \otimes I^{N} ) \cdot  
(\Phi^{MN} \otimes \pi^{B_1}) \right\} - \pi^{B_2} \otimes \pi^{N} 
\right\|_1
\end{equation}
We follow the proof technique of Theorem~\ref{thm:fqsw}.
\begin{itemize}
\item
The input states in the notation of the Theorem~\ref{thm:fqsw} are
defined as $ \rho^{MB_1N}:= \Phi^{MN} \otimes \pi^N$ and 
$
{(\trho')}^{MB_1N}:=(\xi^N)^{-1/4} \circ \rho^{MB_1 N}
$. Note that $|M|=|N|$.\\
The choice of the weighting matrix $\xi^N=\pi^N$ gives 
$H_2^\epsilon(A|R)_\rho = -\log \lVert ({\trho')}^2\rVert_2^2 = 
\log \frac{|B_1|}{|M|} $.
\item 
The state 
$
\omega^{R'B_2' B_2} := 
\left[\Tr_R \otimes \I^{R'B'_2}\right](\Phi^{R R'} \otimes 
\Phi^{B_2 B'_2})=\pi^R \otimes \Phi^{B_2 B'_2},
$

\item 
Take $\delta = 0$.
We get $(\omega'''_{\epsilon, 0})^{B_2}=\pi^{B_2}$. Hence
$\mathbb{H}_{\max}^\epsilon(B_2)_\omega=\log |B_2|$ as the 
reduced state $\omega^{B_2}=\pi^{B_2}$.

\item 
The matrix
$
(\tomega')^{R'B'_2B_2} :=
(I^{B'_2} \otimes I^{R'} \otimes 
(\omega'''_{\epsilon, 0})^{B_2}
)^{-1/4} \circ 
(\Phi^{B_2 B'_2} \otimes \pi^{R'}) =
|B_2|^{1/2} (\Phi^{B_2 B'_2} \otimes \pi^{R'}).
$
Note that
$
\mathbb{H}_2^{\epsilon, 0}(R'B'_2|B_2)_\omega = 
-2 \log \lVert (\tomega')^{R'B'_2 B_2} \rVert_2. 
$
We have,
\begin{align*}
\lVert 
(\tomega')^{R'B'_2 B_2}
\rVert_2 
&= 
\lVert |B_2|^{1/2} \Phi^{B_2 B'_2} \otimes \pi^{R'} \rVert_2 
\;= \;
\sqrt{\frac{|B_2|}{|R|}}, \\
\lVert 
(\tomega')^{B_2}
\rVert_2 
& =
\lVert 
|B_2|^{1/2} \Phi^{B_2} 
\rVert_2 
\;=\;
1, \\
\eta 
& = 
\frac{
\lVert (\tomega')^{R'B'_2 B_2} \rVert_2^2
}{
\lVert (\tomega')^{B_2} \rVert_2^2 
} 
\;=\;
\frac{|B_2|}{|R|}, \\
\delta_1 
& =
\lVert (\tomega')^{B_2} \rVert_2^2 
\frac{|R|^2|B_2|^2 - |R||B_2| \eta}{|B_2|^2|R|^2 - 1}
\;=\;
\frac{|B_2|^2 |R|^2 - |B_2|^2}{|B_2|^2 |R|^2 - 1}, \\
\delta_2
& =
\lVert (\tilde{\omega}')^{B'_2R' B_2}\rVert_2^2 
\frac{|B_2|^2|R|^2 - |B_2||R| \eta^{-1}}{|B_2|^2|R|^2 - 1}
\;=\;
\frac{|B_2|}{|R|} \cdot
\frac{|B_2|^2 |R|^2 - |R|^2}{|B_2|^2 |R|^2 - 1};
\end{align*} 

\item
The function
$
g(U) = 
|B_2|^{1/2}
\left\lVert
\left[\Tr_{R} \otimes \I^{N}\right]\left((U^{MB_1 \to RB_2} \otimes I^N) 
\circ (\trho')^{MB_1N}\right) -
\pi^{B_2} \otimes (\trho')^N
\right\rVert_2.
$
Then,
\begin{align*}
\E_{U \sim \Haar}[(g(U))^2]
& =
\delta_1 \lVert (\trho')^N \rVert_2^2 + 
\delta_2 \lVert (\trho')^{MB_1N} \rVert_2^2 -
\lVert (\tomega')^{B_2} \rVert_2^2 \lVert (\trho')^N \rVert_2^2 \\
& =
-\frac{|B_2|^2 - 1}{|B_2|^2 |R|^2 - 1}
\lVert (\trho')^N \rVert_2^2 + 
\frac{|B_2|}{|R|} \cdot
\frac{|B_2|^2 |R|^2 - |R|^2}{|B_2|^2 |R|^2 - 1}
\lVert (\trho')^{MB_1N} \rVert_2^2 \\
& \leq 
\frac{|B_2|}{|R|} \cdot
\lVert (\trho')^{MB_1N} \rVert_2^2=\frac{|M|^2}{|R|^2}, \\
\underset{{U \sim \Haar}}{\E}[(g(U))^2]
& =
-\frac{|B_2|^2 |R|^2 -  |R|^2}{(|B_2|^2 |R|^2 - 1) |R|^2}
\lVert (\trho')^N \rVert_2^2 + 
\frac{|B_2|}{|R|} \cdot
\frac{|B_2|^2 |R|^2 - |R|^2}{|B_2|^2 |R|^2 - 1}
\lVert (\trho')^{MB_1N} \rVert_2^2 \\
& \geq 
\frac{0.1 |B_2|}{|R|} \cdot
\frac{|B_2|^2 |R|^2 - |R|^2}{|B_2|^2 |R|^2 - 1}
\lVert (\trho')^{MB_1N} \rVert_2^2  \\
& \geq
\frac{0.1 |B_2|}{|R|} 
\left(1 - \frac{|R|^2}{|B_2|^2 |R|^2 }\right)
\lVert (\trho')^{MB_1N} \rVert_2^2 \\
& \geq 
\frac{0.07 |B_2|}{|R|} \cdot
\lVert (\trho')^{MB_1N} \rVert_2^2=0.07 \frac{|M^2|}{|R|^2}; 
\end{align*}

\item 
The tail probability exponent $a$ becomes
$
a = 
|RB_2| |B_2|^{-1} 2^{H_2^\epsilon(MB_1|N)_\rho - 9} = 
2^{-9} \frac{|R||B_1|}{|M|}.
$

\item
Define $\mu := \E_{\Haar}[g(U)]$. 
By a direct calculation we have\\
$
a \cdot \E_{U \sim \Haar}[(g(U))^2] +
\log \E_{U \sim \Haar}[(g(U))^2]  > \log |B_2| + \log |R| +
\mathbb{H}_{\max}^\epsilon(B_2)_\omega - H_2^\epsilon(M B_1|N)_\rho,
$\\
and thus we get $\mu^2 \leq \E[(g(U))^2] \leq 8\mu^2$;

\item
The qTPE parameter $t$ becomes
$
t = 
|M||B_1| \kappa^2 \cdot 2^{H_2^\epsilon(MB_1|N)_\rho - 6}\cdot |B_2|^{-1}.
$
For values of $\kappa = O(\mu)$, $t = O(|B_2|)$;

\item
The qTPE parameter $\lambda$ becomes
$
\lambda =
(|MB_1|^{-8} |B_2|^{-6} \mu^2)^t,
$
which satisfies
\[
(0.008 |R|^{-9} |B_2|^{-13} 2^{-H_2^\epsilon(M B_1|N)_\rho})^t <
\lambda < 
(|R|^{-9} |B_2|^{-13} 2^{-H_2^\epsilon(MB_1|N)_\rho})^t.
\]
Sequentially iterating 
$O(t (\log |M| + \log |B_1|))$ times an
$(|M||B_1|, s, O(1), t)$-qTPE gives us the desired
$(|M||B_1|, s, \lambda, t)$-qTPE for efficient evolution.
Observe that if $t \leq \polylog(|R|)$, then efficient
constructions for such qTPEs exist \cite{brandao2012local, sen:zigzag}.

\end{itemize}

We summarize our conclusion as follows: 
\begin{theorem}[FQSW concentration and black hole as mirror] 
\label{thm:haydenpreskill}
If an old black hole, that has already radiated half of its matter, 
evolves according to an approximate unitary $O(|B_2|)$-design then 
it satisfies the following concentration inequality:
\[
\prob_{U \sim \mathrm{design}}\left[
f(U) > 
\frac{|M|}{|R|}
+ 14 \sqrt{\epsilon} + 2 \kappa
\right] \leq 
7 \cdot 2^{-a \kappa^2},
\]
where the unitary $U_{BH}^{MB_1 \to RB_2}$ is chosen uniformly at 
random from an approximate 
$t$-design, 
$a := 2^{-9} \frac{|R||B_1|}{|M|}$ and
$t := 8 a \kappa^2$.
The above concentration inequality can also be expressed directly in 
terms of the parameter $t$ of the evolution of the black hole 
according to a unitary chosen uniformly at random from an approximate 
unitary $t$-design as:
\[
\prob_{U \sim \mathrm{design}}\left[
f(U) > 
\frac{|M|}{|R|}
+ 14 \sqrt{\epsilon} + \sqrt{t/2} \, \frac{|M|}{|R||B_1|}
\right] \leq 
7 \cdot 2^{-\frac{t}{8}},
\]

\end{theorem}
This theorem in turn implies that an overwhelming fraction of such 
black holes act as an information mirror. Further, one has to wait no 
longer than the size of Alice's register in order to decode Alice's 
information scrambled by the such a black hole.
Such a unitary can be described by using $O(|B_2| \log (|M| |B_1|))$ 
random bits as opposed to a Haar random black hole that needs 
$O(|M|^2 |B_1|^2 \log(|M| |B_1|))$ for a reasonable approximation.

Moreover, if $|B_2| \leq O(\polylog(|R|))$ and 
$
\kappa = O\left({\frac{|M|}{|R|}}\right)
$
which further implies that $t = O(|B_2|)=O(poly \log |R|)$,
then a typical black hole not only acts as an information mirror but 
can also be described in a computationally efficient manner 
(or simulated efficiently on a quantum computer). 

\section{Conclusion}
\label{sec:conclusion}
In this work we obtain a novel concentration result for
one-shot
non-catalytic decoupling via approximate unitary $t$-designs for
moderate values of $t$. Our bounds are stated in terms of
one-shot smooth variants of R\'{e}nyi 2-entropies and max-entropies.
We then consider the asymptotic iid limit of our concentration result
and show that the bounds reduce to the standard Shannon entropies.
Finally, we apply our concentration result to a special case when the 
superoperator $\cT$ is just the partial trace. This case is also 
referred to as the Fully Quantum
Slepian Wolf theorem in the literature, for its application to lossless 
quantum source compression demonstrated in \cite{mother_protocol}. This 
leads to a new result on relative thermalisation
of quantum systems. In particular for systems that are much smaller
than their ambient spaces, we show that for a wide range of
parameters relative
thermalisation can be achieved with probability exponentially close to one
using efficiently implementable unitaries. This is the first result
of this kind.

We also apply our FQSW result to the Hayden-Preskill toy model for the 
black hole information paradox. We show that the information mirror
property observed by Hayden and Preskill for a Haar random unitary
black hole continues to hold with probability exponentially close to one
even when the black hole evolution is restricted to be a uniformly random
unitary picked from a small discrete set of `simple' i.e. 
computationally efficient unitaries.

For larger systems, it is unknown
whether suitable efficient approximate $t$-designs exist.
Hence the question of whether relative thermalisation can be
achieved by efficiently implementable unitaries with exponentially
high probability in the general case still remains open.

Several applications of the original decoupling theorem in expectation
are known in the literature. Our result can be applied
to many of them obtaining, for the first time, corresponding 
concentration results via approximate unitary $t$-designs.
Whether these concentration results have any operational significance
is a topic left for future research.

\section*{Acknowledgement}
AN would like to thank Prof. Francesco Buscemi for pointing out Hayden-Preskill toy model for black hole information paradox as a potential application to our main theorem. We acknowledge support of the Department of Atomic Energy, Government of 
India, under project no. 12-R\&D-TFR-5.01-0500, for carrying out this 
research work. AN acknowledges support from the European Research Council (ERC Grant Agreement No. 948139) and also from MEXT Quantum Leap Flagship Program (MEXT QLEAP) Grant No. JPMXS0120319794.


\bibliographystyle{IEEEtran}
\bibliography{IEEEabrv,ConcDecoupling}

\end{document}